\documentclass[a4paper,final,11pt]{article}
\usepackage{amsfonts,amsmath,amsthm,amssymb,xcolor,colortbl}
\usepackage{mathtools}
\usepackage{enumitem}
\usepackage{calrsfs}
\DeclareMathAlphabet{\pazocal}{OMS}{zplm}{m}{n}

\usepackage[margin=1in]{geometry}
\usepackage[onehalfspacing]{setspace}
\usepackage{soul}
\usepackage{natbib}
\usepackage{clipboard}	%Allows chunks of text in this document to be used in another;
\usepackage{graphicx}
\newclipboard{PaperClipboard}	%Where the chunks of text are saved.
\usepackage{xr-hyper} %Must load before hyperref and cleveref
\usepackage[colorlinks,linkcolor=links,citecolor=cites,urlcolor=MyDarkBlue]{hyperref}
\usepackage{cleveref}
\usepackage{tikz}
\usepackage{bbm}
\usepackage[justification=centering]{caption}
\usepackage[justification=centering]{subfig}
\usepackage[section]{placeins}
\usepackage{parskip}
\usepackage{nicefrac}
\usepackage{appendix}
\usepackage[multiple]{footmisc}
\definecolor{MyDarkBlue}{rgb}{0,0.08,0.45}
\definecolor{cites}{HTML}{324b13}
\definecolor{links}{HTML}{1a663b}
\definecolor{MyLightMagenta}{cmyk}{0.1,0.8,0,0.1}
\definecolor{scyan}{HTML}{CBEAFC}
\definecolor{red}{HTML}{B5595C}
\definecolor{green}{HTML}{609B57}
\definecolor{spink}{HTML}{FFB0FF}
\definecolor{yellow}{HTML}{E5A919}
\definecolor{gray}{HTML}{c8c8c8}
\definecolor{darkgray}{HTML}{646464}
\definecolor{blue}{RGB}{0,114,178}
\usetikzlibrary{positioning}
\usetikzlibrary{snakes}
\usetikzlibrary{calc}
\usetikzlibrary{arrows}
\usetikzlibrary{decorations.markings}
\usetikzlibrary{shapes.misc}
\usetikzlibrary{matrix,shapes,fit,tikzmark}
\usetikzlibrary{patterns}
\usetikzlibrary{trees,calc,positioning,snakes}
\usetikzlibrary{decorations.pathmorphing}
\usetikzlibrary{decorations.shapes}
\usepackage{pgfplots}
\pgfplotsset{width=7cm,compat=1.16}
\pgfplotsset{ytick style={draw=none}}
\pgfplotsset{ztick style={draw=none}}
\pgfplotsset{xtick style={draw=none}}

\usepgfplotslibrary{ternary} 
\usepgfplotslibrary{fillbetween}

\newcommand{\horizfig}[2][]{%
  \begin{minipage}{3in}\subfloat[#1]{#2}\end{minipage}}
%
%\newtheoremstyle{ex}% hnamei
%{3pt}% hSpace abovei
%{3pt}% hSpace belowi
%{}% hBody fonti
%{}% hIndent amounti1
%{\bfseries}% hTheorem head fonti
%{.}% hPunctuation after theorem headi
%{.5em}% hSpace after theorem headi2
%{}%
%\newtheoremstyle{rex}% hnamei
%{}% hSpace abovei
%{}% hSpace belowi
%{}% hBody fonti
%{}% hIndent amounti1
%{\bfseries}% hTheorem head fonti
%{.}% hPunctuation after theorem headi
%{.5em}% hSpace after theorem headi2
%{}%
%\theoremstyle{ex}\newtheorem{example}{Example}
%%\theoremstyle{rex}

\newtheorem{theorem}{Theorem}
\newtheorem{lemma}{Lemma} 
\newtheorem{proposition}{Proposition}
 
\newtheorem{example}{Example}
\newtheorem{definition}{Definition} 
\newtheorem{corollary}{Corollary}
\newtheorem{assumption}{Assumption}
\newtheorem{remark}{Remark}
\newtheorem{observation}{Observation}

\newcommand{\aaction}{\ensuremath{a}}
\newcommand{\aactionb}{\ensuremath{\aaction^\prime}}
\newcommand{\aactionbb}{\ensuremath{\aaction^{\prime\prime}}}

\newcommand{\Actions}{\ensuremath{A}}
\newcommand{\Actionsb}{\ensuremath{B}}
\newcommand{\Actionsc}{\ensuremath{C}}
\newcommand{\payoff}{\ensuremath{u}}

\newcommand{\payoffc}{\ensuremath{\hat{\payoff}}}
\newcommand{\type}{\ensuremath{\omega}}
\newcommand{\typeb}{\ensuremath{\type^\prime}}
\newcommand{\Types}{\ensuremath{\Omega}}
\newcommand{\Posteriors}{\ensuremath{\Delta(\Types)}}

\newcommand{\belief}{\ensuremath{\mu}}
\newcommand{\beliefv}{\ensuremath{\boldsymbol{\belief}}}
\newcommand{\beliefvb}{\ensuremath{\tilde{\beliefv}}}

\newcommand{\prior}{\ensuremath{\belief_0}}
\newcommand{\priorb}{\ensuremath{\prior^\prime}}
\newcommand{\priorv}{\ensuremath{\boldsymbol{\prior}}}
\newcommand{\marginalbase}{\ensuremath{\nu}}
\newcommand{\marginal}{\ensuremath{\marginalbase_0}}
\newcommand{\marginalv}{\ensuremath{\boldsymbol{\marginal}}}

\newcommand{\marginalb}{\ensuremath{\bar{\marginalbase}_0}}
\newcommand{\joint}{\ensuremath{\pi}}

\newcommand{\reals}{\ensuremath{\mathbb{R}}}
\newcommand{\realsa}{\ensuremath{\reals^\Actions}}
\newcommand{\realst}{\ensuremath{\reals^\Types}}
\newcommand{\Reals}{\ensuremath{\mathbb{R}}}

\newcommand{\ext}{\ensuremath{\mathrm{ext}}}

\newcommand{\bsplit}{\ensuremath{\tau}}
\newcommand{\base}{\ensuremath{G}}

\newcommand{\demandbase}{\ensuremath{\delta}}
\newcommand{\demand}{\ensuremath{\demandbase_{(\marginal,\bsplit)}}}
\newcommand{\graph}{\ensuremath{\Gamma}}
\newcommand{\graphp}{\ensuremath{\graph_P}}
\newcommand{\strat}{\ensuremath{\alpha}}

\newcommand{\simplex}{\ensuremath{\Delta}}
\newcommand{\opt}{\ensuremath{\simplex_\payoff^*}}
\newcommand{\opta}{\ensuremath{\opt(\aaction)}}
\newcommand{\optab}{\ensuremath{\opt(\aactionb)}}

\newcommand{\normal}{\ensuremath{N}}
\newcommand{\direction}{\ensuremath{p}}

\newcommand{\marginalp}{\ensuremath{\overline{\marginalbase}_0}}
\newcommand{\rnpayoff}{\ensuremath{\payoffc}}

\newcommand{\bce}{\ensuremath{\mathrm{BCE}}}

\newcommand{\bceu}{\ensuremath{\bce\left(\payoff\right)}}
\newcommand{\bcebase}{\ensuremath{\bce\left(\base\right)}}
\newcommand{\scr}{\ensuremath{\sigma}}
\newcommand{\nplayers}{\ensuremath{N}}
\newcommand{\setplayers}{\ensuremath{\left[\nplayers\right]}}
\newcommand{\playerindex}{\ensuremath{n}}
\newcommand{\playeri}{\ensuremath{n}}
\newcommand{\pair}{\ensuremath{(\prior,\marginal)}}
\newcommand{\marginali}{\ensuremath{\marginalbase_{0,\playerindex}}}
\newcommand{\Actionsi}{\ensuremath{\Actions_\playerindex}}
\newcommand{\Actionsmi}{\ensuremath{\Actions_{-\playerindex}}}
\newcommand{\payoffi}{\ensuremath{\payoff_\playerindex}}

\newcommand{\actioni}{\ensuremath{\aaction_\playerindex}}
\newcommand{\actionbi}{\ensuremath{\aactionb_\playerindex}}

\newcommand{\actionmi}{\ensuremath{\aaction_{-\playerindex}}}

\newcommand{\aindex}{\ensuremath{j}}
\newcommand{\tindex}{\ensuremath{i}}
\newcommand{\anum}{\ensuremath{J}}
\newcommand{\tnum}{\ensuremath{I}}

\newcommand{\price}{\ensuremath{p}}
\newcommand{\pricev}{\ensuremath{\boldsymbol{\price}}}
\newcommand{\pricea}{\ensuremath{q}}
\newcommand{\priceav}{\ensuremath{\boldsymbol{\pricea}}}
\newcommand{\mult}{\ensuremath{\lambda}}
\newcommand{\multb}{\ensuremath{\mult^\prime}}

\newcommand{\multp}{\ensuremath{\mult^\uparrow}}
\newcommand{\multm}{\ensuremath{\mult^\downarrow}}
\newcommand{\multv}{\ensuremath{\boldsymbol{\mult}}}
\newcommand{\multvp}{\ensuremath{\multv^\uparrow}}
\newcommand{\multvm}{\ensuremath{\multv^\downarrow}}

\newcommand{\carda}{\ensuremath{|\Actions|}}
\newcommand{\cardt}{\ensuremath{|\Types|}}

\newcommand{\priceup}{\ensuremath{\price^\uparrow}}
\newcommand{\pricedown}{\ensuremath{\price^\downarrow}}
\newcommand{\pricevup}{\ensuremath{\pricev^\uparrow}}
\newcommand{\pricevdown}{\ensuremath{\pricev^\downarrow}}
\newcommand{\priceaup}{\ensuremath{\pricea^\uparrow}}
\newcommand{\priceadown}{\ensuremath{\pricea^\downarrow}}
\newcommand{\priceavup}{\ensuremath{\priceav^\uparrow}}
\newcommand{\priceavdown}{\ensuremath{\priceav^\downarrow}}
\newcommand{\diff}{\ensuremath{d}}

\newcommand{\diffv}{\ensuremath{\boldsymbol{\diff}}}

\newcommand{\dm}{DM}
\newcommand{\minkowski}{\ensuremath{\mathrm{M}}}
\newcommand{\gfset}{\ensuremath{X}}
\newcommand{\gelement}{\ensuremath{v}}
\newcommand{\gelementb}{\ensuremath{\tilde{\gelement}}}
\newcommand{\gbold}{\ensuremath{\boldsymbol{\gelement}}}
\newcommand{\gbbold}{\ensuremath{\boldsymbol{\gelementb}}}
\newcommand{\gboldb}{\ensuremath{\tilde{\gbold}}}
\newcommand{\gsubset}{\ensuremath{V}}
\newcommand{\gsubsetb}{\ensuremath{\gsubset^\prime}}
\newcommand{\vol}{\ensuremath{b}}
\newcommand{\bdirection}{\ensuremath{\boldsymbol{\direction}}}
\newcommand{\bbelief}{\ensuremath{\boldsymbol{\belief}}}

\newcommand{\polytope}{\ensuremath{P}}
\newcommand{\face}{\ensuremath{F}}
\newcommand{\cone}{\ensuremath{C}}
\newcommand{\pmpair}{\ensuremath{(\payoff,\marginal)}}
\newcommand{\ppmpair}{\ensuremath{\minkowski\pmpair}}

\newcommand{\Prices}{\ensuremath{\pazocal{P}}}
\newcommand{\Pricesup}{\ensuremath{\Prices^\uparrow}}
\newcommand{\Pricesdown}{\ensuremath{\Prices^\downarrow}}
\newcommand{\slope}{\ensuremath{\gamma}}
\newcommand{\slopev}{\ensuremath{\boldsymbol{\slope}}}
\newcommand{\ctt}{\ensuremath{\kappa}}
\newcommand{\cttv}{\ensuremath{\boldsymbol{\ctt}}}
\newcommand{\candt}{\ensuremath{\type^\star}}
\newcommand{\aud}{\ensuremath{\mathrm{AUD}}}
\newcommand{\candi}{\ensuremath{\tindex^\star}}
\newcommand{\candti}{\ensuremath{\type_{\candi}}}
\newcommand{\lowd}{\ensuremath{\underline{\diff}}}
\newcommand{\highd}{\ensuremath{\overline{\diff}}}
\newcommand{\lowdj}{\ensuremath{\lowd_{j+1,j}}}
\newcommand{\highdj}{\ensuremath{\highd_{j+1,j}}}
\newcommand{\lowdk}{\ensuremath{\lowd_{k+1,k}}}
\newcommand{\highdk}{\ensuremath{\highd_{k+1,k}}}

\newcommand{\costr}{\ensuremath{c_{\mathrm{I}}}}
\newcommand{\costa}{\ensuremath{c_{\mathrm{II}}}}
\newcommand{\foc}{\ensuremath{\mathrm{FOC}}}
\newcommand{\dx}{\ensuremath{\mathrm{d}}}

\newcommand{\nodes}{\ensuremath{V}}
\newcommand{\node}{\ensuremath{v}}
\newcommand{\edges}{\ensuremath{E}}
\newcommand{\flow}{\ensuremath{f}}
\newcommand{\capacity}{\ensuremath{c}}

\newcommand{\param}{\ensuremath{\theta}}
\newcommand{\paramb}{\ensuremath{\param^\prime}}
\newcommand{\Param}{\ensuremath{\Theta}}

\newcommand{\decision}{\ensuremath{D}}

\newcommand{\dimm}{\ensuremath{d}}
\newcommand{\basis}{\ensuremath{b}}
\newcommand{\basisv}{\ensuremath{\boldsymbol{\basis}}}
\newcommand{\Basis}{\ensuremath{\mathrm{B}}}
\newcommand{\N}{\ensuremath{\pazocal{N}}}
\newcommand{\aff}{\ensuremath{\mathrm{aff}}}

\newcommand{\triple}{\ensuremath{(\pricev,\priceav,\multv)}}
\newcommand{\tripleo}{\ensuremath{(\pricev_1,\priceav_1,\multv_1)}}
\newcommand{\triplet}{\ensuremath{(\pricev_2,\priceav_2,\multv_2)}}
% \newcommand{\guess}{\ensuremath{g}}

%% margin notes
\usepackage{mparhack}

 \usepackage{chngcntr}
\usepackage{apptools}
\AtAppendix{\counterwithin{equation}{section}}
\AtAppendix{\counterwithin{figure}{section}}
\AtAppendix{\counterwithin{table}{section}}
\AtAppendix{\counterwithin{example}{section}}
\AtAppendix{\counterwithin{lemma}{section}}
\AtAppendix{\counterwithin{remark}{section}}
\AtAppendix{\counterwithin{proposition}{section}}
\AtAppendix{\counterwithin{theorem}{section}}
\AtAppendix{\counterwithin{corollary}{section}}
\usepackage{etoolbox}
\makeatletter
\patchcmd{\hyper@makecurrent}{%
    \ifx\Hy@param\Hy@chapterstring
        \let\Hy@param\Hy@chapapp
    \fi
}{%
    \iftoggle{inappendix}{%true-branch
        % list the names of all sectioning counters here
        \@checkappendixparam{chapter}%
        \@checkappendixparam{section}%
        \@checkappendixparam{subsection}%
        \@checkappendixparam{subsubsection}%
        \@checkappendixparam{paragraph}%
        \@checkappendixparam{subparagraph}%
    }{}%
}{}{\errmessage{failed to patch}}

\newcommand*{\@checkappendixparam}[1]{%
    \def\@checkappendixparamtmp{#1}%
    \ifx\Hy@param\@checkappendixparamtmp
        \let\Hy@param\Hy@appendixstring
    \fi
}
\makeatletter

\newtoggle{inappendix}
\togglefalse{inappendix}

\apptocmd{\appendix}{\toggletrue{inappendix}}{}{\errmessage{failed to patch}}
\apptocmd{\subappendices}{\toggletrue{inappendix}}{}{\errmessage{failed to patch}}
% end appendix autoref patch
\title{Revealed Information\thanks{The latest version of the paper can be found \href{https://www.dropbox.com/scl/fi/0ewp06nfanzlxyhpg2pgs/revealed-information.pdf?rlkey=conoz38fz5sd1lqjep06wy5wm&dl=0}{here}. We are grateful to Yaron Azrieli, Michael Grubb, Denniz Kattwinkel, Emir Kamenica,  Jan Knoepfle, Shengwu Li, Elliot Lipnowski, Marco Mariotti, Meg Meyer, Stephen Morris, Pietro Ortoleva, Jacopo Perego, Andrea Prat, John Rehbeck, Phil Reny, Vasiliki Skreta, Ming Yang, Joel Watson, and especially Quitze Valenzuela-Stookey and our discussants, Amanda Friedenberg and Cristina Gualdani, for thought-provoking questions and insightful discussions. Seminar audiences at Stony Brook, NYU, LACEA/LAMES, UPenn, SAET McKenzie Lecture, UCLA, CalPoly, University of Wisconsin, CEPR Virtual IO seminar, WIET, UCL, QMUL, Oxford, Princeton, Georgetown, University of Toronto, UPF, and UAB gave us thoughtful feedback for which we are thankful. Ran Eilat acknowledges support from the BSF under grant~\#2022294. Finally, we thank the Sloan Foundation for financial support.}}
\author{Laura Doval\thanks{Economics Division, Columbia Business School and CEPR. E-mail: \href{mailto:laura.doval@columbia.edu}{\texttt{laura.doval@columbia.edu}}}\and Ran Eilat\thanks{Department of Economics, Ben Gurion University of the Negev. E-mail: \href{mailto:eilatr@bgu.ac.il}{\texttt{eilatr@bgu.ac.il}}}\and Tianhao Liu\thanks{Department of Economics, Columbia University. E-mail: \href{mailto:tl3014@columbia.edu}{\texttt{tl3014@columbia.edu}}}\and Yangfan Zhou\thanks{Department of Economics, Columbia University. E-mail: \href{mailto:yz3905@columbia.edu}{\texttt{yz3905@columbia.edu}}}}
%\date{November 20, 2024}
\begin{document}
\pagenumbering{gobble}
\maketitle
\begin{abstract}
%An analyst observes the frequency with which a decision maker (\dm) takes actions, but not the frequency conditional on payoff-relevant states. We characterize when the \dm's choices can be rationalized as if they first learned about the state. We identify triples of utility functions, prior beliefs, and action distributions consistent with information. We refine this characterization under assumptions on the state space cardinality and utility function, obtaining a sharp system of inequalities these primitives must satisfy. We apply our results to study comparative statics, multiple decision problems, and characterize distributions over posterior beliefs consistent with the \dm's choices.
%
An analyst observes the frequency with which a decision maker (\dm) takes actions, but not the frequency conditional on payoff-relevant states. We ask when the analyst can rationalize the \dm's choices \emph{as if} the \dm\ first learns something about the state before acting. We provide a support-function characterization of the triples of utility functions, prior beliefs, and (marginal) distributions over actions such that the \dm's action distribution is consistent with information given the agent's prior and utility function. Assumptions on the cardinality of the state space and the utility function allow us to refine this characterization, obtaining a sharp system of finitely many inequalities the utility function, prior, and action distribution must satisfy. We apply our characterization to study comparative statics and to identify conditions under which a single information structure rationalizes choices across multiple decision problems. We characterize the set of distributions over posterior beliefs that are consistent with the \dm's choices. We extend our results to settings with a continuum of actions and states assuming the first-order approach applies, and to simple multi-agent settings.%
%when the analyst can rationalize the agent's choices as the outcome of the agent learning something about the state before taking action. Our characterization marries the obedience approach in information design \citep{bergemann2016bayes} and the belief approach in Bayesian persuasion \citep{kamenica2011bayesian} relying on a theorem by \cite{strassen1965existence} and Hall's marriage theorem. 
%We apply our results to  ring-network games and to identify conditions under which a data set is consistent with a public information structure in first-order Bayesian persuasion games.
\end{abstract}
\textsc{JEL} codes: D44, D82, D83\\
Keywords: \textit{revealed preference, revealed information, distributions with given
marginals, support function, stochastic choice out of menus, state-dependent stochastic choice, information design, Bayesian
persuasion, flows in networks, optimal transport}
\newpage
\clearpage
\pagenumbering{arabic}

\section{Introduction}\label{sec:intro}

When economic agents make decisions under uncertainty, they rely on information about unknown factors. Yet, researchers rarely observe this information directly, creating a fundamental challenge: How can we infer what agents know from their observed choices? This challenge has significant implications for economic modeling, because assumptions about information can dramatically impact model predictions and parameter estimates.

In this paper, we provide a framework to determine when observed choice patterns can be rationalized by some information structure, without requiring the researcher to observe the relationship between choices and underlying states. Specifically, given a decision maker's (\dm) utility function, prior beliefs, and an observed distribution of actions, we characterize when this action distribution can be explained as the result of the \dm\ optimally responding to some information about the state. Our main contribution is a support-function characterization that translates this question into a finite system of inequalities involving the utility function, prior, and action distribution.

Consider a concrete example: an analyst studying whether a judge's bail decisions are informed by recidivism risk \citep{rambachan2022identifying}. The analyst observes only the frequency with which the judge grants bail, not the frequency with which the judge grants bail conditional on whether the defendant will recidivate. Our framework allows the analyst to determine which combinations of the judge's utility function and prior beliefs about recidivism would make the observed bail decisions consistent with the judge having some information about recidivism risk.

Many recent empirical studies seek preference estimates robust to informational assumptions, recognizing how strongly these assumptions affect outcomes. For instance, \cite{dickstein2018exporters}, \cite*{dickstein2024patient}, \cite{gualdani2019identification}, and \cite{rambachan2022identifying} develop methods to understand the role of information in firms' export decisions, physicians' treatment recommendations, voter choices, and prediction mistakes, respectively. Similar approaches appear in multi-agent settings such as auctions \citep*{syrgkanis2017inference} and entry games \citep{magnolfi2019estimation}.

Some of these approaches rely on Bayes correlated equilibrium (\bce), developed by \cite{bergemann2016bayes} for games and \cite{kamenica2011bayesian} for single-agent settings. Given a payoff structure\textemdash players' utility functions and their common prior over states\textemdash\bce\ provides conditions under which an outcome distribution can be rationalized as if players had access to information before playing. Importantly, checking whether an outcome is a \bce\ requires verifying only a finite system of linear inequalities.

However, a critical gap exists between what \bce\ requires and the data empirical researchers typically have available. \bce\ presumes the analyst observes the joint distribution over payoff-relevant states and action profiles. In our judge example, \bce\ would require observing bail decisions conditional on whether defendants would recidivate\textemdash data that are rarely available. In practice, analysts typically observe only the marginal distribution of actions (the frequency of bail grants), not the action distribution conditional on the state (the frequency of bail grants conditional on recidivism risk). For a given payoff structure, rather than checking finitely many linear inequalities, the analyst checks whether a \bce\ \emph{exists} whose marginals over the actions matches the observed choices.

Our paper bridges this gap in single-agent settings by characterizing when marginal distributions over states and actions are consistent with a (single-agent) \bce\ given the \dm's utility function. Formally,  given the triple of a utility function, prior beliefs, and observed action distribution, we study when a \bce\ exists whose marginals over the states and actions coincide with the \dm's prior and action distribution. When such a \bce\ exists, we say the marginals\textemdash the \dm's prior and the observed action distribution\textemdash are \bce-consistent given the utility function.

Our main contributions are in Sections \ref{sec:main} and \ref{sec:utility}. In \autoref{theorem:h-representation}, we provide a characterization in terms of a system of finitely many inequalities, linear in both marginal distributions, such that the marginals are \bce-consistent if and only if these inequalities hold. For a given action distribution and utility function, these inequalities characterize the support function of the set of priors that make the \dm's choices consistent with information. Although  \autoref{theorem:h-representation} precisely identifies the finitely many inequalities that must hold for the marginals to be \bce-consistent, the characterization is rather implicit in that it does not describe them in closed form. Our remaining results consider assumptions on the cardinality of the set of states or on the utility function under which we provide closed-form characterizations of the system of finitely many inequalities in \autoref{theorem:h-representation}.

 \autoref{proposition:2-3} provides a closed-form characterization when the state space has at most three elements.  As we explain in the main text, the inequalities in \autoref{proposition:2-3} are always a subset of those in \autoref{theorem:h-representation}, and they can be readily expressed in terms of primitives even when the state space has more than three elements. Thus, they can be used to rule out pairs of prior beliefs and action distributions that are not \bce-consistent given the utility function.
 
 Theorems \ref{proposition:aud} and \ref{proposition:2-step} offer parallel characterizations for utility functions with affine and two-step differences (Definitions \ref{definition:aud} and \ref{definition:2-step}), respectively. For instance, all binary decision problems and the discrete analog of quadratic loss have affine differences, whereas the discrete analog of absolute error loss has two-step differences. Both results rest on \autoref{proposition:mcv}, which narrows the search for halfspaces defining the set of \bce-consistent distributions for utility functions satisfying increasing differences and concavity in actions (the discrete counterpart to first-order approach conditions). In \autoref{sec:foa}, we further generalize this characterization to compact Polish action and state spaces under the first‑order approach (cf. \citealp*{kolotilin2023persuasion}).

In \autoref{sec:applications}, we apply our results to study comparative statics and \bce-consistency \emph{across} decision problems. \autoref{sec:cs} applies \autoref{proposition:aud} to study comparative statics in the set of \bce-consistent marginals when considering changes to the \dm's prior or utility function. Building on results in \cite*{bergemann2022counterfactuals}, we show in \autoref{sec:across} how our results can be used to study whether a single information structure exists that rationalizes a \dm's choices across different decision problems. In \autoref{sec:games}, we apply our results to study \bce-consistency in simple games.

Because our characterization results are not constructive, we study in \autoref{sec:core} which information structures make the marginals \bce-consistent. In \autoref{proposition:core-bp}, we characterize the Bayes plausible distributions over posteriors that implement a given action distribution, interpreting \bce-consistency as a market-clearing condition in a persuasion economy and building on \cite{gale1957theorem}. 
%
%
%We apply our results to study comparative statics in the set of \bce-consistent marginals, characterize distributions consistent with public information structures in persuasion games, and analyze \bce-consistency in network games with incomplete information. Finally, in \autoref{proposition:core-bp}, we characterize the Bayes plausible distributions over posteriors that implement a given action distribution, interpreting \bce-consistency as a market-clearing condition in a persuasion economy.

Our results have implications for both empirical work and theoretical analysis. Empirically, for a given action distribution and utility function, our characterization results non-parametrically identify the set of priors such that the prior and action distribution are \bce-consistent given the utility function, which is useful whenever the analyst has no information on what the prior should be, but may have auxiliary data on the \dm’s payoffs.\footnote{By contrast, studies of risk aversion across domains assume the \dm's beliefs about expected claim rates coincide with the frequencies in the data and estimate the curvature of the utility function \citep*{cohen2007estimating, barseghyan2011risk,barseghyan2013nature}.} Furthermore, for a given action distribution, our results characterize joint restrictions on the prior and the utility function for the triple of the utility function, prior, and action distribution to be consistent with information.  %Theoretically, our results open up the study of marginal information design\textemdash akin to reduced-form implementation in mechanism design\textemdash where an information designer cares only about the \dm's actions, but not the state of the world. From this perspective, results like \autoref{proposition:mcv} reveal the structure of the binding constraints in information design problems and we expect it can be used to further the study of Bayesian persuasion.

Our framework also has applications in behavioral economics, particularly for cognitive uncertainty models where $\mathrm{DM}$s exhibit random behavior across instances of the same problem (e.g., \citealp*{khaw2021cognitive}; \citealp{enke2023cognitive}). In these models, the state often represents the correct action and the \dm\ has a noisy perception of this state. Whereas laboratory experiments may generate state-dependent choice data, outside the lab, analysts typically only observe average choices. Our results can test whether behavior is consistent with \emph{Bayesian} cognitive uncertainty\textemdash whether noisy perception of states can be rationalized via an information structure.

Theoretically, our results open up the study of marginal information design\textemdash akin to reduced-form implementation in mechanism design\textemdash where an information designer cares only about the \dm's actions, and not the state of the world. From this perspective, results such as \autoref{proposition:mcv} reveal the structure of the binding constraints in information design problems, and we expect it can be used to further the study of Bayesian persuasion.

\paragraph{Related Literature} Our analysis relates to several strands of literature. \cite{lu2016random}, \cite{rehbeck2023revealed}, \cite{de2022rationalizing}, and  \cite{azrieli2022marginal} study related rationalization problems but differ in either the available data or characterization approach. In \cite{lu2016random}, the analyst observes the \dm's stochastic choice out of every possible menu. As in our paper, the analyst in \cite{rehbeck2023revealed} and \cite{de2022rationalizing} observes the \dm's stochastic choice from a single menu in a static and dynamic decision problem, respectively. Unlike our paper, their characterization results are in terms of the non-existence of a (possibly randomized) deviation, akin to Pearce's lemma. In \cite{azrieli2022marginal}, the analyst observes the distribution of menus the \dm\ faces and the \dm's distribution of choices, but not the distribution of choices from each menu. Despite the difference in the settings, we discuss how \autoref{proposition:core-bp}, which we obtain relying on \cite{gale1957theorem}, can be obtained using their results.

A literature in decision theory and experimental economics studies when state-dependent stochastic choice data can be rationalized via costly information acquisition and whether the data identify the information acquisition costs (e.g., \citealp*{caplin2015revealed,caplin2017rationally,chambers2020costly,dewan2020estimating,denti2022posterior,caplin2023rationalizable}). Like we do, many of these papers provide results for a given utility function.\footnote{To be sure,  \cite{caplin2023rationalizable} consider recovering the utility function.}  Unlike our paper, the \dm's prior is observed in the data. Relatedly, \cite{ergin2010unique} and \cite{dillenberger2014theory,dillenberger2023subjective} study when menu choice data is consistent with costly information acquisition.

A literature in information design studies problems with (given) marginals. \cite*{arieli2021feasible} and \cite{morris2020no} characterize joint distributions over posterior beliefs that are consistent with some information structure with binary and finitely many states, respectively. Assuming the sender and receiver care only about the posterior mean of the states, \cite*{toikka2022bayesian} show the sender's problem is a linear programming problem that only depends on the marginal distribution over actions. \cite{kolotilin2023persuasion} characterize properties of optimal information structures assuming the first-order approach applies in a large class of persuasion problems with nonlinear sender preferences. \cite{strack2024privacy} show the optimization over privacy-preserving signals can be cast as an optimal transport problem. 

Methodologically, our paper relates to the econometrics literature on random sets for partial identification, where support functions are used to study the \emph{Aumann expectation} of a random set (e.g., \citealp*{galichon2011set,beresteanu2011sharp,molchanov2018random}). Our extension to compact Polish action and state spaces relies on \citet[Theorem 3]{strassen1965existence}, which also appears in \cite{galichon2011set}.

In lieu of an organizational paragraph, we collect here mathematical notation and definitions used throughout the paper:
\paragraph{Mathematical conventions and definitions} For a finite set \gfset, we denote by $\reals^\gfset$ the set of vectors of length $|\gfset|$. Depending on context, we refer to elements of $\reals^\gfset$ either as functions from $\gfset$ to the reals or as vectors in $\reals^\gfset$. We reserve $\gelement$ (serif) for the function $\gelement:\gfset\rightarrow\reals$ and $\gbold$ (boldface) for the vector in $\reals^\gfset$. When we wish to emphasize the length of \gbold\ we write $\gbold\in\reals^{|\gfset|}$. If $\gbold,\gboldb$ are two vectors in $\reals^\gfset$, we denote by $\gbold\gboldb$ their inner product, $\sum_{i=1}^{|\gfset|}\gbold_\tindex\gboldb_\tindex$. 

Given two nonempty subsets $\gsubset,\gsubsetb\subset\reals^\gfset$, their \emph{Minkowski sum} is the set $\gsubset+\gsubsetb=\{\gbold+\gboldb:\gbold\in\gsubset,\gboldb\in\gsubsetb\}$. Given a set $\gsubset\subset\reals^\gfset$, the support function of \gsubset\ is the mapping $\pricev\mapsto\sup\{\pricev\gbold:\gbold\in V\}$. A (convex) cone \cone\ is a subset of $\reals^\gfset$ that is closed under addition and non-negative scalar multiplication. A vector $\boldsymbol{w}\in\cone$ is an \emph{extreme ray} if no linearly independent $\gbold,\gboldb\in\cone$ and positive scalars $\lambda,\gamma$ exist such that $\boldsymbol{w}=\lambda\gbold+\gamma\gboldb$. Note that if $\gbold\in\cone$ is an extreme ray, so is $\lambda\gbold$ for $\lambda>0$. When we refer to an extreme ray, we refer to one representative of this equivalence class. Finally, a polytope is a bounded subset of $\reals^\gfset$ defined as the intersection of finitely many halfspaces of the form $\bdirection_l\gbold\leq\vol_l$ for some $(\bdirection_l,\vol_l)\in\reals^{\gfset}\times\reals$, $l\in\{1,\dots,L\}$. %Throughout, cones and polytopes are convex and polytopes are bounded.

%%
%\ld{
%\begin{enumerate}
%\item face
%\item support function
%\item extreme points
%\item extreme rays
%\end{enumerate}}

\section{Model}\label{sec:model}
\paragraph{A decision problem with given marginals} Our  model considers a  \dm\ taking an action under uncertainty about a state of the world. We denote by $\Types=\{\type_1,\dots,\type_\tnum\}$ the finite set of states of the world and by $\Actions=\{\aaction_1,\dots,\aaction_{\anum}\}$ the finite set of actions. The \dm's utility function $\payoff:\Actions\times\Types\rightarrow\reals$ describes the \dm's payoff as a function of the action she takes and the state of the world. The tuple $\langle\Types,\Actions,\payoff\rangle$ defines the decision problem. 

%Denote by \Posteriors\ the set of distributions over \Types. 
%We denote the \dm's prior over \Types\ by $\prior\in\Posteriors$. 
Letting \Posteriors\ denote the set of distributions over \Types, the \dm's utility defines the set of beliefs for which a given action $\aaction\in\Actions$ is optimal, \opta, as follows:
\begin{align}\label{eq:opt-a}
\opta=\{\belief\in\Posteriors:(\forall\aactionb\in\Actions)\sum_{\type\in\Types}\belief(\type)\left(\payoff(\aaction,\type)-\payoff(\aactionb,\type)\right)\geq0\}.
\end{align}

Below, we regard \Posteriors\ as a full-dimensional set in $\reals^{\tnum-1}$. Furthermore, to streamline the presentation, we implicitly assume each nonempty \opta\ is full dimensional in $\reals^{\tnum-1}$.\footnote{Because $\cup_{\aaction\in\Actions}\opta=\Posteriors$, at least one \opta\ is full dimensional in $\reals^{\tnum-1}$. Assuming all nonempty \opta\ are full-dimensional sets in $\reals^{\tnum-1}$ simplifies exposition.} \autoref{appendix:main} deals with the general case.\footnote{Our results remain the same, but properly stating them requires defining the embedding of a lower-dimensional subset  into $\reals^{\tnum-1}$.} Thus, whenever we refer to the dimension of a set, we mean its dimension in $\reals^{\tnum-1}$, even if our notation does not make it explicit.%\footnote{Making this explicit requires fixing a state, say $\type_1$, and normalizing the utility of each action \aaction\ in state $\type$ by its utility in state $\type_1$.} 

%where above \Posteriors\ denotes the set of distributions over \Types, which we view as a full dimensional set in $\reals^{\tnum-1}$ even when our notation does not make this explicit.\footnote{Indeed, fixing a state $\type_1$, we should write $\Posteriors=\{\belief\in\reals_{\geq0}^{I-1}:\sum_{i=2}^{\tnum}\belief_\tindex\leq1\}$ \emph{and} $\opta=\{\belief\in\Posteriors:\sum_{i=2}^\tnum\belief(\type_\tindex)\left[\left(\payoff(\aaction,\type_i)-\payoff(\aaction,\type_1)\right)-\left(\payoff(\aactionb,\type_\tindex)-\payoff(\aactionb,\type_1)\right)\right]\geq-\left(\payoff(\aaction,\type_1)-\payoff(\aactionb,\type_1)\right)\}$. Rather than normalizing the utility function in state $\type_1$, we maintain the standard notation to express the \dm's optimality conditions.} 

We take the viewpoint of an analyst who knows the decision problem, but not whether the \dm\ has access to further information before taking her action. The analyst also observes the \dm's distribution over actions, $\marginal\in\Delta(\Actions)$. The analyst's goal is to determine for which prior distributions over the states,  $\prior\in\Posteriors$, \marginal\ can be rationalized as the result of the  \dm\ optimally choosing her actions after observing the outcome of an information structure. 

For a given prior distribution $\prior\in\Posteriors$, the results in \cite{myerson1982optimal}, \cite{kamenica2011bayesian}, and \cite{bergemann2016bayes} imply an information structure exists that rationalizes \marginal\ given the utility function \payoff\ if and only if the pair of distributions \pair\ satisfy the following:
%\annotation{Should we make the \bce\ set dependent on the decision problem, like we do with the base game?}
\begin{definition}[\bce-consistency given \payoff]\label{definition:bce-c} The pair of distributions, $\pair\in\Posteriors\times\Delta(\Actions)$, is \bce-consistent given \payoff\ if a joint distribution $\joint\in\Delta(\Actions\times\Types)$ exists such that the following hold:
\begin{align}
(\forall\aaction\in\Actions)(\forall\aactionb\in\Actions)&\sum_{\type\in\Types}\joint(\aaction,\type)\left[\payoff(\aaction,\type)-\payoff(\aactionb,\type)\right]\geq0\tag{O}\label{eq:obedience}\\
(\forall\type\in\Types)&\sum_{\aaction\in\Actions}\joint(\aaction,\type)=\prior(\type)\tag{M$_{\prior}$}\label{eq:state-mg}\\
(\forall\aaction\in\Actions)&\sum_{\type\in\Types}\joint(\aaction,\type)=\marginal(\aaction).\tag{M$_{\marginal}$}\label{eq:action-mg}
\end{align}
We denote by \bceu\ the set of \bce-consistent pairs \pair\ given \payoff.
\end{definition}
In words, an information structure exists that rationalizes the \dm's choices \marginal\ given her utility function \payoff\ and prior belief \prior\ if and only if a joint distribution $\joint\in\Delta(\Actions\times\Types)$ exists that satisfies Equations \ref{eq:obedience}, \ref{eq:state-mg}, and \ref{eq:action-mg}. \autoref{eq:obedience} states that if the \dm\ knows \aaction\ has been drawn according to \joint\textemdash but not the state\textemdash the \dm\ finds \aaction\ optimal. Equations \ref{eq:state-mg} and \ref{eq:action-mg} state the joint distribution is consistent with the pair \pair: the generated information averages out to the prior \eqref{eq:state-mg}, and the \dm's average choices coincide with \marginal\ \eqref{eq:action-mg}. %Finally, note the definition of \bce-consistency presumes both \prior\ and \marginal\ are distributions over \Types\ and \Actions, respectively. 

With \autoref{definition:bce-c} at hand, we can now state the analyst's problem formally: given \marginal\ and \payoff, the analyst seeks to characterize the set of prior distributions \prior\ such that $\pair\in\bceu$. We denote the set of all such priors by \ppmpair; that is,
\begin{align}
\ppmpair=\{\prior\in\Posteriors:\pair\in\bceu\}.
\end{align}
Taking \marginal\ as given, the question of whether \pair\ is \bce-consistent given \payoff\ is equivalent to whether \prior\ belongs in \ppmpair.

\paragraph{The action marginal as a distribution over posteriors} A joint distribution $\joint\in\Delta(\Actions\times\Types)$ with marginals $\pair$ induces a belief system, $\{\belief(\cdot|\aaction)\in\Posteriors:\aaction\in\Actions\}$, describing the \dm's beliefs conditional on action \aaction, which satisfies that for all actions $\aaction\in\Actions$,
\[\marginal(\aaction)\belief(\type|\aaction)=\joint(\aaction,\type).\]
In this case, one can view \marginal\ as a distribution over posteriors and the belief system $\left(\belief(\cdot|\aaction)\right)_{\aaction\in\Actions}$ as its support. Consequently, whether \pair\ is \bce-consistent given \payoff\ is equivalent to whether a belief system $\left(\belief(\cdot|\aaction)\right)_{\aaction\in\Actions}$ exists that satisfies the following. First, for all states $\type\in\Types$,
\begin{align}\label{eq:state-mg-b}\tag{BP$_{\prior}$}
    \sum_{\aaction\in\Actions}\marginal(\aaction)\belief(\type|\aaction)=\prior(\type),
\end{align}
and for all $\aaction,\aactionb\in\Actions$,
\begin{align}\label{eq:obedience-b}\tag{O$_\belief$}
\sum_{\type\in\Types}\marginal(\aaction)\belief(\type|\aaction)\left[\payoff(\aaction,\type)-\payoff(\aactionb,\type)\right]\geq0.
\end{align}
Then, Equations \ref{eq:state-mg-b} and \ref{eq:obedience-b} require that (i) \marginal\ induces a Bayes plausible distribution over posteriors and (ii) for all actions \aaction, the \emph{posterior belief} $\belief(\cdot|\aaction)$ is an element of $\opta$. Under this interpretation,  the action distribution \marginal\ describes the frequency with which inducing beliefs in $\opta$ is necessary. The results in \autoref{sec:main} use the representation of  \bce-consistency given the utility function \payoff\ implied by Equations \ref{eq:state-mg-b} and \ref{eq:obedience-b}.

We close this section with \autoref{remark:analyst}, which compares our approach with that in the empirical work discussed in the introduction. Readers interested in the characterization results can jump to \autoref{sec:main}, with little loss of continuity.

\begin{remark}[]\label{remark:analyst}
Given the set of states and actions, and the \dm's choices, \marginal, a natural question to ask is whether a prior distribution \prior\ and a utility function \payoff\ exist such that the \dm's choices can be rationalized as if the \dm\ had access to information before optimally choosing their actions. We do not pursue this question for two reasons. First, without further restrictions, the answer to that question is trivial: one can always choose \prior\ and \payoff\ such that the \dm\ is indifferent between all actions in the support of \marginal\ at the prior. Second, in empirical work, the analyst typically postulates a finite-dimensional parameter space, \Param, that parameterizes the prior and the utility function. The analyst then seeks to characterize the set of parameters \param\ such that the pair $(\prior^\param,\marginal)$ is \bce-consistent given $\payoff^\param$. Our approach is similar, but  our results show the analyst only needs to parameterize the utility function: for each utility function \payoff, our results non-parametrically identify the set of priors \prior\ such that \pair\ is \bceu\ (if any). The analyst can then repeat this exercise for each $\param$ in their parameter space.
\end{remark}

\section{Basic characterization}\label{sec:main}
In this section, we introduce our basic characterization result, \autoref{theorem:h-representation}. 

\paragraph{A Minkowski-sum representation} 
%Instead of asking whether \pair\ is \bce-consistent given \payoff, we can ask what is the set of all priors $\prior\in\Posteriors$ such that \pair\ is \bce-consistent given \payoff. Denote the set of all such priors by \ppmpair. Indeed, for a fixed marginal \marginal,  characterizing \ppmpair\ is equivalent to determining whether \pair\ is \bce-consistent given \payoff. 

For a fixed utility function \payoff, Equations \ref{eq:state-mg-b} and \ref{eq:obedience-b} allow us to immediately characterize the set \ppmpair\ as the \marginal-weighted \emph{Minkowski sum} of the sets \opta. That is, we claim
\begin{align}\label{eq:minkowski-representation}\tag{MS}
\ppmpair=\sum_{\aaction\in\Actions}\marginal(\aaction)\opta,
\end{align}
Hence, $\pair\in\bceu$ if and only if \prior\ is in the \marginal-weighted Minkowski sum of the sets \opta. That any prior in \ppmpair\ is an element of $\sum_{\aaction\in\Actions}\marginal(\aaction)\opta$ follows from  Equations \ref{eq:state-mg-b} and \ref{eq:obedience-b}. Conversely, consider a collection of beliefs $\{\belief(\cdot|\aaction)\in\Posteriors:\aaction\in\Actions\}$ such that for all $\aaction\in\Actions$, $\belief(\cdot|\aaction)\in\opta$. This collection together with the prior $\priorb=\sum_{\aaction\in\Actions}\marginal(\aaction)\belief(\cdot|\aaction)$ satisfy Equations \ref{eq:state-mg-b} and \ref{eq:obedience-b}; hence, the pair $(\priorb,\marginal)$ is \bce-consistent given \payoff. We illustrate this and other results in this section with \autoref{example:match}:

%Formally, we have established the following
%\begin{align}\label{eq:minkowski-representation}\tag{MS}
%\pair\in\bceu\leftrightarrow\prior\in\sum_{\aaction\in\Actions}\marginal(\aaction)\opta\equiv\minkowski(\payoff,\marginal).
%\end{align}
%In words, \pair\ is \bce-consistent given \payoff\ if and only if \prior\ is in the \marginal-weighted Minkowski sum of the sets \opta. 

\begin{example}\label{example:match}
A \dm\ wants to match their action to the state of the world. We assume $\Types=\{\type_1,\type_2,\type_3\}$ and $\Actions=\{\aaction_1,\aaction_2,\aaction_3\}$. The \dm's utility function is then $\payoff(\aaction_j,\type_i)=\mathbbm{1}[i=j]$. In \autoref{fig:ex-match}, we depict the simplex as a full-dimensional subset of $\reals^2$, with the probabilities of $\type_1$, $\belief_1$, and of $\type_2$, $\belief_2$, in the $x$- and $y$-axis, respectively. In \autoref{fig:match}, the shaded areas represent the beliefs for which a given action is optimal (cf. \autoref{eq:opt-a}): $\aaction_1$ is optimal in the blue region, $\aaction_2$ is optimal in the red region, and $\aaction_3$ is optimal in the orange one. In \autoref{fig:ms-match}, the polytope delineated in black depicts the Minkowski sum of the sets \opta\ in \autoref{fig:match} weighted by the uniform action marginal, $\marginal=(1/3,1/3,1/3)$, and hence the set of all prior distributions consistent with the uniform action marginal in this example.
\begin{figure}[t!]
\centering
\subfloat[The sets \opta.]{\scalebox{0.65}{%
\begin{tikzpicture}
\begin{axis}[xmin=0,xmax=1,ymin=0,ymax=1,xticklabels={},yticklabels={},xlabel=$\mu_1$,ylabel=$\mu_2$,x label style={at={(axis description cs:1,-0.01)}},
    y label style={at={(axis description cs:-0.01,1)},rotate=-90}
    ,width=9cm,height=9cm]
    \addplot[name path=axis,forget plot,domain=0:1]{0};
%    \addlegendentry{}
    \addplot[name path=F,forget plot,draw=red!50,domain=0:0.5]{1-x};
   %     \addlegendentry{}
        \addplot[name path=F1,forget plot,draw=blue!50,domain=0.5:1]{1-x};
      %      \addlegendentry{}
    \addplot[name path=A0,forget plot,draw=blue!50,domain=1/3:0.5]{x};
        %\addlegendentry{}
    \addplot[name path=B0,forget plot,draw=blue!50, domain=1/3:0.5]{1-2*x};
%        \addlegendentry{}
    \addplot[blue!50,forget plot]fill between[of= A0 and B0, soft clip={domain=1/3:0.5}];
%        \addlegendentry{}
     \addplot[blue!50]fill between[of= F1 and axis, soft clip={domain=0.5:1}];
            \addlegendentry{$\Delta_u^*(a_1)$}
       \addplot[name path=C0,forget plot,draw=red!30, domain=0:1/3]{(1-x)/2};
    \addplot[red!30,forget plot]fill between[of= F and C0, soft clip={domain=0:1/3}];
        \addplot[red!30]fill between[of= F and A0, soft clip={domain=1/3:0.5}];
                    \addlegendentry{$\Delta_u^*(a_2)$}
        \addplot[orange!30,forget plot]fill between[of=C0 and axis, soft clip={domain=0:1/3}];
                \addplot[orange!30]fill between[of=B0 and axis, soft clip={domain=1/3:0.5}];
                            \addlegendentry{$\Delta_u^*(a_3)$}
%\addplot[dashed,thick,domain=1/6:1]{1/3-x};
%\addplot[draw=blue,fill=blue,mark=*] coordinates {(1/3,0)} [every node/.style={xshift=18pt,yshift=10pt}] node[pos=0]{{\small{$\prior(1/6)$}}};
%\addplot[draw=blue,fill=blue,mark=*] coordinates {(1/6,1/6)}  [every node/.style={xshift=18pt,yshift=10pt}] node[pos=0]{{\small{$\prior(0)$}}};
\end{axis}
\end{tikzpicture}}\label{fig:match}
}\hfill
\subfloat[The \marginal-weighted Minkowski sum of the sets \opta.]{
\scalebox{0.65}{%
\begin{tikzpicture}
\begin{axis}[xmin=0,xmax=1,ymin=0,ymax=1,xticklabels={},yticklabels={},xlabel=$\mu_1$,ylabel=$\mu_2$,x label style={at={(axis description cs:1,-0.01)}},
    y label style={at={(axis description cs:-0.01,1)},rotate=-90}
    ,width=9cm,height=9cm]
    \addplot[name path=axis,forget plot,domain=0:1]{0};
%    \addlegendentry{}
    \addplot[name path=F,forget plot,draw=red!50,domain=0:0.5]{1-x};
   %     \addlegendentry{}
        \addplot[name path=F1,forget plot,draw=blue!50,domain=0.5:1]{1-x};
      %      \addlegendentry{}
    \addplot[name path=A0,forget plot,draw=blue!50,domain=1/3:0.5]{x};
        %\addlegendentry{}
    \addplot[name path=B0,forget plot,draw=blue!50, domain=1/3:0.5]{1-2*x};
%        \addlegendentry{}
    \addplot[blue!50,forget plot]fill between[of= A0 and B0, soft clip={domain=1/3:0.5}];
%        \addlegendentry{}
     \addplot[blue!50]fill between[of= F1 and axis, soft clip={domain=0.5:1}];
            \addlegendentry{$\Delta_u^*(a_1)$}
       \addplot[name path=C0,forget plot,draw=red!30, domain=0:1/3]{(1-x)/2};
    \addplot[red!30,forget plot]fill between[of= F and C0, soft clip={domain=0:1/3}];
        \addplot[red!30]fill between[of= F and A0, soft clip={domain=1/3:0.5}];
                    \addlegendentry{$\Delta_u^*(a_2)$}
        \addplot[orange!30,forget plot]fill between[of=C0 and axis, soft clip={domain=0:1/3}];
                \addplot[orange!30]fill between[of=B0 and axis, soft clip={domain=1/3:0.5}];
                            \addlegendentry{$\Delta_u^*(a_3)$}
                \addplot[name path=A1, draw=black!25,domain=0:1]{8/9-x};
\addplot[name path=B1, draw=black!25,domain=0:8/9]{1/9};
\draw[name path=C1, black!25](1/9,0)--(1/9,8/9);
\addplot[black!25,opacity=0.75]fill between [of= A1 and B1, soft clip={domain=1/9:7/9}];
\addplot[name path=A0,draw=black!50,domain=0:1]{1.5-2*x};
\addplot[name path=A,draw=black!50,domain=0:1]{0.75-0.5*x};
\addplot[name path=B,draw=black!50,domain=0:1]{1/4-0.5*x};
\addplot[name path=C,draw=black!50,domain=0:1]{1/2-2*x};
\addplot[name path=D,draw=black!50,domain=0:1]{1/2+x};
\addplot[name path=E,draw=black!50,domain=0:1]{x-1/2};
\addplot[black!50,opacity=0.75] fill between[of=A and B, soft clip={domain=1/6:0.5}];
\addplot[black!50,opacity=0.75] fill between[of=D and C, soft clip={domain=0:1/6}];
\addplot[black!50,opacity=0.75] fill between[of=E and A0, soft clip={domain=0.5:2/3}];
\addplot[thick,black,domain=1/6:5/18]{0.75-0.5*x};
\addplot[thick,black,domain=1/9:1/6]{1/2+x};
\addplot[thick,black] coordinates {(1/9,11/18) (1/9,5/18)};
\addplot[thick,black,domain=1/9:1/6]{1/2-2*x};
\addplot[thick,black,domain=1/6:5/18]{1/4-0.5*x};
\addplot[thick,black,domain=5/18:11/18]{1/9};
\addplot[thick,black,domain=5/18:11/18]{8/9-x};
\addplot[thick,black,domain=11/18:2/3]{x-1/2};
\addplot[thick,black,domain=11/18:2/3]{1.5-2*x};
\end{axis}
\end{tikzpicture}}\label{fig:ms-match}
%\caption{The $(\nicefrac{1}{3},\nicefrac{1}{3},\nicefrac{1}{3})$-weighted Minkowski sum of the sets \opta\ in \autoref{example:match}}
}
\caption{The set \ppmpair\ in \autoref{example:match} for $\marginal=(1/3,1/3,1/3)$}\label{fig:ex-match}
\end{figure}
\end{example}
The Minkowski-sum representation of the set \ppmpair\ in \autoref{eq:minkowski-representation} has several implications. First, a prior exists such that \pair\ is \bce-consistent given \payoff\ if and only if the support of \marginal\ does not include strictly dominated actions, that is, actions $\aaction\in\Actions$ for which \opta\ is empty. Letting $\Actions_+$ denote the support of \marginal, we assume in what follows that $\Actions_+$ contains no strictly dominated actions. Second, because the sets \opta\ are full-dimensional polytopes, \ppmpair\ is itself a full-dimensional polytope  (cf. \autoref{fig:ms-match}).\footnote{To be fully precise, \ppmpair\ is full-dimensional so long as an $\aaction\in\Actions_+$ exists such that \opta\ is full-dimensional.}
 As such, it can be represented either as the convex hull of its extreme points (the so-called $V$-representation) or as the intersection of finitely many halfspaces (the so-called $H$-representation). In fact, letting $\ext(\gfset)$ denote the set of extreme points of a subset \gfset\ of $\realst$, we have that
\begin{align}\label{eq:extreme-ms}\tag{$V$-rep}
\ext(\minkowski(\payoff,\marginal))\subset\sum_{\aaction\in\Actions}\marginal(\aaction)\ext(\opta).
\end{align}
In words, any extreme point of \ppmpair\ is a \marginal-weighted combination of extreme points in the sets \opta, but the opposite may not hold.

Although \autoref{eq:minkowski-representation} characterizes the set of \bce-consistent distributions for a fixed utility function \payoff\ and action distribution \marginal, it does not capture the joint restrictions on the prior and utility that arise from requiring \marginal\ to be rationalizable via information. Moreover, even for fixed \payoff, computing \ppmpair\ is nontrivial\textemdash let alone doing so for every \payoff\ the analyst may wish to consider. Most algorithms for Minkowski sums require the $V$-representation of the sets \opta, which is generally unavailable, with the notable exception of \cite{bergemann2015limits}.\footnote{See \cite*{das2024worst}. In fact, obtaining the $V$-representation of the set \opta\ from its $H$-representation in \autoref{eq:opt-a} is known to be a computationally complex problem  \citep{weibel2007minkowski}.} 

%Whereas for a fixed utility function \payoff\ and action distribution \marginal\ \autoref{eq:minkowski-representation} characterizes the set of \bce-consistent distributions given \payoff, this result does not allow us to recover the joint restrictions on the prior and the utility function that arise from the requirement that \marginal\ must be rationalized via information. Moreover, for a fixed utility function, computing the set \ppmpair\ is not trivial, let alone, repeating this exercise for every utility function the analyst wants to consider. Most algorithms for computing a Minkowski sum rely on the $V$-representation of the sets \opta, which in general is not available, except perhaps for the work of \cite{bergemann2015limits}.\footnote{In fact, obtaining the $V$-representation of the set \opta\ from its $H$-representation in \autoref{eq:opt-a} is known to be a computationally complex problem (cf. \citealp{weibel2007minkowski}).}  %On the other hand, most algorithms add up two sets, say $\opt(\aaction_1)$ and $\opt(\aaction_2)$, and hence, the computation of \ppmpair\ necessitates the $V$-representation of the sum $\opt(\aaction_1)+\opt(\aaction_2)$ as well.

In the remainder of the paper, we focus on studying the $H$-representation of the set \ppmpair. First, as our results below highlight, the $H$-representation allows us to understand which pairs of prior and utility function can jointly rationalize the \marginal\ via information (cf. \autoref{sec:applications}). Second, our characterization of the $H$-representation of the set \ppmpair\ relies on the support function of this set, an object that has received increasing attention in the econometrics literature on random sets for partial identification \citep{molchanov2018random}.\footnote{That literature uses the support function to study the \emph{Aumann expectation} of a random set, with the Minkowski sum serving as its empirical analog. Because \Actions\ is finite, the Minkowski sum exactly represents \ppmpair. The key difference is that in our setting, the data\textemdash the action distribution\textemdash determines the weights of the Minkowski sum, not its summands, whereas in econometrics, data informs the summands, with weights given by empirical frequencies.}
%
%
%
%First, the $V$-representation of the set \ppmpair\ presumes the closed-form characterization of the extreme points of the sets \opta, which in general is not available, except perhaps for the work of \cite{bergemann2015limits}.\footnote{In fact, obtaining the $V$-representation of the set \opta\ from its $H$-representation in \autoref{eq:opt-a} is known to be a computationally complex problem (cf. \citealp{weibel2007minkowski}).} Second, most algorithms for computing a Minkowski sum of polytopes rely on the $V$-representation of the polytopes \opta. Moreover, most algorithms add up two sets, say $\opt(\aaction_1)$ and $\opt(\aaction_2)$, and hence, the computation of \ppmpair\ necessitates the $V$-representation of the sum $\opt(\aaction_1)+\opt(\aaction_2)$ as well. Finally, our characterization of the $H$-representation of the set \ppmpair\ relies on the support function of this set, an object that has received increasing attention in the econometrics literature for its nice properties \citep{molchanov2018random}.

\paragraph{The support function of \ppmpair} Because the set \ppmpair\ is convex, it can be characterized via its support function. Moreover, the support function of \ppmpair\ is the \marginal-weighted sum of the support function of the sets \opta\ \citep{rockafellar1970convex}. Together, these arguments lead to the following statement, which we record for future reference:
\begin{observation}[Support function of \ppmpair]\label{observation:ms}
The prior \prior\ is an element of \ppmpair\ if and only if for all vectors $\bdirection\in\realst$, the following holds:
\begin{align}\label{eq:support}
\max_{\beliefv\in\ppmpair}\pricev\beliefv=\sum_{\aaction\in\Actions}\marginal(\aaction)\max_{\beliefv\in\opta}\bdirection\bbelief\geq\bdirection\boldsymbol{\prior}.
\end{align}
\end{observation}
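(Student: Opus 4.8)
The plan is to derive \autoref{observation:ms} from the Minkowski-sum representation \eqref{eq:minkowski-representation} together with two classical facts about support functions: their additivity under Minkowski addition and positive scaling, and the recovery of a closed convex set as the intersection of its supporting halfspaces. Throughout, for a nonempty compact convex $V\subset\realst$ I write $h_V(\bdirection)=\max_{\beliefv\in V}\bdirection\beliefv$ for the support function; each maximum is attained because the relevant sets are compact polytopes.

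First I would establish the equality. Restricting attention to the support $\Actions_+$ of \marginal\ (which by assumption contains no strictly dominated actions, so each \opta\ with $\aaction\in\Actions_+$ is a nonempty compact polytope), \eqref{eq:minkowski-representation} gives $\ppmpair=\sum_{\aaction\in\Actions_+}\marginal(\aaction)\opta$, with the omitted terms (those with $\marginal(\aaction)=0$) contributing only the origin to the sum. Invoking \citet{rockafellar1970convex}, the support function is additive over Minkowski sums, $h_{V+V'}=h_V+h_{V'}$, and positively homogeneous in the set, $h_{\lambda V}=\lambda h_V$ for $\lambda\geq0$. Applying both, and using $\marginal(\aaction)\geq0$, yields
\begin{align*}
h_{\ppmpair}(\bdirection)=\sum_{\aaction\in\Actions}\marginal(\aaction)\,h_{\opta}(\bdirection)=\sum_{\aaction\in\Actions}\marginal(\aaction)\max_{\beliefv\in\opta}\bdirection\beliefv,
\end{align*}
which is the asserted equality, holding for every $\bdirection\in\realst$.

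Next I would establish the biconditional, where only the final inequality is at stake. The ``only if'' direction is immediate: if $\prior\in\ppmpair$, then $\bdirection\priorv\leq\max_{\beliefv\in\ppmpair}\bdirection\beliefv=h_{\ppmpair}(\bdirection)$ for every \bdirection\ by definition of the maximum. For the ``if'' direction I would argue by contraposition. Since \ppmpair\ is a nonempty compact convex set, if $\prior\notin\ppmpair$ the separating hyperplane theorem furnishes a direction \bdirection\ with $\bdirection\priorv>\sup_{\beliefv\in\ppmpair}\bdirection\beliefv=h_{\ppmpair}(\bdirection)$, which violates the displayed inequality for that \bdirection. Hence the inequality holding for all \bdirection\ forces $\prior\in\ppmpair$. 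Equivalently, this is the standard fact that the closed convex set \ppmpair\ coincides with $\{\beliefv:\bdirection\beliefv\leq h_{\ppmpair}(\bdirection)\ \text{for all}\ \bdirection\in\realst\}$.

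Because the result follows directly from known convex-analytic facts once \eqref{eq:minkowski-representation} is in hand, I do not expect a substantive obstacle; the only points meriting care are bookkeeping. I would (i) confirm the maxima are attained and the support functions finite, which follows from compactness and nonemptiness of the sets \opta\ (guaranteed by the no-dominated-actions assumption on $\Actions_+$, which also makes \ppmpair\ nonempty); and (ii) apply the separating hyperplane argument in the full ambient space \realst\ rather than the chart $\reals^{\tnum-1}$ in which \Posteriors\ is treated as full-dimensional, so that directions \bdirection\ range over all of \realst\ and, in particular, the affine constraint defining the simplex is encoded through opposing halfspaces.
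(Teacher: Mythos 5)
Your proposal is correct and follows essentially the same route as the paper: the observation is obtained there directly from the Minkowski-sum representation \eqref{eq:minkowski-representation}, the additivity and positive homogeneity of support functions under Minkowski addition (citing \citealp{rockafellar1970convex}), and the standard fact that a closed convex set is the intersection of its supporting halfspaces. Your additional bookkeeping on attainment of the maxima and on working in the ambient space \realst\ is consistent with the paper's conventions and adds nothing that conflicts with its argument.
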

In words, \pair\ is \bce-consistent given \payoff\ if and only if for all vectors $\bdirection\in\realst$, \prior\ lies below the support function of the set \ppmpair. The left-hand side of \autoref{eq:support} represents the support function of the set \ppmpair\ via the support functions of the sets \opta.\footnote{Whereas \autoref{observation:ms} is an immediate consequence of the Minkowski-sum representation of \ppmpair, it can alternatively be obtained from the dual representation of the system of equations that define the set \bceu\ in \autoref{definition:bce-c} (see \autoref{sec:utility}) and from \citet[Theorem 3]{strassen1965existence}.} An immediate consequence of \autoref{observation:ms} is that to define the $H$-representation of \ppmpair\textemdash defined by a collection of normal vector-height pairs $(\pricev_i,b_i)_{i\in I}$\textemdash the normal vectors alone suffice. By \autoref{eq:support}, the height of the halfspace with normal vector \pricev\ is the \marginal-weighted sum of the values of the support functions of the sets \opta\ in direction \pricev. Finally, we note that \autoref{eq:support}  uses our ongoing assumption that the support of \marginal\ contains no strictly dominated actions. However, by changing $\max$ for $\sup$ in \autoref{eq:support}, the support function of \ppmpair\ also accounts for whether \ppmpair\ is empty by following the convention that the supremum over an empty set is $-\infty$. 
%
%
%
%Recall that the $H$-representation of a polytope in $\reals^\Types$ is determined by a set of vectors $\pricev\in\reals^\Types$ and a set of heights
%An immediate implication of \autoref{observation:ms} is that determining the vectors \pricev\ that define the $H$-representation of \ppmpair\ is enough to obtain its $H$-representation. By \autoref{eq:support}, the height of the hyperplanes are determined by the values of the support function of the sets \opta\ in those directions, weighted by the distribution \marginal.

\paragraph{Test functions} \autoref{observation:ms} provides \emph{an} $H$-representation of the set \ppmpair, albeit not the most useful one, because checking whether $\prior\in\ppmpair$ requires verifying \autoref{eq:support} holds for infinitely many vectors $\pricev\in\realst$. The results that follow refine the result in \autoref{observation:ms} by describing sets of \emph{test} functions $\Prices\subsetneq\realst$ such that if \autoref{eq:support} holds for all vectors in $\pricev\in\Prices$, it holds for all vectors $\pricev\in\realst$.\footnote{Ideally, one would like to obtain the minimal $H$-representation of the set \ppmpair. Whereas the results in Theorems \ref{proposition:2-3}--\ref{proposition:2-step} provide the minimal-$H$ representation of \ppmpair\ in some environments, we use the term \emph{test functions} because these vectors are often a superset of the minimal-$H$ representation.}
%
%As we illustrate through the paper, the test functions in Theorems \ref{theorem:h} through \ref{theorem:2-step} provide the minimal-$H$ representation of \ppmpair\ in some environments.} 

%\annotation{is the word smallest here correct?}We obtain the $H$-representation of the set \ppmpair\ by identifying the smallest set of directions $\price\in\realst$ such that if the inequality on the right hand side of \autoref{eq:support} holds for all such directions, then it holds for all directions. To introduce the $H$-representation, one last definition is needed, that of a \emph{normal fan}:

The Minkowski-sum representation of the set \ppmpair\ immediately implies the face structure of the sets \opta\ determine that of the set \ppmpair\ and hence its $H$-representation. The concepts of the \emph{normal fan} of a polytope and the \emph{common refinement} of the normal fans of a collection of polytopes, to which we turn next, allow us to describe how the face structure of the sets \opta\ determine the $H$-representation of the set \ppmpair:
\begin{definition}[Normal fan and common refinement \protect{\citep[Lecture 7]{ziegler2012lectures}}]\label{definition:normal-fan}
Let \gfset\ denote a finite set and \polytope\ a nonempty full-dimensional polytope in $\reals^\gfset$. For each face \face\ of \polytope, let 
\begin{align*}
\normal_\face=\left\{\bdirection\in\reals^\gfset:\face\subseteq\{\gbold\in\polytope:\bdirection\gbold=\max_{\gbbold\in\polytope}\bdirection\gbbold\}\right\},
\end{align*}
denote the linear functions \bdirection\ that attain their maximum over \polytope\ on \face. 
\begin{enumerate}
\item The normal fan of \polytope, $\normal(\polytope)$, is the collection of the cones $\normal_\face$ over the faces of \polytope. %Denote by $\Prices_{\ext}(\normal(\polytope))$ the extreme rays of the one-dimensional cones of $\normal(\polytope)$.
\item Given a collection of polytopes, $(\polytope_k)_{1\leq k\leq K}$, the common refinement of their normal fans is given by
\begin{align*}
\wedge_{1\leq k\leq K}N(P_k)\equiv\left\{\cap_{1\leq k\leq K} \normal_k:(\forall k\in\{1,\dots, K\})\normal_k\in\normal(\polytope_k)\right\}.
\end{align*}
\end{enumerate}
\end{definition}
Intuitively, the normal fan of a polytope ``partitions'' normal vectors \bdirection\ according to the faces at which the objective, $\pricev\gbold$, attains its maximum value on \polytope.\footnote{Technically speaking, it is not a partition, which can be easily seen in \autoref{fig:match-1}. The direction $(1,1)$ belongs in the normal cone of the face defined by $\belief_1+\belief_2=1$, and also in the normal cone of the face defined by $(\belief_1,\belief_2)=(\nicefrac{1}{2},\nicefrac{1}{2})$.}  In line with the partition interpretation, the common refinement of a collection of normal fans is the meet of these partitions. %We illustrate these concepts using \autoref{example:match}:

We illustrate \autoref{definition:normal-fan} in the context of \autoref{example:match} in \autoref{fig:normal-fan}. In \autoref{fig:match-1}, the blue arrows denote the normal vectors defining the polytope $\opt(\aaction_1)$. Clockwise from the left, these vectors are given by  the utility differences $\payoff(\aaction_\aindex,\cdot)-\payoff(\aaction_1,\cdot)$ for $j\in\{2,3\}$\textemdash viewed as vectors in $\reals^2$\textemdash the direction $(1,1)$\textemdash arising from the constraint that $\belief_1+\belief_2\leq1$\textemdash and the (negative of the) canonical direction $e_{\type_2}=(0,1)$\textemdash arising from the nonnegativity constraint on $\belief_2$.  The normal fan of $\opt(\aaction_1)$ is given by the collection of cones depicted in dotted red, including the one-dimensional ones. \autoref{fig:partition} depicts the vectors that generate the cones in the normal fan. \autoref{fig:partition} illustrates how the normal fan partitions $\reals^2$, with vectors in the same cell representing linear functions in $\reals^\Types$ that attain their maximum on the same face of $\opt(\aaction_1)$.  \autoref{fig:match-2} illustrates the normal fan of $\opt(\aaction_2)$. \autoref{fig:common-r} illustrates the common refinement of the normal fans of $\opt(\aaction_1)$ and $\opt(\aaction_2)$, which \emph{refines} the partitions of $\reals^2$ induced by each of the normal fans. 

\begin{figure}[t!]
\horizfig[Normal fan of $\opt(\aaction_1)$]{\scalebox{0.7}{%
\begin{tikzpicture}
\begin{axis}[xmin=-0.15,xmax=1.15,ymin=-0.15,ymax=1.15,xticklabels={},yticklabels={},xlabel=$\mu_1$,ylabel=$\mu_2$,x label style={at={(axis description cs:1,-0.01)}},
    y label style={at={(axis description cs:-0.01,1)},rotate=-90}
    ,width=9cm,height=9cm,]
        \addplot[name path=naxis,black,forget plot,domain=-0.15:1.15]{-0.15};
                        \addplot[name path=nyaxis,black,forget plot,domain=-0.15:1.15]{1.15};
    \addplot[name path=axis,black!10,dashed,forget plot,domain=1/2:1]{0};
            \addplot[name path=F1,forget plot,draw=blue!50,domain=0.5:1]{1-x};
      %      \addlegendentry{}
    \addplot[name path=A0,forget plot,draw=blue!50,domain=1/3:0.5]{x};
        %\addlegendentry{}
    \addplot[name path=B0,forget plot,draw=blue!50, domain=1/3:0.5]{1-2*x};
%        \addlegendentry{}
    \addplot[blue!50,forget plot]fill between[of= A0 and B0, soft clip={domain=1/3:0.5}];
%        \addlegendentry{}
     \addplot[blue!50]fill between[of= F1 and axis, soft clip={domain=0.5:1}];
            \addlegendentry{$\Delta_u^*(a_1)$}
            \addplot[draw=blue,mark=*] coordinates {(1/2,1/2)};
                        \addplot[draw=blue,mark=*] coordinates {(1/3,1/3)};
                                    \addplot[draw=blue,mark=*] coordinates {(1/2,0)};
                                                \addplot[draw=blue,mark=*] coordinates {(1,0)};
\addplot[thick,blue,->] coordinates {(5/12, 5/12) (1/3,1/2)};
\addplot[thick,dotted,red] coordinates {(5/12,5/12) (-0.15,0.983333)};

\addplot[thick,blue,->] coordinates{(5/12,1/6) (0.325,29/240)};
\addplot[thick,dotted,red] coordinates {(5/12,1/6) (-0.15,-0.11667)};

\addplot[->,blue,thick] coordinates {(1/3,1/3) (0.25,7/24)};
\addplot[dotted,red,domain=-0.15:1/3,name path=cone2b]{0.5*x+1/6};
\addplot[->,blue, thick] coordinates {(1/3,1/3) (1/4,5/12)};
\addplot[dotted,red,domain=-0.15:1/3,name path=cone2]{2/3-x};
\addplot[pattern=dots,pattern color=red]fill between[of=cone2 and cone2b, soft clip={domain=-0.15:1/3}];
\addplot[thick,blue,->] coordinates {(0.75,0) (0.75,-0.1)};
\addplot[thick,dotted,red] coordinates {(0.75,0) (0.75,-0.15)};

\addplot[thick,blue,->] coordinates {(0.75,0.25) (0.85,0.35)};
\addplot[thick,dotted,red] coordinates {(0.75,0.25) (1.15,0.65)};

\addplot[->,blue,thick] coordinates {(1/2,0) (1/2,-0.1)};
\addplot[dotted,red,name path=cone1] coordinates {(1/2,0) (1/2,-0.15)};
\addplot[->,blue,thick] coordinates {(1/2,0) (0.4,-0.05)};
\addplot[dotted,red, name path=cone1b,domain=-0.15:1/2] {0.5*x-0.25};
\addplot[pattern=dots,pattern color=red]fill between[of=cone1b and naxis, soft clip={domain=-0.15:1/2}];

\addplot[->,blue,thick] coordinates {(1,0) (1,-0.1)};
\addplot[dotted,red,name path=cone3] coordinates {(1,0) (1,-0.15)};
\addplot[->,blue,thick] coordinates {(1,0) (1.1,0.1)};
\addplot[dotted,red,name path=cone3b, domain=1:1.15]{x-1};
\addplot[pattern=dots,pattern color=red]fill between[of=cone3b and naxis, soft clip={domain=1:1.15}];

\addplot[->,blue,thick] coordinates {(0.5,0.5) (0.6,0.6)};
\addplot[dotted,red,name path=cone4,domain=0.5:1.15] {x};
\addplot[->,blue,thick] coordinates {(0.5,0.5) (0.4,0.6)};
\addplot[dotted,red,name path=cone4b,domain=-0.15:0.5] {1-x};
%\addplot[->,black,thick] coordinates {(1,0) (1.1,0.1)};
%\addplot[black,dashed,name path=cone3b, domain=1:1.15]{x-1};
\addplot[pattern=dots,pattern color=red]fill between[of=nyaxis and cone4, soft clip={domain=0.5:1.15}];
\addplot[pattern=dots,pattern color=red]fill between[of=nyaxis and cone4b, soft clip={domain=-0.15:0.5}];
            \end{axis}
\end{tikzpicture}}\label{fig:match-1}}\hspace{2cm}
\horizfig[The normal fan partitions $\reals^2$]{
\scalebox{1.4}{
\begin{tikzpicture}
\draw[blue,thick,->](0,0)--(1,1);
\draw[blue,thick,->](0,0)--(-1,1);
\draw[blue,thick,->](0,0)--(-1,-0.5);
\draw[blue,thick,->](0,0)--(0,-1);
\end{tikzpicture}}\label{fig:partition}}

\horizfig[The normal fan of $\opt(\aaction_2)$]{\scalebox{0.7}{%
\begin{tikzpicture}
\begin{axis}[xmin=-0.15,xmax=1.15,ymin=-0.15,ymax=1.15,xticklabels={},yticklabels={},xlabel=$\mu_1$,ylabel=$\mu_2$,x label style={at={(axis description cs:1,-0.01)}},
    y label style={at={(axis description cs:-0.01,1)},rotate=-90}
    ,width=9cm,height=9cm]
\addplot[name path=C0,forget plot,draw=red!30, domain=0:1/3]{(1-x)/2};
    \addplot[red!30,forget plot]fill between[of= F and C0, soft clip={domain=0:1/3}];
        \addplot[red!30]fill between[of= F and A0, soft clip={domain=1/3:0.5}];
                    \addlegendentry{$\Delta_u^*(a_2)$}
            \addplot[draw=red,mark=*] coordinates {(1/2,1/2)};
                        \addplot[draw=red,mark=*] coordinates {(1/3,1/3)};
                                    \addplot[draw=red,mark=*] coordinates {(0,1/2)};
                                                \addplot[draw=red,mark=*] coordinates {(0,1)};

\addplot[red,->,thick] coordinates {(0, 3/4) (-0.1,3/4)};
\addplot[thick,red,dotted] coordinates {(0,3/4) (-0.15,3/4)};
\addplot[red,->,thick] coordinates {(1/4,3/4) (0.35,0.85)};
\addplot[thick,red,dotted] coordinates {(1/4,3/4) (1.15,1.65)};
\addplot[red,->,thick] coordinates {(5/12,5/12) (31/60,19/60)};
\addplot[thick,red,dotted] coordinates {(5/12,5/12) (0.983333,-0.15)};
\addplot[red,->,thick] coordinates {(1/4,3/8) (0.2,11/40)};
\addplot[red,dotted,thick] coordinates {(1/4,3/8) (-0.0125,-0.15)};

\addplot[dotted,red,domain=0.5:1.15,name path=cone2b]{x};
\addplot[dotted,red,domain=0.5:11.15,name path=cone2]{1-x};
\addplot[red,->,thick] coordinates {(1/2, 1/2) (0.6,0.6)};
\addplot[red,->,thick] coordinates {(1/2, 1/2) (0.6,0.4)};
\addplot[pattern=dots,pattern color=red]fill between[of=cone2 and cone2b,soft clip={domain=0.5:1.15}];

\addplot[domain=-0.15:1.15,name path=xaxis]{-0.15};
\addplot[dotted,red,domain=-0.15:1/3,name path=cone3]{2*x-1/3};
\addplot[dotted,red,domain=1/3:1.15,name path=cone3b]{2/3-x};
\addplot[red,->,thick] coordinates {(1/3, 1/3) (13/30,7/30)};
\addplot[red,->,thick] coordinates {(1/3, 1/3) (17/60,7/30)};
\addplot[pattern=dots,pattern color=red]fill between[of=cone3 and naxis,soft clip={domain=-0.15:1/3}];
\addplot[pattern=dots,pattern color=red]fill between[of=naxis and cone3b,soft clip={domain=1/3:1}];

\addplot[dotted,red,domain=-0.15:0,name path=cone4]{2*x+1/2};
\addplot[dotted,red,domain=-0.15:0,name path=cone4b]{1/2};
\addplot[red,->,thick] coordinates {(0,1/2) (-0.1,1/2)};
\addplot[red,->,thick] coordinates {(0,1/2) (-0.05,0.4)};
\addplot[pattern=dots,pattern color=red]fill between[of=cone4 and cone4b,soft clip={domain=-0.15:0}];

\addplot[domain=-0.15:1.15,name path=yaxis]{1.15};
\addplot[dotted,red,domain=0:1/3,name path=cone5]{x+1};
\addplot[dotted,red,domain=-0.15:0,name path=cone5b]{1};
\addplot[red,->,thick] coordinates {(0,1) (-0.1,1)};
\addplot[red,->,thick] coordinates {(0,1) (0.1,1.1)};
\addplot[pattern=dots,pattern color=red]fill between[of=yaxis and cone5b,soft clip={domain=-0.15:0}];
\addplot[pattern=dots,pattern color=red]fill between[of=yaxis and cone5,soft clip={domain=0:1/3}];

\end{axis}
\end{tikzpicture}}\label{fig:match-2}
}
\hspace{2cm}\horizfig[Common refinement of $\normal(\opt(\aaction_1))$ and $\normal(\opt(\aaction_2))$]{
\scalebox{1.4}{\begin{tikzpicture}
%\draw[red,thick,->](0,0)--(1,1);
\draw[red,thick,->](0,0)--(-1,0);
\draw[red,thick,->](0,0)--(-0.5,-1);
\draw[red,thick,->](0,0)--(1,-1);
\draw[violet,thick,->](0,0)--(1,1);
\draw[blue,thick,->](0,0)--(-1,1);
\draw[blue,thick,->](0,0)--(-1,-0.5);
\draw[blue,thick,->](0,0)--(0,-1);
\end{tikzpicture}}\label{fig:common-r}}
\caption{Illustration of \autoref{definition:normal-fan} in \autoref{example:match}.}\label{fig:normal-fan}
\end{figure}
The normal vectors defining the minimal $H$-representation of a full-dimensional polytope correspond to the extreme rays of the one-dimensional cones in the normal fan of that polytope \citep{ziegler2012lectures}. Figures \ref{fig:match-1} and \ref{fig:partition} illustrate this observation. For ease of reference, below, we denote the extreme rays of the one-dimensional cones in the normal fan of a polytope \polytope\ by $\Prices_{\ext}(\normal(\polytope))$. Consequently, the minimal $H$-representation of \ppmpair\ obtains from the extreme rays of the one-dimensional cones in the normal fan of \ppmpair.\footnote{Recall that we mean one-dimensional in $\reals^{\tnum-1}$.} The normal fan of \ppmpair\ is in fact the common refinement of the normal fans of the sets \opta\ \citep{ziegler2012lectures}. For instance, the normal fan in \autoref{fig:common-r} is the normal fan of $\opt(\aaction_1)+\opt(\aaction_2)$. It follows that the one-dimensional cones in the normal fan of \ppmpair\ are the one-dimensional cones in the common refinement of the normal fans of the sets \opta.  

%Thus, for each one-dimensional cone in the common refinement of the normal fans $\{\normal(\opta):\aaction\in\Actions\}$ select one representative generator, and let $\Prices_{\ext}$ denote the set of all vectors selected this way. \autoref{theorem:h-representation} presents our basic characterization of the set \bceu:
%
%
%
%
%%With these observation in mind, define $\prices_{\ext}$ to be the generators of the one-dimensional cones of the common refinement
\autoref{theorem:h-representation} summarizes this discussion and presents our basic characterization of the set \bceu. 
%In the statement, $\Prices_{\ext}(\normal(\polytope))$ denotes the extreme rays of the one-dimensional cones in $\normal(\polytope)$.

%\footnote{ }
\begin{theorem}[$H$-representation of \ppmpair]\label{theorem:h-representation} Suppose $\Actions_+$ contains no strictly dominated actions and hence \ppmpair\ is nonempty. The pair $\pair\in\Posteriors\times\Delta(\Actions)$ is \bce-consistent given \payoff\ if and only if \autoref{eq:support} holds for all $\pricev\in\Prices_{\ext}\left(\wedge_{\aaction\in\Actions_+}\normal(\opta)\right)$.\footnote{Formally, the extreme rays of the one-dimensional cones are vectors $\tilde\pricev\in\reals^{\tnum-1}$, which can be embedded in $\reals^\tnum$ as $\pricev=(0,\tilde{\pricev})$.}
%The following hold:
%\begin{enumerate}
%\item The set \ppmpair\ is nonempty if and only if $\Actions_+$ contains no strictly dominated actions.
%\item Suppose \ppmpair\ is nonempty. The pair $\pair\in\Posteriors\times\Delta(\Actions)$ is \bce-consistent given \payoff\ if and only if \autoref{eq:support} holds for all $\pricev\in\Prices_{\ext}\left(\wedge_{\aaction\in\Actions_+}\normal(\opta)\right)$.\footnote{Formally, the extreme rays of the one-dimensional cones are vectors $\tilde\pricev\in\reals^{\tnum-1}$, which can be embedded in $\reals^\tnum$ as $\pricev=(0,\tilde{\pricev})$.} 
%%
%%
%%extreme rays of the one-dimensional cones of the common refinement of the normal fans $\{\normal(\opta):\aaction\in\Actions\}$.
%\end{enumerate}
\end{theorem}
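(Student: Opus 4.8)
The plan is to chain the Minkowski-sum representation in \autoref{eq:minkowski-representation}, the support-function reformulation in \autoref{observation:ms}, and two standard facts about polytopes. Since $\pair\in\bceu$ if and only if $\prior\in\ppmpair$, and since under the standing full-dimensionality assumption $\ppmpair=\sum_{\aaction\in\Actions_+}\marginal(\aaction)\opta$ is a nonempty full-dimensional polytope in $\reals^{\tnum-1}$, the task reduces to showing that $\prior\in\ppmpair$ if and only if \autoref{eq:support} holds on the test family $\Prices_{\ext}(\wedge_{\aaction\in\Actions_+}\normal(\opta))$. A full-dimensional polytope equals the intersection of its facet-defining halfspaces; I would use this to argue that membership can be certified by finitely many inequalities and then match those inequalities to the stated test directions.

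By \autoref{observation:ms}, $\prior\in\ppmpair$ iff $\pricev\priorv\leq\sum_{\aaction\in\Actions}\marginal(\aaction)\max_{\beliefv\in\opta}\pricev\beliefv=h_{\ppmpair}(\pricev)$ for every direction $\pricev$, i.e. iff $\prior$ lies in the halfspace $\{\beliefv:\pricev\beliefv\leq h_{\ppmpair}(\pricev)\}$ for all $\pricev$, where $h_{\ppmpair}$ denotes the support function. Each facet of $\ppmpair$ is cut out by such a halfspace whose normal $\pricev$ is the facet's outer normal, and a full-dimensional polytope is exactly the intersection of its facet halfspaces. Hence $\prior\in\ppmpair$ iff \autoref{eq:support} holds for every facet normal of $\ppmpair$: necessity is immediate, and sufficiency follows because $\prior\notin\ppmpair$ forces the violation of some facet inequality, which is precisely \autoref{eq:support} failing for the corresponding facet normal.

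It remains to identify the facet normals combinatorially. By \citet{ziegler2012lectures}, the facet normals of a full-dimensional polytope are exactly the extreme rays of the one-dimensional cones of its normal fan, i.e. $\Prices_{\ext}(\normal(\ppmpair))$; and the normal fan of a Minkowski sum is the common refinement of the summands' normal fans, so $\normal(\ppmpair)=\wedge_{\aaction\in\Actions_+}\normal(\opta)$. The two fans being equal, they share the same one-dimensional cones and hence the same extreme rays, giving $\Prices_{\ext}(\normal(\ppmpair))=\Prices_{\ext}(\wedge_{\aaction\in\Actions_+}\normal(\opta))$. Substituting into the previous paragraph yields the theorem.

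The step demanding the most care is the dimension bookkeeping rather than any deep inequality. The sets $\opta$ and $\ppmpair$ are full-dimensional in $\reals^{\tnum-1}$, not in $\realst$, and two directions differing by a multiple of $\ones$ induce the same inequality on $\Posteriors$ because $\ones\beliefv=1$ there; I would therefore run the normal-fan argument in the quotient $\realst/\langle\ones\rangle\cong\reals^{\tnum-1}$ and only at the end re-embed each extreme ray $\tilde\pricev\in\reals^{\tnum-1}$ as $\pricev=(0,\tilde\pricev)\in\realst$, as in the theorem's footnote. I would also flag the mild redundancy issue, namely that the enumeration $\{\cap_{\aaction}\normal_{\aaction}\}$ in \autoref{definition:normal-fan} can list a one-dimensional cone more than once or as a face of a higher-dimensional cone; but since we extract only the extreme rays of the one-dimensional cones, this produces no spurious test directions. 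Everything else is assembly of the two cited polytope facts with the support-function reformulation.
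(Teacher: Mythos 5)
Your proof is correct and follows essentially the same route as the paper: the Minkowski-sum representation, the support-function reformulation of \autoref{observation:ms}, and the two Ziegler facts (facet normals are the extreme rays of the one-dimensional normal cones, and the normal fan of a Minkowski sum is the common refinement of the summands' fans), with the quotient by $\langle\ones\rangle$ handling the embedding exactly as the theorem's footnote intends. The only thing the paper's written proof adds is the degenerate case in which \ppmpair\ has dimension below $\tnum-1$ (handled in \autoref{appendix:main} via projection onto the kernel space and the affine hull), which your argument correctly sidesteps by invoking the standing full-dimensionality assumption.
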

The proof of \autoref{theorem:h-representation} and other results can be found in \autoref{appendix:proofs}.
%Implicit in the statement of \autoref{theorem:h-representation} and the results that follow is that \pair\ are probability distributions. 

\autoref{theorem:h-representation} characterizes the set of priors \prior\ such that \pair\ is \bce-consistent given \payoff\ via a finite system of inequalities \prior\ must satisfy. In practice, the analyst knows neither the prior \prior\ nor the \dm's utility \payoff. From this perspective, \autoref{theorem:h-representation} describes the joint restrictions on the pairs $(\prior,\payoff)$ for which an information structure exists that rationalizes the given action marginal \marginal. Thus, the approach in this paper also offers an alternative perspective on how to carry out the identification exercise. Typically, the analyst specifies the prior and the utility function up to a finite-dimensional parameter. Instead, for a given utility function \payoff, the set \ppmpair\ provides a nonparametric representation of all priors for which \pair\ is \bce-consistent given \payoff, which is useful whenever the analyst has no information on what the prior should be, but may have auxiliary data on the \dm's payoffs.%\footnote{By contrast, studies of risk aversion across domains assume \dm's beliefs about expected claim rates coincide with the actuarial probabilities and estimate the curvature of the utility function. We thank Michael D. Grubb for this point.}
 %Repeating this exercise for different utility functions, one recovers the identified set.

%
%
%\ld{link to Farkas', Strassen--which we will include in the appendix for the continuum. Also, comment on the change of perspective: for a given \payoff, this gives us the whole set of priors consistent with \pmpair.}

The characterization in \autoref{theorem:h-representation} leaves open the question of how to obtain the vectors that generate the one-dimensional cones in the common refinement of the normal fans $\{\normal(\opta):\aaction\in\Actions_+\}$. The answer to this question is the focus of the rest of the paper. The main challenge is that the one-dimensional cones of the normal fan of \ppmpair\ may not be determined solely from the one-dimensional cones of the normal fan of the sets \opta\textemdash the extreme rays of which we know\textemdash but from intersections of higher-dimensional cones in these normal fans\textemdash the extreme rays of which a priori we do not know. This issue is exacerbated by the cardinality of the set of states and actions, or by how \emph{complex} the utility differences $\payoff(\aaction,\cdot)-\payoff(\aactionb,\cdot)$ are. For that reason, our results place restrictions either on the cardinality of the states or the utility function. When \Types\ contains at most three states, \autoref{proposition:2-3} shows the one-dimensional cones of the normal fan of \ppmpair\ obtain directly from those of the normal fans $\{\normal(\opta):\aaction\in\Actions_+\}$. \autoref{sec:utility} characterizes the extreme rays of the one-dimensional cones of $\normal(\ppmpair)$ for canonical classes of utility functions. 

%\annotation{considering restoring the notation $\Prices_{\ext}$}
\paragraph{Connecting the $H$-representations of \opta\ and of \ppmpair} Suppose the set \ppmpair\ is nonempty, and let \pricev\ denote an extreme ray of 
a one-dimensional cone of \opta, for some action $\aaction\in\Actions_+$. Then, \pricev\ is an extreme ray of a one-dimensional cone of \ppmpair.\footnote{When \ppmpair\ has dimension less than $\tnum-1$, this statement applies to those \opta\ with dimension equal to that of \ppmpair.}  
%
%
%Whenever the set \ppmpair\ is nonempty, every facet-defining halfspace of \opta\ is a facet-defining halfspace of \ppmpair\ whenever $\marginal(\aaction)>0$.\footnote{When \ppmpair\ has dimension less than $\tnum-1$, this statement applies to those \opta\ with dimension equal to that of \ppmpair.} 
%
%
%In other words, the extreme rays of the one-dimensional cones of the normal fan of \ppmpair\ \emph{always} contain the extreme rays of the one-dimensional cones of the normal fan of \opta\ whenever $\marginal(\aaction)>0$. 
Hence, the minimal set of test functions always includes the extreme rays of the one-dimensional cones of the normal fan of \opta\ whenever $\aaction\in\Actions_+$, that is,
\begin{align}\label{eq:inclusion}
\cup_{\aaction\in\Actions_+}\Prices_{\ext}(\normal(\opta))\subset\Prices_{\ext}(\normal(\ppmpair)).
\end{align}
Thus, asking when this inclusion is an equality is natural. \autoref{proposition:2-3} below shows this is the case whenever $\cardt\leq3$. To state \autoref{proposition:2-3}, define
\begin{align}\label{eq:test-small}
\Prices_{\opt}=\cup_{\aactionb\in\Actions_+}\{\payoff(\aactionbb,\cdot)-\payoff(\aactionb,\cdot):\aactionbb\in\Actions\}\cup\{-e_\type:\type\in\Types\},%\cup\{\mathbf{1},-\mathbf{1}\},
\end{align}
where  $\payoff(\aactionbb,\cdot)-\payoff(\aactionb,\cdot)\in\realst$ is the vector that collects the payoff differences between \aactionbb\ and \aactionb\ as a function of the state, and $e_\type$ is the vector in \realst\ with a 1 in the \type-coordinate and $0$ elsewhere. For a given $\aactionb\in\Actions$, these vectors define the $H$-representation of \optab\ and hence contain $\Prices_{\ext}(\normal(\optab))$. Taking union over the different actions $\aactionb$ yields the set $\Prices_{\opt}$.\footnote{We could refine the set $\Prices_{\opt}$ by eliminating normal directions to \opta\ that do not define a facet of \opta.}% Doing (ii) is difficult without further restrictions on the decision problem.}

\autoref{proposition:2-3} summarizes the above discussion:
\begin{theorem}[Simple state spaces]\label{proposition:2-3}
Suppose $|\Types|\leq 3$. The pair $\pair\in\Posteriors\times\Delta(\Actions)$ is \bce-consistent given \payoff\ if and only if \autoref{eq:support} holds for all $\pricev\in\Prices_{\opt}$, that is, if and only if for all states $\type\in\Types$,
\begin{align}
\sum_{\aaction\in\Actions}\marginal(\aaction)\min_{\belief\in\opta}\belief(\type)&\leq\prior(\type),\label{eq:bce-c-states}\tag{BM}
\intertext{and for all pairs of actions $\aactionb\in\Actions_+$ and $\aactionbb\in\Actions$,}
\label{eq:bce-c-actions}
\sum_{\aaction\in\Actions}\marginal(\aaction)\max_{\belief\in\opta}\sum_{\type\in\Types}\belief(\type)\left[\payoff(\aactionbb,\type)-\payoff(\aactionb,\type)\right]&\geq\sum_{\type\in\Types}\prior(\type)\left[\payoff(\aactionbb,\type)-\payoff(\aactionb,\type)\right].\tag{PM}
\end{align}
\end{theorem}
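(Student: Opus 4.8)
The plan is to derive \autoref{proposition:2-3} from \autoref{theorem:h-representation} by showing that, when $\cardt\leq3$, every extreme ray in $\Prices_{\ext}(\wedge_{\aaction\in\Actions_+}\normal(\opta))$ already lies in $\Prices_{\opt}$. Necessity is immediate: if $\pair$ is \bce-consistent given \payoff, then $\prior\in\ppmpair$, so by \autoref{observation:ms} \autoref{eq:support} holds for every $\pricev\in\realst$, and specializing to $\pricev=-e_\type$ and to $\pricev=\payoff(\aactionbb,\cdot)-\payoff(\aactionb,\cdot)$ yields \eqref{eq:bce-c-states} and \eqref{eq:bce-c-actions}. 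The content is the converse, for which, by \autoref{theorem:h-representation}, it suffices to establish
\[
\Prices_{\ext}\!\left(\wedge_{\aaction\in\Actions_+}\normal(\opta)\right)\subseteq\Prices_{\opt},
\]
since verifying \autoref{eq:support} on $\Prices_{\opt}$ then forces it on the full set of test functions of \ppmpair.

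Because $\cardt\leq3$, the relevant ambient space $\reals^{\tnum-1}$ has dimension at most two, so the problem reduces to a statement about fans in the plane (the case $\tnum=2$ being immediate). The key lemma I would prove is: in $\reals^2$, the one-dimensional cones of the common refinement of finitely many complete fans coincide with the union of the one-dimensional cones of the individual fans. Granting this, the one-dimensional cones of $\normal(\ppmpair)=\wedge_{\aaction\in\Actions_+}\normal(\opta)$ are exactly those appearing in some $\normal(\opta)$, which, combined with \autoref{eq:inclusion}, turns the displayed inclusion into an equality.

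I would prove the lemma directly from the angular structure of planar fans. A one-dimensional cone $\rho$ of the refinement is an intersection $\cap_{\aaction}\normal_\aaction$ with $\normal_\aaction\in\normal(\opta)$; for this intersection to be exactly a ray, $\rho$ must coincide either with some already one-dimensional $\normal_\aaction$ or with a boundary ray of some two-dimensional $\normal_\aaction$, and in both cases $\rho$ is a one-dimensional cone of $\normal(\opta)$. Conversely, any ray from the origin lies inside a single cone of each planar fan, so a one-dimensional cone of an individual fan survives verbatim in the refinement. This is precisely the step that breaks once $\tnum-1\geq3$: intersecting the two-dimensional faces of higher-dimensional normal cones can create genuinely new extreme rays, which is the obstruction flagged after \autoref{theorem:h-representation} and the reason for the cardinality hypothesis. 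This lemma is the main obstacle; the rest is bookkeeping.

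It remains to identify the one-dimensional cones of each $\normal(\opta)$ with the vectors collected in $\Prices_{\opt}$. Since $\opta$ is carved out by the obedience inequalities $(\payoff(\aaction,\cdot)-\payoff(\aactionb,\cdot))\bbelief\geq0$ together with the simplex constraints $\belief(\type)\geq0$ and $\sum_\type\belief(\type)=1$, its facet normals—hence the extreme rays of its one-dimensional cones—lie among the payoff-difference vectors $\payoff(\aactionbb,\cdot)-\payoff(\aactionb,\cdot)$ and the vectors $-e_\type$ (the constraint $\sum_\type\belief(\type)\leq1$ being encoded by some $-e_\type$ under the $\reals^{\tnum-1}$ embedding). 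Taking the union over $\aaction\in\Actions_+$ places them inside $\Prices_{\opt}$, completing the inclusion. Finally, specializing \autoref{eq:support} to $\pricev=-e_\type$, via $\max_{\belief\in\opta}(-e_\type)\bbelief=-\min_{\belief\in\opta}\belief(\type)$, yields \eqref{eq:bce-c-states}, and specializing to $\pricev=\payoff(\aactionbb,\cdot)-\payoff(\aactionb,\cdot)$ yields \eqref{eq:bce-c-actions}; the only delicate point in this last translation is keeping track of the embedding $\reals^{\tnum-1}\hookrightarrow\realst$.
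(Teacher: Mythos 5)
Your proof is correct and follows essentially the same route as the paper's: the paper's appendix argument decomposes each facet of \ppmpair\ as a \marginal-weighted sum of parallel faces of the sets \opta\ and reads off that its normal is a facet normal of some \opta, which is exactly the primal restatement of your planar-fan lemma that the common refinement of complete fans in $\reals^2$ creates no new one-dimensional cones. The only difference is that the paper also walks explicitly through the degenerate cases in which \ppmpair\ has dimension $0$ or $1$ (where the relevant normal cones are not pointed), whereas you implicitly rely on the main text's standing full-dimensionality assumption; under that assumption your argument is complete.
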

In \autoref{proposition:2-3}, \autoref{eq:bce-c-states} corresponds to \autoref{eq:support} evaluated at the canonical vectors $-e_\type$, whereas \autoref{eq:bce-c-actions} corresponds to \autoref{eq:support} evaluated at the vectors $\payoff(\aactionbb,\cdot)-\payoff(\aactionb,\cdot)$ for different action pairs $(\aactionb,\aactionbb)$. 
%Lastly, the condition that the coordinates of \prior\ add up to $1$ follows from \autoref{eq:support} evaluated at $\pricev\in\{\mathbf{1},-\mathbf{1}\}$ and that \marginal\ is a distribution.

We remark that as long as a direction $\pricev\in\Prices_{\opt}$ is an extreme ray in the one-dimensional cone of the normal fan of \emph{some} \opta\ such that $\aaction\in\Actions_+$, the corresponding equation in the statement of \autoref{proposition:2-3} is part of the $H$-representation of \ppmpair\ independently of the cardinality of the states. Thus, whereas any $\pricev\in\realst$ defines through \autoref{eq:support} a necessary condition for \pair\ to be \bce-consistent given \payoff, \autoref{eq:inclusion} gives us a precise sense in which equations \ref{eq:bce-c-states} and \ref{eq:bce-c-actions} are \emph{the} necessary conditions and can be used to rule out pairs \pair\ that are not \bce-consistent given \payoff.  In fact, we can  provide intuition for Equations \ref{eq:bce-c-states} and \ref{eq:bce-c-actions} by reasoning about their necessity, starting from \autoref{eq:bce-c-states}. If \pair\ is \bce-consistent given \payoff, then we can find a belief system that satisfies for each $\aaction\in\Actions$, 
\begin{align*}
\sum_{\aaction\in\Actions}\marginal(\aaction)\belief(\type|\aaction)=\prior(\type)\Rightarrow \sum_{\aaction\in\Actions}\marginal(\aaction)\min_{\belief\in\opta}\belief(\type)\leq\prior(\type).
\end{align*}
The above implication holds because \bce-consistency of \pair\ implies that for each $\aaction\in\Actions_+$, $\belief(\cdot|\aaction)$ is an element of \opta. Thus, if \pair\ is \bce-consistent, the left-hand side of \autoref{eq:bce-c-states} is a lower bound on the prior. In particular, \autoref{eq:bce-c-states} implies the non-negativity constraints on \prior. Whenever the left-hand side of \autoref{eq:bce-c-states} is positive, rationalizing \marginal\ via information requires the \dm\ to assign strictly positive probability to some states.

Intuitively, \autoref{eq:bce-c-states} verifies whether finding a belief system $\{\belief(\cdot|\aaction):\aaction\in\Actions\}$ that satisfies the \emph{belief martingale} property relative to \prior\ is possible. Indeed, if the inequality in \autoref{eq:bce-c-states} failed for some state, either the frequency with which the \dm\ is taking a given action, or the minimum probability the \dm\ needs to assign to this state so that taking a given action is optimal, reflects that the \dm\ is more optimistic than at the prior. In this case, \marginal\ cannot be rationalized via information.

To provide intuition for \autoref{eq:bce-c-actions}, considering the binary-action case is useful. Assume $\Actions=\{\aaction_1,\aaction_2\}$:  \bce-consistency of \pair\ implies a belief system exists that satisfies at most two obedience constraints, which we can write as the following chain of inequalities:
\begin{align*}
\marginal(\aaction_1)\sum_{\type\in\Types}\belief(\type|\aaction_1)\left[\payoff(\aaction_1,\type)-\payoff(\aaction_2,\type)\right]\geq0\geq\marginal(\aaction_2)\sum_{\type\in\Types}\belief(\type|\aaction_2)\left[\payoff(\aaction_1,\type)-\payoff(\aaction_2,\type)\right].
\end{align*}
That is, information resulting in beliefs $\belief(\cdot|\aaction_1)$ and $\belief(\cdot|\aaction_2)$ alters the relative ranking of $\aaction_1$ and $\aaction_2$. Equation \ref{eq:state-mg-b} implies that if we add up both sides of the above chain, we obtain the following:
\begin{align*}
&\marginal(\aaction_1)\sum_{\type\in\Types}\belief(\type|\aaction_1)\left[\payoff(\aaction_1,\type)-\payoff(\aaction_2,\type)\right]+\marginal(\aaction_2)\sum_{\type\in\Types}\belief(\type|\aaction_2)\left[\payoff(\aaction_1,\type)-\payoff(\aaction_2,\type)\right]=\\
=&\sum_{\type\in\Types}\prior(\type)\left[\payoff(\aaction_1,\type)-\payoff(\aaction_2,\type)\right].
\end{align*}
In other words, although information can alter the relative ranking between the two actions, it cannot systematically do so: on average, the ranking between $\aaction_1$ and $\aaction_2$ must coincide with how the \dm\ ranks these two actions at the prior \prior. This observation is the analogue to the belief martingale condition, albeit in terms of the \dm's payoffs; hence we refer to it as a \emph{payoff martingale} condition. Using once again the property that $\belief(\cdot|\aaction)\in\opta$, the above equality implies \autoref{eq:bce-c-actions}. 

Consider now the case in which the \dm\ has three actions, $\Actions=\{\aaction_1,\aaction_2,\aaction_3\}$, and once again, \autoref{eq:bce-c-actions} for the pair $\aaction_1,\aaction_2$. Whereas the obedience constraints feature the comparison between these two actions when $\aaction_1$ or $\aaction_2$ is recommended, no such inequality arises when $\aaction_3$ is recommended. Still, \autoref{eq:bce-c-actions} adds up over all actions, including $\aaction_3$. The reason is that when ensuring $\aaction_3$ is optimal, the obedience constraints place no restrictions on the relative ranking of $\aaction_1$ and $\aaction_2$. Yet, the ranking of these two actions has to average to their ranking at the prior over \emph{all} beliefs the \dm\ has. \autoref{eq:bce-c-actions} checks that the payoff martingale condition can be satisfied by placing bounds on the induced relative rankings for each action pair $(\aactionb,\aactionbb)$ across all action recommendations.

We illustrate \autoref{proposition:2-3} using \autoref{example:match}:
\setcounter{example}{0}
\begin{example}[continued]\label{example:match-h}
\begin{figure}[t!]
\centering
\subfloat[Belief-martingale conditions \eqref{eq:bce-c-states}]{
\scalebox{0.65}{%
\begin{tikzpicture}
\begin{axis}[xmin=0,xmax=1,ymin=0,ymax=1,xticklabels={},yticklabels={},xlabel=$\mu_1$,ylabel=$\mu_2$,x label style={at={(axis description cs:1,-0.01)}},
    y label style={at={(axis description cs:-0.01,1)},rotate=-90}
    ,width=9cm,height=9cm]
    \addplot[name path=axis,forget plot,domain=0:1]{0};
%    \addlegendentry{}
    \addplot[name path=F,forget plot,draw=red!50,domain=0:0.5]{1-x};
   %     \addlegendentry{}
        \addplot[name path=F1,forget plot,draw=blue!50,domain=0.5:1]{1-x};
      %      \addlegendentry{}
    \addplot[name path=A0,forget plot,draw=blue!50,domain=1/3:0.5]{x};
        %\addlegendentry{}
    \addplot[name path=B0,forget plot,draw=blue!50, domain=1/3:0.5]{1-2*x};
%        \addlegendentry{}
    \addplot[blue!50,forget plot]fill between[of= A0 and B0, soft clip={domain=1/3:0.5}];
%        \addlegendentry{}
     \addplot[blue!50]fill between[of= F1 and axis, soft clip={domain=0.5:1}];
            \addlegendentry{$\Delta_u^*(a_1)$}
       \addplot[name path=C0,forget plot,draw=red!30, domain=0:1/3]{(1-x)/2};
    \addplot[red!30,forget plot]fill between[of= F and C0, soft clip={domain=0:1/3}];
        \addplot[red!30]fill between[of= F and A0, soft clip={domain=1/3:0.5}];
                    \addlegendentry{$\Delta_u^*(a_2)$}
        \addplot[orange!30,forget plot]fill between[of=C0 and axis, soft clip={domain=0:1/3}];
                \addplot[orange!30]fill between[of=B0 and axis, soft clip={domain=1/3:0.5}];
                            \addlegendentry{$\Delta_u^*(a_3)$}
\addplot[name path=A1, draw=black!25,domain=0:1]{8/9-x};
\addplot[name path=B1, draw=black!25,domain=0:8/9]{1/9};
\draw[name path=C1, black!25](1/9,0)--(1/9,8/9);
\addplot[black!25,opacity=0.75]fill between [of= A1 and B1, soft clip={domain=1/9:7/9}];
%\addplot[dashed,thick,domain=1/6:1]{1/3-x};
\end{axis}
\end{tikzpicture}}\label{fig:bce-c-states}
}
\subfloat[Payoff-martingale conditions \eqref{eq:bce-c-actions}]{\scalebox{0.65}{%
\begin{tikzpicture}
\begin{axis}[xmin=0,xmax=1,ymin=0,ymax=1,xticklabels={},yticklabels={},xlabel=$\mu_1$,ylabel=$\mu_2$,x label style={at={(axis description cs:1,-0.01)}},
    y label style={at={(axis description cs:-0.01,1)},rotate=-90}
    ,width=9cm,height=9cm]
    \addplot[name path=axis,forget plot,domain=0:1]{0};
%    \addlegendentry{}
    \addplot[name path=F,forget plot,draw=red!50,domain=0:0.5]{1-x};
   %     \addlegendentry{}
        \addplot[name path=F1,forget plot,draw=blue!50,domain=0.5:1]{1-x};
      %      \addlegendentry{}
    \addplot[name path=A0,forget plot,draw=blue!50,domain=1/3:0.5]{x};
        %\addlegendentry{}
    \addplot[name path=B0,forget plot,draw=blue!50, domain=1/3:0.5]{1-2*x};
%        \addlegendentry{}
    \addplot[blue!50,forget plot]fill between[of= A0 and B0, soft clip={domain=1/3:0.5}];
%        \addlegendentry{}
     \addplot[blue!50]fill between[of= F1 and axis, soft clip={domain=0.5:1}];
            \addlegendentry{$\Delta_u^*(a_1)$}
       \addplot[name path=C0,forget plot,draw=red!30, domain=0:1/3]{(1-x)/2};
    \addplot[red!30,forget plot]fill between[of= F and C0, soft clip={domain=0:1/3}];
        \addplot[red!30]fill between[of= F and A0, soft clip={domain=1/3:0.5}];
                    \addlegendentry{$\Delta_u^*(a_2)$}
        \addplot[orange!30,forget plot]fill between[of=C0 and axis, soft clip={domain=0:1/3}];
                \addplot[orange!30]fill between[of=B0 and axis, soft clip={domain=1/3:0.5}];
                            \addlegendentry{$\Delta_u^*(a_3)$}
\addplot[name path=A0,draw=black!50,domain=0:1]{1.5-2*x};
\addplot[name path=A,draw=black!50,domain=0:1]{0.75-0.5*x};
\addplot[name path=B,draw=black!50,domain=0:1]{1/4-0.5*x};
\addplot[name path=C,draw=black!50,domain=0:1]{1/2-2*x};
\addplot[name path=D,draw=black!50,domain=0:1]{1/2+x};
\addplot[name path=E,draw=black!50,domain=0:1]{x-1/2};
\addplot[black!50,opacity=0.75] fill between[of=A and B, soft clip={domain=1/6:0.5}];
\addplot[black!50,opacity=0.75] fill between[of=D and C, soft clip={domain=0:1/6}];
\addplot[black!50,opacity=0.75] fill between[of=E and A0, soft clip={domain=0.5:2/3}];
%\addplot[dashed,thick,domain=1/6:1]{1/3-x};
%\addplot[black!50,opacity=0.75] fill between[of=A and C, soft clip={domain=0:1/6}];
\end{axis}
\end{tikzpicture}}\label{fig:bce-c-actions}}
\subfloat[The set \ppmpair]{\scalebox{0.65}{%
\begin{tikzpicture}
\begin{axis}[xmin=0,xmax=1,ymin=0,ymax=1,xticklabels={},yticklabels={},xlabel=$\mu_1$,ylabel=$\mu_2$,x label style={at={(axis description cs:1,-0.01)}},
    y label style={at={(axis description cs:-0.01,1)},rotate=-90}
    ,width=9cm,height=9cm]
    \addplot[name path=axis,forget plot,domain=0:1]{0};
%    \addlegendentry{}
    \addplot[name path=F,forget plot,draw=red!50,domain=0:0.5]{1-x};
   %     \addlegendentry{}
        \addplot[name path=F1,forget plot,draw=blue!50,domain=0.5:1]{1-x};
      %      \addlegendentry{}
    \addplot[name path=A0,forget plot,draw=blue!50,domain=1/3:0.5]{x};
        %\addlegendentry{}
    \addplot[name path=B0,forget plot,draw=blue!50, domain=1/3:0.5]{1-2*x};
%        \addlegendentry{}
    \addplot[blue!50,forget plot]fill between[of= A0 and B0, soft clip={domain=1/3:0.5}];
%        \addlegendentry{}
     \addplot[blue!50]fill between[of= F1 and axis, soft clip={domain=0.5:1}];
            \addlegendentry{$\Delta_u^*(a_1)$}
       \addplot[name path=C0,forget plot,draw=red!30, domain=0:1/3]{(1-x)/2};
    \addplot[red!30,forget plot]fill between[of= F and C0, soft clip={domain=0:1/3}];
        \addplot[red!30]fill between[of= F and A0, soft clip={domain=1/3:0.5}];
                    \addlegendentry{$\Delta_u^*(a_2)$}
        \addplot[orange!30,forget plot]fill between[of=C0 and axis, soft clip={domain=0:1/3}];
                \addplot[orange!30]fill between[of=B0 and axis, soft clip={domain=1/3:0.5}];
                            \addlegendentry{$\Delta_u^*(a_3)$}
                \addplot[name path=A1, draw=black!25,domain=0:1]{8/9-x};
\addplot[name path=B1, draw=black!25,domain=0:8/9]{1/9};
\draw[name path=C1, black!25](1/9,0)--(1/9,8/9);
\addplot[black!25,opacity=0.75]fill between [of= A1 and B1, soft clip={domain=1/9:7/9}];
\addplot[name path=A0,draw=black!50,domain=0:1]{1.5-2*x};
\addplot[name path=A,draw=black!50,domain=0:1]{0.75-0.5*x};
\addplot[name path=B,draw=black!50,domain=0:1]{1/4-0.5*x};
\addplot[name path=C,draw=black!50,domain=0:1]{1/2-2*x};
\addplot[name path=D,draw=black!50,domain=0:1]{1/2+x};
\addplot[name path=E,draw=black!50,domain=0:1]{x-1/2};
\addplot[black!50,opacity=0.75] fill between[of=A and B, soft clip={domain=1/6:0.5}];
\addplot[black!50,opacity=0.75] fill between[of=D and C, soft clip={domain=0:1/6}];
\addplot[black!50,opacity=0.75] fill between[of=E and A0, soft clip={domain=0.5:2/3}];
\addplot[thick,black,domain=1/6:5/18]{0.75-0.5*x};
\addplot[thick,black,domain=1/9:1/6]{1/2+x};
\addplot[thick,black] coordinates {(1/9,11/18) (1/9,5/18)};
\addplot[thick,black,domain=1/9:1/6]{1/2-2*x};
\addplot[thick,black,domain=1/6:5/18]{1/4-0.5*x};
\addplot[thick,black,domain=5/18:11/18]{1/9};
\addplot[thick,black,domain=5/18:11/18]{8/9-x};
\addplot[thick,black,domain=11/18:2/3]{x-1/2};
\addplot[thick,black,domain=11/18:2/3]{1.5-2*x};
\end{axis}
\end{tikzpicture}}\label{fig:2-3-int}}
\caption{Illustrating \autoref{proposition:2-3} in \autoref{example:match} for $\marginal=(1/3,1/3,1/3)$.}\label{fig:2-3-match}
\end{figure}
\autoref{fig:2-3-match} illustrates the intersection of the halfspaces defined by the equations in \autoref{proposition:2-3} in the case of a uniform marginal. \autoref{fig:bce-c-states} illustrates the belief martingale equations \ref{eq:bce-c-states}. When the \dm\ is taking each action with probability $1/3$, these equations require that the \dm\ assigns probability of at least $1/9$ to each of the states. This is intuitive: the \dm\ taking all actions with equal probability reflects they assign enough probability to each of the states. \autoref{fig:bce-c-actions} illustrates the payoff martingale conditions \ref{eq:bce-c-actions}. In this example, for any two actions $\aaction_\aindex,\aaction_k$, the payoff difference at belief \belief\ is given by the difference in beliefs $\belief(\type_\aindex)-\belief(\type_k)$. The payoff martingale conditions place bounds on the difference in beliefs at the prior, $\prior(\type_\aindex)-\prior(\type_k)$ for any two pair of states. In particular, they imply $\prior(\type_\aindex)-\prior(\type_k)\in[-1/2,1/2]$. Again, this is intuitive: the \dm\ taking all actions with equal probability reflects that at the prior, the \dm's beliefs do not significantly favor any of the states. 

\autoref{fig:2-3-int} illustrates the intersection of all halfspaces defined by Equations \ref{eq:bce-c-states} and \ref{eq:bce-c-actions}. In \autoref{example:match}, each of these equations define a face of the polytope. In other words, in this example, each of the test functions, $\Prices_{\opt}$, is needed to define the set \ppmpair; hence, the characterization in \autoref{proposition:2-3} provides the minimal $H$-representation of the set of priors \prior\ such that \pair\ is \bce-consistent given \payoff.
\end{example}
Whereas \autoref{example:match-h} provides an instance in which all equations in \autoref{proposition:2-3} are needed to define \ppmpair, this is not always the case. Indeed, when $\Types=\{\type_1,\type_2\}$, the belief martingale equations \ref{eq:bce-c-states} alone define this set. In the case of two states, the prior is summarized by the probability of $\type_2$, $\prior(\type_2)$. It is immediate that the set \ppmpair\ is an interval; hence, it is defined by two inequalities. The same is true of the sets $\{\opta:\aaction\in\Actions\}$, which in a slight abuse of notation, we define as $\opta=[\underline{\belief}(\type_2|\aaction),\overline{\belief}(\type_2|\aaction)]$. \autoref{corollary:2-state} shows the lower and upper bounds of \opta\ define the lower and upper bounds of \ppmpair:

\begin{corollary}[Binary states]\label{corollary:2-state} 
Suppose $|\Types|=2$. Then, \pair\ is \bce-consistent given \payoff\ if and only if
\begin{align*}
\sum_{\aaction\in\Actions}\marginal(\aaction)\underline{\belief}(\type_2|\aaction)\leq\prior(\type_2)\leq\sum_{\aaction\in\Actions}\marginal(\aaction)\overline{\belief}(\type_2|\aaction).
\end{align*}
\end{corollary}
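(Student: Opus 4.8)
The plan is to specialize the Minkowski-sum representation \eqref{eq:minkowski-representation} to the one-dimensional simplex. By \eqref{eq:minkowski-representation}, the pair \pair\ is \bce-consistent given \payoff\ if and only if $\prior\in\ppmpair=\sum_{\aaction\in\Actions}\marginal(\aaction)\opta$, so it suffices to compute this Minkowski sum when $|\Types|=2$ and then check membership of \prior. First I would record the one-dimensional structure: when $\Types=\{\type_1,\type_2\}$, the simplex \Posteriors\ is an interval, which I parametrize by the affine coordinate $\belief(\type_2)\in[0,1]$. For each $\aaction\in\Actions_+$, the set \opta\ is the intersection of this interval with the finitely many halfspaces in \eqref{eq:opt-a}, hence a closed subinterval; since $\Actions_+$ contains no strictly dominated actions it is nonempty, so I may write $\opta=[\underline{\belief}(\type_2|\aaction),\overline{\belief}(\type_2|\aaction)]$ with $\underline{\belief}(\type_2|\aaction)=\min_{\belief\in\opta}\belief(\type_2)$ and $\overline{\belief}(\type_2|\aaction)=\max_{\belief\in\opta}\belief(\type_2)$, both attained by compactness. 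Actions outside $\Actions_+$ carry weight $\marginal(\aaction)=0$ and drop out of every sum below.

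The key step is the elementary fact that nonnegative scalings and Minkowski sums of intervals act endpoint-wise. For $\marginal(\aaction)\geq0$, the set $\marginal(\aaction)\opta$ corresponds, in the $\belief(\type_2)$-coordinate, to $[\marginal(\aaction)\underline{\belief}(\type_2|\aaction),\marginal(\aaction)\overline{\belief}(\type_2|\aaction)]$, and summing over $\aaction\in\Actions_+$ shows that \ppmpair\ corresponds to the closed interval
\[
\Big[\sum_{\aaction\in\Actions}\marginal(\aaction)\underline{\belief}(\type_2|\aaction),\;\sum_{\aaction\in\Actions}\marginal(\aaction)\overline{\belief}(\type_2|\aaction)\Big].
\]
Membership $\prior\in\ppmpair$ then reads exactly as $\sum_{\aaction}\marginal(\aaction)\underline{\belief}(\type_2|\aaction)\leq\prior(\type_2)\leq\sum_{\aaction}\marginal(\aaction)\overline{\belief}(\type_2|\aaction)$, which is the claim. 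There is no genuine obstacle beyond this endpoint arithmetic; the only points requiring care are the well-definedness of $\underline{\belief}$ and $\overline{\belief}$ on the support and the bookkeeping of the zero-weight actions.

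Finally, to square this derivation with \autoref{proposition:2-3} and justify the remark that the belief-martingale constraints alone define \ppmpair\ here, I would note that these two inequalities are precisely the two instances of \eqref{eq:bce-c-states}. The constraint for $\type_2$ is the lower bound directly, since $\min_{\belief\in\opta}\belief(\type_2)=\underline{\belief}(\type_2|\aaction)$; the constraint for $\type_1$, after substituting $\belief(\type_1)=1-\belief(\type_2)$ and using $\sum_{\aaction}\marginal(\aaction)=1$ together with $\prior(\type_1)=1-\prior(\type_2)$, rearranges into the upper bound. Consequently the payoff-martingale constraints \eqref{eq:bce-c-actions} are redundant when $|\Types|=2$, so \autoref{proposition:2-3} collapses to the stated pair of inequalities, completing the proof.
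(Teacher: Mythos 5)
Your proof is correct and follows essentially the same route as the paper: the paper also obtains the corollary by noting that \ppmpair\ is an interval whose endpoints are the \marginal-weighted sums of the endpoints of the intervals \opta, and that the two bounds are exactly the belief-martingale conditions \eqref{eq:bce-c-states} at $\type_2$ and (after the substitution $\belief(\type_1)=1-\belief(\type_2)$) at $\type_1$. Your handling of the zero-weight actions and the attainment of the endpoints is the right bookkeeping; nothing is missing.
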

We make two remarks. First, the right-hand side of the expression in \autoref{corollary:2-state} obtains from \autoref{eq:bce-c-states} at $\type_1$. Second, the representation of \ppmpair\ via its extreme points  provides one way of understanding \autoref{corollary:2-state} (cf. \autoref{eq:extreme-ms}). With binary states, the beliefs $\{\underline{\belief}(\type_2|\aaction),\overline{\belief}(\type_2|\aaction)\}$ are the extreme points of \opta. Furthermore, the extreme points of \ppmpair\ are \marginal-weighted convex combinations of the extreme points of the sets \opta, but not the reverse. \autoref{corollary:2-state} states that only the \marginal-weighted convex combination of the minimal extreme points and of the maximal extreme points can be extreme in \ppmpair.

\paragraph{Beyond simple state spaces} As anticipated above, the result in \autoref{proposition:2-3} does not extend to larger state spaces: when the cardinality of \Types\ is at least four, an extreme ray in a one-dimensional cone of the normal fan of \ppmpair\ may not be an extreme ray in a one-dimensional cone of any of the normal fans $\{\normal(\opta):\aaction\in\Actions\}$.\footnote{Recall the normal fan of \ppmpair\ is the common refinement of the normal fans $\normal(\opta)$. Hence, a normal cone of \ppmpair\ obtains by intersecting normal cones in $\{\normal(\opta):\aaction\in\Actions\}$. In three or more dimensions, the extreme rays of the intersection of two cones need not be extreme rays in any of the cones in the intersection.} We illustrate this possibility with a simple binary-action, four-state example:
\begin{example}[Test functions in $\Prices_{\opt}$ do not suffice when $\cardt\geq4$]\label{example:counter}
Consider the following decision problem with four states, $\Types=\{\type_1,\dots,\type_4\}$, and two actions, $\Actions=\{\aaction_1,\aaction_2\}$. The utility is given by:
\begin{align*}
\payoff(\aaction_1,\cdot)=(0,0,0,0),\;\;\payoff(\aaction_2,\cdot)&=(-9,-5,-1,5).
\end{align*}
That is, $\aaction_1$ is preferred when the \dm\ assigns high probability to states other than $\type_4$, and $\aaction_2$ is preferred when the \dm\ assigns high probability to state $\type_4$. We denote by $\belief_\tindex$ the probability of state $\type_\tindex$.  Using the property that $\belief_1=1-\belief_2-\belief_3-\belief_4$, \autoref{fig:counter-opt} depicts the sets \opta\ for $\aaction_1$ in grey and $\aaction_2$ in white.
    \begin{figure}[h] 
\centering 
\subfloat[The sets \opta]{
\begin{tikzpicture}[scale = 0.4]

\begin{axis}[%
width=5.638in,
height=4.754in,
at={(1.206in,0.642in)},
scale only axis,
plot box ratio=1 1 1,
xmin=0,
xmax=1,
tick align=outside,
xlabel style={font=\color{white!15!black}},
xlabel={$\belief_2$},
ymin=0,
ymax=1,
ylabel style={font=\color{white!15!black}},
ylabel={$\belief_3$},
zmin=0,
zmax=1,
zlabel style={font=\color{white!15!black}},
zlabel={$\belief_4$},
view={141}{8},
axis x line*=bottom,
axis y line*=left,
axis z line*=left,
xmajorgrids,
ymajorgrids,
zmajorgrids
]

\addplot3[area legend, draw=black, fill=white!90!black, fill opacity=0.5, forget plot]
table[row sep=crcr] {%
x	y	z\\
-0	-0	1\\
0	0.833333333333333	0.166666666666667\\
-0	-0	0.642857142857143\\
}--cycle;

\addplot3[area legend, draw=black, fill=white!90!black, fill opacity=0.5, forget plot]
table[row sep=crcr] {%
x	y	z\\
0.5	0	0.5\\
-0	-0	0.642857142857143\\
-0	-0	1\\
}--cycle;

\addplot3[area legend, draw=black, fill=white!90!black, fill opacity=0.5, forget plot]
table[row sep=crcr] {%
x	y	z\\
-0	-0	1\\
0	0.833333333333333	0.166666666666667\\
0.5	0	0.5\\
}--cycle;

\addplot3[area legend, draw=black, fill=white!90!black, fill opacity=0.5, forget plot]
table[row sep=crcr] {%
x	y	z\\
-0	-0	0.642857142857143\\
0	0.833333333333333	0.166666666666667\\
0.5	0	0.5\\
}--cycle;

\addplot3[area legend, draw=black, fill=gray, fill opacity=0.5, forget plot]
table[row sep=crcr] {%
x	y	z\\
-0	-0	-0\\
0	1	0\\
-0	0.833333333333333	0.166666666666667\\
-0	-0	0.642857142857143\\
}--cycle;

\addplot3[area legend, draw=black, fill=gray, fill opacity=0.5, forget plot]
table[row sep=crcr] {%
x	y	z\\
1	0	0\\
0.5	0	0.5\\
-0	-0	0.642857142857143\\
-0	-0	-0\\
}--cycle;

\addplot3[area legend, draw=black, fill=gray, fill opacity=0.5, forget plot]
table[row sep=crcr] {%
x	y	z\\
-0	-0	-0\\
1	0	0\\
0	1	0\\
}--cycle;

\addplot3[area legend, draw=black, fill=gray, fill opacity=0.5, forget plot]
table[row sep=crcr] {%
x	y	z\\
0	1	0\\
-0	0.833333333333333	0.166666666666667\\
0.5	0	0.5\\
1	0	0\\
}--cycle;

\addplot3[area legend, draw=black, fill=gray, fill opacity=0.5, forget plot]
table[row sep=crcr] {%
x	y	z\\
-0	-0	0.642857142857143\\
-0	0.833333333333333	0.166666666666667\\
0.5	0	0.5\\
}--cycle;
\end{axis}
\end{tikzpicture}\label{fig:counter-opt}}
\subfloat[The set \ppmpair\ for the uniform marginal]{
        \begin{tikzpicture}[scale = 0.4]
\begin{axis}[%
width=5.638in,
height=4.754in,
at={(1.206in,0.642in)},
scale only axis,
plot box ratio=1 1 1,
xmin=0,
xmax=1,
tick align=outside,
xlabel style={font=\color{white!15!black}},
xlabel={$\belief_2$},
ymin=0,
ymax=1,
ylabel style={font=\color{white!15!black}},
ylabel={$\belief_3$},
zmin=0,
zmax=1,
zlabel style={font=\color{white!15!black}},
zlabel={$\belief_4$},
view={141}{8},
axis x line*=bottom,
axis y line*=left,
axis z line*=left,
xmajorgrids,
ymajorgrids,
zmajorgrids
]

\addplot3[area legend, draw=black, fill=white!70!black, fill opacity=0.3, forget plot]
table[row sep=crcr] {%
x	y	z\\
-1.07938349616335e-16	3.08148791101958e-33	0.321428571428571\\
-1.11689431940299e-16	-1.53731817792012e-17	0.821428571428572\\
0.25	-2.20399014709994e-17	0.75\\
0.75	-2.4392496726595e-17	0.25\\
0.25	-6.66671969179822e-18	0.25\\
}--cycle;

\addplot3[area legend, draw=black, fill=white!70!black, fill opacity=0.3, forget plot]
table[row sep=crcr] {%
x	y	z\\
0.25	-6.66671969179822e-18	0.25\\
6.9166538021343e-16	0.416666666666666	0.0833333333333333\\
-2.70481200845768e-17	0.416666666666668	0.0833333333333327\\
-2.7048120084577e-17	0.416666666666667	0.0833333333333333\\
-1.07938349616335e-16	3.08148791101958e-33	0.321428571428571\\
}--cycle;

% This is the face that is not generated by the original directions
\addplot3[area legend, thick, draw=red, pattern color=red, pattern=north east lines, forget plot]
table[row sep=crcr] {%
x	y	z\\
0.75	-2.4392496726595e-17	0.25\\
0.5	0.416666666666667	0.0833333333333332\\
-2.70481200845768e-17	0.416666666666668	0.0833333333333327\\
6.9166538021343e-16	0.416666666666666	0.0833333333333333\\
-2.7048120084577e-17	0.416666666666667	0.0833333333333332\\
0.25	-6.66671969179822e-18	0.25\\
}--cycle;

\addplot3[area legend, draw=black, fill=white!70!black, fill opacity=0.3, forget plot]
table[row sep=crcr] {%
x	y	z\\
-1.07938349616335e-16	3.08148791101958e-33	0.321428571428571\\
-2.7048120084577e-17	0.416666666666667	0.0833333333333332\\
-2.7048120084577e-17	0.416666666666667	0.0833333333333333\\
-2.70481200845768e-17	0.416666666666668	0.0833333333333327\\
6.78766797398383e-17	0.916666666666667	0.0833333333333333\\
-3.07992024085415e-17	0.416666666666666	0.583333333333333\\
-1.11689431940299e-16	-1.53731817792012e-17	0.821428571428572\\
}--cycle;

\addplot3[area legend, draw=black, fill=white!70!black, fill opacity=0.3, forget plot]
table[row sep=crcr] {%
x	y	z\\
-2.7048120084577e-17	0.416666666666667	0.0833333333333333\\
-2.7048120084577e-17	0.416666666666667	0.0833333333333332\\
6.78766797398383e-17	0.916666666666667	0.0833333333333333\\
0.5	0.416666666666667	0.0833333333333332\\
6.9166538021343e-16	0.416666666666666	0.0833333333333333\\
}--cycle;

\addplot3[area legend, draw=black, fill=white!70!black, fill opacity=0.3, forget plot]
table[row sep=crcr] {%
x	y	z\\
-1.11689431940299e-16	-1.53731817792012e-17	0.821428571428572\\
-3.07992024085415e-17	0.416666666666666	0.583333333333333\\
0.25	-2.20399014709994e-17	0.75\\
}--cycle;

\addplot3[area legend, draw=black, fill=white!70!black, fill opacity=0.3, forget plot]
table[row sep=crcr] {%
x	y	z\\
0.75	-2.4392496726595e-17	0.25\\
0.5	0.416666666666667	0.0833333333333332\\
6.78766797398383e-17	0.916666666666667	0.0833333333333333\\
-3.07992024085415e-17	0.416666666666666	0.583333333333333\\
0.25	-2.20399014709994e-17	0.75\\
}--cycle;
\end{axis}
\end{tikzpicture}\label{fig:counter-ms}}%
\caption{\autoref{proposition:2-3} does not hold when $\cardt\geq4$.}
\label{fig:counter}
\end{figure}

Consider now the uniform marginal, $\marginal(\aaction_1)=1/2$. The result in \autoref{proposition:2-3} suggests the set \ppmpair\ is defined by the intersection of six halfspaces. However, the polytope \ppmpair\ has seven facets as illustrated in \autoref{fig:counter-ms}. Indeed, whereas the grey facets correspond to the test functions in the statement of \autoref{proposition:2-3}, the hatched red facet has normal vector $(0,-2,-5)$, which corresponds to the test function $\pricev=(0,0,-2,-5)$. This facet obtains from two edges\textemdash one-dimensional faces\textemdash one in $\opt(\aaction_1)$ and the other in $\opt(\aaction_2)$. The former corresponds to the intersection of the non-negativity constraints on $\belief_3$ and $\belief_4$; the latter corresponds to the intersection of the obedience constraint and the constraint $\belief_2+\belief_3+\belief_4\leq1$. Associated with each of these edges is a two-dimensional normal cone, the intersection of which yields a one-dimensional normal cone in the normal fan of \ppmpair\ with $(0,-2,-5)$ as an extreme ray.
\end{example}

\section{\bce-consistency in monotone and concave decision problems}\label{sec:utility}
In this section, we identify assumptions on the utility function \payoff, which allow us to refine the basic characterization in \autoref{theorem:h-representation} without imposing assumptions on the cardinality of the state space. Concretely, we focus on \emph{monotone and concave decision problems} in which the utility function \payoff\ satisfies concavity and increasing differences assumptions (\autoref{assumption:mcv}). \autoref{proposition:mcv} identifies a set of test functions under this assumption. We next consider two special cases: affine utility differences (\autoref{sec:affine}) and two-step utility differences (\autoref{sec:two-step}). In each case, we characterize the set of \bce-consistent marginals via a system of finitely many inequalities. In contrast to the results in \autoref{sec:main}, for which we relied on the properties of the Minkowski sum, the characterization in this section relies on the dual of the program induced by checking the feasibility of Equations \ref{eq:obedience}, \ref{eq:state-mg}, and \ref{eq:action-mg}. This dual approach allows us to identify both the test functions and the value of the support function of the set \ppmpair\textemdash the left-hand side of \autoref{eq:support}\textemdash and thus provide a more succinct characterization of the set \ppmpair. \autoref{sec:foa} extends the results in this section to the case in which the sets of actions and states are compact Polish spaces.

%Beginning with the assumption that \payoff\ satisfies mild concavity and increasing differences assumptions (\autoref{assumption:1}), we identify the relevant directions $\bdirection\in\realst$ 

%Because the assumptions are analogous to those that ensure the uniqueness and monotonicity of the \dm's best response correspondence when a continuum of actions exist, we refer to 

%\ld{here we are collecting the results that rely on the dual formulation. So perhaps just say this. In fact, we should present the dual as it will help us provide the intuition.}
\paragraph{Monotone and concave decision problems} We focus on decision problems in which the utility function \payoff\ satisfies a combination of concavity and increasing-differences assumptions. To state these assumptions, the ordering of the states and the actions matters. Recall we are indexing the actions with $\aindex\in\{1,\dots,\anum\}$ and the states with $\tindex\in\{1,\dots,\tnum\}$. Definitions \ref{definition:id} and \ref{definition:cv} below place restrictions on the utility difference across adjacent actions, which we denote by
\begin{align}
\diff(\aaction_{\aindex+1},\aaction_\aindex,\type_\tindex)\equiv \payoff(\aaction_{\aindex+1},\type_\tindex)-\payoff(\aaction_\aindex,\type_\tindex).
\end{align}

\begin{definition}[Increasing differences]\label{definition:id}
    The utility function $\payoff:\Actions\times\Types\to\reals$ has increasing differences if for all $j\in\{1,\dots,\anum-1\}$, $\diff(\aaction_{\aindex+1},\aaction_\aindex,\type_\tindex)$ is increasing in $i$.
\end{definition}
In words, under increasing differences, the \dm\ finds higher index actions more attractive than lower index ones in higher states.

\begin{definition}[Concavity$^*$]\label{definition:cv}
    The utility function $\payoff:\Actions\times\Types\to\reals$ is concave$^*$ if the following hold:
    \begin{enumerate}
        \item For all $\type\in\Types$, the function $\diff(\aaction_{\aindex+1},\aaction_\aindex,\type)$ is decreasing in $j\in\{1,\dots,\anum-1\}$, and
        \item For all $j\in\{2,\dots,\anum-1\}$ and all $\belief\in\Posteriors$, if $\sum_\type\belief(\type)\diff(\aaction_{\aindex+1},\aaction_\aindex,\type)=0$, \newline then~$\sum_\type\belief(\type)\diff(\aaction_\aindex,\aaction_{\aindex-1},\type)\neq0$.
    \end{enumerate}
\end{definition}
In words, a utility function is concave$^*$ if, for a given state, there are decreasing returns to increasing the actions\textemdash that is, $\payoff(\cdot,\type)$ is concave\textemdash and, for any belief, the \dm\ can be indifferent between at most two actions.

When the \dm's utility function satisfies the above definitions, we say the decision problem is monotone and concave. We record this in \autoref{assumption:mcv} for ease of reference:
\begin{assumption}[Monotone and concave decision problems]\label{assumption:mcv} 
The decision problem is monotone and concave if \payoff\ satisfies Definitions \ref{definition:id} and \ref{definition:cv}.
\end{assumption}
\autoref{assumption:mcv} affords the following simplification in determining whether  a pair of distributions is \bce-consistent given \payoff. Recall that \bce-consistency of \pair\ is equivalent to the feasibility of the system defined by Equations \ref{eq:obedience}, \ref{eq:state-mg}, and \ref{eq:action-mg}. \autoref{assumption:mcv} implies we can ignore all obedience constraints not involving \emph{adjacent} action pairs, $(\aaction_\aindex,\aaction_{\aindex+1})$ and $(\aaction_\aindex,\aaction_{j-1})$, thereby reducing the number of obedience constraints to $2\anum-2$. Furthermore, \autoref{assumption:mcv} also implies that, given an action $\aaction_\aindex$, the adjacent obedience constraints, $(\aaction_\aindex,\aaction_{\aindex+1})$ and $(\aaction_\aindex,\aaction_{j-1})$, cannot simultaneously bind whenever $\aindex\in\{2,\dots,\anum-1\}$.

\paragraph{Dual formulation of \bce-consistency} Recall our goal is to identify a set of test functions for \autoref{eq:support}.
 Whereas the results in \autoref{sec:main} identified such functions by relying on properties of the Minkowski sum, the results in this section rely on the analysis of a problem dual to determining the feasibility of the equations that define \bce-consistency of \pair, Equations \ref{eq:obedience}, \ref{eq:state-mg}, and \ref{eq:action-mg}. 
%Under \autoref{assumption:mcv}, only \emph{adjacent} obedience constraints matter. In other words, \pair\ is \bce-consistent given \payoff\ if a joint distribution exists that satisfies the marginal constraints, \ref{eq:state-mg} and \ref{eq:action-mg}, and the obedience constraints for action pairs $(\aaction_\aindex,\aaction_{\aindex+1})$ and $(\aaction_\aindex,\aaction_{j-1})$. \autoref{assumption:mcv} also implies that, given an action $\aaction_\aindex$, the adjacent obedience constraints, $(\aaction_\aindex,\aaction_{\aindex+1})$ and $(\aaction_\aindex,\aaction_{j-1})$, cannot simultaneously bind. Under \autoref{assumption:mcv}, only \emph{adjacent} obedience constraints matter. In other words, \pair\ is \bce-consistent given \payoff\ if a joint distribution exists that satisfies the marginal constraints, \ref{eq:state-mg} and \ref{eq:action-mg}, and the obedience constraints for action pairs $(\aaction_\aindex,\aaction_{\aindex+1})$ and $(\aaction_\aindex,\aaction_{j-1})$. \autoref{assumption:mcv} also implies that, given an action $\aaction_\aindex$, the adjacent obedience constraints, $(\aaction_\aindex,\aaction_{\aindex+1})$ and $(\aaction_\aindex,\aaction_{j-1})$, cannot simultaneously bind.

Consider the problem of choosing $\joint\in\Delta(\Actions\times\Types)$ to maximize $0$ subject to Equations \ref{eq:obedience}, \ref{eq:state-mg}, and \ref{eq:action-mg}. This problem has value $0$ and hence, the system of equations in \autoref{definition:bce-c} is feasible, if and only if program \ref{eq:dual} below has nonnegative value:
\begin{align}\label{eq:dual}\tag{D}
&V_D\pair\equiv\min_{\pricev\in\realst,\priceav\in\reals^\Actions,\multvp\in\reals_{\geq0}^\Actions,\multvm\in\reals_{\geq0}^\Actions}\priceav\marginalv-\pricev\priorv\\
&\text{s.t.}\left\{\begin{array}{ll}
(\forall 1\leq j\leq\anum)(\forall 1\leq i\leq\tnum)&\pricea(\aaction_\aindex)\geq\price(\type_\tindex)+\multm_\aindex\diff(\aaction_\aindex,\aaction_{j-1},\type_\tindex)+\multp_\aindex\diff(\aaction_\aindex,\aaction_{\aindex+1},\type_\tindex)\\
(\forall j\in\{2,\dots,\anum-1\})&\multp_\aindex\multm_\aindex=0
\end{array}\right..\nonumber
\end{align}
In the above program, the vectors \pricev\ and \priceav\ are the Lagrange multipliers on the constraints \ref{eq:state-mg} and \ref{eq:action-mg}, respectively. That the notation for the multiplier on \ref{eq:state-mg} coincides with that of the vectors in \autoref{eq:support} is not a coincidence: the analysis that follows shows the dual variables \pricev\ that solve this program are intimately related to the test functions in \autoref{eq:support}. The vectors \multvp\ and \multvm\ are the multipliers on the adjacent obedience constraints: conditional on a recommendation to take $\aaction_\aindex$, $\multp_\aindex$ is the multiplier on the upward-looking constraint that $\aaction_\aindex$ is better than $\aaction_{\aindex+1}$, and $\multm_\aindex$ is the multiplier on the downward-looking constraint that $\aaction_{j}$ is better than $\aaction_{j-1}$. Finally, because these constraints cannot simultaneously bind, complementary slackness implies the condition $\multp_\aindex\multm_\aindex=0$ must hold at a solution. Below, we follow the convention that $\multm_1=\multp_{\carda}=0$.

%\begin{align}
%&\max_{\joint\in\Delta(\Actions\times\Types)}0 &&\min_{\pricev\in\realst,\priceav\in\reals^\Actions}\priceav\marginalv-\pricev\priorv
%\nonumber\\
%&\text{s.t.}\left.\begin{array}{l}
%\left(\forall\type\in\Types\right)\sum_{\aaction\in\Actions}\joint(\aaction,\type)=\prior(\type)\\
%\left(\forall\aaction\in\Actions\right)\sum_{\type\in\Types}\joint(\aaction,\type)=\marginal(\aaction)\\
%\left(\forall j\in\{1,\dots,\carda-1\}\right)\sum_{\type\in\Types}\joint(\aaction_\aindex,\type) \diff(\aaction_\aindex,\aaction_{\aindex+1},\type)\geq0\\
%\left(\forall j\in\{2,\dots,\carda\}\right)\sum_{\type\in\Types}\joint(\aaction_\aindex,\type) \diff(\aaction_\aindex,\aaction_{j-1},\type)\geq0
%\end{array}\right. &&\text{s.t.}(\forall\aaction\in\Actions)(\forall\type\in\Types)\pricea(\aaction)\geq\price(\type)+\mult_\aindex^+\diff(\aaction_\aindex,\aaction_{j-1},\type)-\mult_\aindex^-\diff(\aaction_\aindex,\aaction_{\aindex+1},\type).
%\end{align}

We make two observations.\footnote{Recall that we are assuming no action in $\Actions_+$ is strictly dominated. Absent this assumption, the value of the dual is $-\infty$ and hence, the primal is unfeasible.} First, program \ref{eq:dual} is always feasible because we can always set to $0$ the coordinates of each of the vectors \pricev, \priceav, \multvp, and \multvm\ and still satisfy the conditions of the program. Therefore, a solution exists. Second, in any solution to program \ref{eq:dual}, the dual variable \pricev\ must be of the form
\begin{align}\label{eq:basic-p-identity}
p(\type)=\min_{\aaction_\aindex}\left[q(\aaction_\aindex)+\multp_\aindex\diff(\aaction_{\aindex+1},\aaction_{j},\type)-\multm_\aindex\diff(\aaction_\aindex,\aaction_{j-1},\type)\right],
\end{align}
regardless of the choice of \priceav, \multvp, and \multvm. \autoref{proposition:mcv} below provides a building block for the rest of this section. It shows that in analyzing the value of program \ref{eq:dual}, we need only consider solutions in which either \multvp\ or \multvm\ is zero. In other words, we need only consider solutions in which either all adjacent upward-looking or all adjacent downward-looking constraints are non-binding. By \autoref{eq:basic-p-identity}, this in turn has implications for the directions \pricev\ at which testing \autoref{eq:support} is sufficient.

To state \autoref{proposition:mcv}, define
\begin{align}\label{eq:mcv-test}
    \Prices^\uparrow&\equiv\left\{\pricev\in\realst:(\exists\priceav\in\realsa)(\exists\multvp\in\realsa_{\geq0})(\forall\type\in\Types)\price(\type)=\min_{\aindex}\left[\pricea_\aindex+\multp_\aindex  \diff(\aaction_{\aindex+1},\aaction_\aindex,\type)\right]\right\},\\
    \Prices^\downarrow&\equiv\left\{\pricev\in\realst:(\exists\priceav\in\realsa)(\exists\multvm\in\realsa_{\geq0})(\forall\type\in\Types)\price(\type)=\min_{\aindex}\left[\pricea_\aindex-\multm_\aindex  \diff(\aaction_{j},\aaction_{j-1},\type)\right]\right\}.\nonumber
\end{align}
In words, directions $\pricev\in\Prices^\uparrow$ correspond to directions for which the downward-looking constraints are nonbinding, whereas directions $\pricev\in\Prices^\downarrow$ correspond to directions for which the upward-looking constraints are nonbinding. Increasing differences implies that every vector in $\Prices^\uparrow$ is an increasing function of \type, whereas every vector in $\Prices^\downarrow$ is a decreasing function of $\type$. 

\begin{lemma}[Monotone and concave decision problems]\label{proposition:mcv}
    \pair\ is \bce-consistent given \payoff\ if and only if  \autoref{eq:support} holds for all $\pricev\in\Prices^\uparrow\cup\Prices^\downarrow$.
\end{lemma}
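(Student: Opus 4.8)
The plan is to prove the nontrivial (sufficiency) direction through the dual program \ref{eq:dual}, using the equivalence recorded just before the lemma that \pair\ is \bce-consistent given \payoff\ if and only if $V_D\pair\geq0$. The converse direction is immediate: if \pair\ is \bce-consistent, then \autoref{observation:ms} already gives \autoref{eq:support} for \emph{every} $\bdirection\in\realst$, hence in particular for every $\pricev\in\Prices^\uparrow\cup\Prices^\downarrow$. So the entire content is to show that testing \autoref{eq:support} on $\Prices^\uparrow\cup\Prices^\downarrow$ forces $V_D\pair\geq0$.

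First I would record a weak-duality bound tying any feasible point of \ref{eq:dual} to the support functions of the sets \opta. Fix a feasible $(\pricev,\priceav,\multvp,\multvm)$ and an action $\aaction_\aindex$. For any $\belief\in\opta$, optimality of $\aaction_\aindex$ against its neighbors gives $\sum_\type\belief(\type)\diff(\aaction_\aindex,\aaction_{\aindex-1},\type)\geq0$ and $\sum_\type\belief(\type)\diff(\aaction_\aindex,\aaction_{\aindex+1},\type)\geq0$. Multiplying the dual constraint $\pricea(\aaction_\aindex)\geq\price(\type)+\multm_\aindex\diff(\aaction_\aindex,\aaction_{\aindex-1},\type)+\multp_\aindex\diff(\aaction_\aindex,\aaction_{\aindex+1},\type)$ by $\belief(\type)\geq0$ and summing over \type, the two multiplier terms are nonnegative, so $\pricea(\aaction_\aindex)\geq\bdirection\bbelief$ for all $\belief\in\opta$, i.e. $\pricea(\aaction_\aindex)\geq\max_{\belief\in\opta}\bdirection\bbelief$. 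Weighting by $\marginal(\aaction_\aindex)\geq0$ and summing yields
\begin{align*}
\priceav\marginalv-\pricev\priorv\geq\sum_{\aaction\in\Actions}\marginal(\aaction)\max_{\belief\in\opta}\bdirection\bbelief-\bdirection\priorv.
\end{align*}
Thus, at any feasible point whose \pricev\ satisfies \autoref{eq:support}, the dual objective is already nonnegative.

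Second I would show that the minimum in \ref{eq:dual} is attained at a solution with $\multvp=\boldsymbol 0$ or $\multvm=\boldsymbol 0$. At any solution, \pricev\ takes the form \eqref{eq:basic-p-identity}; setting $\multvm=\boldsymbol 0$ places \pricev\ in $\Prices^\uparrow$ and setting $\multvp=\boldsymbol 0$ places it in $\Prices^\downarrow$. Granting such an optimal solution, the weak-duality bound above, evaluated at the corresponding $\pricev\in\Prices^\uparrow\cup\Prices^\downarrow$, combined with \autoref{eq:support} (which holds there by hypothesis), gives $V_D\pair\geq0$, and sufficiency follows. Equivalently, one may phrase this step as $V_D\pair=\min\{V_D^\uparrow\pair,V_D^\downarrow\pair\}$, where the two restricted problems fix $\multvm=\boldsymbol 0$ and $\multvp=\boldsymbol 0$ respectively, and then argue each restricted value is nonnegative from \autoref{eq:support} on $\Prices^\uparrow$ and on $\Prices^\downarrow$.

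The hard part is precisely this second step: promoting the vanishing of one multiplier family from a coordinatewise to a global statement. Here I would lean on \autoref{assumption:mcv}. Concavity$^*$ (part~2 of \autoref{definition:cv}) guarantees that the adjacent upward- and downward-looking constraints at an interior action never bind simultaneously, which is exactly the complementarity $\multp_\aindex\multm_\aindex=0$ already built into \ref{eq:dual}; this delivers $\multp_\aindex\multm_\aindex=0$ for each interior $\aindex$ but, by itself, permits the upward family to be active on some actions and the downward family on others. To rule out such mixed patterns at an optimum I would invoke increasing differences (\autoref{definition:id}): since $\diff(\aaction_{\aindex+1},\aaction_\aindex,\cdot)$ is increasing in the state, the beliefs at which the \dm\ is indifferent between adjacent actions are monotone in the action index, so an active upward constraint at $\aaction_\aindex$ and an active downward constraint at a later action pin the corresponding beliefs to nested indifference hyperplanes; an exchange argument then reallocates the active multipliers along the chain $\aaction_1,\dots,\aaction_\anum$ and re-solves for \priceav\ so that only one family survives, without raising $\priceav\marginalv-\pricev\priorv$ (using that \pricev\ from \eqref{eq:basic-p-identity} is monotone increasing when $\multvm=\boldsymbol 0$ and monotone decreasing when $\multvp=\boldsymbol 0$). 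I expect the bookkeeping of this reallocation—verifying dual feasibility and the monotonicity of the objective as multipliers are shifted across actions—to be the most delicate and error-prone part of the argument, and the place where the full force of \autoref{assumption:mcv} is needed.
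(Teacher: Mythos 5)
Your necessity direction and your first step (weak duality: for any dual-feasible $(\pricev,\priceav,\multvp,\multvm)$ and any $\belief\in\opta$, obedience against the adjacent actions and nonnegativity of the multipliers give $\pricea(\aaction_\aindex)\geq\max_{\belief\in\opta}\pricev\bbelief$, hence the dual objective dominates the support-function gap) are correct and consistent with how the paper uses the dual. The gap is in the second step, which is the entire content of the lemma. You reduce everything to the claim that one may pass from a mixed-multiplier dual point to one with $\multvp=\boldsymbol 0$ or $\multvm=\boldsymbol 0$, and you propose to establish this by an ``exchange argument'' that reallocates active multipliers along the action chain and re-solves for \priceav\ ``without raising $\priceav\marginalv-\pricev\priorv$.'' That argument is never carried out, and as described it is doubtful: zeroing out one multiplier family at a single feasible point and repairing \priceav\ to restore feasibility generically \emph{raises} the relevant coordinates of \priceav\ (they must majorize $\max_\type$ of a larger function once a negative multiplier term is removed), so there is no reason the objective does not increase. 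Note also that since \ref{eq:dual} is positively homogeneous, its value is either $0$ or $-\infty$, so ``the minimum is attained at a solution with one family zero'' is vacuous when $V_D=0$ and meaningless when $V_D=-\infty$; the statement you actually need is that a feasible point with \emph{negative} objective and mixed multipliers yields a feasible point with negative objective and one-sided multipliers.

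The paper closes this gap with a different device: a two-piece decomposition rather than a modification of a single point. Given a feasible $(\tilde\pricev,\tilde\priceav,\tilde\multv^\uparrow,\tilde\multv^\downarrow)$ with $\tilde\pricev\priorv>\tilde\priceav\marginalv$, increasing differences makes $\tilde\pricev$ (written as the pointwise minimum in \autoref{eq:basic-p-identity}) single-peaked with peak value $\tilde\price^\star$ attained at some state index $i^\star$. Define $\pricevup$ by replacing $\tilde\pricev$ with the constant $\tilde\price^\star$ to the right of $i^\star$ and $\pricevdown$ by doing so to the left, and correct $\tilde\priceav$ to $\priceavup$ (resp.\ $\priceavdown$) by setting it equal to $\tilde\price^\star$ on the actions where $\tilde\mult^\downarrow$ (resp.\ $\tilde\mult^\uparrow$) is active. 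Then $(\pricevup,\priceavup,\tilde\multv^\uparrow)$ and $(\pricevdown,\priceavdown,\tilde\multv^\downarrow)$ are feasible for the two one-sided programs, and the key identities $\pricevup+\pricevdown=\tilde\pricev+\tilde\price^\star\ones$ and $\priceavup+\priceavdown=\tilde\priceav+\tilde\price^\star\ones$ hold; because \priorv\ and \marginalv\ are probability vectors, the constant shifts cancel in the objective, so at least one of the two pieces has negative objective. This additive splitting is the idea your sketch is missing, and without it (or a fully worked-out substitute) the proof is incomplete.
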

%The proof of \autoref{proposition:mcv} and other results in this section can be found in \autoref{appendix:utility}. 

\autoref{proposition:mcv} identifies the vectors in $\Pricesup\cup\Pricesdown$ as test functions for the \bce-consistency of \pair\ given \payoff. From an information design perspective, this result uncovers an interesting property of the \emph{extremal} information structures that implement a given action distribution in monotone and concave decision problems. From the point of view of characterizing the extreme points of the set of joint distributions $\joint(\aaction,\type)$ that obediently implement a given \marginal, \autoref{proposition:mcv} says we can restrict attention to those in which either no downward-looking obedience constraint binds, or no upward-looking obedience constraint binds.

In the following sections, we use \autoref{proposition:mcv} together with additional assumptions on the decision problem to identify a finite subset of the test functions in $\Prices^\uparrow\cup\Prices^\downarrow$.

\subsection{Affine utility differences}\label{sec:affine}
Throughout this section, we assume the utility function satisfies the following condition:
\begin{definition}[Affine utility differences]\label{definition:aud}
The decision problem has affine utility differences if \payoff\ is concave$^*$ and vectors $\diffv\in\realst$, $\slopev\in\reals_{>0}^{\carda-1}$, and $\cttv\in\reals^{\carda-1}$ exist such that for all $\aindex\in\{1,\dots,\carda-1\}$, 
\begin{align}\label{eq:aud}\tag{AUD}
\payoff(\aaction_{\aindex+1},\type)-\payoff(\aaction_\aindex,\type)=\slope_\aindex \diff(\type)+\ctt_\aindex.
\end{align}    
\end{definition}
%For notational simplicity, we let $\slope_{0}=\slope_{\carda}=1$ with $\ctt_0=+\infty$ and $\ctt_{\carda}=-\infty$.%}\ld{I don't understand why we need \slope\ and \ctt\ at 0.

We make three observations. First, in the case of binary actions, affine utility difference entails no loss of generality. Second, by reordering the states, it is without loss of generality to assume $\diff$ is increasing in \type. Hence, decision problems with affine utility differences satisfy increasing differences. Thus, decision problems with affine utility differences are monotone and concave decision problems. Finally, we note the connection between affine utility differences when \Actions\ and \Types\ are finite sets, and quadratic loss in the general case in which they are compact, convex subsets of \reals. The analogue to the payoff difference across adjacent actions in quadratic loss is the derivative of the utility function $-(\aaction-\type)^2$ with respect to \aaction:
\begin{align*}
\frac{\partial}{\partial\aaction}\left[-\left(\aaction-\type\right)^2\right]=-2(\aaction-\type)=2\type-2\aaction,
\end{align*}
which is analogous to condition \ref{eq:aud} when \diff\ is the identity, $\slope=2$, and $\ctt=-2\aaction$. Below, this connection becomes apparent once we note the analogy between \autoref{proposition:aud} and the characterization of feasible action distributions under quadratic loss as mean-preserving contractions of the prior (cf. \citealp{strassen1965existence}).
%affine utility differences between adjacen
%
%\annotation{link to expected value}Note that when $\carda=2$, affine utility differences entails no loss of generality. Furthermore, affine utility differences between adjacent actions imply parallel differences between any pair of actions. Moreover, parallel differences imply increasing differences and concavity* in the sense that we can always reorder the states and the actions such that $d$ is increasing and thus the decision problem satisfies increasing differences and concavity*. Hence, it is without loss to also assume increasing differences and concavity* when we talk about parallel differences. That is to say, $\diff(\type)$ is increasing in $\type$ and $\mult_\aindex \diff(\type)+c_\aindex$ is decreasing in $j$ for any $\type$.
%
%\ld{The dual equations simplify quite nicely and a lot can be said about the solution so add it here. Then, decide whether to move it to appendix.}

\paragraph{Test functions for affine utility differences} \autoref{proposition:aud} characterizes the test functions under the assumption of affine utility differences. We first state the result and then provide intuition for it. To state the result, we first define a family of vectors $\pricev\in\realst$ and $\priceav\in\realsa$, which below play the role of the test functions and the value of the support function at those test functions, respectively.

For any $\candt\in\Types$, define the vectors $\pricevup_{\candt},\pricevdown_{\candt}\in\realst$ as
\begin{align*}
\priceup_{\candt}(\type)&=\min\left\{\diff(\type),\diff(\candt)\right\}\;\;
\pricedown_{\candt}(\type)=\min\left\{-\diff(\type),-\diff(\candt)\right\}.
\end{align*}
and let $\Prices_{\aud}=\{\pricevup_{\candt},\pricevdown_{\candt}:\candt\in\Types\}$. The notation highlights that $\pricevup_{\candt}$ and $\pricevdown_{\candt}$ are elements of the sets $\Prices^\uparrow$ and $\Prices^\downarrow$, respectively (cf. \autoref{eq:mcv-test}). Note the vectors $\pricevup_{\type_\tnum}$ and $\pricevdown_{\type_1}$ correspond to the (normalized) utility differences, while the vectors $\pricevup_{\type_1}$ and $\pricevdown_{\type_\tnum}$ are proportional to the vectors $\{\mathbf{1},-\mathbf{1}\}$, where $\mathbf{1}$ is the vector with $1$ in every coordinate.

To each vector in $\Prices_{\aud}$, associate the vectors $\priceavup_{\candt},\priceavdown_{\candt}\in\realsa$, defined as follows:
%Associated to each vector in $\Prices_{\aud}$, define the vectors $\priceavup_{\candt},\priceavdown_{\candt}\in\realsa$ as follows:
\begin{align*}
(\forall\aindex\in\{1,\dots,\carda-1\})\priceaup_{\candt}(\aaction_\aindex)&=\min\{\diff(\candt),-\kappa_\aindex/\gamma_\aindex\}\text{ and } \priceaup_{\candt}(\aaction_\anum)=\diff(\candt),\\
(\forall\aindex\in\{2,\dots,\carda\})\priceadown_{\candt}(\aaction_\aindex)&=\min\{-\diff(\candt),\kappa_{j-1}/\gamma_{j-1}\}\text{ and } \priceadown_{\candt}(\aaction_1)=-\diff(\candt).
\end{align*}
\begin{theorem}[Affine utility differences]\label{proposition:aud}
Assume the decision problem has affine utility differences. The pair $\pair\in\Posteriors\times\Delta(\Actions)$ is \bce-consistent given \payoff\ if and only if \autoref{eq:support} holds for all $\pricev\in\Prices_{\aud}$, that is, if and only if for all $\candt\in\Types$,
\begin{align}\label{eq:prop-aud}
\priceavup_{\candt}\marginalv\geq\pricevup_{\candt}\priorv\text{ and }\priceavdown_{\candt}\marginalv\geq\pricevdown_{\candt}\priorv.
\end{align}

%
%
%    \[
%    \marginal^T\pricea_{\candt}^\uparrow\geq\prior^T\price_{\candt}^\uparrow\quad\text{and}\quad
%    \marginal^T\pricea_{\candt}^\downarrow\geq\prior^T\price_{\candt}^\downarrow
%    \]
%    where $\price_{\candt}^\uparrow(\type)=\min\{\diff(\type),\diff(\candt)\}$, $\pricea_{\candt}^\uparrow(\aaction_\aindex)=\min\{-\ctt_\aindex/\slope_\aindex,\diff(\candt)\}$ and $\price_{\candt}^\downarrow(\type)=\min\{-\diff(\type),-\diff(\candt)\}$, $\pricea_{\candt}^\downarrow(\aaction_\aindex)=\min\{\ctt_{j-1}/\slope_{j-1},-\diff(\candt)\}$.
\end{theorem}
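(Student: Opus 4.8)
The plan is to combine \autoref{proposition:mcv} with the affine structure to show that, on the cone \Pricesup, the support-function value $S(\pricev):=\sum_{\aaction\in\Actions}\marginal(\aaction)\max_{\beliefv\in\opta}\pricev\beliefv$ is not merely sublinear but \emph{linear}, after which nonnegativity on a finite generating set propagates to the whole cone. By \autoref{proposition:mcv}, \pair\ is \bce-consistent given \payoff\ if and only if \autoref{eq:support} holds for every $\pricev\in\Pricesup\cup\Pricesdown$; reordering the states and actions interchanges the roles of the upward- and downward-looking constraints, so it suffices to treat \Pricesup\ and obtain the \Pricesdown\ statement by the symmetric argument. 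Writing $h(\pricev):=S(\pricev)-\pricev\priorv$ for the slack in \autoref{eq:support}, I first record that $h$ is invariant under adding a constant to \pricev: since every $\beliefv\in\opta$ and \priorv\ are probability vectors, replacing \pricev\ by $\pricev+c\mathbf{1}$ adds $c$ to both $S(\pricev)$ and $\pricev\priorv$.

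Next I would compute $S$ explicitly on \Pricesup. Using \ref{eq:aud} and (after reordering) the monotonicity of \diff, every $\pricev\in\Pricesup$ can be written as $\price(\type)=g(\diff(\type))$ for a nondecreasing, concave, piecewise-linear $g$: it is the lower envelope of the at most \anum\ affine maps $\pricea_\aindex+\multp_\aindex(\slope_\aindex\diff(\cdot)+\ctt_\aindex)$, whose slopes $\multp_\aindex\slope_\aindex$ are nonnegative. For each action $\aaction_\aindex$, maximizing the increasing direction \pricev\ over $\opt(\aaction_\aindex)$ drives the upward obedience constraint to bind, so the maximizer lies on the face $\{\beliefv:\sum_\type\belief(\type)\diff(\aaction_{\aindex+1},\aaction_\aindex,\type)=0\}$, which in terms of \diff\ is $\{\E_\beliefv[\diff]=-\ctt_\aindex/\slope_\aindex\}$. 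Maximizing $\E_\beliefv[g(\diff)]$ subject to $\E_\beliefv[\diff]=-\ctt_\aindex/\slope_\aindex$ over beliefs supported on \Types\ is a linear program whose value is the upper concave envelope of the points $(\diff(\type),\price(\type))$ evaluated at the threshold $-\ctt_\aindex/\slope_\aindex$; because $g$ is concave these points are already in concave position, so the envelope is the piecewise-linear interpolation through consecutive states and the value equals $g(-\ctt_\aindex/\slope_\aindex)$. The decisive point is that the interpolation weights depend only on where the fixed threshold falls among the $\diff(\type)$, not on \pricev; hence $\max_{\beliefv\in\opt(\aaction_\aindex)}\pricev\beliefv$ is a fixed linear functional of \pricev, and so is $S(\pricev)=\sum_\aindex\marginal(\aaction_\aindex)g(-\ctt_\aindex/\slope_\aindex)$, with the boundary conventions that the top action $\aaction_\anum$ contributes the value at the top state and $\aaction_1$ the value at $-\ctt_1/\slope_1$.

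With linearity of $h$ on \Pricesup\ in hand, I would finish as follows. Any nondecreasing concave piecewise-linear $g$ is a nonnegative combination of the elbow functions $\type\mapsto\min\{\diff(\type),\diff(\candt)\}$ together with a constant: the secant slopes of $g$, which are nonincreasing by concavity and nonnegative by monotonicity, telescope into nonnegative elbow weights. Thus every $\pricev\in\Pricesup$ equals $\sum_{\candt}w_{\candt}\pricevup_{\candt}+c\mathbf{1}$ with $w_{\candt}\ge 0$, and by linearity together with constant-invariance $h(\pricev)=\sum_{\candt}w_{\candt}h(\pricevup_{\candt})$. Hence $h\ge 0$ on all of \Pricesup\ if and only if $h(\pricevup_{\candt})\ge 0$ for every $\candt\in\Types$. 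Finally, evaluating $S$ at the elbow $\pricevup_\candt$ — where $g=\min\{\cdot,\diff(\candt)\}$ has its single kink at the grid point $\diff(\candt)$, so the interpolation is exact, giving $g(-\ctt_\aindex/\slope_\aindex)=\min\{\diff(\candt),-\ctt_\aindex/\slope_\aindex\}$ for $\aindex<\anum$ and $\diff(\candt)$ at the top action — reproduces exactly the vector $\priceavup_\candt$, so $S(\pricevup_\candt)=\priceavup_{\candt}\marginalv$ and $h(\pricevup_\candt)\ge 0$ is precisely $\priceavup_{\candt}\marginalv\ge\pricevup_{\candt}\priorv$. The symmetric argument on \Pricesdown\ yields the companion inequalities, giving \autoref{eq:prop-aud}.

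I expect the main obstacle to be the linearity step — pinning down that an increasing direction is maximized over $\opt(\aaction_\aindex)$ exactly on the upward-binding face, with support at the two states straddling the fixed threshold $-\ctt_\aindex/\slope_\aindex$. This requires concavity$^*$ to order the thresholds $-\ctt_\aindex/\slope_\aindex$ across actions, so that the downward constraint is slack at that maximizer and at most two states are active, together with careful treatment of the boundary actions $\aaction_1,\aaction_\anum$ and of degenerate cases (a threshold coinciding with a grid point, ties in the $\min$ defining \pricev, or actions outside $\Actions_+$). Matching the resulting linear functional to the stated vectors $\priceavup_\candt,\priceavdown_\candt$ is then a routine verification.
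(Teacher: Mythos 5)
Your argument is correct and arrives at the same generators $\Prices_{\aud}$ via the same two pillars as the paper\textemdash \autoref{proposition:mcv} plus a decomposition of the cone $\Pricesup$ into the elbow directions $\pricevup_{\candt}$ and constants\textemdash but the mechanism you use to show that checking the generators suffices is genuinely different. The paper stays inside the dual program \ref{eq:dualup}: it shows by induction on the number of strictly positive multipliers that every feasible triple $(\pricev,\priceav,\multvp)$ with $\pricev$ of the canonical form \autoref{eq:p-binds} splits as a nonnegative combination of feasible triples whose $\pricev$-components are elbows or constants, with the heights $\priceav$ adding up exactly (their \autoref{eq:ih}); the bookkeeping there, including the adjustment to enforce \autoref{eq:max-exhaust}, is precisely your telescoping of secant slopes, carried out multiplier by multiplier. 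You instead work on the primal side: you observe that on the cone of directions that are nondecreasing and concave in $\diff$, the map $\pricev\mapsto\max_{\beliefv\in\opta}\pricev\beliefv$ is a \emph{fixed linear functional} (the upward constraint binds, the concave envelope is the grid interpolation, and the interpolation weights at the threshold $-\ctt_\aindex/\slope_\aindex$ do not depend on $\pricev$), so the slack $S(\pricev)-\pricev\priorv$ is linear and constant-invariant on $\Pricesup$ and nonnegativity propagates from the generators automatically. What your route buys is transparency: it explains \emph{why} the paper's heights add up exactly in the induction, and it replaces the delicate construction of intermediate dual triples with a one-line appeal to linearity. What it costs is that the entire burden shifts onto the claim that the maximizer over $\opta$ sits on the upward-binding face with support on the two states straddling a fixed threshold\textemdash which, as you note, requires concavity$^*$ to reduce $\opta$ to the interval $\{\belief:\E_\belief[\diff]\in[-\ctt_{\aindex-1}/\slope_{\aindex-1},-\ctt_\aindex/\slope_\aindex]\}$ and needs the boundary actions, empty $\opta$ outside $\Actions_+$, and off-grid kinks handled explicitly. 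The paper's step 1 proves the same formula but only at the elbows, where it is immediate; making your linearity claim airtight for all of $\Pricesup$ is the one place where your writeup is still a sketch rather than a proof.
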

\autoref{proposition:aud} reduces the question of whether \pair\ is \bce-consistent given \payoff\ to checking at most $2\cardt$ linear inequalities, together with $\cardt+1$  constraints implied by $\prior\in\Posteriors$.\footnote{To be precise, checking that the coordinates of the prior add up to one is implied by \autoref{eq:prop-aud} evaluated at $\pricevup_{\type_1}$ and $\pricevdown_{\type_\tnum}$.} Interestingly, the slopes, $\slope_\aindex$, and constants, $\ctt_\aindex$, determine the heights of the halfspaces that define \ppmpair, but not their normal vectors. 

To provide some intuition for why $\Prices_{\aud}$ is the minimal set of test functions, consider the case of quadratic utility. In this case, we know \pair\ is \bce-consistent given \payoff\ if and only if \prior\ dominates \marginal\ in the convex order. In other words, if and only if, for all concave functions $f$, the expected value of $f$ under \marginal\ dominates that under \prior. Any concave function can be obtained as affine combinations of the functions $\min\{\type-c,0\}$ and $\min\{c-\type,0\}$ for different values of $c\in\reals$ \citep{border1991functional}. In fact, testing that the expected value of these functions under \marginal\ is greater than that under \prior\ is enough to determine whether \prior\ dominates \marginal\ in the convex order. (The representation of the convex order via the comparison of the cumulative distributions of \marginal\ and \prior\ comes from this finding.) The test functions $\priceup_{\candt}$ and $\pricedown_{\candt}$ are the analogue of the $\min\{\type-c,0\}$ and $\min\{c-\type,0\}$ functions in the case in which \diff\ is not the identity. 

We illustrate \autoref{proposition:aud} in the context of \autoref{example:counter}:    
    \setcounter{example}{1}
 \begin{example}[continued]\label{example:counter-aud}
Recall that in this example, the utility difference between $\aaction_2$ and $\aaction_1$ is given by
\begin{align*}
\diffv=\payoff(\aaction_2,\cdot)-\payoff(\aaction_1,\cdot)=(-9,-5,-1,5),
\end{align*}
where $\ctt=0$ and $\slope=1$. \autoref{proposition:aud} implies the test functions are given by:
\begin{align*}
\Prices_{\aud}&=\left\{\begin{array}{l}(-9,-5,-1,5),(-9,-5,-1,-1),(-9,-5,-5,-5),(-9,-9,-9,-9),\\(9,5,1,-5),(5,5,1,-5),(1,1,1,-5),(-5,-5,-5,-5)\end{array}\right\}\\
 \priceavup_{\candt}&\in\{(0,5),(-1,-1),(-5,-5),(-9,-9)\}\\
\priceavdown_{\candt}&\in\{(9,0),(5,0),(1,0),(-5,-5)\}.
\end{align*}
The first line in the definition of $\Prices_{\aud}$ corresponds to $\pricevup_{\candt}$\textemdash in decreasing order of the states\textemdash and the second line corresponds to $\pricevdown_{\candt}$. The first vectors in each line of $\Prices_{\aud}$ are the utility differences, so that the inequalities evaluated at those directions correspond to the payoff martingale equations \eqref{eq:bce-c-actions}. 

In \autoref{sec:main}, we numerically computed the set \ppmpair\ and concluded the characterization in \autoref{proposition:2-3} was missing a facet, defined by the vector $(0,-2,-5)$, which in $\reals^4$ we can take to be $(0,0,-2,-5)$. Consider now the test function $\pricevdown_{\type_2}=(5,5,1,-5)$ and note that this is an affine transformation of $(0,0,-2,-5)$; indeed, $(0,0,-2,-5)=0.5*(5,5,1,-5)-5/2$.  

 \autoref{proposition:aud} allows us to recover the set \ppmpair\ in \autoref{example:counter}. Indeed, using the test functions $\Prices_{\aud}$ and the constraints that arise from $\prior$ being a probability distribution, we obtain that \ppmpair\ is defined by the following inequalities:
 \begin{align}
9\belief_1+5\belief_2+\belief_3-5\belief_4&\leq 9\marginal(\aaction_1)\nonumber\\
-9\belief_1-5\belief_2-\belief_3+5\belief_4&\leq 5\marginal(\aaction_2)\nonumber\\
\belief_1+\belief_2+\belief_3-5\belief_4&\leq\marginal(\aaction_1)\nonumber\\
5\belief_1+5\belief_2+\belief_3-5\belief_4&\leq 5\marginal(\aaction_1)\;\;\left(\star\right)\nonumber\\
-\belief_1\leq0,\; -\belief_2\leq0,\; -\belief_3&\leq0\nonumber
\\
 \belief_1+\belief_2+\belief_3+\belief_4&=1.\nonumber
 \end{align}
% \begin{align*}
% 4\belief_2+8\belief_3+14\belief_4&\leq 5\marginal(\aaction_2)+9\\
% -4\belief_2-8\belief_3-4\belief_2&\leq9\marginal(\aaction_1)-9\\
% -2\belief_3-5\belief_4&\leq \marginal(\aaction_1)-1\\
% -6\belief_4&\leq\marginal(\aaction_1)-1,\\
% -\belief_2&\leq0\; -\belief_3\leq0
%\\
%\belief_2+\belief_3+\belief_4&\leq1.
% \end{align*}
The first four inequalities correspond to the test functions $\price\in\Prices_{\aud}$, with equation $\left(\star\right)$ corresponding to the new facet relative to the representation in \autoref{proposition:2-3}. The final set of inequalities corresponds to the probability constraints. We do not include the non-negativity constraint on $\belief_4$ because it is implied by the third inequality and the probability constraints. In fact, the third inequality coincides with the belief martingale equation at $\type_4$.
    \end{example}

\paragraph{Binary actions} When the \dm\ only has two actions, assuming affine utility differences, together with $\slope=1$ and $\ctt=0$, is without loss.  Thus, \autoref{proposition:aud} characterizes the set of \bce-consistent distributions for all decision problems with binary actions. We record the corresponding characterization in \autoref{corollary:bin} below, where we take advantage of the binary action assumption to provide more explicit expressions for the conditions in \autoref{eq:prop-aud}. 

When $\Actions=\{\aaction_1,\aaction_2\}$, \autoref{proposition:aud} implies \pair\ is \bce-consistent given \payoff\ if and only if for all $\candt\in\Types_+\equiv\{\type\in\Types:\diff(\type)>0\}$, 
\begin{align*}
    \marginal(\aaction_1)&\leq \sum_{\type\leq\candt} \prior(\type)\left[1-\frac{\diff(\type)}{\diff(\candt)}\right],
\intertext{and for all $\candt\in\Types_-\equiv\{\type\in\Types:\diff(\type)<0\}$,}
    \marginal(\aaction_2)&\leq\sum_{\type\geq\candt}\prior(\type) \left[1-\frac{\diff(\type)}{\diff(\candt)}\right],
\end{align*}
where in the above expressions, $\type\geq\candt$ and $\type\leq\candt$ signify states with higher and lower indices than \candt, respectively.

We can refine the above expressions by figuring out the states \candt\ that minimize the right-hand side of these inequalities. To this end, define 
\begin{align*}
\type_{\aaction_1}(\prior)&=\min\{\candt\in\Types_+:\sum_{\type\leq\candt} \prior(\type)\diff(\type)>0\},\;\;
\type_{\aaction_2}(\prior)=\max\{\candt\in\Types_-:\sum_{\type\geq\candt} \prior(\type)\diff(\type)<0\},
\end{align*}
whenever these sets are nonempty. Otherwise, let $\type_{\aaction_1}(\prior)=\type_\tnum$ when the first set is empty, and let $\type_{\aaction_2}(\prior)=\type_1$ when the second is empty. We note the following: First, in the above expressions, $\min$ and $\max$ are over the state indices. Second, unless the \dm\ is indifferent between both actions at \prior, at least one of the two sets is nonempty. To see this, suppose that $\aaction_1$ is uniquely optimal at the prior; hence, the first set is empty. Then, $\type_1$ satisfies the conditions defining the set on the right-hand side.

To understand the definition of $\type_{\aaction_1}(\prior)$, consider the information structures that maximize the probability the \dm\ takes $\aaction_1$. Intuitively, such an information structure should pool states below a threshold state, \candt, and this threshold state is an element of $\Types_+$. Then, $\type_{\aaction_1}(\prior)$ is the smallest state such that the recommendation to take $\aaction_1$ when states below $\type_{\aaction_1}(\prior)$ are pooled is disobedient. In other words, the information structure that maximizes the probability of taking $\aaction_1$ pools states strictly below $\type_{\aaction_1}(\prior)$ with probability $1$, and $\type_{\aaction_1}(\prior)$ with a probability determined by the \dm's binding obedience constraints.  The intuition for $\type_{\aaction_2}(\prior)$ is similar. With these definitions, the characterization for the case of binary actions is as follows:

\begin{corollary}[Binary actions]\label{corollary:bin}
Suppose $\carda=2$. The pair \pair\ is  \bce-consistent given \payoff\ if and only if $\marginal(\aaction_2)\in[\mathrm{LB}(\prior,\diffv),\mathrm{UB}(\prior,\diffv)]$, where
\begin{align}
\mathrm{LB}(\prior,\diffv)&\equiv\max\left\{0,1-\sum_{\type\leq\type_{\aaction_1}(\prior)}\prior(\type)\left[1-\frac{\diff(\type)}{\diff(\type_{\aaction_1}(\prior))}\right]\right\},\label{eq:lb}
\intertext{and}
\mathrm{UB}(\prior,\diffv)&\equiv\min\left\{1,\sum_{\type\geq\type_{\aaction_2}(\prior)}\prior(\type)\left[1-\frac{\diff(\type)}{\diff(\type_{\aaction_2}(\prior))}\right]\right\}.\label{eq:ub}
\end{align}
    \end{corollary}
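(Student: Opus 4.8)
The plan is to specialize \autoref{proposition:aud} to $\carda=2$ and then optimize over the threshold state. With two actions there is a single adjacent difference, and after ordering states so that \diff\ is increasing (without loss under increasing differences), substituting the binary-action test functions $\pricevup_{\candt},\pricevdown_{\candt}$ and values $\priceavup_{\candt},\priceavdown_{\candt}$ (with $\slope=1$, $\ctt=0$) into \autoref{eq:support} and using $\sum_\type\prior(\type)=1$ yields exactly the two families displayed before the statement:
\[ \marginal(\aaction_1)\leq g(\candt)\equiv\sum_{\type\leq\candt}\prior(\type)\left[1-\frac{\diff(\type)}{\diff(\candt)}\right]\quad(\candt\in\Types_+), \]
\[ \marginal(\aaction_2)\leq h(\candt)\equiv\sum_{\type\geq\candt}\prior(\type)\left[1-\frac{\diff(\type)}{\diff(\candt)}\right]\quad(\candt\in\Types_-). \]
Using $\marginal(\aaction_1)=1-\marginal(\aaction_2)$ turns the first family into lower bounds $\marginal(\aaction_2)\geq1-g(\candt)$, the second into upper bounds $\marginal(\aaction_2)\leq h(\candt)$, and appending the probability constraints $0\leq\marginal(\aaction_2)\leq1$ shows \pair\ is \bce-consistent iff $\max\{0,1-\min_{\candt\in\Types_+}g(\candt)\}\leq\marginal(\aaction_2)\leq\min\{1,\min_{\candt\in\Types_-}h(\candt)\}$. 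It then remains only to locate the minimizers.

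The core step is a discrete single-trough argument. Letting $\candt'$ be the state immediately above $\candt$ in $\Types_+$, the recursion $\sum_{\type\leq\candt'}\prior(\type)\diff(\type)=\sum_{\type\leq\candt}\prior(\type)\diff(\type)+\prior(\candt')\diff(\candt')$ gives, after simplification,
\[ g(\candt')-g(\candt)=\left(\sum_{\type\leq\candt}\prior(\type)\diff(\type)\right)\left[\frac{1}{\diff(\candt)}-\frac{1}{\diff(\candt')}\right]. \]
Since $0<\diff(\candt)<\diff(\candt')$ on $\Types_+$, the bracket is positive, so the sign of $g(\candt')-g(\candt)$ equals that of the partial sum $\sum_{\type\leq\candt}\prior(\type)\diff(\type)$. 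This partial sum is nondecreasing across $\Types_+$ (each increment $\prior(\candt')\diff(\candt')\geq0$), hence nonpositive strictly below $\type_{\aaction_1}(\prior)$ and positive from $\type_{\aaction_1}(\prior)$ on; therefore $g$ is nonincreasing up to $\type_{\aaction_1}(\prior)$ and increasing afterward, so $\min_{\candt\in\Types_+}g(\candt)=g(\type_{\aaction_1}(\prior))$, which is \eqref{eq:lb}. The upper bound is symmetric: with the tail sum $\sum_{\type\geq\candt}\prior(\type)\diff(\type)$ (increasing across $\Types_-$), the analogous first difference has sign opposite to that tail sum because $\diff<0$ reverses the bracket, and $\type_{\aaction_2}(\prior)$ is the largest state at which the tail sum is still negative, giving $\min_{\candt\in\Types_-}h(\candt)=h(\type_{\aaction_2}(\prior))$, i.e.\ \eqref{eq:ub}.

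The main obstacle I anticipate is not the difference identity but the bookkeeping of the degenerate cases encoded in the conventions for $\type_{\aaction_1}(\prior)$ and $\type_{\aaction_2}(\prior)$. When $\{\candt\in\Types_+:\sum_{\type\leq\candt}\prior(\type)\diff(\type)>0\}$ is empty\textemdash equivalently $\sum_\type\prior(\type)\diff(\type)\leq0$, so that $\aaction_1$ is optimal at the prior\textemdash the partial sum never turns positive, $g$ is nonincreasing on all of $\Types_+$, the tightest derived lower bound is at most $0$, and the convention $\type_{\aaction_1}(\prior)=\type_\tnum$ combined with the outer $\max\{0,\cdot\}$ returns $\mathrm{LB}=0$; the mirror argument yields $\mathrm{UB}=1$ when the corresponding set for $\Types_-$ is empty. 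One must also check that states with $\diff(\candt)=0$ contribute only vacuous inequalities in \autoref{eq:support} (both sides collapse), so restricting the index sets to $\Types_+$ and $\Types_-$ is without loss, and confirm that unless the \dm\ is indifferent at \prior\ at least one index set is nonempty, so the resulting interval is nonempty and well defined.
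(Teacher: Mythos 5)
Your proposal is correct and follows essentially the same route as the paper: specialize \autoref{proposition:aud} to two actions, rewrite the two families of inequalities as bounds on $\marginal(\aaction_2)$, and identify the binding threshold by showing that the first difference of the bound has the sign of the partial sum $\sum_{\type\leq\candt}\prior(\type)\diff(\type)$ (resp.\ the tail sum), which changes sign exactly at $\type_{\aaction_1}(\prior)$ (resp.\ $\type_{\aaction_2}(\prior)$). One correction to your mirror case: the bracket $\frac{1}{\diff(\type_i)}-\frac{1}{\diff(\type_{i+1})}=\frac{\diff(\type_{i+1})-\diff(\type_i)}{\diff(\type_i)\diff(\type_{i+1})}$ remains \emph{positive} on $\Types_-$ (the denominator is a product of two negatives), so the first difference of $h$ has the \emph{same} sign as the tail sum $\sum_{\type\geq\type_{i+1}}\prior(\type)\diff(\type)$ rather than the opposite sign; since that tail sum is nondecreasing in the index and turns nonnegative just above $\type_{\aaction_2}(\prior)$, this still yields the U-shape and the minimizer you state, whereas the ``opposite sign'' reading would make $h$ single-peaked and push the minimum to an endpoint.
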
    
Whenever $\aaction_1$ is optimal at the prior, the right-hand side of \autoref{eq:lb} is $0$. Instead, when $\aaction_2$ is optimal at the prior, the right-hand side of \autoref{eq:ub} is 1. Consequently, when the \dm\ is indifferent between both actions at the prior, all action distributions can be rationalized.%Finally, when both actions are optimal at the prior, the right-hand side of Equations~\ref{eq:lb}

%\paragraph{Connection between $\Prices_{\opt}$ and $\Prices_{\aud}$}
%
%\annotation{check with YZ what's the difference between 2. and two-step utility differences.}\begin{corollary}[Sufficient conditions for $\Prices_{\opt}$ to be the set of test functions]\label{corollary:bin-no-new}
%The functions $\Prices_{\opt}$ are test functions under affine utility differences\textemdash i.e., $\Prices_{\aud}\subseteq \Prices_{\opt}$\textemdash if 
%    \begin{enumerate}
%\item $\payoff(\aaction,\type_\tindex)-\payoff(\aactionb,\type_\tindex)=\text{constant}$ for all $i\in\{2,\dots,|\Types|-1\}$; or
%        \item $\exists k\in\{1,\dots,|\Types|-2\}$: $\payoff(\aaction,\type_\tindex)-\payoff(\aactionb,\type_\tindex)=\text{constant}$ for all $i\in\{1,\dots,k\}$ and $\payoff(\aaction,\type_\tindex)-\payoff(\aactionb,\type_\tindex)=\text{constant}'$ for all $i\in\{k+1,\dots,|\Types|\}$.
%    \end{enumerate}
%\end{corollary}

\subsection{Two-step utility differences}\label{sec:two-step}
\begin{definition}[Two-step utility differences]\label{definition:2-step}
    The decision problem has two-step utility differences if \payoff\ is concave$^*$ and for all $j\in\{1,\dots,\carda-1\}$, $\diff(\aaction_{\aindex+1},\aaction_\aindex,\type)$ takes exactly two values $\lowdj<0<\highdj$.\end{definition}
By reordering the states, decision problems with two-step utility differences satisfy increasing differences. Thus, decision problems with two-step utility differences are monotone and concave decision problems. Decision problems with two-step utility differences are pinned down by the values the vector $\diff(\aaction_{\aindex+1},\aaction_\aindex,\cdot)$ takes on $\type_1$ and $\type_\tnum$, and the state at which $\diff(\aaction_{\aindex+1},\aaction_\aindex,\cdot)$ switches between those values. In what follows, we denote by $\candi(\aindex)$ the highest index state at which $\diff(\aaction_{\aindex+1},\aaction_\aindex,\cdot)$ coincides with $\lowdj$. Concavity implies $\candi(\aindex)$ is increasing in \aindex. %To make the problem nontrivial, we assume that $\lowdj<0<\highdj$. 

The discrete analog of the absolute loss function is a limiting case of two-step utility differences.\footnote{To be sure, this special case does not satisfy the second requirement of concavity*. We can instead consider a perturbed decision problem with $\tilde{d}(\aaction_{\aindex+1},\aaction_\aindex,\type)={d}(\aaction_{\aindex+1},\aaction_\aindex,\type)-j\varepsilon$ for some very small $\varepsilon>0$. This perturbed decision problem has two-step utility differences, so our characterization below applies.}\footnote{\cite{yang2024monotone} and \cite{kolotilin2024distributions} characterize the distributions over posterior quantiles consistent with a prior distribution over the states.}
 To see this, let $\Types_\aindex$ denote the set of states in which action \aindex\ is optimal, $\Types_\aindex=\{\type:(\forall\aaction\in\Actions)\payoff(\aaction_\aindex,\type)\geq\payoff(\aaction,\type)\}$. Under \autoref{assumption:mcv}, these sets are ordered in that $\Types_\aindex$ lies to the left of $\Types_{\aindex+1}$\textemdash in terms of the state indices. We can then define
\begin{align*}
\diff(\aaction_{\aindex+1},\aaction_\aindex,\type)=-c\mathbbm{1}[\type\in\Types_k,k\leq j]+c\mathbbm{1}[\type\in\Types_k,k\geq \aindex+1],
\end{align*}
or $\payoff(\aaction_\aindex,\type)=-c|k-j|$ for $\type\in\Types_k$.

%
%By increasing differences, $\diff(\aaction_{\aindex+1},\aaction_\aindex,\type)$ is increasing in $\type$ (from negative to positive). For any $\diff(\aaction_{\aindex+1},\aaction_\aindex,\cdot)$, identify $i^\star(j)$ as the state at which $\diff(\aaction_{\aindex+1},\aaction_\aindex,\cdot)$ changes from negative to positive, i.e., 
%\[
%\{i^\star(j)\}:=\left\{i:\diff(\aaction_{\aindex+1},\aaction_\aindex,\type_{i})<0<\diff(\aaction_{\aindex+1},\aaction_\aindex,\type_{i+1})\right\}.
%\]
%By concavity, $\diff(\aaction_{\aindex+1},\aaction_\aindex,\type)$ is decreasing in $j$ and thus $i^\star(j)$ is increasing in $j$.

\paragraph{Test functions for two-step utility differences} \autoref{proposition:2-step} characterizes the test functions under the assumption of two-step utility differences. In particular, we show the test functions for two-step utility differences are basically those in $\Prices_{\opt}$. 

For each $\aindex\in\Actions$, we define $\priceup_\aindex(\type)=\diff(\aaction_{\aindex+1},\aaction_j,\type)$ and $\pricedown_\aindex=-\diff(\aaction_{\aindex+1},\aaction_{j},\type)$. Denote by $\Prices_2$ the set of vectors $\{\pricevup_j,\pricevdown_j:j\in\{1,\dots,\anum-1\}\}$ and note it is a subset of $\Prices_{\opt}$\textemdash it is the left-most set in \autoref{eq:test-small}. To each of these vectors, associate the vectors $\priceavup_\aindex,\priceavdown_\aindex\in\realsa$, defined as follows:
\begin{align*}
\priceaup_\aindex(\aaction_k)&=\max_{\beliefv\in\opt(\aaction_k)}\beliefv\diff(\aaction_{\aindex+1},\aaction_j,\cdot)=\highdj-\frac{\highdj-\lowdj}{\highdk-\lowdk}\highdk\mathbbm{1}[\candi(k)\leq\candi(j)],\\
\priceadown_\aindex(\aaction_k)&=\max_{\beliefv\in\opt(\aaction_k)}-\beliefv\diff(\aaction_{\aindex+1},\aaction_j,\cdot)=-\lowdj+\frac{\highdj-\lowdj}{\highd_{k,k-1}-\lowd_{k,k-1}}\lowd_{k,k-1}\mathbbm{1}[\candi(k-1)\geq\candi(j)].
\end{align*}
For instance, in the case of absolute loss,  $\priceaup_\aindex(\aaction_k)=c\mathbbm{1}_{i^\star(k)>i^\star(j)}$ and $\priceadown_\aindex(\aaction_k)=-c\mathbbm{1}_{i^\star(k-1)<i^\star(j)}$.

\begin{theorem}[Two-step utility differences]\label{proposition:2-step}
Assume the decision problem has two-step utility differences. The $\pair\in\Posteriors\times\Delta(\Actions)$ is \bce-consistent given \payoff\ if and only if \autoref{eq:support} holds for all $\pricev\in\Prices_2$. That is, if and only if for all $j\in\{1,\dots,\anum-1\}$, 
\begin{align}\label{eq:2-step}
    \priceavup_\aindex\marginalv\geq\diffv(\aaction_{\aindex+1},\aaction_\aindex)\priorv\geq-\priceavdown_\aindex\marginalv.
\end{align}
\end{theorem}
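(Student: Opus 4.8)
The plan is to read \autoref{proposition:2-step} as a \emph{reduction} of the test set, layered on top of \autoref{proposition:mcv}. Since two-step utility differences imply increasing differences and concavity$^*$, i.e.\ \autoref{assumption:mcv} holds, \autoref{proposition:mcv} already tells us that \pair\ is \bce-consistent given \payoff\ iff \autoref{eq:support} holds for every $\pricev\in\Pricesup\cup\Pricesdown$. Necessity of \autoref{eq:2-step} is then immediate: $\pricevup_\aindex\in\Pricesup$ and $\pricevdown_\aindex\in\Pricesdown$ (put multiplier $1$ on the single adjacent pair $\aindex$, all other multipliers at $0$, and send the remaining $\priceav$-coordinates to $+\infty$ so those branches never attain the minimum in \autoref{eq:mcv-test}), and $\priceavup_\aindex\marginalv$, $-\priceavdown_\aindex\marginalv$ are by construction the support-function values $\sum_k\marginal(\aaction_k)\max_{\beliefv\in\opt(\aaction_k)}\pricevup_\aindex\beliefv$ in those directions. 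So the entire content is the converse: \autoref{eq:support} at the finitely many $\pricevup_\aindex$ (resp.\ $\pricevdown_\aindex$) forces it at every $\pricev\in\Pricesup$ (resp.\ $\Pricesdown$). I would prove the $\Pricesup$ case; $\Pricesdown$ follows by reversing the order of the states.

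The first step is structural. A direction $\pricev\in\Pricesup$ is a pointwise minimum of functions each of which, along the state index, jumps only at the switch point $\candi(\aindex)$ of $\diff(\aaction_{\aindex+1},\aaction_\aindex,\cdot)$; hence $\pricev$ is a step function whose jumps occur only among $\{\candi(\aindex)\}_\aindex$, and by increasing differences together with $\multvp\geq0$ it is nondecreasing in the state index. Matching jumps then yields a decomposition
\[
\pricev=c\,\mathbf{1}+\sum_{\aindex}\beta_\aindex\,\pricevup_\aindex,\qquad c\in\reals,\ \beta_\aindex\geq0,
\]
where nonnegativity is the crux: the jump of $\pricev$ at $\candi(\aindex)$ is $\geq 0$ by monotonicity, while $\pricevup_\aindex=\diffv(\aaction_{\aindex+1},\aaction_\aindex)$ jumps up there by $\highdj-\lowdj>0$, so each coefficient is forced nonnegative (when several $\candi(\aindex)$ coincide the same jump-matching still delivers nonnegative coefficients on the coinciding vectors).

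The second, key step is that the support function of \ppmpair\ is \emph{additive} along this nonnegative decomposition. Write $g_\aindex(\beliefv)=\sum_{\tindex>\candi(\aindex)}\belief(\type_\tindex)$. Then $\pricevup_\aindex\beliefv=\lowdj+(\highdj-\lowdj)\,g_\aindex(\beliefv)$, so $\pricev\beliefv=c+\sum_\aindex\beta_\aindex[\lowdj+(\highdj-\lowdj)g_\aindex(\beliefv)]$ depends on \beliefv\ only through $(g_\aindex(\beliefv))_\aindex$ and is nondecreasing in each $g_\aindex$. Under \autoref{assumption:mcv} the set $\opt(\aaction_k)$ is cut out by its two adjacent obedience constraints, which in these coordinates read $g_k\leq -\lowd_{k+1,k}/(\highd_{k+1,k}-\lowd_{k+1,k})$ and $g_{k-1}\geq -\lowd_{k,k-1}/(\highd_{k,k-1}-\lowd_{k,k-1})$, alongside the consistency inequalities $g_1\geq\cdots\geq g_{\anum-1}$ in $[0,1]$. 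The image of $\opt(\aaction_k)$ in $g$-space therefore has a componentwise-largest point $g^\star$ ($g_\aindex=1$ for $\aindex<k$, $g_\aindex=-\lowd_{k+1,k}/(\highd_{k+1,k}-\lowd_{k+1,k})$ for $\aindex\geq k$), realized by the belief supported on $\{\candi(k),\type_\tnum\}$; any $\beliefv\in\opt(\aaction_k)$ satisfies $g_\aindex(\beliefv)\leq g^\star_\aindex$ for all $\aindex$ (using monotonicity and the upper constraint). Because $\pricev\beliefv$ and every $\pricevup_\aindex\beliefv$ are nondecreasing in the $g_\aindex$, this single belief simultaneously maximizes all of them over $\opt(\aaction_k)$, giving $\max_{\beliefv\in\opt(\aaction_k)}\pricev\beliefv=c+\sum_\aindex\beta_\aindex\max_{\beliefv\in\opt(\aaction_k)}\pricevup_\aindex\beliefv$. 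Averaging over $k$ with weights $\marginal(\aaction_k)$ and using $\mathbf{1}\priorv=1$ produces the additive identity for the support function of \ppmpair, whence, with $\beta_\aindex\geq0$ and the hypothesized \autoref{eq:2-step} inequalities $\max_{\beliefv\in\ppmpair}\pricevup_\aindex\beliefv\geq\pricevup_\aindex\priorv$, we get $\max_{\beliefv\in\ppmpair}\pricev\beliefv\geq c+\sum_\aindex\beta_\aindex\pricevup_\aindex\priorv=\pricev\priorv$, i.e.\ \autoref{eq:support} at \pricev. Evaluating $\max_{\beliefv\in\opt(\aaction_k)}\pricevup_\aindex\beliefv=\lowdj+(\highdj-\lowdj)g^\star_\aindex$ reproduces the stated closed form for $\priceavup_\aindex(\aaction_k)$ (and symmetrically for $\priceavdown_\aindex$).

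The main obstacle is precisely this additivity step: generic support functions are only subadditive, and what upgrades it to an exact identity is the two-step structure, which makes every objective in $\Pricesup$ a nondecreasing function of the \emph{same} finite statistic $(g_\aindex)_\aindex$, so one extreme belief maximizes all of them over each $\opt(\aaction_k)$ at once. I would take particular care over two points that the sketch glosses: that coinciding switch points $\candi(\aindex)$ do not destroy nonnegativity of the $\beta_\aindex$, and the boundary actions $k\in\{1,\anum\}$, where only one adjacent obedience constraint is present, the corresponding bound on $g^\star$ drops, and $g^\star$ degenerates in a way that matches the indicator conventions in the definitions of $\priceavup_\aindex$ and $\priceavdown_\aindex$.
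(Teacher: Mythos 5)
Your proof is correct, but it takes a genuinely different route from the paper's. Both arguments start from \autoref{proposition:mcv}, which reduces the test set to $\Pricesup\cup\Pricesdown$. From there the paper stays entirely on the dual side: it takes an arbitrary feasible dual triple $(\pricev,\priceav,\multvp)$ for \ref{eq:dualup} and, by induction on the number of jump points of \pricev, peels off one two-step layer at a time, writing the whole \emph{triple} as a nonnegative combination of feasible triples whose \pricev-components are either the vectors $\pricevup_\aindex$ or constants; the bulk of the work is verifying that the residual triple remains feasible and has one fewer jump. You instead decompose only the \emph{direction}, $\pricev=c\,\ones+\sum_\aindex\beta_\aindex\pricevup_\aindex$ with $\beta_\aindex\geq0$ by jump-matching, and then supply the missing ingredient on the primal side: a common-maximizer lemma showing that on each $\opt(\aaction_k)$ a single belief (supported on $\{\type_{\candi(k)},\type_\tnum\}$) simultaneously maximizes every $\pricevup_\aindex$, because all of these objectives are nondecreasing functions of the same statistics $g_\aindex(\beliefv)=\sum_{\tindex>\candi(\aindex)}\belief(\type_\tindex)$. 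This turns the generically subadditive support function of \ppmpair\ into an exactly additive one along your decomposition, which is precisely what the paper's compatible decomposition of \priceav\ accomplishes implicitly. Your version is arguably more transparent about \emph{why} the two-step structure works (one finite monotone statistic governs all test directions), while the paper's induction has the advantage of running in parallel with the proof of \autoref{proposition:aud}, giving a uniform dual template for both classes of utility differences. The two caveats you flag yourself (coinciding switch points $\candi(\aindex)$, and the degenerate boundary actions $k\in\{1,\anum\}$ where one adjacent constraint is absent) are exactly the places that need to be written out carefully, but neither threatens the argument; likewise, the "send the remaining $\priceav$-coordinates to $+\infty$" step in your necessity argument should be phrased with a sufficiently large finite value, since $\Pricesup$ requires $\priceav\in\realsa$.
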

\autoref{proposition:2-step} reduces the question of whether \pair\ is \bce-consistent given \payoff\ to checking $2(\carda-1)$ linear inequalities, together with $\cardt+1$  constraints implied by $\prior\in\Posteriors$.

Two-step utility differences is a case in which (i) the vectors in $\Prices_{\opt}$ are test functions, without restricting the cardinality of the state space, and (ii) the belief martingale equations are implied by either the payoff martingale equations or the non-negativity condition that \priorv\ is a distribution. We obtain (i) because of the simple structure of the utility differences. Implicitly, \autoref{proposition:2-step} shows that under two-step utility differences no new extreme rays are generated when we intersect the normal cones of \opta. Regarding (ii), fix a state $\type_i$ and consider the belief martingale condition for that state:
\begin{align*}
\sum_{\aaction\in\Actions}\marginal(\aaction)\min_{\belief\in\opta}\belief(\type_i)\leq\prior(\type_i).
\end{align*}
Recall that $\Types_j=\{\type_{\candi(j-1)+1},\dots,\type_{\candi(j)}\}$. If $i\neq\candi(j)$ for any $j$, then the left-hand side of the above expression is $0$, so that the above equation reduces to the non-negativity constraint on $\prior(\type_i)$. The same is true when $i=\candi(j)$ for some $j$, but $\Types_j$ is not a singleton. Instead, when $i=\candi(j)$ for some $j$ and $\Types_j$ is a singleton, then one can verify the above equation is implied by \autoref{eq:2-step}.

We conclude this section with \autoref{remark:foa}, which summarizes the extension of our results to the case in which \Types\ and \Actions\ are compact, convex, Polish spaces and the first-order approach applies (cf. \citealp{kolotilin2023persuasion}). Readers interested in applications can jump to \autoref{sec:applications}, with little loss of continuity.

\begin{remark}[\bce-consistency under first-order approach]\label{remark:foa}
In \autoref{sec:foa}, we extend the results in this section to the case in which \Types\ and \Actions\ are compact Polish spaces and the first-order approach applies. In this case, the set \opta\ is the set of all beliefs such that the \dm's first-order condition holds at \aaction. In a slight abuse of notation, letting $\diff(\aaction,\type)$ denote the partial derivative of \payoff\ with respect to the first coordinate, the set \opta\ is the set of beliefs such that $\mathbb{E}_\belief[\diff(\aaction,\type)]=0$. \citet[Theorem 3]{strassen1965existence} implies the support function of the set \ppmpair\ is defined as in \autoref{eq:support}, after appropriately replacing sums with integrals. In \autoref{sec:foa}, we show \autoref{proposition:mcv} extends, and use this result to recover the characterizations of \bce-consistency in a generalization of quadratic loss. 
\end{remark}

\section{Applications}\label{sec:applications}
In this section, we apply our results (i) to elucidate comparative statics of the set of \bce-consistent marginals (\autoref{sec:cs}) and (ii) to study \bce-consistency across decision problems (\autoref{sec:across}). \autoref{sec:games} further illustrates our results in the context of simple multi-agent settings.

\subsection{Comparative statics with affine utility differences}\label{sec:cs}
Under the assumption of affine utility differences, we consider in this section changes to the prior or the utility function that preserve the rationalization of a given marginal distribution over actions. 

\paragraph{Changes to the prior} Suppose \pair\ is \bce-consistent given \payoff\ and let \priorb\ denote another prior belief. When can we say $(\priorb,\marginal)$ are \bce-consistent given \payoff? The following definition is key:
\begin{definition}[$\diff$-mean-preserving spread] The prior \priorb\ is a \diff-mean-preserving spread of  \prior\ if $\priorb\circ\diff^{-1}$ is a mean-preserving spread of $\prior\circ\diff^{-1}$.
\end{definition}
Note $\prior\circ\diff^{-1}$ is the distribution of payoffs\textemdash measured by the utility difference \diff\textemdash faced by the \dm\ under the prior distribution \prior. When \priorb\ is a \diff-mean-preserving spread of \prior, the \dm\ faces a riskier payoff distribution under \priorb\ than under \prior. Conversely, we can obtain the payoff distribution under \prior\ by a garbling of the payoff distribution under \priorb. A fortiori, if \pair\ is \bce-consistent given \payoff, so is $(\priorb,\marginal)$.

\autoref{proposition:aud-cs-mps} summarizes this discussion:
\begin{proposition}\label{proposition:aud-cs-mps}
Suppose \payoff\ satisfies affine utility differences and that \pair\ is \bce-consistent given \payoff.  If \priorb\ is a \diff-mean preserving spread of \prior, $(\priorb,\marginal)$ is \bce-consistent given \payoff.
\end{proposition}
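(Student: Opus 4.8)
The plan is to reduce the claim to the finite inequality characterization of \autoref{proposition:aud} and then show that passing from \prior\ to a \diff-mean-preserving spread \priorb\ only weakens each of those inequalities. By \autoref{proposition:aud}, it suffices to verify that for every $\candt\in\Types$,
\[
\priceavup_{\candt}\marginalv \ge \pricevup_{\candt}\boldsymbol{\priorb}
\qquad\text{and}\qquad
\priceavdown_{\candt}\marginalv \ge \pricevdown_{\candt}\boldsymbol{\priorb}.
\]
The first thing I would record is that the left-hand sides $\priceavup_{\candt}\marginalv$ and $\priceavdown_{\candt}\marginalv$ depend only on \marginal\ and on the utility primitives $(\diffv,\slopev,\cttv)$ through the scalar $\diff(\candt)$, and not on the prior. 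Hence these support-function heights are identical for \prior\ and for \priorb, and the entire argument reduces to comparing the right-hand sides.

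The key observation is that each test function in $\Prices_{\aud}$ factors through \diff\ as a \emph{concave} transformation. Writing $\phi^{\uparrow}_{\candt}(x)=\min\{x,\diff(\candt)\}$ and $\phi^{\downarrow}_{\candt}(x)=\min\{-x,-\diff(\candt)\}=-\max\{x,\diff(\candt)\}$, both are concave on \reals, and by definition $\priceup_{\candt}(\type)=\phi^{\uparrow}_{\candt}(\diff(\type))$ and $\pricedown_{\candt}(\type)=\phi^{\downarrow}_{\candt}(\diff(\type))$. Consequently each right-hand side is the expectation of a concave function against the pushforward payoff distribution:
\[
\pricevup_{\candt}\priorv=\sum_{\type\in\Types}\prior(\type)\,\phi^{\uparrow}_{\candt}(\diff(\type))=\E_{\prior\circ\diff^{-1}}\!\left[\phi^{\uparrow}_{\candt}\right],
\]
and symmetrically for \priorb\ and for the downward family.

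Next I would invoke the defining property of a mean-preserving spread: since $\priorb\circ\diff^{-1}$ is a mean-preserving spread of $\prior\circ\diff^{-1}$, the expectation of any concave function is weakly smaller under \priorb\ than under \prior. Applying this to $\phi^{\uparrow}_{\candt}$ and $\phi^{\downarrow}_{\candt}$ gives $\pricevup_{\candt}\boldsymbol{\priorb}\le\pricevup_{\candt}\priorv$ and $\pricevdown_{\candt}\boldsymbol{\priorb}\le\pricevdown_{\candt}\priorv$. Chaining these with the inequalities guaranteed by \bce-consistency of \pair\ (again through \autoref{proposition:aud}) yields
\[
\priceavup_{\candt}\marginalv\ge\pricevup_{\candt}\priorv\ge\pricevup_{\candt}\boldsymbol{\priorb},
\]
together with the analogous chain for the downward test functions. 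This is precisely the system \autoref{proposition:aud} requires for $(\priorb,\marginal)$, which establishes the claim.

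Because the computation is short, I do not expect a serious obstacle; the one point that needs care is confirming that every vector in $\Prices_{\aud}$ is a concave---not merely monotone---transformation of \diff, since concavity is exactly what makes the mean-preserving-spread comparison run in the direction that preserves the inequalities, and double-checking that the heights $\priceavup_{\candt}\marginalv$ and $\priceavdown_{\candt}\marginalv$ are genuinely prior-free.
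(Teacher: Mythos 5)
Your proposal is correct and follows essentially the same route as the paper: reduce to the finite system of inequalities in \autoref{proposition:aud}, observe that the heights $\priceavup_{\candt}\marginalv$ and $\priceavdown_{\candt}\marginalv$ do not involve the prior, and use the fact that each test function in $\Prices_{\aud}$ is a concave transformation of \diff, so a \diff-mean-preserving spread weakly lowers the right-hand sides. The paper's proof is just a more compact version of the same computation, rewriting $\pricevup_{\candt}\priorv$ as an expectation under the pushforward $\prior\circ\diff^{-1}$ and invoking concavity of $\min\{\cdot,\diff(\candt)\}$ exactly as you do.
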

The result follows from the shape of the test functions $\Prices_{\aud}$ and noting that if \priorb\ is a \diff-mean preserving spread of \prior, then for all \candt, $\pricevup_{\candt}\priorb\leq\pricevup_{\candt}\prior$ and $\pricevdown_{\candt}\priorb\leq\pricevdown_{\candt}\prior$.

We illustrate \autoref{proposition:aud-cs-mps} with the following example:
\begin{example}[Safe vs. risky project]\label{example:2-sided}
The \dm\ is choosing between a safe ($\aaction_1$) and a risky ($\aaction_2$) project. Let $\Types_+,\Types_0,$ and $\Types_-$ denote the set of states for which the risky project dominates, is payoff equivalent to, and is dominated by the safe project, respectively. Then, the \dm's utility difference equals $1$ on $\Types_+$, $0$ on $\Types_0$, and $-1$ otherwise. For a given prior \prior, the cumulative distribution of $\prior\circ\diff^{-1}$ is as follows:
\begin{align*}
F_{\prior\circ\diff^{-1}}(x)=\left\{\begin{array}{ll}0&\text{ if }x<-1\\
\prior(\Types_-)&\text{ if }x\in[-1,0)\\
1-\prior(\Types_+)&\text{ if }x\in[0,1)\\
1&\text{otherwise}\end{array}\right..
\end{align*}
A prior \priorb\ is a \diff-mean-preserving spread of \prior\ if $\priorb(\Types_+)-\priorb(\Types_-)=\prior(\Types_+)-\prior(\Types_-)$ and $\priorb(\Types_+)\geq\prior(\Types_+)$. That is, under \priorb, the \dm\ assigns more probability to the extreme states in which the risky project pays off or fails, while maintaining the prior difference between the probability that the risky project pays off or fails. By \autoref{proposition:aud-cs-mps}, if \pair\ is \bce-consistent given \payoff, so is $(\priorb,\marginal)$.
\end{example}

%\ld{can i use the hypothesis testing example to illustrate?}

% \annotation{Either we assume that \param\ is a real number or we are precise that the order on \param\ need not be the one on the real line.}
\paragraph{Binary actions and payoff shifters} Consider now the case of binary actions, $\Actions=\{\aaction_1,\aaction_2\}$, so that without loss, we can take $\payoff(\aaction_2,\type)-\payoff(\aaction_1,\type)=\diff(\type)$. Suppose we parameterize the utility differences by $\diff(\cdot,\param)$ such that $\param<\paramb$ implies $\diff(\cdot,\param)\leq\diff(\cdot,\paramb)$. That is, as we move from \param\ to \paramb, action $\aaction_2$ becomes more attractive.

\autoref{proposition:aud-cs-shift} characterizes how the lower and upper bounds on $\marginal(\aaction_2)$ change as the utility differences shift in favor of the higher action.
\begin{proposition}[Payoff shifters]\label{proposition:aud-cs-shift}
Assume $\Actions=\{\aaction_1,\aaction_2\}$ and fix a prior $\prior$. If $\paramb>\param$, then the lower and upper bounds on $\marginal(\aaction_2)$ are smaller under \param\ than under \paramb. That is, $\mathrm{LB}(\prior,\diffv(\param))\leq\mathrm{LB}(\prior,\diffv(\paramb))$ and $\mathrm{UB}(\prior,\diffv(\param))\leq\mathrm{UB}(\prior,\diffv(\paramb))$.
\end{proposition}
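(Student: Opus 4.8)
The plan is to read off the two bounds from \autoref{corollary:bin} (equivalently \autoref{proposition:aud} specialized to $\carda=2$, $\slope=1$, $\ctt=0$) and then establish monotonicity by rewriting each bound as an extremum over a \emph{free} threshold that is decoupled from the states. Writing $m=\marginal(\aaction_2)$, the directions $\pricevup_{\candt}$ for $\candt\in\Types_+$ yield the lower bounds $m\ge\frac{1}{\diff(\candt)}\sum_\type\prior(\type)\min\{\diff(\type),\diff(\candt)\}$, and the directions $\pricevdown_{\candt}$ for $\candt\in\Types_-$ yield the upper bounds $m\le 1-\frac{1}{\diff(\candt)}\sum_\type\prior(\type)\max\{\diff(\type),\diff(\candt)\}$. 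A one-line computation shows these coincide with the summations in \eqref{eq:lb}--\eqref{eq:ub}, so $\mathrm{LB}(\prior,\diffv)=\max\{0,\max_{\candt\in\Types_+}\frac{1}{\diff(\candt)}\sum_\type\prior(\type)\min\{\diff(\type),\diff(\candt)\}\}$ and symmetrically for $\mathrm{UB}$.

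The crux is to replace the discrete threshold $\diff(\candt)$ by a free real variable $v$. On each interval between consecutive values of \diff, the map $v\mapsto\frac1v\sum_\type\prior(\type)\min\{\diff(\type),v\}$ has the form $C/v+D$ with $C,D$ constant, hence is monotone; so its supremum over $v>0$ is attained at one of the breakpoints $v=\diff(\candt)$, and the limits $v\to 0^+,\infty$ (which tend to $+\infty$ or do not beat the breakpoints) can be discarded. An analogous check handles $\mathrm{UB}$. This gives the decoupled representations
\[
\mathrm{LB}(\prior,\diffv)=\max\Big\{0,\ \sup_{v>0}\tfrac1v\textstyle\sum_\type\prior(\type)\min\{\diff(\type,\param),v\}\Big\},
\]
\[
\mathrm{UB}(\prior,\diffv)=\min\Big\{1,\ \inf_{v<0}\Big(1-\tfrac1v\textstyle\sum_\type\prior(\type)\max\{\diff(\type,\param),v\}\Big)\Big\}.
\]

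With these in hand the comparative static is immediate. For every fixed $v>0$, the hypothesis $\diff(\cdot,\param)\le\diff(\cdot,\paramb)$ gives $\min\{\diff(\type,\param),v\}\le\min\{\diff(\type,\paramb),v\}$ pointwise, so $\frac1v\sum_\type\prior(\type)\min\{\diff(\type,\cdot),v\}$ is nondecreasing in \param; taking the supremum over $v$ and then $\max\{0,\cdot\}$ preserves this, so $\mathrm{LB}(\prior,\diffv(\param))\le\mathrm{LB}(\prior,\diffv(\paramb))$. For $\mathrm{UB}$, for every fixed $v<0$ we have $\max\{\diff(\type,\param),v\}\le\max\{\diff(\type,\paramb),v\}$, and since $-1/v>0$ the quantity $1-\frac1v\sum_\type\prior(\type)\max\{\diff(\type,\cdot),v\}$ is again nondecreasing in \param; the infimum over $v$ and $\min\{1,\cdot\}$ preserve monotonicity, giving $\mathrm{UB}(\prior,\diffv(\param))\le\mathrm{UB}(\prior,\diffv(\paramb))$.

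I expect the decoupling step to be the main obstacle. A naive argument that fixes the optimal threshold \emph{state} $\type_{\aaction_2}(\prior)$ and lets its utility difference $\diff(\type_{\aaction_2}(\prior),\param)$ ride along with \param\ does not work: the term-by-term inequality between the \param- and \paramb-expressions at a common threshold state can fail, precisely because raising \diff\ also moves the denominator $\diff(\candt)$. Holding a free threshold $v$ fixed while only $\diff(\cdot,\param)$ varies is what makes each summand monotone, and the piecewise-$C/v+D$ observation is what licenses replacing the finite extremum over $\{\diff(\candt)\}$ by the extremum over the half-line. An equivalent route, which I would mention, characterizes $\mathrm{UB}$ as the largest mass $m$ such that the top-$m$ \prior-mass of states has nonnegative conditional expectation of \diff, and $\mathrm{LB}$ symmetrically; pointwise domination of \diff\ raises every such conditional mean, yielding the same conclusion.
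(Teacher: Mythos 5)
Your proof is correct, and it takes a genuinely different route from the paper's. The paper's argument is a one-line envelope-style claim: it parametrizes the threshold states $\type_{\aaction_1}(\prior)$ and $\type_{\aaction_2}(\prior)$ by \param, asserts they are decreasing in \param, and concludes directly that the bounds in \autoref{corollary:bin} increase. As written, that argument leaves implicit exactly the difficulty you identify: at a fixed threshold state the expression $\sum_{\type\leq\candt}\prior(\type)\bigl[1-\diff(\type,\param)/\diff(\candt,\param)\bigr]$ need not move monotonically in \param, because the denominator $\diff(\candt,\param)$ shifts along with the numerators. Your decoupling of the threshold into a free real variable $v$ resolves this cleanly: once $v$ is held fixed, each summand $\min\{\diff(\type,\param),v\}$ (resp.\ $\max\{\diff(\type,\param),v\}$) is pointwise monotone in \param, and the reduction of $\sup_{v>0}$ (resp.\ $\inf_{v<0}$) to the breakpoints $v=\diff(\candt)$ via the piecewise $C/v+D$ structure is exactly what licenses passing back to the expressions in \eqref{eq:lb}--\eqref{eq:ub}. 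I verified the algebraic identification of your $\tfrac1v\sum_\type\prior(\type)\min\{\diff(\type),v\}$ at $v=\diff(\candt)$ with $1-\sum_{\type\le\candt}\prior(\type)[1-\diff(\type)/\diff(\candt)]$, and the identification of the optimizing breakpoint with $\type_{\aaction_1}(\prior)$ is inherited from the U-shape argument already proved for \autoref{corollary:bin}, so nothing is circular. What your approach buys is a fully rigorous and essentially assumption-free monotone-comparative-statics argument (sup/inf of a pointwise-ordered family), at the cost of the extra bookkeeping about limits at the ends of the half-lines; your parenthetical about those limits is slightly loose (for the lower bound the limit as $v\to0^+$ is $-\infty$ or a breakpoint value, not $+\infty$), but the substance --- that the extrema are attained at breakpoints or dominated by the caps $0$ and $1$ --- is right. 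The alternative conditional-mean characterization you sketch at the end is also a valid third route.
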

The result is intuitive: as $\aaction_2$ becomes more attractive, both the minimal probability with which the \dm\ must take $\aaction_2$ and the maximal probability with which the \dm\ may take $\aaction_2$ so that \marginal\ is consistent with information are higher. 

The result has the following implication: suppose \pair\ is \bce-consistent given $\diff(\cdot,\param)$ for some real-valued parameter \param. Then, \pair\ is \bce-consistent given $\diff(\cdot,\paramb)$ for all $\paramb\in[\underline{\param}_{\prior},\overline{\param}_{\prior}]$, where $\mathrm{UB}(\prior,\diffv(\underline{\param}_{\prior}))=\marginal(\aaction_2)$ and $\mathrm{LB}(\prior,\diffv(\overline{\param}_{\prior}))=\marginal(\aaction_2)$. That is, for each \prior\ such that \pair\ is \bce-consistent for some parameter, we can identify an interval of parameter values for which \pair\ is \bce-consistent.

We illustrate \autoref{proposition:aud-cs-shift} with \autoref{example:shift}, based on \cite{bergemann2022counterfactuals}:
\begin{example}[Payoff shifters]\label{example:shift}
Suppose $\Actions=\{\aaction_1,\aaction_2\}$, and label the states as $\Types=\{-1,1\}$. Normalizing the payoff from action $\aaction_1$ to $0$, the parameterized utility of action $\aaction_2$ equals the utility difference across the actions, $\diff(\type,\param)=\type+\param$. In what follows, suppose $\param\in[-1,1]$. When $\param>1$, $\aaction_2$ is dominant and the only rationalizable marginal is $\marginal=(0,1)$. Similarly, when $\param<-1$, $\aaction_1$ is dominant and the only rationalizable marginal is $\marginal=(1,0)$. 

\autoref{proposition:2-3} and \autoref{corollary:bin} both imply \pair\ is \bce-consistent given \payoff\ if and only if 
\begin{align}\label{eq:single}
\prior(1)\in\left[\marginal(\aaction_2)\frac{1-\param}{2},\frac{1-\param}{2}+\marginal(\aaction_2)\frac{1+\param}{2}\right].
\end{align}
Interestingly, for a fixed probability that the \dm\ takes $\aaction_2$, $\marginal(\aaction_2)$, the lower and the upper bounds on the prior probability the state is $1$ fall as \param\ increases. As $\aaction_2$ becomes more attractive, the \dm\ can be less optimistic about $\type=1$ and still take $\aaction_2$ with the same probability. Similarly, as $\aaction_2$ becomes more attractive, the \dm\ must be less optimistic about the high state than before in order to still take $\aaction_2$ with the same probability.
%
%
%As the parameter \param\ increases, both the lower and the upper bounds on \prior\ fall. Given $\marginal(\aaction_2)$, as  $\aaction_2$ becomes more attractive, the \dm\ can be less optimistic about the high state and still take the high action with the same probability as before. Conversely, as $\aaction_2$ becomes more attractive, the \dm\ must be less optimistic about the high state than before in order to still take the high action with the same probability.
\end{example}

\paragraph{Binary actions and ratio-ordered payoffs} Say the parameterized utility difference $\diff(\cdot,\param)$ is \emph{ratio ordered} if for $\type<\typeb$, the ratio $\diff(\type,\param)/\diff(\typeb,\param)$ is increasing in \param. We have the following result:
\begin{proposition}[Ratio-ordered payoffs]\label{proposition:aud-cs-ratio}
Suppose the parameterized utility difference is ratio ordered. Then, the lower and upper bounds on the probability of $\aaction_2$ are increasing in \param.
\end{proposition}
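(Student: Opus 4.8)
The plan is to bypass the explicit formulas in \autoref{corollary:bin} and instead show that ratio-ordering makes the optimality region $\opt(\aaction_2)$ grow monotonically in \param, and then to translate this set monotonicity into monotonicity of the two bounds through the distribution-over-posteriors reading of \bce-consistency. The key simplification is that, for binary actions, $\opt(\aaction_2)$ and $\opt(\aaction_1)$ depend on $\diffv(\param)$ only through the halfspace $\sum_\type\belief(\type)\diff(\type,\param)\geq0$, which is invariant to positive rescaling of $\diffv(\param)$, and ratio-ordering is precisely a statement about $\diffv(\param)$ up to such rescaling.

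First I would fix the normalization by the top state. Since affine utility differences imply increasing differences, $\diff(\cdot,\param)$ is increasing in \type, and as long as $\aaction_2$ is not strictly dominated we have $\diff(\type_\tnum,\param)>0$. Writing $\opt(\aaction_2)=\{\belief\in\Posteriors:\sum_\type\belief(\type)\diff(\type,\param)\geq0\}$, I divide the defining inequality by the positive scalar $\diff(\type_\tnum,\param)$, so that $\opt(\aaction_2)$ is the set of \belief\ with $\sum_\type\belief(\type)[\diff(\type,\param)/\diff(\type_\tnum,\param)]\geq0$. Applying ratio-ordering with $\typeb=\type_\tnum$ shows each normalized coordinate $\diff(\type,\param)/\diff(\type_\tnum,\param)$ is nondecreasing in \param. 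Hence, for every fixed \belief, the map $\param\mapsto\sum_\type\belief(\type)[\diff(\type,\param)/\diff(\type_\tnum,\param)]$ is nondecreasing, so $\opt(\aaction_2)$ expands and $\opt(\aaction_1)$ contracts (in set inclusion) as \param\ increases.

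Second I would invoke Equations \ref{eq:state-mg-b} and \ref{eq:obedience-b}. By \autoref{corollary:bin} the \bce-consistent values of $\marginal(\aaction_2)$ form the interval $[\mathrm{LB}(\prior,\diffv(\param)),\mathrm{UB}(\prior,\diffv(\param))]$, and the endpoints have a clean persuasion meaning: the upper bound equals the maximal Bayes-plausible probability that can be placed on $\opt(\aaction_2)$, and the lower bound equals $1$ minus the maximal Bayes-plausible probability that can be placed on $\opt(\aaction_1)$ (this uses that the complement of $\opt(\aaction_2)$ lies in $\opt(\aaction_1)$, so leftover posteriors can always be assigned to $\aaction_1$ obediently). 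For a fixed prior, the feasible set of Bayes-plausible distributions over posteriors does not depend on the target region, while the mass such a distribution assigns to a larger region is at least what it assigns to a smaller one; hence the maximal attainable mass is monotone in the region under inclusion. Combining with the previous step, $\mathrm{UB}(\prior,\diffv(\param))$ is nondecreasing because $\opt(\aaction_2)$ grows, and $\mathrm{LB}(\prior,\diffv(\param))$ is nondecreasing because $\opt(\aaction_1)$ shrinks, which is the claim.

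The main obstacle, and the reason I would avoid a direct attack on the formulas of \autoref{corollary:bin}, is that both the summation ranges $\Types_+,\Types_-$ and the threshold states $\type_{\aaction_1}(\prior),\type_{\aaction_2}(\prior)$ depend on \param, and differentiating the ratios $\diff(\type,\param)/\diff(\candt,\param)$ term by term forces one to track sign changes of individual coordinates $\diff(\type,\cdot)$ and reciprocals of quantities that may cross zero. Normalizing by $\diff(\type_\tnum,\param)$, whose sign is pinned down by increasing differences and non-domination, removes every such ambiguity at once and reduces the statement to the single monotone-comparative-statics fact about Bayes-plausible mass on nested sets. The only genuinely delicate point to check is that the endpoints of the \bce-consistent interval coincide with these two extremal persuasion values, including the degenerate case where $\aaction_2$ is dominated (there $\opt(\aaction_2)=\emptyset$ and both bounds equal $0$, consistent with monotonicity).
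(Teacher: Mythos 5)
Your proof is correct, but it takes a genuinely different route from the paper's. The paper's own proof is a one-line appeal to the closed-form expressions in \autoref{corollary:bin}: under ratio ordering each term $1-\diff(\type,\param)/\diff(\candt,\param)$ moves monotonically in \param, and one then checks that the relevant extremizing thresholds and the sets $\Types_+,\Types_-$ move in the right direction. You instead (i) normalize the obedience halfspace by $\diff(\type_\tnum,\param)>0$ and use ratio ordering only against the top state to conclude that $\opt(\aaction_2)$ expands and $\opt(\aaction_1)$ contracts in \param, and (ii) identify $\mathrm{UB}$ with the maximal Bayes-plausible mass on $\opt(\aaction_2)$ and $\mathrm{LB}$ with one minus the maximal Bayes-plausible mass on $\opt(\aaction_1)$ (which is exactly what the Minkowski-sum representation \eqref{eq:minkowski-representation} gives for binary actions, using convexity of each \opta\ and the fact that the complement of one region is contained in the other), so monotonicity of the bounds reduces to monotonicity of maximal mass under set inclusion. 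Both arguments are valid; yours buys a cleaner comparative-statics logic that sidesteps the \param-dependence of $\Types_\pm$ and of the thresholds $\type_{\aaction_1}(\prior),\type_{\aaction_2}(\prior)$, and it makes transparent that the only content of ratio ordering being used is nestedness of the optimality regions, whereas the paper's route stays entirely within the already-established formulas and requires no re-derivation of the persuasion interpretation of the endpoints. Your parenthetical about the dominated case is harmless but unnecessary, since the paper's standing assumption that $\Actions_+$ contains no strictly dominated actions (and that nonempty \opta\ are full dimensional) already guarantees $\diff(\type_\tnum,\param)>0$ in the relevant range.
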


We illustrate \autoref{proposition:aud-cs-ratio} with \autoref{example:ht}:
\begin{example}[Hypothesis testing]\label{example:ht}
The \dm\ wants to test the hypothesis that the state belongs in the set $\hat\Types$. Let $\Actions=\{\aaction_1,\aaction_2\}$, where $\aaction_1$ represents rejecting the hypothesis. Let $\costr(\param)>0$ and $\costa(\param)>0$ denote the costs of type $I$ and type $II$ errors. Then, the \dm's parameterized utility difference is given by
\begin{align*}
\diff(\type,\param)=\left\{\begin{array}{ll}\costr(\param) &\text{ if }\type\in\hat\Types\\
-\costa(\param)&\text{otherwise}\end{array}\right..
\end{align*}
The payoff differences are ratio-ordered if $\param<\paramb$ implies  $-\costa(\param)/\costr(\param)\leq-\costa(\paramb)/\costr(\paramb)$.

Theorems \ref{proposition:aud} and \ref{proposition:2-step} both imply \pair\ is \bce-consistent given \payoff\ if and only if the following holds:
\begin{align*}
\prior(\type\in\hat\Types)\left(1+\frac{\costa(\param)}{\costr(\param)}\right)-\frac{\costa(\param)}{\costr(\param)}\leq\marginal(\aaction_2)\leq\prior(\type\in\hat\Types)\left(1+\frac{\costr(\param)}{\costa(\param)}\right).
\end{align*}
As we increase \param, the lower and upper bounds on the probability of action $\aaction_2$ go up, reflecting that as \param\ increases, the relative cost of not rejecting the hypothesis decreases. Conversely, for a fixed probability of taking action $\aaction_2$, the lower and upper bounds on the prior probability the hypothesis is correct decrease, for similar reasons to those in \autoref{example:shift}. For instance, as \param\ increases, not rejecting the hypothesis becomes more attractive; hence, the \dm\ is willing to not reject it at lower values of $\prior(\type\in\hat\Types)$.
%
%
%As not rejecting the hypothesis becomes more attractive (i.e., as \param\ increases), the upper bound on $\marginal(\aaction_2)$ increases and so does the lower bound, reflecting that the \dm\ must be taking action $\aaction_2$ more frequently for information to rationalize \marginal. 
%Interestingly, for a fixed probability the \dm\ does not reject the hypothesis $\marginal(\aaction_2)$, the lower and the upper bounds on the prior probability the hypothesis is correct falls. As $\aaction_2$ becomes more attractive, the \dm\ can be less optimistic about the hypothesis being correct and still not reject it with the same probability. Conversely, as $\aaction_2$ becomes more attractive, the \dm\ must be less optimistic about the high state than before in order to still not reject the hypothesis with the same probability.
\end{example}

\subsection{\bce-consistency across decision problems}\label{sec:across}
Suppose the analyst observes the \dm's choices across $N$ different decision problems, each indexed by a set of actions $\Actionsi$ and utility function $\payoffi:\Actionsi\times\Types\to\reals$. The analyst's data are now the joint distribution over actions across decision problems, which we denote by $\marginalb\in\Delta(\Actions_1\times\dots\Actions_N)$. Our results so far allow us to understand whether $(\prior,\marginalb|_{\Actionsi})$ is \bce-consistent in decision problem $\decision_\playerindex=\langle\Types,\Actionsi,\payoffi\rangle$, where $\marginalb|_{\Actionsi}$ is the marginal of \marginalb\ over \Actionsi.  Instead, we now study when a \emph{single} information structure exists that rationalizes the choices \marginal\ made by the \dm\ across all decision problems. %For a given prior distribution \prior, we

%
%
%
%
%In this section, we consider the \emph{private} Bayesian persuasion setting of \cite{arieli2019private} and \cite{arieli2021feasible}. We assume each player's utility function depends only on her own action and the state of the world. That is, for all players $\playerindex\in\{1,\dots,N\}$, all action profiles $(\actioni,\actionmi)\in\Actions$, and states of the world $\type\in\Types$,
%\begin{align*}
%\payoffi\left( \actioni,\actionmi, \type \right) =\tilde{\payoff}_\playerindex\left( \actioni, \type \right). 
%\end{align*} 
%The analyst, who knows the base game \base\ and the marginal distribution of play $\marginal\in\Delta(\Actions)$,  wants to ascertain whether the distribution of play \marginal\ can be rationalized by a \emph{public} information structure (i.e., the players publicly observe the realization of a common signal structure before play). 
%
%As we show next, the results in Sections \ref{sec:main} and \ref{sec:utility} can be applied to address this question. In what follows, we rely on the following definition:
%
%\begin{definition}[Public \bce-consistency]
%The pair $\pair$ is \emph{public} \bce-consistent in game \base\ if a Bayes correlated equilibrium \joint\ exists with marginals \pair, whose information structure uses public signals alone.
%\end{definition}

Consider now an auxiliary decision problem $\bar\decision=\langle\Types, \bar\Actions,\bar u\rangle$, where $\bar\Actions=\times_{i=1}^N\Actionsi$. In this decision problem,  choices are action \emph{profiles}, $\aaction\in\times_{\playerindex\in\nplayers}\Actionsi$, and payoffs are $\bar u(\aaction,\type)=\sum_{\playerindex=1}^\nplayers\payoffi(\actioni,\type)$. The results in \cite{bergemann2022counterfactuals} imply that for a given prior distribution \prior, an information structure exists that rationalizes the \dm's choices \marginalb\ \emph{across} decision problems $\left(\decision_\playerindex\right)_{\playerindex\leq N}$ if and only if $(\prior,\marginalb)$ is \bce-consistent in the auxiliary decision problem $\bar\decision$:

\begin{proposition}[\bce-consistency across decision problems]\label{proposition:across}
Consider the collection of $N$ decision problems $\{D_n=\langle\Types,\Actionsi,\payoffi\rangle:n\in\{1,\dots,N\}\}$ and a joint distribution over actions $\marginalb\in\Delta(\times_{\playerindex=1}^N\Actionsi)$. The pair $(\prior,\marginalb)$ is \bce-consistent across decision problems $(\decision_\playerindex)$ if and only if $(\prior,\marginalb)$ is \bce-consistent in decision problem $\bar\decision$.
\end{proposition}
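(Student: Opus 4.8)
The plan is to reduce the proposition to a clean equivalence between two obedience systems imposed on the \emph{same} family of joint distributions. First I would make precise the meaning of \bce-consistency across decision problems. Because the \dm's information is common to all $\nplayers$ problems, a single information structure rationalizes \marginalb\ if and only if there is a joint distribution $\bar\joint\in\Delta(\bar\Actions\times\Types)$ whose marginals over \Types\ and $\bar\Actions$ equal \prior\ and \marginalb, and such that, conditional on any recommended profile $\aaction=(\actioni)_{\playerindex\leq\nplayers}$, each coordinate \actioni\ is optimal in its own problem $\decision_\playerindex$ at the posterior $\bar\joint(\aaction,\cdot)/\marginalb(\aaction)$. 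This is the single-agent specialization of the characterization in \cite{bergemann2022counterfactuals} (equivalently, a revelation-principle argument in which we take the common signal to be the recommended profile itself). Written in the un-normalized form of \ref{eq:obedience}, this per-problem obedience reads: for every \playerindex, every profile \aaction, and every alternative $\devi\in\Actionsi$,
\begin{align}\label{eq:across-perproblem}\tag{$\dagger$}
\sum_{\type\in\Types}\bar\joint(\aaction,\type)\left[\payoffi(\actioni,\type)-\payoffi(\devi,\type)\right]\geq0.
\end{align}

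The substance of the proposition is that \eqref{eq:across-perproblem} holds for all triples $(\playerindex,\aaction,\devi)$ if and only if $\bar\joint$ satisfies the obedience constraint \ref{eq:obedience} of \autoref{definition:bce-c} in the auxiliary problem $\bar\decision$, namely, for all profiles $\aaction,\aactionb\in\bar\Actions$,
\begin{align}\label{eq:across-combined}\tag{$\ddagger$}
\sum_{\type\in\Types}\bar\joint(\aaction,\type)\left[\bar u(\aaction,\type)-\bar u(\aactionb,\type)\right]\geq0.
\end{align}
Since both systems are imposed on joint distributions subject to the identical marginal constraints $(\prior,\marginalb)$, establishing this equivalence proves both directions of the proposition at once.

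For \eqref{eq:across-combined}$\Rightarrow$\eqref{eq:across-perproblem}, I would fix \playerindex, a profile \aaction, and $\devi\in\Actionsi$, and apply \eqref{eq:across-combined} to the profile \aactionb\ that agrees with \aaction\ in every coordinate except the \playerindex-th, which it sets equal to \devi. Because $\bar u(\aaction,\type)=\sum_{\playerindex=1}^{\nplayers}\payoffi(\actioni,\type)$ is additively separable across problems, every coordinate $m\neq\playerindex$ cancels in the difference $\bar u(\aaction,\type)-\bar u(\aactionb,\type)$, which collapses to $\payoffi(\actioni,\type)-\payoffi(\devi,\type)$; hence \eqref{eq:across-combined} is exactly \eqref{eq:across-perproblem}. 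For the converse, I would fix arbitrary profiles $\aaction,\aactionb$ and use additivity in the other direction, writing $\bar u(\aaction,\type)-\bar u(\aactionb,\type)=\sum_{\playerindex=1}^{\nplayers}\left[\payoffi(\actioni,\type)-\payoffi(\actionbi,\type)\right]$. Substituting this into the left-hand side of \eqref{eq:across-combined}, interchanging the order of summation, and applying \eqref{eq:across-perproblem} with $\devi=\actionbi$ to each inner sum shows the whole expression is nonnegative, which is \eqref{eq:across-combined}.

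I expect the only genuine obstacle to be the first, definitional step: arguing that rationalization by a single information structure corresponds to the \emph{full-profile} obedience \eqref{eq:across-perproblem} — in which the \dm\ conditions on the entire recommended profile — rather than to a weaker own-recommendation obedience. The point is that the signal is shared across all problems, so conditioning on the whole profile (equivalently, on the signal) is the correct notion of what the \dm\ knows when she acts. Once this reduction is granted, the additive structure of $\bar u$ does all the remaining work, and the equivalence with the combined-problem obedience is immediate.
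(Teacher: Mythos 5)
Your proof is correct and follows essentially the same route as the paper, which delegates this result to \cite{bergemann2022counterfactuals} and, in its discussion of the analogous public-consistency result in \autoref{sec:games}, rests on exactly the observation you formalize: additive separability of $\bar u$ makes profile-optimality equivalent to coordinate-wise optimality, so the combined obedience system $(\ddagger)$ and the per-problem system $(\dagger)$ coincide on distributions with the given marginals. Your explicit two-directional verification (unilateral deviations for one direction, summing over coordinates for the other), together with the revelation-principle reduction to full-profile obedience, supplies the details the paper leaves to the citation.
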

Consequently, the results in Sections \ref{sec:main} and \ref{sec:utility} can be used to study \bce-consistency across decision problems. 
We illustrate this result using a variation of \autoref{example:shift}:
\begin{example}\label{example:shift-b}
Consider the following variation of \autoref{example:shift}. Suppose we observe the \dm\ in two decision problems, the first with payoffs corresponding to $\param=0$ and the second corresponding to some $\param\in(0,1]$. Formally, in both decision problems, $\Actions_1=\Actions_2=\{\aaction_1,\aaction_2\}$ and payoff differences are as follows:
\begin{align*}
\payoff_1(\aaction_2,\type)-\payoff_1(\aaction_1,\type)&=\type,\\
\payoff_2(\aaction_2,\type)-\payoff_2(\aaction_1,\type)&=\type+\param.
\end{align*}
Let $\bar\payoff=\payoff_1+\payoff_2$. An action distribution is now $\marginalb\in\Delta\left(\{\aaction_1,\aaction_2\}^2\right)$. For \marginalb\ to be rationalizable via information, we need $\marginalb(\aaction_2,\aaction_1)=0$. Because $\param>0$, if the \dm\ has access to the same information structure under both decision problems, whenever they take $\aaction_2$ in the first problem, they should also take $\aaction_2$ in the second one. \autoref{proposition:across} implies $(\prior,\marginalb)$ is  \bce-consistent across decision problems if and only if
\begin{align}\label{eq:across}
\prior(1)\in\left[\marginalb(\aaction_1,\aaction_2)\frac{1-\param}{2}+\marginalb(\aaction_2,\aaction_2)\frac{1}{2},\frac{1-\param}{2}+\marginalb(\aaction_1,\aaction_2)\frac{\param}{2}+\marginalb(\aaction_2,\aaction_2)\frac{1+\param}{2}\right].
\end{align}
An easy way to see why this is the case is to note $\Delta_{\bar\payoff}^*(\aaction_1,\aaction_1)=[0,\nicefrac{1-\param}{2}]$, $\Delta_{\bar\payoff}^*(\aaction_1,\aaction_2)=[\nicefrac{1-\param}{2},\nicefrac{1}{2}]$, and $\Delta_{\bar\payoff}^*(\aaction_2,\aaction_2)=[\nicefrac{1}{2},1]$ and apply \autoref{corollary:2-state}. 

Compared with the single-decision-problem bounds we derived in \autoref{eq:single}, the bounds in \autoref{eq:across} adjust for the correlation in actions across decision problems. For illustration, suppose the probability that the \dm\ chooses $\aaction_2$ in at least one of the decision problems, $\marginalb(\aaction_1,\aaction_2)+\marginalb(\aaction_2,\aaction_2)$, coincides with the probability that they choose $\aaction_2$ if they were facing a single decision problem with payoff shifter \param, which is denoted by $\marginal(\aaction_2)$ in \autoref{eq:single}. Keeping this total probability fixed, consider the case in which $\marginalb(\aaction_1,\aaction_2)=0$; hence, the \dm\ chooses $\aaction_2$ in either decision problem only if they choose $\aaction_2$ in both decision problems. In this case, the upper bound on $\prior(1)$ is the same as that in \autoref{eq:single}, where the \dm\ is facing decision problem 2 alone, but the lower bound is higher, reflecting that the \dm\ must be optimistic enough about the high state to choose $\aaction_2$ in both decision problems. Similarly, consider the case in which $\marginalb(\aaction_2,\aaction_2)=0$; hence, the \dm\ only chooses $\aaction_2$ in the decision problem 2. In this case, the lower bound is the same as that in \autoref{eq:single}, where the \dm\ faces decision problem 2 alone, but the upper bound is smaller, reflecting the \dm\ is never optimistic enough to choose $\aaction_2$ in both decision problems.
\end{example}
Note we can interpret the joint distribution over action profiles in this section as coming from a game between $N$ players, where player \playerindex\ has action set \Actionsi\ and payoffs $\payoffi$. Consider now the question of whether we can find a public information structure that rationalizes \marginalb\ as if the players observe the realization of the public information structure \emph{prior} to non-cooperatively playing the game. As we show in \autoref{sec:games}, \autoref{proposition:across} characterizes the pairs $(\prior,\marginalb)$ that admit such rationalization. \autoref{sec:games} also applies our results to study ring-network games \citep{kneeland2015identifying}.

\section{Rationalizing information structures}\label{sec:core}
Whereas the results in the previous sections characterize the set of \pair\ that are \bce-consistent given \payoff, they remain silent as to the set of information structures that rationalize the marginal \marginal\ given the \dm's prior \prior\ and utility function \payoff.\footnote{In a sense, this observation is consistent with the paper's motivation: the applied literature treats information as a nuisance parameter and hence, is not necessarily interested in estimating the (parameters of the) information structure.} In this section, we tackle this problem in the single-agent setting, where an information structure can be identified with the distribution over posteriors $\bsplit\in\Delta(\Posteriors)$ it induces \citep{kamenica2011bayesian}. Given $\pair$ and \payoff, we characterize in this section the set of distributions over posteriors (if any) that rationalize the \dm's distribution over actions. Without loss of generality, the analysis that follows assumes the distribution over posteriors \bsplit\ has finite support \citep{myerson1982optimal,kamenica2011bayesian}; we denote the support of \bsplit\ by $\mathrm{supp}\,\bsplit$.

For a distribution over posteriors \bsplit\ to rationalize \marginal, two conditions must be satisfied. First, the mean of \bsplit\ must equal  the prior, \prior; that is, \bsplit\ must be Bayes plausible. Second, the distribution over posteriors \bsplit\ must induce the distribution over actions \marginal. Formally, let $\aaction^*(\belief)$ denote the \dm's optimal set of actions when their belief is \belief. Then, \bsplit\ induces \marginal\ if a decision rule $\strat:\Posteriors\to\Delta(\Actions)$ exists such that for every $\aaction\in\Actions$, 
\begin{align}\label{eq:demand-supply}\marginal(\aaction)=\sum_{\belief\in\Posteriors}\bsplit(\belief)\strat(\belief)(\aaction),\end{align}
where $\aaction\in\text{ supp }\strat(\belief)(\cdot)$ only if $\aaction\in\aaction^*(\belief)$. 

Whereas in the previous section we interpreted the marginal distribution \marginal\ as a Bayes plausible distribution over posteriors, \autoref{eq:demand-supply} shows this analogy is perhaps incomplete. Indeed, whenever the distribution over posteriors \bsplit\ induces beliefs such that $\aaction^*(\belief)$ is not a singleton, specifying the \dm's tie-breaking rule \strat\ is necessary to determine whether the frequency with which the \dm\ takes actions under \bsplit\ matches that under \marginal. 

\paragraph{A demand and supply problem} The problem of determining whether a decision rule \strat\ exists satisfying \autoref{eq:demand-supply} admits the following interpretation (cf. \citealp{gale1957theorem}): beliefs $\belief\in\mathrm{supp}\,\bsplit$ are supplied in quantities $\bsplit(\belief)$ and demanded by actions $\aaction\in\Actions$ in quantities $\marginal(\aaction)$. The demand $\marginal(\aaction)$ can only be satisfied by certain beliefs\textemdash those that satisfy $\belief\in\opta$. The decision rule \strat\ describes how much of a given belief $\belief\in\mathrm{supp}\,\bsplit$ is allocated to action \aaction. That \bsplit\ implements \marginal\ is equivalent to being able to allocate the supply of beliefs to satisfy the action demands in a \emph{market-clearing} way.

As we argue in \autoref{appendix:ext}, the above is an instance of the supply and demand problem studied in \cite{gale1957theorem}, with the distributions \marginal\ and \bsplit\ determining the demanded and supplied quantities, respectively. Building on the main theorem in that paper, \autoref{proposition:core-bp} below characterizes when \bsplit\ implements \marginal. To state \autoref{proposition:core-bp}, one final piece of notation is needed. Given a Bayes plausible  $\bsplit\in\Delta(\Posteriors)$, we construct a measure over subsets \Actionsb\ of the set of actions \Actions\ as follows. For each $\Actionsb\subseteq\Actions$, define the push-forward measure $\bsplit_\Actions(\Actionsb)$ as 
\begin{align}\label{eq:tau-consideration-sets}
    \bsplit_\Actions(\Actionsb)=\bsplit\left(\left[\aaction^*\right]^{-1}(\Actionsb)\right).
\end{align}
In words, each action subset \Actionsb\ has mass equal to the probability that \bsplit\ induces a belief under which \Actionsb\ is the set of optimal actions. \autoref{proposition:core-bp} characterizes the distribution over posteriors that implement \marginal\ when the \dm's prior is \prior\ and their utility function is \payoff:
\begin{proposition}\label{proposition:core-bp}
Suppose $\pair$ are \bce-consistent. A Bayes plausible distribution over posteriors, $\bsplit\in\Delta(\Posteriors)$, implements $\marginal$ if and only if  the following holds 
    \begin{align}\label{eq:core}
(\forall\Actionsb\subseteq\Actions)\sum_{\aaction\in\Actionsb}\marginal(\aaction)\geq\sum_{\Actionsc\subseteq\Actionsb}\bsplit_\Actions(\Actionsc).
    \end{align}
\end{proposition}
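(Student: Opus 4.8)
The plan is to recognize that, by \autoref{eq:demand-supply}, whether a Bayes-plausible \bsplit\ implements \marginal\ is exactly the question of whether a market-clearing transportation plan exists on a bipartite network, and then to invoke \citet{gale1957theorem}. First I would build the network. Create one demand node per action $\aaction\in\Actions$ with demand $\marginal(\aaction)$, and one supply node per \emph{optimal-action set} $\Actionsc\subseteq\Actions$ occurring with positive probability, carrying supply $\bsplit_\Actions(\Actionsc)=\bsplit([\aaction^*]^{-1}(\Actionsc))$. The key reduction is that two posteriors with the same optimal-action set are interchangeable as far as \eqref{eq:demand-supply} is concerned, so the implementation problem depends on \bsplit\ only through the push-forward \bsplit_\Actions; a supply node $\Actionsc$ may ship only to demand nodes $\aaction\in\Actionsc$, encoding the constraint that $\aaction\in\mathrm{supp}\,\strat(\belief)$ only if $\aaction\in\aaction^*(\belief)$. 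A decision rule \strat\ satisfying \eqref{eq:demand-supply} then corresponds to a nonnegative flow $f(\Actionsc,\aaction)$ with $\sum_{\aaction\in\Actionsc}f(\Actionsc,\aaction)=\bsplit_\Actions(\Actionsc)$ and $\sum_{\Actionsc\ni\aaction}f(\Actionsc,\aaction)=\marginal(\aaction)$; since $\sum_{\Actionsc}\bsplit_\Actions(\Actionsc)=1=\sum_{\aaction}\marginal(\aaction)$, supply and demand balance, so exact clearing and the inequality-feasible version of the problem coincide.

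Necessity (the ``only if'' direction) is a direct counting argument and I would dispatch it first. Fix any $\Actionsb\subseteq\Actions$. Every supply node $\Actionsc$ with $\Actionsc\subseteq\Actionsb$ can ship only to actions inside \Actionsb, so in any implementing flow the total mass $\sum_{\Actionsc\subseteq\Actionsb}\bsplit_\Actions(\Actionsc)$ must be absorbed by the demands $\{\marginal(\aaction):\aaction\in\Actionsb\}$, yielding $\sum_{\aaction\in\Actionsb}\marginal(\aaction)\geq\sum_{\Actionsc\subseteq\Actionsb}\bsplit_\Actions(\Actionsc)$, which is \eqref{eq:core}.

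For sufficiency I would apply the feasibility theorem of \citet{gale1957theorem} (equivalently, max-flow/min-cut on the network above). Its natural output is the dual ``reachability'' condition: a clearing flow exists if and only if, for every set of demand nodes, the supply that can reach that set is at least its demand, i.e.\ for every $\Actionsb\subseteq\Actions$,
\[
\sum_{\Actionsc:\,\Actionsc\cap\Actionsb\neq\emptyset}\bsplit_\Actions(\Actionsc)\ \geq\ \sum_{\aaction\in\Actionsb}\marginal(\aaction).
\]
The one genuinely fiddly step, where I expect the main risk of error, is reconciling this ``intersects'' form with the ``contained-in'' form in \eqref{eq:core}. I would do this by complementation: apply the displayed inequality to $\Actions\setminus\Actionsb$ in place of \Actionsb, note that $\Actionsc\cap(\Actions\setminus\Actionsb)\neq\emptyset$ is equivalent to $\Actionsc\not\subseteq\Actionsb$, and use $\sum_{\Actionsc}\bsplit_\Actions(\Actionsc)=1$ together with $\sum_{\aaction}\marginal(\aaction)=1$ to subtract both sides from one. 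This converts the reachability condition, ranging over all complements, into $\sum_{\aaction\in\Actionsb}\marginal(\aaction)\geq\sum_{\Actionsc\subseteq\Actionsb}\bsplit_\Actions(\Actionsc)$ for all \Actionsb; since the family ranges over every subset, passing to complements loses nothing.

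Finally I would confirm that the standing hypotheses play only a framing role. \bce-consistency of \pair\ guarantees the set of implementing \bsplit\ is nonempty, so the characterization is not vacuous, and Bayes-plausibility of \bsplit\ is what lets us read $\aaction^*$ off the (finite) support and treat \bsplit_\Actions\ as a probability measure on subsets; neither enters the flow argument beyond ensuring \bsplit_\Actions\ and \marginal\ are probability distributions on the two sides. The principal obstacle, to reiterate, is the bookkeeping in the complementation step and citing the precise version of Gale's theorem that produces a \emph{demand-meeting} feasible flow rather than merely a maximal one.
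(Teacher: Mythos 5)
Your proposal is correct and follows essentially the same route as the paper: recast \autoref{eq:demand-supply} as a demand-feasibility problem, use that \marginal\ and \bsplit\ are both probability distributions to force exact market clearing (the paper's ``market clearing'' lemma), invoke \citet{gale1957theorem}, and pass from the ``reachability'' form of the condition to \autoref{eq:core} by complementation. The only cosmetic differences are that you aggregate posteriors into optimal-action-set supply nodes at the outset (the paper keeps individual posteriors as supply nodes and performs the same aggregation at the end via $\Delta^*(\Actionsb)$) and that you give a direct counting argument for necessity rather than extracting it from Gale's theorem.
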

To interpret \autoref{eq:core}, note the following. The left-hand side is the probability under which the agent takes \emph{some} action \aaction\ in the set \Actionsb. Instead, the right-hand side is the probability under which the agent finds \emph{some} action in the set \Actionsb\ optimal, but no action that is not in \Actionsb. \autoref{eq:core} then states the frequency with which the agent takes actions in \Actionsb\ has to be at least the frequency with which an action in \Actionsb\ is optimal.\footnote{\autoref{eq:core} is intimately connected to the Border-Matthews-Maskin-Riley characterization of reduced-form implementation in auctions. The latter states that a reduced-form auction (a collection of interim probabilities of trade for each buyer) has an auction implementation if and only if for all subsets of buyer-type profiles, the probability the reduced form auction allocates the good to types in that set is no larger than the prior probability of buyer types in that set. Letting $\bsplit_\Actions$ and \marginal\ play the role of the reduced-form auction and of the type distribution, respectively, \autoref{eq:core} is morally related to the Border-Matthews-Maskin-Riley inequalities.}

\begin{remark}[A core interpretation]\label{remark:core} \autoref{eq:core} implies \marginal\ is in the \emph{core of the game} induced by the measure $\bsplit_\Actions$.\footnote{\cite{azrieli2022marginal} also note the connection between consistent marginals in the context of stochastic menu choice and cooperative games.} Indeed, given $\bsplit_\Actions$, define the following cooperative game. The set of players is the set of actions \Actions, so that a coalition of players is a subset of actions $\Actionsb\subset\Actions$. The worth of coalition \Actionsb\ is given by $w_{\bsplit_\Actions}(\Actionsb)=\sum_{\Actionsc\subseteq\Actionsb}\bsplit_\Actions(\Actionsc)$. Because $w_{\bsplit_{\Actions}}\geq0$, the core of the game $(\Actions,w_{\bsplit_\Actions})$ is given by
 \[\mathrm{Core}(w_{\bsplit_{\Actions}})=\left\{p\in\Delta\left(\Actions\right):(\forall\Actionsb\subseteq\Actions)\sum_{\aaction\in\Actionsb}p(\aaction)\geq w_{\bsplit_\Actions}(\Actionsb)\right\}.\] 
 \autoref{eq:core} states that \marginal\ is a payment rule for each player in the game that covers the worth of each coalition and hence, belongs to the core of the game.%\footnote{\cite{arieli2019private} also find a connection between information design and cooperative games}
 \end{remark}
The proof of \autoref{proposition:core-bp} is in \autoref{appendix:ext} and follows from three steps. First, we show how to map our problem into that in \cite{gale1957theorem}. Second, whereas \cite{gale1957theorem} allows for the supply-demand equations to hold as weak inequalities, we show that any solution to Gale's problem clears the market exactly and, hence, satisfies \autoref{eq:demand-supply}. This result follows from \marginal\ and \bsplit\ being probability distributions. \citeauthor{gale1957theorem} refers to such solutions as \emph{maximal} flows. Finally, we show the necessary and sufficient condition for the existence of a feasible and maximal flow in \cite{gale1957theorem} is equivalent to \autoref{eq:core}.

\paragraph{Connection with stochastic choice} We now draw a connection with stochastic choice from menus, which, among other things, allows us to illustrate why \autoref{eq:core} implies \bsplit\ implements \marginal. %We relegate the formal details of the discussion to \autoref{appendix:menu-choice}.

It follows from the results in \cite{gale1957theorem} that \autoref{eq:core} implies a system of conditional probabilities $\{\scr(\cdot|\Actionsb)\in\Delta(\Actions):\Actionsb\subset\Actions\}$ exists such that (i) for all $\Actionsb\subseteq\Actions$, $\scr(\Actionsb|\Actionsb)=1$ and (ii)
\begin{align}\label{eq:stochastic-menu-choice}
    \marginal(\aaction)=\sum_{\Actionsb\subseteq\Actions:\aaction\in\Actionsb}\bsplit_\Actions(\Actionsb)\scr(\aaction|\Actionsb).
\end{align}
The conditional probabilities $\left(\scr(\cdot|\Actionsb)\right)_{\Actionsb\subseteq\Actions}$ can be interpreted as the \dm's stochastic choice from the menus $\{\Actionsb:\Actionsb\subseteq\Actions\}$. \autoref{eq:stochastic-menu-choice} states that the probability that the \dm\ chooses action \aaction\ under marginal is the probability that the agent faces a menu \Actionsb\ that has \aaction\ available and the \dm\ chooses \aaction\ out of \Actionsb.\footnote{ \autoref{eq:stochastic-menu-choice} is \emph{another} demand and supply problem, where $\bsplit_\Actions$ is the supply of action subsets. The results in \cite{gale1957theorem} imply \autoref{eq:core} is equivalent to the existence of $\scr$. The existence of $\scr$ can also be established using \citeauthor{azrieli2022marginal}'s extension of Hall's marriage theorem (see their Proposition 9).} 

Consider now the following ``experiment'': we first draw a menu $\Actionsb$ using $\bsplit_\Actions$ and then draw an action $\aaction\in\Actionsb$ according to $\scr(\cdot|\Actionsb)$. We only inform the \dm\ of the drawn action, and not the menu from which it was drawn. Because we draw menu \Actionsb\ only when it is the optimal set of actions, we only recommend \aaction\ when following the recommendation is optimal for the \dm. As we explain below, \autoref{eq:stochastic-menu-choice} then implies this ``experiment'' induces the \dm\ to take actions with the desired frequency. %Technically, to descri

Technically, we have not described an experiment\textemdash a collection of signal distributions conditional on the state of the world\textemdash but one can do so immediately as follows: for each $\type\in\Types$ and $\aaction\in\Actions$,
\begin{align}\label{eq:state-dependent-stochastic-choice}
    \tilde{\scr}(\aaction|\type)=\sum_{\Actionsb:\aaction\in\Actionsb}\sum_{\{\belief\in\mathrm{supp}\,\bsplit:\aaction^*(\belief)=\Actionsb\}}\frac{\belief(\type)}{\prior(\type)}\bsplit(\belief)\scr(\aaction|\Actionsb).
\end{align}
In this experiment, the \dm\ receives an action recommendation conditional on the state of the world, so that \autoref{eq:state-dependent-stochastic-choice} describes the \dm's state-dependent stochastic choice.

The previous discussion connects two sets of conditional distributions over choices that arise in the stochastic choice literature: stochastic choices conditional on a state of the world (\autoref{eq:state-dependent-stochastic-choice}) and stochastic choices out of a menu\textemdash$\scr$ in \autoref{eq:stochastic-menu-choice}. Indeed, the measure $\bsplit_\Actions$ can be interpreted as the frequency with which the agent faces different menus\textemdash action subsets in this case\textemdash whereas the measure \marginal\ represents the frequency with which the agent makes different choices. In other words, the pair $(\bsplit_\Actions,\marginal)$ is analogous to the dataset in \cite{azrieli2022marginal}. Whereas they show the core condition in \autoref{eq:core} characterizes the existence of $\left(\scr(\cdot|\Actionsb)\right)_{\Actionsb\subseteq\Actions}$ in \autoref{eq:stochastic-menu-choice} given $(\bsplit_\Actions,\marginal)$, we show the core condition  characterizes the set of Bayes plausible distribution over posteriors that induce \marginal.
{\singlespacing{
\bibliographystyle{ecta}
\bibliography{id-core}}}

\appendix
\section{Omitted proofs}\label{appendix:proofs}
\subsection{Omitted proofs from \autoref{sec:main}}\label{appendix:main}

We begin this section with the proof of \autoref{theorem:h-representation}. When \ppmpair\ has dimension $\tnum-1$ and we regard it as a subset of $\reals^{\tnum-1}$, the proof of \autoref{theorem:h-representation} follows from the arguments in the main text. We focus here on the case in which \ppmpair\ has dimension $\dimm\leq\tnum-1$ and present the characterization result when we regard $\ppmpair$ as a subset of $\reals^\tnum$ in \autoref{theorem:h-representation-app}. After presenting the proof of \autoref{theorem:h-representation-app}, we show how to modify the statement of \autoref{theorem:h-representation} in the main text to regard \ppmpair\ as a subset of $\reals^\tnum$.

Regarding $\ppmpair\subset \Reals^\tnum$, let $\dimm$ denote the dimension of \ppmpair. Let $\mathrm{M}^\perp$ denote the kernel space of $\ppmpair$, defined as follows
\begin{align*}
\mathrm{M}^\perp:=\{\pricev\in  \Reals^\tnum:\pricev\cdot (\beliefv-\beliefvb)=0,\forall \beliefv,\beliefvb \in \ppmpair\}.
\end{align*}
Note that $\mathrm{M}^\perp$ is a subspace of $\Reals^\tnum$. As a result, a basis of $\mathrm{M}^\perp$ exists. Let $\Basis=\{\basisv_1,...,\basisv_{\tnum-\dimm}\}$ be a basis of $\mathrm{M}^\perp$. Let $P_\Basis$ be the projection matrix onto $\mathrm{M}^\perp$ defined by \Basis.\footnote{Formally, $P_\Basis=\Basis(\Basis^T\Basis)^{-1}\Basis^T$ where we abuse notation by treating \Basis\ as a matrix whose columns are $\basisv_i$ and $T$ denotes the transpose.} For a set $\gsubset\subset \Reals^\tnum$, we let $\gsubset/\Basis$ denote the projection of $\gsubset$ on the subspace defined by $\ppmpair$:
\begin{align*}
\gsubset/\Basis:=\{\gbold-P_\Basis\gbold:\gbold\in \gsubset\}.\end{align*}

\begin{theorem}\label{theorem:h-representation-app}
Suppose \ppmpair\ is nonempty.  The pair $\pair\in\Posteriors\times\Delta(\Actions)$ is \bce-consistent given \payoff\ if and only if \autoref{eq:support} holds for all $\pricev\in \Prices_{\ext}\left(\big(\wedge_{\aaction\in\Actions_{+}} \normal(\opta)\big)/\Basis\right)\cup \Basis\cup (-\Basis)$.
\end{theorem}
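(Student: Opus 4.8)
The plan is to reduce the infinite family of directions appearing in the support-function test of \autoref{observation:ms} — which, via \citet[Theorem 3]{strassen1965existence}, is valid whether or not \ppmpair\ is full dimensional — to the finite set in the statement, by separating the directions that pin down the \emph{affine hull} of \ppmpair\ from those that cut out its \emph{facets} inside that affine hull. Write $h(\pricev)=\sum_{\aaction\in\Actions_+}\marginal(\aaction)\max_{\belief\in\opta}\pricev\belief$ for the support function of \ppmpair, so that $\pair$ is \bce-consistent given \payoff\ iff $h(\pricev)\geq\pricev\priorv$ for every $\pricev\in\realst$. The forward implication is immediate: if $\prior\in\ppmpair$ then $h(\pricev)\geq\pricev\priorv$ holds for all $\pricev$, in particular on the test set. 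The content is the converse.

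First I would handle the directions in $\Basis\cup(-\Basis)$. By definition of $\mathrm{M}^\perp$, for any $\pricev\in\mathrm{M}^\perp$ the map $\beliefv\mapsto\pricev\beliefv$ is constant on \ppmpair, equal to $\pricev\beliefv_0$ for any fixed $\beliefv_0\in\ppmpair$; hence $h$ is \emph{linear} on $\mathrm{M}^\perp$. Consequently $h(\basisv_k)\geq\basisv_k\priorv$ and $h(-\basisv_k)\geq-\basisv_k\priorv$ together force $\basisv_k(\beliefv_0-\priorv)=0$ for each basis vector $\basisv_k$ of $\mathrm{M}^\perp$, i.e. $\priorv-\beliefv_0\perp\mathrm{M}^\perp$. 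Since $\mathrm{M}^\perp$ is the orthogonal complement of the direction space of \ppmpair, this says exactly that $\prior\in\aff(\ppmpair)=\beliefv_0+(\mathrm{M}^\perp)^\perp$. So the two copies of \Basis\ certify affine-hull membership.

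Next, assuming $\prior\in\aff(\ppmpair)$, I would show it suffices to test directions lying in $(\mathrm{M}^\perp)^\perp$. Decomposing an arbitrary $\pricev=P_\Basis\pricev+(\pricev-P_\Basis\pricev)$ with $P_\Basis\pricev\in\mathrm{M}^\perp$, linearity of $h$ on $\mathrm{M}^\perp$ together with $\prior\in\aff(\ppmpair)$ give $h(\pricev)-\pricev\priorv=h(\pricev-P_\Basis\pricev)-(\pricev-P_\Basis\pricev)\priorv$, since the $\mathrm{M}^\perp$-components cancel on both sides. Thus testing $\pricev$ is equivalent to testing its projection $\pricev-P_\Basis\pricev\in(\mathrm{M}^\perp)^\perp$, and checking all $\pricev\in\realst$ reduces to checking all directions in $(\mathrm{M}^\perp)^\perp$. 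Inside this subspace \ppmpair\ is a \emph{full-dimensional} polytope, so the argument behind \autoref{theorem:h-representation} applies verbatim: the minimal $H$-representation is given by the extreme rays of the one-dimensional cones of $\normal(\ppmpair)$ computed in $(\mathrm{M}^\perp)^\perp$. It remains to identify that fan. Every normal cone of \ppmpair\ in \realst\ contains the lineality space $\mathrm{M}^\perp$ (a direction in $\mathrm{M}^\perp$ maximizes $\pricev\beliefv$ over all of \ppmpair), and $\normal(\ppmpair)=\wedge_{\aaction\in\Actions_+}\normal(\opta)$ by the Minkowski-sum/common-refinement identity \citep{ziegler2012lectures}; quotienting out $\mathrm{M}^\perp$ through the operation $/\Basis$ therefore yields the pointed normal fan of \ppmpair\ inside $(\mathrm{M}^\perp)^\perp$, whose one-dimensional cones' extreme rays are exactly $\Prices_{\ext}\big((\wedge_{\aaction\in\Actions_+}\normal(\opta))/\Basis\big)$. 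Combining the two reductions gives the converse, and hence the theorem.

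The step I expect to be the main obstacle is this last identification — showing that projecting the common refinement along $\mathrm{M}^\perp$ commutes with passing to the normal fan of the Minkowski sum and preserves one-dimensional cones and their extreme rays. Care is needed precisely because in the appendix some \opta\ may be lower dimensional, so the individual fans $\normal(\opta)$ carry lineality spaces strictly larger than $\mathrm{M}^\perp$; one must verify that the common refinement still has lineality \emph{exactly} $\mathrm{M}^\perp$ and that projection neither destroys a facet normal of \ppmpair\ nor manufactures a spurious one-dimensional cone. Once this is settled, the remainder is bookkeeping: translating ``$h(\pricev)\geq\pricev\priorv$ on the test set'' back into \bce-consistency through \autoref{observation:ms}. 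Finally I would record how \autoref{theorem:h-representation} in the main text is recovered as the special case $\mathrm{M}^\perp=\mathrm{span}\{\mathbf 1\}$, in which $\Basis=\{\mathbf 1\}$ and the projection reproduces the $\reals^{\tnum-1}$ picture.
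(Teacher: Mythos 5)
Your proposal is correct and follows essentially the same route as the paper's proof: both certify affine-hull membership via $\Basis\cup(-\Basis)$ using linearity of the support function on $\mathrm{M}^\perp$, and then reduce the remaining directions to facet normals of \ppmpair, identified as extreme rays of the one-dimensional cones of the quotient fan, invoking the same two facts from \citet{ziegler2012lectures} (the common-refinement identity and the facet-defining-halfspaces-plus-affine-hull representation). The obstacle you flag at the end is exactly what the paper settles by its facet-by-facet construction, and it does go through: since every normal cone $\normal_\face$ of \ppmpair\ contains $\mathrm{M}^\perp$, one has $\normal_\face/\Basis=\normal_\face\cap(\mathrm{M}^\perp)^\perp$, so quotienting drops every cone's dimension by exactly $\dim\mathrm{M}^\perp$, each facet's normal cone becomes one-dimensional, and no facet normal is lost (any spurious extra test directions would be harmless, since necessity holds for all directions).
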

As we explain after the proof of \autoref{theorem:h-representation-app}, when $\dimm=\tnum-1$, we can take \Basis\ to be the constant vector, $\mathbf{1}$. 
\begin{proof}[Proof of \autoref{theorem:h-representation-app}] 
Necessity follows directly from \autoref{observation:ms}. Next, we prove sufficiency. Let $\N:=\wedge_{\aaction\in\Actions_+} \normal(\opta)$. By \citet[Theorem 7.12]{ziegler2012lectures}, \N\ is the normal fan of \ppmpair. Let \face\ denote a facet of \ppmpair, that is, a $\dimm-1$-dimensional face of \ppmpair. We say a halfspace $H_\face:=\{\beliefv\in \reals^\tnum: \pricev_\face\cdot \beliefv\leq c_\face\}$ is a \emph{facet-defining halfspace of $F$},  if $\ppmpair\subset H_\face$ and $\{\beliefv\in\reals^\tnum: \pricev_\face\cdot \beliefv= c_\face\}\cap \ppmpair=\face$. Our proof relies on the following lemma.

        \begin{lemma}[\protect{\citet[Theorem 2.15, (7)]{ziegler2012lectures}}]
            Let $\polytope\subset \Reals^\tnum$ be a polytope. Then, \polytope\ is the intersection of its facet-defining halfspaces and its affine hull, $\aff(\polytope)$.\footnote{The affine hull of a polytope \polytope\ is the set of all affine combinations of elements of \polytope.}
        \end{lemma}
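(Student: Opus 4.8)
The plan is to derive the statement directly from the definition of a polytope as a bounded intersection of finitely many halfspaces, by separating out the inequalities that hold with equality throughout $\polytope$ (these will cut out the affine hull) from those that do not (these will yield the facets). So I would start from an arbitrary $H$-representation $\polytope=\{\beliefv\in\Reals^\tnum:\pricev_l\beliefv\leq c_l,\ l\in L\}$ with $L$ finite, and partition $L$ into the set $E$ of \emph{implicit equalities}, those $l$ with $\pricev_l\beliefv=c_l$ for every $\beliefv\in\polytope$, and the complement $S$. The first task is to exhibit a relative-interior point $\beliefv_0\in\polytope$ with $\pricev_l\beliefv_0<c_l$ for all $l\in S$: for each such $l$ there is by definition a witness in $\polytope$ strictly satisfying the $l$-th inequality, and a convex average of these witnesses is strict in every $l\in S$ simultaneously while remaining in $\polytope$. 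From the existence of $\beliefv_0$ I then conclude $\aff(\polytope)=\{\beliefv:\pricev_l\beliefv=c_l,\ l\in E\}$, so that the implicit equalities carve out exactly the affine hull, and within the $d$-dimensional space $\aff(\polytope)$ the polytope $\polytope$ is full-dimensional.

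Next I would pass to an irredundant subfamily. Working inside $\aff(\polytope)$, I discard from $S$ any inequality whose removal leaves the feasible set unchanged, obtaining $S'$ with $\polytope=\aff(\polytope)\cap\bigcap_{l\in S'}\{\beliefv:\pricev_l\beliefv\leq c_l\}$. The heart of the argument is to show that each $l\in S'$ defines a facet: $\face_l:=\polytope\cap\{\beliefv:\pricev_l\beliefv=c_l\}$ has dimension exactly $d-1$. Since $l\in S$, the affine function $\pricev_l$ is non-constant on $\aff(\polytope)$, so its level set there is $(d-1)$-dimensional and $\dim\face_l\leq d-1$. For the reverse inequality I use irredundancy: removing $l$ enlarges the region, so there is a point $\bar\beliefv$ satisfying every constraint in $S'\setminus\{l\}$ (and the equalities $E$) but with $\pricev_l\bar\beliefv>c_l$. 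Along the segment from $\beliefv_0$ to $\bar\beliefv$ the value $\pricev_l\beliefv$ crosses $c_l$ at some interior point $\beliefv^\star$, and by convexity every other constraint $l'\in S'\setminus\{l\}$ stays \emph{strict} at $\beliefv^\star$. Hence a small $(d-1)$-dimensional ball of the hyperplane $\{\pricev_l\beliefv=c_l\}\cap\aff(\polytope)$ around $\beliefv^\star$ lies in $\polytope$ and therefore in $\face_l$, giving $\dim\face_l=d-1$. Thus $\{\beliefv:\pricev_l\beliefv\leq c_l\}$ is a facet-defining halfspace of $\face_l$ in the stated sense, and a dimension count shows conversely that every facet of $\polytope$ arises from some $l\in S'$.

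Finally I would combine the pieces. I have $\polytope=\aff(\polytope)\cap\bigcap_{l\in S'}\{\beliefv:\pricev_l\beliefv\leq c_l\}$ with each $S'$-halfspace facet-defining. Because every facet-defining halfspace contains $\polytope$, one gets $\polytope\subseteq\aff(\polytope)\cap\bigcap_{\face}H_\face$, the intersection running over all facets; and since the $S'$-family is a subcollection of all facet-defining halfspaces, $\aff(\polytope)\cap\bigcap_{\face}H_\face\subseteq\aff(\polytope)\cap\bigcap_{l\in S'}\{\beliefv:\pricev_l\beliefv\leq c_l\}=\polytope$. The two inclusions force $\polytope=\aff(\polytope)\cap\bigcap_{\face}H_\face$, which is the claim. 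I expect the main obstacle to be the middle step, namely the dimension computation verifying that each irredundant inequality cuts out a genuine $(d-1)$-dimensional face rather than a lower-dimensional one (and the matching observation that facets correspond to irredundant inequalities); the averaging argument for $\beliefv_0$, the identification of $\aff(\polytope)$ with the implicit equalities, and the final sandwiching inclusion are all routine. As the statement is quoted verbatim from \citet[Theorem 2.15]{ziegler2012lectures}, in the paper this is simply cited, and the sketch above records the standard route to it from the definition of a polytope.
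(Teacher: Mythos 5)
Your proposal is correct. There is nothing in the paper to compare it against: the lemma is imported verbatim from \citet[Theorem 2.15, (7)]{ziegler2012lectures} and used as a black box in the proof of \autoref{theorem:h-representation-app}, and your reconstruction is precisely the standard textbook route behind that citation\textemdash implicit equalities carve out $\aff(\polytope)$, and irredundant slack inequalities carve out facets. Two minor remarks on the write-up. First, your justification that $\pricev_l$ is non-constant on $\aff(\polytope)$ ``since $l\in S$'' is slightly misplaced: membership in $S$ only supplies a point with $\pricev_l\beliefv<c_l$, and a constraint that is constant on $\aff(\polytope)$ at a value strictly below $c_l$ is perfectly compatible with $l\in S$\textemdash it is exactly the kind of constraint your passage to $S'$ discards. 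The clean justification is that, for $l\in S'$, the irredundancy witness $\bar\beliefv\in\aff(\polytope)$ with $\pricev_l\bar\beliefv>c_l$ together with $\beliefv_0$ (where $\pricev_l\beliefv_0<c_l$) exhibits two distinct values of $\pricev_l$ on $\aff(\polytope)$, so the level set is genuinely $(d-1)$-dimensional; this is already implicit in your facet argument, just stated in the wrong order. Second, the converse claim that every facet of $\polytope$ arises from some $l\in S'$ is decorative: your final sandwich needs only that $\polytope$ lies in its affine hull and in every facet-defining halfspace, and that the $S'$-halfspaces are a subfamily of the facet-defining ones, so the unproved ``dimension count'' can be dropped without loss.
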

        In what follows, we construct a facet-defining halfspace for each facet \face\ of $\ppmpair$, and show that if \prior\ satisfies the condition in \autoref{theorem:h-representation-app}, $\prior\in (\cap_{\face\in \text{Facets}(\ppmpair)}H_\face)\cap \aff(\ppmpair)$. Invoking the lemma completes the proof.%, and thus complete the proof by invoking the lemma.

For any facet \face\ of $\ppmpair$, let $\normal_\face \in \N$ be the normal cone at \face. Note the dimension of $\normal_\face$ is $\tnum-(\dimm-1)$ and $\Basis\subset \normal_\face$. Thus, $\normal_\face/\Basis$ is 1-dimensional. It follows that $\normal_\face/\Basis \in \Prices_{\ext}(\N/\Basis)$. Let $\pricev_\face\in \normal_\face/\Basis$ be its extreme ray (i.e., $\pricev_\face\neq 0$), and $\beliefv_\face$ be any point on \face. Next, we argue that $H_\face:=\{\beliefv\in \reals^\tnum: \pricev_\face \beliefv\le \pricev_\face\beliefv_\face\}$ is a facet-defining halfspace for $F$.

        First, because $\pricev_\face\in \normal_\face/\Basis$, we know that $\pricev_\face = n_\face-P_\Basis n_\face$ for some $n_\face\in \normal_\face$. Because $n_\face$ is in the normal cone of \face\ and $\beliefv_\face$ is a point in \face, we have $n_\face\cdot \beliefv\le n_\face\cdot \beliefv_\face$ for all $\belief\in \ppmpair$. Moreover, because \Basis\ is a basis for $\mathrm{M}^\perp$, we have $(P_\Basis n_\face)\beliefv = (P_\Basis n_\face)\beliefv_\face$ for all $\beliefv\in \ppmpair$. Therefore, $\pricev_\face\beliefv\le \pricev_\face\beliefv_\face$ for all $\beliefv\in \ppmpair$. Thus, $\ppmpair\subset H_\face$. Second, $\face\subset \{\beliefv\in \reals^\tnum: \pricev_\face \beliefv= \pricev_\face \beliefv_\face\}\cap \ppmpair$ because $\face\subset \ppmpair$ and $\pricev_\face\beliefv = \pricev_\face\beliefv_\face$ for all $\beliefv\in F$. Third, suppose $\face\subsetneq \{\beliefv\in \reals^\tnum: \pricev_\face\beliefv= \pricev_\face\beliefv_\face\}\cap \ppmpair$. Because $\ppmpair\subset H_\face$, we know $\{\beliefv\in \reals^\tnum: \pricev_\face\beliefv= \pricev_\face\beliefv_\face\}\cap \ppmpair$ is a face of $\ppmpair$, and it contains facet \face\ as a strictly subset. This means that $\{\beliefv\in \reals^\tnum: \pricev_\face\beliefv= \pricev_\face\beliefv_\face\}\cap \ppmpair = \ppmpair$ and thus $\pricev_\face\in \mathrm{M}^\perp$. However, this can only be the case when $\pricev_\face=0$; a contradiction.
        
        Summarizing the three observations, we conclude that $H_\face:=\{\beliefv\in \reals^\tnum: \pricev_\face\beliefv\le \pricev_\face\beliefv_\face\}$ is a facet-defining halfspace of \face. The condition in \autoref{theorem:h-representation-app} implies that \autoref{eq:support} holds for all such $\pricev_\face$, so we conclude that $\prior \in \cap_{\face\in \text{Facets}(\ppmpair)} H_\face$.

        Finally, we argue that $\prior\in \aff(\ppmpair)$. Note that $\gbold \in \aff(\ppmpair)$ if and only if there exists $\beliefv^*\in \ppmpair$ such that $\basisv (\gbold-\beliefv^*) = 0$ for all $\basisv\in \Basis$. Take any $\beliefv^*\in \ppmpair$. Because \autoref{eq:support} holds for all $\pricev\in \Basis\cup (-\Basis)$ and $\basisv \beliefv=\basisv \beliefv^*$ for all $\beliefv\in \ppmpair$, we conclude that $\basisv \priorv=\basisv \beliefv^*$ for all $\basisv\in\Basis$ and thus $\prior \in \aff(\ppmpair)$.
\end{proof}

When \ppmpair\ is $\tnum-1$ dimensional, we can take $\Basis$ to be any constant vector.  Because checking \autoref{eq:support} holds for this vector (and its negative) is equivalent to requiring the coordinates of \prior\ to add up to $1$, we can simplify the condition to only checking one-dimensional normal cones after taking the quotient with $B$. Because we are now regarding \ppmpair\ as a subset of $\reals^\tnum$, we denote by $\tilde{\mathrm{M}}(\payoff,\marginal)$ the full-dimensional polytope in $\reals^{\tnum-1}$. Similarly,  denote by $\tilde{\mu}_0$ the $(\tnum-1)$-dimensional vector $(\prior(\type_2),\dots,\prior(\type_I))$. Next, we argue that if \autoref{eq:support} holds for all the extreme rays of the one-dimensional normal cones of $\tilde{\mathrm{M}}(\payoff,\marginal)$ at $\tilde{\mu}_0$, then it also holds for $\Prices_{\ext}(\N/\Basis)$ at $\prior$.

Let $\tilde\face$ denote a facet of $\tilde{\mathrm{M}}(\payoff,\marginal)$. Let $\tilde \pricev_{\tilde \face}$ be an extreme ray of the one-dimensional normal cone corresponding to $\tilde \face$. Let $\beliefvb_{\tilde\face}$ be a point in $\face$. Because \autoref{eq:support} holds for $\tilde \pricev_{\tilde \face}$ at $\tilde{\mu}_0$, we have
\begin{equation}\label{eq:normal I-1}
    \tilde \pricev_{\tilde \face} \tilde{\beliefv}_0\le \tilde\pricev_{\tilde\face}\beliefv_{\tilde\face}.
\end{equation}
Define $\face\subset\reals^\tnum$ as follows:
\begin{align*}
\face:=\{(1-\sum_{i=2}^I \tilde \mu(\type_i),\tilde\mu(\type_2),...,\tilde\mu(\type_I)):\beliefvb\in \tilde \face\}.
\end{align*}
Note that \face\ is a facet of $\ppmpair$. Let $\pricev_\face:=(0,\tilde p_{\tilde F}(\type_2),...,\tilde p_{\tilde F}(\type_I))$. Because $\tilde\pricev_{\tilde\face}$ is a normal vector at $\tilde\face$, $\pricev_\face$ is a normal vector at $\face$. Let $\beliefv_\face:=(1-\sum_{i=2}^I \tilde \mu_{\tilde F}(\type_i),\tilde \mu_{\tilde F}(\type_2),...,\tilde \mu_{\tilde F}(\type_I))$. Then we know $\mu_\face\in F$ and \autoref{eq:normal I-1} implies that $\pricev_\face\cdot \priorv\le \pricev_\face\cdot \beliefv_\face$. This means \autoref{eq:support} is satisfied for $\pricev_\face$ and $\prior$.

Because $\priorv,\beliefv_\face\in \Delta\Omega$, we know that \autoref{eq:support} is also satisfied for vector $\pricev_\face^*:=\pricev_\face-\sum_{\tindex=2}^\tnum \pricev_\face(\type_\tindex)$. We observe that: (1) $\pricev_\face^*$ is also a normal vector at \face; (2) $\pricev_\face^*$ is not a constant vector because $\tilde \pricev_{\tilde F}\neq 0$; (3) the inner product between $\pricev_\face^*$ and any constant vector is $0$. Therefore, we conclude that $\pricev_\face^*\in \normal_\face/\Basis$. Because this argument holds for any facet $F$ of $\ppmpair$, we conclude that \autoref{eq:support} holds for all $\pricev\in \Prices_{\ext}(\N/\Basis)$. This completes the argument.

\begin{proof}[Proof of \autoref{proposition:2-3}]
When $d=0$, $\mathrm{M}^\perp=\reals^\tnum$ and we can take $\Basis=\{-e_\type:\type\in\Types\}$ in \autoref{theorem:h-representation-app}. By \autoref{theorem:h-representation-app}, that \autoref{eq:support} holds for all $\pricev\in\Basis\cup(-\Basis)$ implies $\prior\in\ppmpair$. In fact, because $\prior\in\Posteriors$, that is, the entries of $\prior$ sum up to 1, the fact that \autoref{eq:support} holds for all $\pricev\in\Basis$, i.e., that \autoref{eq:bce-c-states} holds, fully pins down \ppmpair.

When $d=1$, \ppmpair\ must be obtained from \marginal-convex combinations of some \opta's of dimension 1 that are parallel to each other (i.e., share the same kernel) and other \opta's of dimension 0 (i.e., singletons), for $\aaction\in\Actions_+$. Hence \ppmpair\ is also parallel to those 1-dimensional \opta's and share the same normal vectors that are contained in $\Prices_{\opt}$. 

When $d=2$ and thus $\tnum=3$, the facets of \ppmpair\ are then 1-dimensional and are obtained from \marginal-convex combinations of faces of the sets \opta. In particular, some of them are 1-dimensional faces from \opta's for some $\aaction\in\Actions_1\subset\Actions_+$ and parallel to each other, while the others are extreme points of the remaining sets \optab's for $\aactionb\in\Actions_+\setminus\Actions_1$. Because \opta's are at most 2-dimensional, those 1-dimensional faces are either \opta's themselves or their facets. It then follows that the normals of the facets of \ppmpair\ are exactly the normals of some facet of some \opta, which are contained in $\Prices_{\opt}$. This completes the proof.
\end{proof}
\subsection{Omitted proofs from \autoref{sec:utility}}\label{appendix:utility}
The proofs in \autoref{sec:utility} build on two companion programs to the dual program \ref{eq:dual}: 
\begin{align}
    V_D^\uparrow\pair&=\min_{\pricev\in\realst,\priceav\in\realsa,\multvp\in\realsa_{\geq0}}\priceav\marginalv-\pricev\priorv \tag{D$^\uparrow$}\label{eq:dualup}\\
    &\text{s.t. }(\forall\tindex)(\forall\aindex)\pricea(\aaction_\aindex)\geq\price(\type_\tindex)+\multp_\aindex\diff(\aaction_\aindex,\aaction_{\aindex+1},\type_\tindex),\nonumber\\
        V_D^\downarrow\pair&=\min_{\pricev\in\realst,\priceav\in\realsa,\multvm\in\realsa_{\geq0}}\priceav\marginalv-\pricev\priorv\tag{D$^\downarrow$}\label{eq:dualdown}\\
        &\text{s.t. }(\forall\tindex)(\forall\aindex)\pricea(\aaction_\aindex)\geq\price(\type_\tindex)+\multm_\aindex \diff(\aaction_{j},\aaction_{j-1},\type_\tindex).\nonumber
\end{align}
The programs \ref{eq:dualup} and \ref{eq:dualdown} are analogous to the dual program \ref{eq:dual}, but restricting attention to multipliers in which either the adjacent downward constraints or the adjacent upward constraints are not binding. Recall we follow the convention that $\multm_1=\multp_{\carda}=0$. It is immediate that $V_D\pair\leq\min\{V_D^\uparrow\pair,V_D^\downarrow\pair\}$. \autoref{proposition:mcv} shows that, in fact, $V_D\pair=\min\{V_D^\uparrow\pair,V_D^\downarrow\pair\}$.

In what follows, when we refer to candidate solutions to programs \ref{eq:dual}, \ref{eq:dualup}, and \ref{eq:dualdown}, we assume without loss of generality that \autoref{eq:basic-p-identity} holds, which we repeat here for ease of reference:
\begin{align}\label{eq:p-binds}
p(\type)=\min_{\aaction_\aindex\in\Actions} \left[q(\aaction_\aindex)+\multp_j\diff(\aaction_{\aindex+1},\aaction_{j},\type)-\multm_j\diff(\aaction_j,\aaction_{j-1},\type)\right].
\end{align}
Depending on the program under consideration, $\multvm=0$ \eqref{eq:dualup} or $\multvp=0$ \eqref{eq:dualdown} in \autoref{eq:p-binds}.

\begin{proof}[Proof of \autoref{proposition:mcv}]
Toward a contradiction, assume $\min\{V_D^\uparrow\pair,V_D^\downarrow\pair\}\geq0$, but $V_D\pair<0$. Then, a tuple $(\tilde\pricev,\tilde\priceav,\tilde{\multv}^\uparrow,\tilde{\multv}^\downarrow)\in\realst\times\realsa\times\realsa_{\geq0}\times\realsa_{\geq0}$ exists such that the following holds: the tuple is feasible for program \ref{eq:dual}, $\tilde{\multv}^\uparrow,\tilde{\multv}^\downarrow\neq0$, and $\tilde\pricev\priorv>\tilde\priceav\marginalv$.

 Let $\Actions_\uparrow=\{j\in\{1,\dots,\carda\}:\tilde{\mult}^\uparrow_\aindex>0\}$, $A_\downarrow=\{j\in\{1,\dots,\carda\}:\tilde{\mult}_\aindex^\downarrow>0\}$, and let $\Actions_0=\{1,\dots,\carda\}\setminus\left(\Actions_\uparrow\cup\Actions_\downarrow\right)$. Recalling that $\tilde\pricev$ satisfies \autoref{eq:p-binds}, we can write
 \begin{align*}
\tilde\price(\type)=\min\left\{\min_{j\in\Actions_\uparrow}\tilde\pricea(\aaction_\aindex)+\tilde{\mult}^\uparrow_\aindex \diff(a_{\aindex+1},a_\aindex,\type),\min_{j\in\Actions_\downarrow}\tilde\pricea(\aaction_\aindex)-\tilde{\mult}_\aindex^\downarrow \diff(a_{j},a_{j-1},\type),\min_{j\in\Actions_0}\tilde\pricea(\aaction_\aindex)\right\}.
 \end{align*}
Because $\tilde{\multv}^\uparrow,\tilde{\multv}^\downarrow\neq0$, $\tilde{\pricev}$ is single-peaked. Let $\tilde{\price}^*=\max\{\tilde{\price}(\type):\type\in\Types\}$ and let $i^*$ denote the  index of the smallest maximizer. Without loss of generality, we can assume $\tilde\pricea(\aaction_\aindex)=\tilde{\price}^*$ for $\aindex\in\Actions_0$: feasibility implies $\tilde\pricea(\aaction_j)\geq\tilde{\price}^*$ for $\aindex\in\Actions_0$ and satisfying the constraint with equality only lowers the value of the objective.

Define $(\pricevup,\priceavup)$ and $(\pricevdown,\priceavdown)$ as follows:
\begin{align*}
    \priceup(\type_\tindex)&=\mathbbm{1}[i\leq i^*]\tilde{\price}(\type_\tindex)+\mathbbm{1}[i\geq i^*+1]\tilde{\price}^*&&    \pricedown(\type_\tindex)=\mathbbm{1}[i\geq i^*+1]\tilde{\price}(\type_\tindex)+\mathbbm{1}[i\leq i^*]\tilde{\price}^*\\
    \priceaup(\aaction_\aindex)&=\mathbbm{1}[j\in\Actions\setminus\Actions_\downarrow]\tilde\pricea(\aaction_\aindex)+\mathbbm{1}[j\in\Actions_{\downarrow}]\tilde{\price}^*&&
%    \multp_\aindex&=\mathbbm{1}[j\in\Actions_\uparrow]\mult(\aaction_k|\aaction_\aindex)\\
    \priceadown(\aaction_\aindex)=\mathbbm{1}[j\in\Actions\setminus\Actions_\uparrow]\tilde\pricea(\aaction_\aindex)+\mathbbm{1}[j\in\Actions_{\uparrow}]\tilde{\price}^*.
   % \mult_\downarrow(\aaction_k|\aaction_\aindex)&=\mathbbm{1}[j\in\Actions_\downarrow]\mult(\aaction_k|\aaction_\aindex).
\end{align*}
In words, \pricevup\ truncates $\tilde{\pricev}$ to the right of $i^*$, making it equal to $\tilde{\price}^*$, whereas $\priceavup$ corrects $\tilde\priceav$ on those actions $j\in\Actions_\downarrow$ so that the dual constraints are satisfied when making the multiplier \multm\ equal to zero for those actions. Instead,  $\pricevdown$ truncates $\tilde{\pricev}$ to the left of $i^*$, making it equal to $\tilde{\price}^*$, whereas $\priceavdown$ corrects $\tilde\priceav$ on those actions $j\in\Actions_\uparrow$ so that the dual constraints are satisfied when making the multiplier \multp\ equal to zero for those actions. 

Note  $(\pricevup,\priceavup,\tilde{\multv}^\uparrow)$ is feasible for the first program, and $(\pricevdown,\priceavdown,\tilde{\multv}^\downarrow)$ is feasible for the second. Moreover, $\pricevup+\pricevdown=\tilde{\pricev}+\tilde{\price}^*$ and $\priceavdown+\priceavup=\tilde{\priceav}+\tilde{\price}^*$. Thus, either $\pricevup\priorv>\priceavup\marginalv$ or $\pricevdown\priorv>\priceavdown\marginalv$, contradicting that $\min\{V_D^\uparrow\pair,V_D^\downarrow\pair\}\geq0$.
\end{proof}
\begin{proof}[Proof of \autoref{proposition:aud}]
That the conditions are necessary follows from the feasibility of $(\pricevup_{\candt},\priceavup_{\candt})$ in program \ref{eq:dualup}. We argue this in the first step below. The argument for $(\pricevdown_{\candt},\priceavdown_{\candt})$ is similar. 

The sufficiency of the conditions follows from showing that
\begin{align}\label{cond:aud}
(\forall\candt\in\Types)\priceavup_{\candt}\marginalv\geq\pricevup_{\candt}\priorv, \end{align}  implies $V_D^\uparrow\pair\geq0$. The proof that $\priceavdown_{\candt}\marginalv\geq\pricevdown_{\candt}\priorv$ for all $\candti\in\Types$  implies $V_D^\downarrow\pair\geq0$ is analogous, and we omit it for brevity. \autoref{proposition:mcv} implies this is enough to show $V_D\pair\geq0$.

The proof proceeds in six steps. First, we show that for all $\candt\in\Types$, $\priceavup_{\candt}$ is the solution to program \ref{eq:dualup} given $\pricev=\pricevup_{\candt}$. This result follows by observing that
\begin{align*}
    \pricea_{\candt}^\uparrow(\aaction)=\min_{\mult\in\reals_{\geq0}^{\carda}}\max_{\type\in\Types}\price_{\candt}^\uparrow(\type)-\multp(\aaction)[\diff(\type)+\ctt(\aaction)/\slope(\aaction)].
\end{align*}
Second, note that if a triple $(\pricev,\priceav,\multvp)$ is feasible for \ref{eq:dualup} and is such that $\pricev=\pricevup_{\candt}$ for some $\candti\in\Types$, the first step and \autoref{cond:aud} implies that $\priceav\marginalv\geq\pricev\priorv$.

Third, note that for constant \pricev, the optimal value of \priceav\ is $\max_\type\price(\type)$ (and  $\multvp\equiv0$), which implies $\priceav\marginalv \ge \pricev\priorv$.\footnote{Recall we are assuming that no action in $\Actions_+$ is strictly dominated so that for all actions in $\Actions_+$ $\gamma(\aaction)\diff(\type_1)+\slope(\aaction)\leq0$.} 
%
%
%note that if a triple $(\pricev,\priceav,\multvp)$ is feasible for \ref{eq:dualup}, $\pricev$ is constant, then by feasibility we know that $\pricea(\aaction)\geq \price(\type)$ for all $a$, which implies $\priceav\marginalv \ge \pricev\priorv$. 

Fourth, we note that if a triple $(\pricev,\priceav,\multvp)$, feasible for \ref{eq:dualup}, can be written as $(\pricev,\priceav,\multvp)=\sum_{k=1}^K \alpha_k (\pricev_k,\priceav_k,\multv_k)$, with (i) $\alpha_k\geq0$, (ii) $(\pricev_k,\priceav_k,\multv_k)$ feasible for \ref{eq:dualup}, and (iii) $\pricev_k = \pricevup_{\candti}$ for some $\candti$ or $\pricev_k$ constant, then the second and third steps implies that $\priceav\marginalv\geq\pricev\priorv$. 

Fifth, note that we only need to show $\priceav\marginalv\geq\pricev\priorv$ for triples $(\pricev,\priceav,\multvp)$ that are feasible for \ref{eq:dualup}, where $\pricev$ takes the form of \autoref{eq:p-binds}. Note that under affine utility differences, up to a renormalization, \autoref{eq:p-binds} becomes:
\begin{align}\label{eq:pricev-aud}
\price(\type)=\min_{\aaction\in\Actions}\multvp(\aaction)\diff(\type)+\pricea(\aaction)+\multvp(\aaction)\ctt(\aaction)/\slope(\aaction).
\end{align}

Finally, consider any triple $(\pricev,\priceav,\multvp)$ that is feasible for \ref{eq:dualup} and such that \pricev\ is as in \autoref{eq:pricev-aud}. We argue it can always be written as in the fourth step. This completes the proof. 

We show this by induction on the number of actions where the value of $\multvp$ is strictly positive, which we denote by $n(\multvp):=|\{a:\multp(a)>0\}|$. If $n(\multvp)=0$, then \autoref{eq:pricev-aud} becomes $p(\omega) = \min_{a\in A} q(a)$, which means $\pricev$ is constant. Thus, \pricev\ is as in the fourth step.

%
%Because $\pricevup_{\type_1}$ is constant, we conclude that \pricev\ is as in the fourth step. \ld{Implicitly we need that \pricev\ is non-negative (besides constant) because the $\alpha_k$ are non-negative. Given the objective, a constant and negative \pricev\ cannot be optimal. Should we remind the reader of this?}

Now suppose $(\pricev,\priceav,\multvp)$ can be written as in the fourth step whenever $n(\multvp)\le N-1$ for some $N\ge 1$, we show it is also true when $n(\multvp)=N$. Toward this, denote the maximum of $\pricev$ by $\price^\star$ and the smallest state index at which this maximum is attained by $\candi$. If \pricev\ is constant, then it is already in the form of the fourth step. Hence, assume it is not constant. Because \pricev\ is increasing, we have that $\candi>1$. %\yz{If we remove the previous supposition that \pricev\ is not constant, we can claim here that: if \pricev\ is constant, then it is already in the form of the fourth/third step; otherwise, $\candi>1$.}

Notice that we can write $\pricev$ as 
\begin{align}\label{eq:pricev-aud-2}
\price(\type)=\min\left\{\min_{\aaction:\multp(\aaction)>0}\{\multp(\aaction)\diff(\type)+\pricea(\aaction)+\multp(\aaction)\ctt(\aaction)/\slope(\aaction)\},\min_{\aaction:\multp(\aaction)=0}\pricea(\aaction)\right\},
\end{align}
where recall that $\multp(\aaction_\anum)=0$ and hence the set $\{\aaction:\multp(\aaction)=0\}$ is nonempty.
Below, we denote by $\price_+(\type)$ the first argument of the minimum on the right-hand side of \autoref{eq:pricev-aud-2}. Moreover, as in the proof of \autoref{proposition:mcv}, it is without loss of generality to assume that $(\pricev,\priceav,\multvp)$ is such that  $\pricea(\aaction)=\price^\star$ whenever $\multp(\aaction)=0$. Thus, the second argument of the minimum in \autoref{eq:pricev-aud-2} equals $\price^\star$. %\yz{1. We are thinking if we should incorporate this argument into the previous step, as it is about optimality rather than decomposability. Feel free to dispose this suggestion. We are happy to follow the structure you find best and we don't want to take up too much of your time. \\
%2. To be precise (sorry for being annoying...), the second argument equals $\price^\star$ only when $\{\aaction:\multp(\aaction)=0\}$ is nonempty.}

We now argue that we can without loss focus on $(\pricev,\priceav,\multvp)$ such that 
\begin{align}\label{eq:max-exhaust}
\price^\star=\min_{\aaction:\multp(\aaction)>0}\{\multp(\aaction)\diff(\candti)+\pricea(\aaction)+\multp(\aaction)\ctt(\aaction)/\slope(\aaction)\}=\price_+(\type_{\candi}).
\end{align}
In words, the two arguments of the minimum in \autoref{eq:pricev-aud-2} coincide at $\type=\candti$.

Suppose that $\price^\star<\price_+(\type_{\candi})$. Letting $c_1=\price_+(\type_{\candi-1})$ and $c_2=\price_+(\candti)$, we have $c_1<\price^\star<c_2$. For $k\in\{1,2\}$, define
\begin{align*}
    \pricea_k(\aaction)&=\pricea(\aaction)\mathbbm{1}\left[\multp(\aaction)>0\right]+c_k\mathbbm{1}\left[\multp(\aaction)=0\right],\\
    \mult_k(\aaction)&=\multp(\aaction).
    \end{align*}
  Furthermore, define
  \begin{align*}
    \price_1(\type_\tindex)&=\min_{\aaction\in\Actions}\mult_1(\aaction)\diff(\type_\tindex)+\pricea_1(\aaction)+\mult_1(\aaction)\ctt(\aaction)/\slope(\aaction)=\mathbbm{1}[\tindex<\candi-1]\price(\type_\tindex)+\mathbbm{1}[\tindex\geq\candi-1]c_1,\\
        \price_2(\type_\tindex)&=\min_{\aaction\in\Actions}\mult_2(\aaction)\diff(\type_\tindex)+\pricea_2(\aaction)+\mult_2(\aaction)\ctt(\aaction)/\slope(\aaction)=\mathbbm{1}[\tindex<\candi]\price(\type_\tindex)+\mathbbm{1}[\tindex\geq\candi]c_2.
\end{align*}
Letting $\alpha=(\price^\star-c_1)/(c_2-c_1)$, it is straightforward to check that $(\pricev,\priceav,\multvp)=(1-\alpha)\tripleo+\alpha\triplet$. Note that $n(\multvp)=n(\multv_1)=n(\multv_2)$. Moreover, $(\pricev_k,\priceav_k,\multvp_k)$ is such that $\max_{\type\in\Types}\price_k(\type)$ satisfies \autoref{eq:max-exhaust} for $k=1,2$. We conclude that arguing that the triple $(\pricev,\priceav,\multvp)$ can be written as in the fourth step for triples that satisfy \autoref{eq:max-exhaust} suffices to prove the result.

Take any such $(\pricev,\priceav,\multvp)$, define \tripleo\ as follows: 
    \begin{align*}
    \pricea_1(\aaction)&=-\frac{\ctt(\aaction)}{\slope(\aaction)}\mathbbm{1}\left[\multp(\aaction)>0\right]+\diff(\candti)\mathbbm{1}\left[\multp(\aaction)=0\right],\\
    \mult_1(\aaction)&=\mathbbm{1}\left[\multp(\aaction)>0\right],\\
    \price_1(\type)&=\min_{\aaction\in\Actions}\mult_1(\aaction)\diff(\type)+\pricea_1(\aaction)+\mult_1(\aaction)\ctt(\aaction)/\slope(\aaction).
    \end{align*}
By construction, \tripleo\ is feasible for \ref{eq:dualup}, and, for all $\type\in\Types$, 
\begin{align*}
    \price_1(\type)=\min_{\aaction\in\Actions}\mathbbm{1}\left[\multp(\aaction)>0\right]\diff(\type)+\diff(\candti)\mathbbm{1}\left[\multp(\aaction)=0\right]=\min\{\diff(\type),\diff(\candti)\}=\pricevup_{\candti}(\type).
\end{align*}
    Let $\underline{\mult}:=\min_{\aaction:\multp(\aaction)>0}\multp(\aaction)$ and define \triplet\ as follows:
    \begin{align*}
\pricea_2(\aaction)&=\pricea(\aaction)-\underline{\mult}\pricea_1(\aaction), \\
    \mult_2(\aaction)&=\multp(\aaction)-\underline{\mult}\mult_1(\aaction),\\
    \price_2(\type)&=\min_{\aaction\in\Actions}\mult_2(\aaction)\diff(\type)+\pricea_2(\aaction)+\mult_2(\aaction)\ctt(\aaction)/\slope(\aaction).
    \end{align*} Note that
    \begin{align*}
        \price_2(\type)=&\min\left\{\min_{\aaction:\multp(\aaction)>0}\{[\multp(\aaction)-\underline{\mult}]\diff(\type)+\pricea(\aaction)-\underline{\mult}\pricea_1(\aaction)+[\multp(\aaction)-\underline{\mult}]\ctt(\aaction)/\slope(\aaction)\},
        \min_{\aaction:\multp(\aaction)=0}\{\pricea(\aaction)-\underline{\mult}\pricea_1(\aaction)\}\right\}\\
        =&\min\left\{\price_+(\type)-\underline{\mult}\diff(\type),\price^\star-\underline{\mult}\diff(\candti)\right\}.
    \end{align*}
Because $\price^\star$ satisfies \autoref{eq:max-exhaust}, we have that
\begin{align*}
&\left(\price^\star-\underline{\mult}\diff(\candti)\right)-\left(\price_+(\type)-\underline{\mult}\diff(\type)\right)=
\\
&=\left(\min_{\aaction:\multp(\aaction)>0}(\multp(\aaction)-\underline{\mult})\diff(\candti)+\frac{\ctt(\aaction)}{\slope(\aaction)}\multp(\aaction)+\pricea(\aaction)\right)-\left(\min_{\aaction:\multp(\aaction)>0}(\multp(\aaction)-\underline{\mult})\diff(\type)+\frac{\ctt(\aaction)}{\slope(\aaction)}\multp(\aaction)+\pricea(\aaction)\right)
\end{align*}
It follows that $\price_2(\type)=\price(\type)-\underline{\mult}\price_1(\type)$. Together with the definition of $\priceav_2,\multv_2$, we conclude that
\begin{align}\label{eq:ih}(\pricev,\priceav,\multvp)=\underline{\mult}\tripleo+\triplet.\end{align}
Moreover, note that $n(\multv_2)\le n(\multvp)-1=N-1$. By the inductive hypothesis, we conclude that \triplet\ can be written as in the fourth step. Therefore, \autoref{eq:ih} is exactly a decomposition of $(\pricev,\priceav,\multvp)$ in the form of the fourth step.

The six steps together deliver that \autoref{cond:aud} implies that $V_D^\uparrow\pair\geq0$.
%    Moreover, note that $n(\multv^{\prime\prime})\le n(\multvp)-1=N-1$. By the inductive hypothesis, a decomposition exists for $(\pricev^{\prime\prime},\priceav^{\prime\prime},\multv^{\prime\prime})=\sum_{k=1}^{K^{\prime\prime}}(\pricev_k,\priceav_k,\multv_k)$ satisfying properties (i)-(iv). \autoref{eq:ih}, the observation that $\pricev^\prime=\pricevup_{\candti}$, together with the feasibility of $(\pricev^\prime,\priceav^\prime,\multv^\prime)$ complete the inductive step.
%
%    It remains to show $\pricea_{\candti}^\uparrow$ solves \ref{eq:dualup} when \pricev\ is fixed at $\pricevup_{\candti}$. This result follows by observing that
%\begin{align*}
%    \pricea_{\candti}^\uparrow(\aaction)=\min_{\mult\in\reals_{\geq0}^{\carda}}\max_{\type\in\Types}\price_{\candti}^\uparrow(\type)-\multp(\aaction)[\diff(\type)+\ctt(\aaction)/\slope(\aaction)],
%\end{align*}
%which completes the proof.
\end{proof}

\begin{proof}[Proof of \autoref{proposition:2-step}]
The proof is similar to that of \autoref{proposition:aud}. That the conditions are necessary follows from the feasibility of $(\pricevup_\aindex,\priceavup_\aindex)$ in program \ref{eq:dualup}. We argue this in the first step below. The argument for $(\pricevdown_\aindex,\priceavdown_\aindex)$ is similar.

Sufficiency follows from showing that
\begin{align}\label{cond:2-step}
\forall\aindex\in\{1,\dots,\anum-1\}, \priceavup_\aindex\marginalv\geq\pricevup_\aindex\priorv,
\end{align}
the value of program \ref{eq:dualup} is nonnegative. The proof that if for all $j\in\{1,\dots,\anum-1\}$, $\priceavdown_\aindex\marginalv\geq\pricevdown_\aindex\priorv$, the value of program \ref{eq:dualdown} is nonnegative is analogous and omitted for brevity. 

In this proof, we use the following shorthand notation: we write $\diff(\aaction_j,\type)$ instead of $\diff(\aaction_{\aindex+1},\aaction_j,\type)$ and we let
\begin{align*}
h(\aaction,\type)=\pricea(\aaction)+\mult(\aaction)\diff(\aaction,\type).
\end{align*}
The proof proceeds in six steps. First, we show that for all $\aindex\in\{1,\dots,\anum-1\}$, $\priceavup_{\aindex}$ is the solution to program \ref{eq:dualup} given $\pricev=\pricevup_{\aindex}$. Consider the multiplier 
    \begin{align*}
\mult(\aaction_k)=\frac{\diff(\aaction_\aindex,\type_{\tnum})-\diff(\aaction_\aindex,\type_1)}{\diff(\aaction_k,\type_{\tnum})- \diff(\aaction_k,\type_1)}\mathbbm{1}_{k\leq j},
\end{align*}
    and note that it solves
\begin{align*}
    \min_{\mult\in\realsa_{\geq0}}\max_{\type\in\Types}\priceup_\aindex(\type)-\mult(\aaction_k)\diff(\aaction_k,\type),
\end{align*}
    the optimal value of which is exactly $\priceaup_\aindex(\aaction_k)$. 

Second, note that if a triple $(\pricev,\priceav,\multv)$ is feasible for \ref{eq:dualup} and is such that $\pricev=\pricevup_{\aindex}$ for some $\aindex\in\{1,\dots,\anum-1\}$, the first step and \autoref{cond:2-step} implies that $\priceav\marginalv\geq\pricev\priorv$. 

Third, note that if a triple $(\pricev,\priceav,\multv)$ is feasible for \ref{eq:dualup}, $\pricev$ is constant, then by feasibility we have $\pricea(\aaction)\geq \price(\type)-\mult(\aaction)d(a,\omega)$. Because $\mult(a)\ge 0$ and $d(a,\omega)<0$ for some $\omega$, we know that $\pricea(\aaction)\geq \max_\type \price(\type)$, which implies $\priceav\marginalv \ge \pricev\priorv$. 

Fourth, we note that if a triple $(\pricev,\priceav,\multv)$, feasible for \ref{eq:dualup}, can be written as $(\pricev,\priceav,\multv)=\sum_{k=1}^K \alpha_k (\pricev_k,\priceav_k,\multv_k)$, with (i) $\alpha_k\geq0$, (ii) $(\pricev_k,\priceav_k,\multv_k)$ feasible for \ref{eq:dualup}, and (iii) either $\pricev_k = \pricevup_{\aindex}$ for some $\aindex$, or $\pricev_k$ is a constant vector, then the second and third steps imply that $\priceav\marginalv\geq\pricev\priorv$. 

Fifth, note that we only need to show $\priceav\marginalv\geq\pricev\priorv$ for triples $(\pricev,\priceav,\multv)$ that are feasible for \ref{eq:dualup}, where $\pricev$ takes the form of \autoref{eq:p-binds}, i.e., 
\begin{align}\label{eq:2stepnec}
    \price(\type)=\min_{\aaction\in\Actions} h(\aaction,\type).
\end{align}
Finally, consider now a triple $(\pricev,\priceav,\multv)$, such that $\pricev$ is as in \autoref{eq:2stepnec}. We show that it can be written as in the fourth step. We show this by induction on  the number of jump points of $\pricev$, denoted by $n(\pricev):=|\{\type_\tindex:\price(\type_{\tindex+1})>\price(\type_{\tindex})\}|$. If $n(\pricev)=0$, $\pricev$ is constant, and the result follows from the fourth step.

Now suppose $(\pricev,\priceav,\multv)$ can be written as in the fourth step whenever $n(\pricev)\le N-1$ for some $N\ge 1$, we show that is also true when $n(\pricev)=N$. Because $\pricev$ is not constant, we can denote by $\candi$ the smallest state index \tindex\ such that $\price(\type_\tindex)<\price(\type_{i+1})$. Because $\pricev$ satisfies \autoref{eq:2stepnec}, an action $\aaction^\star$ exists such that $\mult(\aaction^\star)>0$, $\diff(\aaction^\star,\type_{\candi})<\diff(\aaction^\star,\type_{\candi+1})$ (i.e., $\candi(\aaction^\star)=\candi$), and $\price(\type_{\candi})=h(\aaction^\star,\type_{\candi})$.

Define 
\begin{align*}
    \underline{\mult}&:=\frac{\price(\type_{i^\star+1})-\price(\type_{i^\star})}{\diff(\aaction^\star,\type_{\tnum})-\diff(\aaction^\star,\type_{1})},
\intertext{and}
   \mult_1(\aaction_\aindex)&:=\frac{[\price(\type_{\candi+1})-h(\aaction_\aindex,{\color{black}\type_{1}})]^+}{\diff(\aaction_\aindex,\type_\tnum)-\diff(\aaction_\aindex,\type_{1})}\mathbbm{1}_{\{\candi(\aindex)\leq \candi\}}.
\end{align*}
    Note $\mult_1(\aaction)\geq0$ and $\mult_1(\aaction^\star)=\underline{\mult}>0$. Moreover, $\mult_1(\aaction)\le \mult(\aaction)$ because $\price(\type_{\candi+1})\le h(\aaction_\aindex,\type_{\candi+1})$.
    
    Consider $\tripleo$, where 
    \begin{align*}
    \pricea_1(\aaction_\aindex)=\underline{\mult}\diff(\aaction^\star,\type_\tnum)-\mult_1(\aaction_\aindex)\diff(\aaction_\aindex,\type_\tnum),
\end{align*}
and $\price_1(\type)=\min_{\aaction\in\Actions}\mult_1(\aaction)\diff(\aaction,\type)+\pricea_1(\aaction)$.    

We show that $\price_1(\type)=\underline{\mult}\diff(\aaction^\star,\type)$. We can write $\price_1$ as follows:
\begin{align}\label{eq:price-2step}
\price_1(\type)&=\min\left\{\underline{\mult}\diff(\aaction^\star,\type_\tnum),\min_{\aindex:\candi(\aindex)\leq\candi}\pricea_1(\aaction_\aindex)+\mult_1(\aaction_\aindex)\diff(\aaction_\aindex,\type)\right\}\nonumber\\
&=\min\left\{\underline{\mult}\diff(\aaction^\star,\type_\tnum),\min_{j:\candi(j)\leq\candi}\underline{\mult}\diff(\aaction^\star,\type_\tnum)-\mult_1(\aaction_j)\diff(\aaction_j,\type_\tnum)+\mult_1(\aaction_j)\diff(\aaction_j,\type)\right\}.
\end{align}
For $\type>\candti$, $\diff(\aaction_j,\type)=\diff(\aaction_j,\type_\tnum)$ for $j$ such that $\candi(\aindex)\leq\candi$. Thus, $p_1(\type)=\underline{\mult}\diff(\aaction^\star,\type_\tnum)$. 

Consider now $\type\leq\candti$ and note that at $\aaction^\star$, $\pricea_1(\aaction^\star)+\mult_1(\aaction^\star)\diff(\aaction^\star,\type)=\underline{\mult}\diff(\aaction^\star,\type_1)$. Furthermore, the second term in the minimizer in \autoref{eq:price-2step} can be written as follows:
\begin{align*}
&\underline{\mult}\diff(\aaction^\star,\type_\tnum)-\mult(\aaction_j)\left(\diff(\aaction_j,\type_\tnum)-\diff(\aaction_j,\type_1)\right)=\underline{\mult}\diff(\aaction^\star,\type_\tnum)-[\price(\type_{\candi+1})-h(\aaction_\aindex,\type_1)]^+\geq\underline{\mult}\diff(\aaction^\star,\type_1),
\end{align*}
because $\underline{\mult}(\diff(\aaction^\star,\type_\tnum)-\diff(\aaction^\star,\type_1))=\price(\type_{\candi+1})-\price(\type_{\candi})\geq[\price(\type_{\candi+1})-h(\aaction_\aindex,\type_1)]^+$, where the second inequality follows because \pricev\ is constant for $\type\leq \candti$. Hence, for $\type\leq\candti$, $\price_1(\type)=\underline{\mult}\diff(\aaction^\star,\type_1)$.

%\ld{I apologize for being slow, but suppose $\price(\type_{\candi+1})-h(\aaction_j,\type_1)>0$. Then, we should prove that $p(\candti)\leq h(\aaction_j,\type_1)$, but we do not know this for $j:\candi(j)<\candi$ as in that case $h(\aaction_j,\candti)=h(\aaction_j,\type_\tnum)$.}

Consider now the feasible triple \triplet, where $\pricea_2(\aaction)=\pricea(\aaction)-\pricea_1(\aaction)$, $\mult_2(\aaction)=\mult(\aaction)-\mult_1(\aaction)$, and $\price_2(\type)=\min_{\aaction\in\Actions}\mult_2(\aaction)\diff(\aaction,\type)+\pricea_2(\aaction)$. Note
\begin{align*}
    \price_2(\type)&=\min\left\{\min_{\aindex:\candi(\aindex)>\candi}h(\aaction_\aindex,\type),
    \min_{\aindex:\candi(\aindex)\leq \candi}\{h(\aaction_\aindex,\type)+\mult_1(\aaction_\aindex)[\diff(\aaction_\aindex,\type_\tnum)-\diff(\aaction_\aindex,\type)]\}\right\}-\underline{\mult}\diff(\aaction^\star,\type_\tnum).
\end{align*}
When $\type>\type_{\candi}$, $\diff(\aaction_\aindex,\type_\tnum)-\diff(\aaction_\aindex,\type)=0$ for all $\aindex$ such that $\candi(\aindex)\leq \candi$; hence, we have $\price_2(\type)=\price(\type)-\underline{\mult}\diff(\aaction^\star,\type_\tnum)=\price(\type)-\price_1(\type)$. 
    
Instead, when $\type\leq\type_{\candi}$, we have that for all $\aindex$ such that $\candi(\aindex)\leq\candi$, 
{\color{black}
\begin{align*}
    h(\aaction_\aindex,\type)+\mult_1(\aaction_\aindex)[\diff(\aaction_\aindex,\type_\tnum)-\diff(\aaction_\aindex,\type)]
    \geq\price(\type_{\candi+1})
    =h(\aaction^\star,\type_{\candi})+\mult_1(\aaction^\star)[\diff(\aaction^\star,\type_\tnum)-\diff(\aaction^\star,\type)].
\end{align*}
The inequality follows from $h(\aaction_\aindex,\type)=h(\aaction_\aindex,\type_{\candi+1})\geq\price(\type_{\candi+1})$ when $\type>\type_{\candi(\aindex)}$, and from $h(\aaction_\aindex,\type)+\mult_1(\aaction_\aindex)[\diff(\aaction_\aindex,\type_\tnum)-\diff(\aaction_\aindex,\type)]=h(\aaction_\aindex,\type_1)+[\price(\type_{\candi+1})-h(\aaction_\aindex,\type_1)]^+\geq\price(\type_{\candi+1})$ when $\type\leq\type_{\candi(\aindex)}$, and $p(\type_{\candi+1})>h(\aaction_j,\type_1)$ implies $h(\aaction_j,\type_I)>h(\aaction_j,\type_1)$.} Hence, 
\begin{align*}
    \min_{\aindex:\candi(\aindex)\leq \candi}\{h(\aaction_\aindex,\type)+\mult_1(\aaction_\aindex)[\diff(\aaction_\aindex,\type_\tnum)-\diff(\aaction_\aindex,\type)]\}
    &=\price(\type_{\candi+1})\le \min_{\aindex:\candi(\aindex)>\candi}h(\aaction_\aindex,\type).
\end{align*}
Consequently, by definition of $\underline{\mult}$, for all $\type\leq\type_{\candi}$, we have that
\begin{align*}
    \price_1(\type)+\price_2(\type)=\underline{\mult}\diff(\aaction^\star,\type_{1})+\price(\type_{\candi+1})-\underline{\mult}\diff(\aaction^\star,\type_\tnum)=\price(\type_{\candi})=\price(\type).
\end{align*}
In conclusion, $\triple=\tripleo+\triplet$. Moreover, notice that by construction,  $\price_2(\type_{\candi+1})=\price(\type_{\candi+1})-\price_1(\type_{\candi+1})=\price(\type)-\price_1(\type)=\price_2(\type)$ for all $\type\leq\type_{\candi}$. Hence, $n(\pricev_2)= n(\pricev)-1=N-1$. By the inductive hypothesis, we conclude that \triplet\ can be written as in the fourth step. Therefore, $(\pricev,\priceav,\multv)$ can be written as in the fourth step.
\end{proof}

\begin{proof}[Proof of \autoref{corollary:bin}]
By \autoref{proposition:aud}, \pair\ is \bce-consistent given \payoff\ if and only if for all $\candt\in\Types$,
\begin{align}
\marginal(\aaction_1)\min\{0,\diff(\candt)\}+\marginal(\aaction_2)\diff(\candt)\geq\sum_{\type\in\Types} \min\{\diff(\type),\diff(\candt)\}\prior(\type),\label{eq:plus}
\intertext{and}
-\marginal(\aaction_1)\diff(\candt)+\marginal(\aaction_2)\min\{0,-\diff(\candt)\}\geq\sum_{\type\in\Types} \min\{-\diff(\type),-\diff(\candt)\}\prior(\type).\label{eq:minus}
\end{align}
The conditions in \autoref{eq:plus} are trivial whenever $\diff(\candt)\leq0$. Instead, when $\diff(\candt)>0$, we have
\begin{align*}
\marginal(\aaction_2)\diff(\candt)\geq\diff(\candt)+\sum_{\type\leq\candt}\prior(\type)\left[\diff(\type)-\diff(\candt)\right],
\end{align*}
or, equivalently,
\begin{align}\label{eq:upper-1}
\marginal(\aaction_1)\leq\sum_{\type\leq\candt}\prior(\type)\left[1-\frac{\diff(\type)}{\diff(\candt)}\right].
\end{align}
Similarly, the conditions in \autoref{eq:minus} are trivial whenever $\diff(\candt)\geq0$. Instead, when $\diff(\candt)<0$, we obtain
\begin{align}\label{eq:upper-2}
\marginal(\aaction_2)\leq\sum_{\type\geq\candt}\prior(\type)\left[1-\frac{\diff(\type)}{\diff(\candt)}\right].
\end{align}
We now establish that the right-hand side of \autoref{eq:upper-1} is U-shaped in (the index of) \candt. Let $\type_i,\type_{i+1}\in\Types_+$ denote two adjacent states and consider the difference:
\begin{align*}
\sum_{l=1}^{i+1}\prior(\type_l)\left(1-\frac{\diff(\type_l)}{\diff(\type_{i+1})}\right)-\sum_{l=1}^{i}\prior(\type_l)\left(1-\frac{\diff(\type_l)}{\diff(\type_{i})}\right)=\left(\frac{\diff(\type_{i+1})-\diff(\type_i)}{\diff(\type_{i+1})\diff(\type_i)}\right)\sum_{l=1}^{i}\prior(\type_l)\diff(\type_l).
\end{align*}
The above difference is first negative when $i$ is small. As long as $\aaction_1$ is not optimal at the prior, a state $\type_{\aaction_1}$ exists such that the difference turns positive, which we defined prior to the statement of \autoref{corollary:bin}. 

Similarly, let $\type_i,\type_{i+1}\in\Types_-$ and consider the difference
\begin{align*}
\sum_{l=i+1}^{\tnum}\prior(\type_l)\left(1-\frac{\diff(\type_l)}{\diff(\type_{i+1})}\right)-\sum_{l=i}^{\tnum}\prior(\type_l)\left(1-\frac{\diff(\type_l)}{\diff(\type_{i})}\right)=\left(\frac{\diff(\type_{i+1})-\diff(\type_i)}{\diff(\type_i)\diff(\type_{i+1})}\right)\sum_{l=i+1}^{\tnum}\prior(\type_l)\diff(\type_l),
\end{align*}
which is initially positive for \tindex\ close to \tnum. If $\aaction_2$ is not optimal at the prior, a state $\type_{\aaction_2}$ exists such that the difference turns negative and corresponds to the one we defined prior to \autoref{corollary:bin}. This completes the proof.
\end{proof}
\subsection{Omitted proofs from \autoref{sec:applications}}\label{appendix:applications}
\begin{proof}[Proof of \autoref{proposition:aud-cs-mps}] Suppose \pair\ is \bce-consistent given \payoff, and let \priorb\ denote a \diff-mean-preserving spread of \prior. Fix $\candt\in\Types$ and consider
\begin{align*}
\sum_{\type\in\Types}\prior(\type)\min\{\diff(\type),\diff(\candt)\}&=\sum_{\diff\leq\diff(\candt)}\prior(\{\type:\diff(\type)=\diff\})\min\{\diff,\diff(\candt)\}\\&\geq\sum_{\diff\leq\diff(\candt)}\priorb(\{\type:\diff(\type)=\diff\})\min\{\diff,\diff(\candt)\},
\end{align*}
where the inequality follows from $\min\{\cdot,\diff(\candt)\}$ being concave and \priorb\ being a \diff-mean-preserving spread of \prior. The same logic implies 
\begin{align*}
\sum_{\type\in\Types}\prior(\type)\min\{-\diff(\type),-\diff(\candt)\}&=\sum_{-\diff\leq-\diff(\candt)}\prior(\{\type:\diff(\type)=\diff\})\min\{-\diff,-\diff(\candt)\}\\&\geq\sum_{\diff\leq\diff(\candt)}\priorb(\{\type:\diff(\type)=\diff\})\min\{-\diff,-\diff(\candt)\}.
\end{align*}
We conclude that $(\priorb,\marginal)$ is also \bce-consistent given \payoff. 
\end{proof}

\begin{proof}[Proof of \autoref{proposition:aud-cs-shift}]
Parametrizing the threshold states $\type_{\aaction_1}(\prior)$ and $\type_{\aaction_2}(\prior)$ by \param, note both are decreasing in \param. Thus, the lower and upper bounds on $\marginal(\aaction_2)$ increase.
\end{proof}

\begin{proof}[Proof of \autoref{proposition:aud-cs-ratio}] This result follows immediately from the expressions in \autoref{corollary:bin}.
\end{proof}
\subsection{Omitted proofs from \autoref{sec:core}}\label{appendix:ext} 

The problem in \cite{gale1957theorem} can be described as follows. We are given a graph $\graph=(\nodes,\edges)$ and a function $\demandbase:\nodes\rightarrow\reals$. If $\demandbase(\node)>0$, then $\demandbase(\node)$ is the demand of node \node\ for a given good; instead, if $\demandbase(\node)<0$, then $|\demandbase(\node)|$ is the supply of node \node\ for that good. We are also given a non-negative function \capacity\ defined on the edges $(\node,\node^\prime)\in\edges$ of the graph. A flow on $(\graph,\capacity)$ is a mapping $\flow:\nodes\times\nodes\rightarrow\reals$ such that for all $(\node,\node^\prime)\in\edges$, the following hold: (i) $\flow(\node,\node^\prime)+\flow(\node^\prime,\node)=0$, and (ii) $\flow(\node,\node^\prime)\leq\capacity(\node,\node^\prime)$. Demand \demandbase\ is feasible on $(\graph,\capacity)$ if a flow \flow\ exists such that for all $\node\in\nodes$, $\sum_{\node^\prime\in\nodes}\flow(\node^\prime,\node)\geq\demandbase(\node)$. That is, the demand \demandbase\ is feasible if a flow exists such that for every node the flow into that node is at least that node's demand. \cite{gale1957theorem} characterizes when a given demand \demandbase\ is feasible on $(\graph,\capacity)$.

Given a Bayes plausible distribution over posteriors \bsplit, we can define the graph $\graphp(\bsplit)$ as we did in the main text. Namely, the nodes are $\nodes=\Actions\cup\mathrm{supp}\,\bsplit$ and an edge $(\belief,\aaction)$ exists if and only if $\aaction\in\aaction^*(\belief)$. Given \bsplit\ and \marginal, define the demand \demand\ as follows: for each $\belief\in\mathrm{supp}\,\bsplit$, $\demand(\belief)=-\bsplit(\belief)$, and for each $\aaction\in\Actions$, $\demand(\aaction)=\marginal(\aaction)$. Finally, for any edge $\left( \belief ,\aaction\right) \in E$, the edge's flow capacity is given by $c\left( \belief ,\aaction\right) =\infty$ and $c\left( a,\belief \right) =0$.

\autoref{proposition:equivalence of representation} motivates the connection between our problem and that in \cite{gale1957theorem}. 
\begin{proposition}[Feasibility and \bce-consistency]\label{proposition:equivalence of representation}
The Bayes plausible distribution over posteriors \bsplit\ implements \marginal\ if and only if \demand\ is feasible on $\graphp(\bsplit)$.   
\end{proposition}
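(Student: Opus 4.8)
The plan is to prove both implications directly, by translating between a decision rule $\strat$ witnessing that \bsplit\ implements \marginal\ and a flow \flow\ witnessing Gale-feasibility of \demand\ on $\graphp(\bsplit)$. The only data worth tracking are the values $\flow(\belief,\aaction)$ on the directed edges $(\belief,\aaction)$, i.e. those with $\aaction\in\aaction^*(\belief)$: the capacity convention $\capacity(\aaction,\belief)=0$ together with antisymmetry forces $\flow(\belief,\aaction)=-\flow(\aaction,\belief)\geq0$ on every edge, so a feasible flow moves nonnegative mass from beliefs to actions only, and $\capacity(\belief,\aaction)=\infty$ imposes no upper bound in that direction.

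For the forward implication, suppose $\strat:\Posteriors\to\Delta(\Actions)$ implements \marginal. I would set $\flow(\belief,\aaction)=\bsplit(\belief)\strat(\belief)(\aaction)$ and $\flow(\aaction,\belief)=-\bsplit(\belief)\strat(\belief)(\aaction)$ on each edge, with $\flow=0$ off the edge set. Antisymmetry and the capacity bounds are immediate. The flow into an action $\aaction$ equals $\sum_{\belief:\aaction\in\aaction^*(\belief)}\bsplit(\belief)\strat(\belief)(\aaction)=\sum_{\belief}\bsplit(\belief)\strat(\belief)(\aaction)=\marginal(\aaction)=\demand(\aaction)$, using that $\strat(\belief)$ is supported on $\aaction^*(\belief)$, while the flow into a belief $\belief$ equals $-\sum_{\aaction\in\aaction^*(\belief)}\bsplit(\belief)\strat(\belief)(\aaction)=-\bsplit(\belief)=\demand(\belief)$, using that $\strat(\belief)$ is a probability vector. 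Thus the demand conditions hold with equality, a fortiori weakly, so \demand\ is feasible.

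For the converse, the substantive step is to upgrade Gale's weak inequalities to exact market clearing, which is precisely what lets me recover a decision rule. Suppose \flow\ witnesses feasibility. Summing the action conditions $\sum_{\belief}\flow(\belief,\aaction)\geq\marginal(\aaction)$ over $\aaction\in\Actions$ shows the total edge flow $\sum_{(\belief,\aaction)}\flow(\belief,\aaction)$ is at least $\sum_{\aaction}\marginal(\aaction)=1$; summing the belief conditions $\sum_{\aaction\in\aaction^*(\belief)}\flow(\belief,\aaction)\leq\bsplit(\belief)$ over $\belief\in\mathrm{supp}\,\bsplit$ shows the same total is at most $\sum_{\belief}\bsplit(\belief)=1$. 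Hence the total flow equals $1$ and every individual inequality must bind: $\sum_{\belief:\aaction\in\aaction^*(\belief)}\flow(\belief,\aaction)=\marginal(\aaction)$ for each $\aaction$, and $\sum_{\aaction\in\aaction^*(\belief)}\flow(\belief,\aaction)=\bsplit(\belief)$ for each $\belief$. I would then define $\strat(\belief)(\aaction)=\flow(\belief,\aaction)/\bsplit(\belief)$ for $\aaction\in\aaction^*(\belief)$ and $0$ otherwise, which is well defined because $\bsplit(\belief)>0$ on $\mathrm{supp}\,\bsplit$; nonnegativity of \flow\ on edges together with the second family of equalities makes each $\strat(\belief)$ a probability vector supported on $\aaction^*(\belief)$, and the first family of equalities is exactly \autoref{eq:demand-supply}. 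Thus \bsplit\ implements \marginal.

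The only delicate point, and what I expect to be the crux, is this counting argument in the converse: Gale's theorem a priori delivers only weak inequalities, and absent exact clearing the candidate $\strat(\belief)$ need not sum to one, so it would fail to be a tie-breaking rule. Everything else is bookkeeping. The argument works solely because \marginal\ and \bsplit\ are both probability distributions summing to one, which pins the total flow to $1$ and forces all the slack in the supply–demand inequalities to vanish.
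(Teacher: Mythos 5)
Your proposal is correct and follows essentially the same route as the paper: the counting argument you identify as the crux (both marginals sum to one, forcing Gale's weak inequalities to bind) is exactly the content of the paper's Lemma~\ref{lemma:exact satiation}, and the translation between flows and decision rules via $\strat(\aaction|\belief)=\flow(\belief,\aaction)/\bsplit(\belief)$ matches the paper's construction in both directions.
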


The proof of \autoref{proposition:equivalence of representation} relies on the following lemma:
\begin{lemma}[Market clearing]\label{lemma:exact satiation}
If $\demand$ is feasible on $\graphp(\bsplit)$, 
the flow out of any supply node $\belief \in \mathrm{supp}\,\bsplit$ is exactly $\bsplit \left(\belief\right) $ (and not less), and the flow into any demand node $\aaction\in\Actions$ is
exactly $\marginal\left(\aaction\right)$ (and not more).
\end{lemma}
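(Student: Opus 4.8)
The plan is to exploit the fact that both $\bsplit$ and $\marginal$ are probability distributions, so that the aggregate flow across $\graphp(\bsplit)$ is pinned down from both sides and the two families of feasibility inequalities can hold simultaneously only with equality. First I would rewrite the feasibility conditions in terms of the edge flows. By antisymmetry and the capacity constraints $\capacity(\aaction,\belief)=0$, every edge flow satisfies $\flow(\aaction,\belief)=-\flow(\belief,\aaction)\leq 0$, so $\flow(\belief,\aaction)\geq 0$ on each edge: flow travels only from beliefs to actions. Writing $F_\belief=\sum_{\aaction}\flow(\belief,\aaction)$ for the flow out of a supply node and $G_\aaction=\sum_{\belief}\flow(\belief,\aaction)$ for the flow into a demand node, feasibility at $\belief$ (where $\demand(\belief)=-\bsplit(\belief)$) reads $-F_\belief\geq-\bsplit(\belief)$, i.e.\ $F_\belief\leq\bsplit(\belief)$, while feasibility at $\aaction$ (where $\demand(\aaction)=\marginal(\aaction)$) reads $G_\aaction\geq\marginal(\aaction)$.

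Next I would aggregate. Since each edge flow $\flow(\belief,\aaction)$ is counted once in $\sum_\belief F_\belief$ and once in $\sum_\aaction G_\aaction$, the two aggregate flows coincide: $\sum_\belief F_\belief=\sum_\aaction G_\aaction$. Combining this identity with the two families of inequalities and with $\sum_\belief\bsplit(\belief)=\sum_\aaction\marginal(\aaction)=1$ yields
\[
1=\sum_\aaction\marginal(\aaction)\leq\sum_\aaction G_\aaction=\sum_\belief F_\belief\leq\sum_\belief\bsplit(\belief)=1,
\]
so every inequality in the chain must be an equality; in particular $\sum_\aaction G_\aaction=1$ and $\sum_\belief F_\belief=1$.

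Finally I would conclude termwise. Because $\sum_\aaction G_\aaction=\sum_\aaction\marginal(\aaction)$ while $G_\aaction\geq\marginal(\aaction)$ for every $\aaction$, no positive slack can appear at any coordinate, forcing $G_\aaction=\marginal(\aaction)$ for all $\aaction$; symmetrically $\sum_\belief F_\belief=\sum_\belief\bsplit(\belief)$ together with $F_\belief\leq\bsplit(\belief)$ forces $F_\belief=\bsplit(\belief)$ for all $\belief$. This is precisely the claim that the flow into each demand node is exactly $\marginal(\aaction)$ and the flow out of each supply node is exactly $\bsplit(\belief)$. I do not expect a genuine obstacle here beyond bookkeeping: the one point requiring care is the sign and direction conventions, namely verifying from the antisymmetry $\flow(\node,\node')=-\flow(\node',\node)$ and the capacities that the flow is one-directional and that the feasibility inequalities translate, with the correct orientations, into $F_\belief\leq\bsplit(\belief)$ and $G_\aaction\geq\marginal(\aaction)$.
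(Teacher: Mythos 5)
Your argument is correct and is essentially the paper's own proof: both hinge on summing the feasibility inequalities over all nodes and using that $\bsplit$ and $\marginal$ are probability distributions to sandwich the total flow at $1$, forcing every inequality to bind. The only cosmetic difference is that you run the chain of inequalities directly and handle both sides at once, whereas the paper argues by contradiction for the demand nodes and declares the supply side analogous.
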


\begin{proof}[Proof of \autoref{lemma:exact satiation}]
Suppose $\demand$ is feasible, and let \flow\ denote the corresponding flow on $(\graphp,\capacity)$. We show the
flow into any demand node $\aaction\in\Actions$ is exactly $\marginal\left( \aaction\right) $.
Towards a contradiction, suppose $\sum_{\belief \in \mathrm{supp}\,\bsplit}f\left( \belief
,\aaction\right) \geq \marginal\left( \aaction\right) $ for all $\aaction\in\Actions$, with  strict
inequality for some $a$. Summing over all actions on both sides of the
inequality yields \[\sum_{\aaction\in\Actions}\sum_{\belief \in \mathrm{supp}\,\bsplit}f\left( \belief ,\aaction\right)
>\sum_{\aaction\in\Actions}\marginal\left( \aaction\right) =1.\] Because $
\demand$ is feasible, the flow out of each $
\belief \in \mathrm{supp}\,\bsplit$ is at most $\bsplit \left( \belief \right) $; therefore for all $\belief \in \mathrm{supp}\,\bsplit$,
 \[\sum_{\aaction\in
A}f\left( \belief ,\aaction\right) \leq \bsplit \left( \belief \right) .\]
Summing again over all actions on both sides yields \[\sum_{\belief \in
\mathrm{supp}\,\bsplit}\sum_{\aaction\in\Actions}f\left( \belief ,\aaction\right) \leq \sum_{\belief \in \mathrm{supp}\,\bsplit}\bsplit \left( \belief
\right) =1,\]  a contradiction. The proof that the flow out of any
supply node $\belief $ is exactly $\bsplit \left( \belief \right) $ is analogous and
hence omitted.
\end{proof}

\begin{proof}[Proof of \autoref{proposition:equivalence of representation}]
Suppose first the Bayes plausible distribution over posteriors \bsplit\ is such that \demand\ is feasible on $\graphp(\bsplit)$, and let \flow\ denote the feasible flow. Consider a decision rule $\strat:\Posteriors\to\Delta(\Actions)$ such that the agent takes action $\aaction\in\Actions$ when the belief is $\belief \in \mathrm{supp}\,\bsplit$ with probability $\strat\left(\aaction|\belief \right) =\flow( \belief ,\aaction) /\bsplit( \belief) $. This defines a decision rule because
\begin{align*}
\sum_{\aaction\in\Actions}\strat \left( \aaction |\belief \right) =\frac{\sum_{\aaction\in\Actions}f\left( \belief
,\aaction\right) }{\bsplit \left( \belief \right) }=1,
\end{align*}
where the second equality is implied by \autoref{lemma:exact satiation}. Furthermore, \strat\ is optimal for the agent because $(\belief,\aaction)\in \edges$ only if $\aaction\in\aaction^{\ast }\left( \belief\right) $.% Note that this rule fully specifies the agent's behavior
%at each $\belief $ because

To verify that $(\bsplit,\strat)$ induce \marginal, note that for all $\aaction\in\Actions$,
\begin{align*}
\sum_{\belief \in \mathrm{supp}\,\bsplit}\bsplit \left( \belief \right) \strat( \aaction
|\belief) =\sum_{\belief \in \mathrm{supp}\,\bsplit}\flow( \belief ,\aaction) =\marginal(\aaction),
\end{align*}
where the second equality follows again from \autoref{lemma:exact satiation}. Thus, \bsplit\ implements \marginal.

Conversely, suppose $\pair$ are \bce-consistent given \payoff. Then, a Bayes plausible distribution over posteriors \bsplit\ and a decision rule \strat\ exist that implement \marginal.\footnote{Namely, \bce-consistency implies the existence of an obedient experiment from which we can infer the following distribution over posteriors. First, let
\begin{align*}
    \belief_a(\type)=\frac{\prior(\type)\joint(\aaction|\type)}{\sum_{\typeb\in\Types}\prior(\typeb)\joint(\aaction|\typeb)},
\end{align*}
and let $\bsplit(\{\belief_a\})=\sum_{\type\in\Types}\prior(\type)\joint(\aaction|\type)$. The decision rule $\strat(\cdot|\belief_a)=\mathbbm{1}[\aactionb=\aaction]$ completes the construction.} Define the graph $\graphp(\bsplit)$ and the demand \demand. Note the demand \demand\ is feasible on $\graphp(\bsplit)$ by defining the flow $\flow( \belief ,\aaction) =\strat( a |\belief ) \bsplit( \belief) 
$ for all $\left( \belief ,\aaction\right) \in \mathrm{supp}\,\bsplit\times\Actions$.
\end{proof}

\autoref{proposition:equivalence of representation} implies that verifying that \bsplit\ implements \marginal\ is equivalent to verifying the feasibility of the demand $\demand$ on graph $\graphp$. The main theorem in \cite{gale1957theorem} provides necessary and sufficient conditions under which $\demand$ is feasible. Adapted to our setting, the
conditions in \cite{gale1957theorem} can be stated as follows:
\begin{proposition}[\citealp{gale1957theorem}]\label{proposition: Gale 1957}The demand $\demand$ is feasible on graph $\graphp(\bsplit)$ if
and only if for every set $\Actionsb\subseteq\Actions$ a flow $f_{\Actionsb}$ exists such that the following hold:
\begin{enumerate}
\item\label{itm:gale-1} For all $\belief\in\mathrm{supp}\;\bsplit$, $\sum_{\aaction\in\Actions}\flow_{\Actionsb}( \belief ,\aaction) \leq \bsplit( \belief )$ and
\item\label{itm:gale-2} $\sum_{\aaction\in\Actionsb}\sum_{\belief \in \mathrm{supp}\,\bsplit}\flow_{\Actionsb}\left( \belief ,\aaction\right) \geq
\sum_{\aaction\in\Actionsb}\marginal( \aaction) $.
\end{enumerate}
\end{proposition}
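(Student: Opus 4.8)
The plan is to read Proposition~\ref{proposition: Gale 1957} as the specialization of the feasibility theorem of \cite{gale1957theorem} to the bipartite supply--demand network $(\graphp(\bsplit),\capacity)$ constructed above. Necessity is direct, and sufficiency is the substantive half, which I would obtain from a max-flow/min-cut analysis of an augmented network.

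For necessity, suppose \demand\ is feasible with flow \flow, and for an arbitrary $\Actionsb\subseteq\Actions$ take $\flow_\Actionsb=\flow$. Since the only edges incident to a belief node \belief\ run toward action nodes (the reverse edges have capacity $0$), the feasibility inequality at \belief, $\sum_{\node'}\flow(\node',\belief)\geq\demand(\belief)=-\bsplit(\belief)$, is equivalent to $\sum_{\aaction}\flow(\belief,\aaction)\leq\bsplit(\belief)$, which is condition~\ref{itm:gale-1}. The feasibility inequality at each $\aaction\in\Actionsb$, $\sum_{\belief}\flow(\belief,\aaction)\geq\demand(\aaction)=\marginal(\aaction)$, summed over $\aaction\in\Actionsb$, gives condition~\ref{itm:gale-2}. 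Hence a single flow witnesses every subset.

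For sufficiency I would pass to the standard augmented network: adjoin a source $s$ with an edge $(s,\belief)$ of capacity $\bsplit(\belief)$ to each $\belief\in\mathrm{supp}\,\bsplit$, and a sink $t$ with an edge $(\aaction,t)$ of capacity $\marginal(\aaction)$ from each $\aaction\in\Actions$, keeping the interior edges $(\belief,\aaction)$ at capacity $\infty$. A feasible flow for \demand\ exists if and only if the maximum $s$--$t$ flow saturates all sink edges, i.e.\ has value $\sum_{\aaction}\marginal(\aaction)=1$, and by max-flow/min-cut this equals the minimum cut. Because interior edges are uncuttable, a finite cut is pinned down by the set $\Actionsb$ of actions left on the sink side: every belief that can serve some action of \Actionsb\ is forced to the sink side, and the cut value reduces to $\sum_{\aaction\notin\Actionsb}\marginal(\aaction)+\bsplit\big(\{\belief:\aaction^*(\belief)\cap\Actionsb\neq\emptyset\}\big)$. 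The requirement that every such cut have value at least $1$ is exactly $\marginal(\Actionsb)\leq\bsplit(\{\belief:\aaction^*(\belief)\cap\Actionsb\neq\emptyset\})$ for all \Actionsb. Applying max-flow/min-cut once more, now to the subnetwork that feeds only \Actionsb, shows this per-subset inequality is equivalent to the existence of a flow $\flow_\Actionsb$ obeying conditions~\ref{itm:gale-1}--\ref{itm:gale-2}; chaining the equivalences yields the proposition. Complementing the subsets then turns these inequalities into the core condition \autoref{eq:core} used later.

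The main obstacle is the sufficiency direction, and specifically the reconciliation of \emph{local} and \emph{global} feasibility: conditions~\ref{itm:gale-1}--\ref{itm:gale-2} permit a different witness flow $\flow_\Actionsb$ for each subset, whereas feasibility of \demand\ demands one flow meeting all action demands simultaneously. This is precisely the nontrivial content of \cite{gale1957theorem}, and the care required is in the min-cut bookkeeping: the infinite interior capacities must be shown to force the minimizing cut to be indexed by a single action set \Actionsb\ (equivalently, by a downward-closed family of optimal-action sets), which is what makes the reduction to one inequality per subset\textemdash and ultimately to \autoref{eq:core}\textemdash go through.
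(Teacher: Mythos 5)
Your proof is correct, but it takes a genuinely different route from the paper: the paper does not prove this proposition at all\textemdash it imports it as the main theorem of \cite{gale1957theorem}, specialized to the bipartite supply--demand network $\graphp(\bsplit)$, and the attribution in the proposition's title signals exactly that. What you have done is rederive Gale's feasibility criterion for this particular network from max-flow/min-cut. Your necessity direction (a single feasible flow witnesses every subset) and your cut bookkeeping are both right: with infinite interior capacities a finite $s$--$t$ cut is indexed by the set \Actionsb\ of actions on the sink side, which forces every belief with $\aaction^*(\belief)\cap\Actionsb\neq\emptyset$ to the sink side as well, and the resulting inequality $\sum_{\aaction\in\Actionsb}\marginal(\aaction)\leq\bsplit\left(\{\belief:\aaction^*(\belief)\cap\Actionsb\neq\emptyset\}\right)$ is indeed equivalent to the existence of the per-subset flow $\flow_{\Actionsb}$; for that last equivalence you do not even need a second min-cut argument, since the interior edges are uncapacitated and so the maximal flow that can reach \Actionsb\ while respecting the supply constraints is exactly the total mass of beliefs adjacent to \Actionsb. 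Two steps are worth making explicit: (i) the claim that feasibility of \demand\ is equivalent to the max flow having value $1$ silently uses that any feasible flow must deliver \emph{exactly} $\marginal(\aaction)$ to each action\textemdash this is the paper's \autoref{lemma:exact satiation}, and it is what licenses capping the sink edges at $\marginal(\aaction)$; (ii) finiteness of $\mathrm{supp}\,\bsplit$ is what makes the max-flow/min-cut duality applicable without further argument. What your approach buys is a self-contained derivation that exhibits the complementation structure behind \autoref{eq:core} as a min-cut condition; what the paper's citation buys is brevity and generality, your argument being the special bipartite case of Gale's theorem.
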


Because of \autoref{lemma:exact satiation}, given a set $\Actionsb\subseteq\Actions$, items \ref{itm:gale-1} and \ref{itm:gale-2} in \autoref{proposition: Gale 1957} are satisfied for some flow $f_{\Actionsb}$ if and only if they are satisfied when the out flow from every supply node that is connected to nodes in $\Actionsb$ is maximal. In a slight abuse of notation, let $\Delta^*(\Actionsb)$ denote the set of beliefs for which \emph{some} action in \Actionsb\ is optimal; that is, $\Delta^*(\Actionsb)=\bigcup\{\opta:\aaction\in\Actionsb\}$. Note that in $\graphp(\bsplit)$, all beliefs in $\Delta^*(\Actionsb)$ are connected to actions in \Actionsb, and these beliefs are the only ones connected to actions in \Actionsb. \autoref{proposition: 
Gale 1957} implies \bsplit\ implements marginal if and only if 
\begin{align*}
(\forall\Actionsb\subseteq\Actions)\sum_{\belief\in\Delta^*(\Actionsb)}\bsplit(\belief)\geq\sum_{\aaction\in\Actionsb}\marginal(\aaction).
\end{align*}
Note $\left(\bigcup_{\aaction\in\Actionsb}\opta\right)^\complement=\bigcap_{\aaction\in\Actionsb}\left(\opta\right)^\complement$. Moreover, the latter set equals $\{\belief\in\Posteriors:(\exists\Actionsc\subseteq\Actionsb^\complement)\Actionsc=\aaction^*(\belief)\}$. We then conclude \bsplit\ implements \marginal\ if and only if for every subset $\Actionsb\subseteq\Actions$,
\begin{align*}
(\forall\Actionsb^\complement\subseteq\Actions)\sum_{\aaction\in\Actionsb^\complement}\marginal(\aaction)\geq\sum_{\Actionsc\subseteq\Actionsb^\complement}\bsplit_\Actions(\Actionsc),
\end{align*}
which is the statement in \autoref{proposition:core-bp}.

\section{\bce-consistency under the first-order approach}\label{sec:foa}
In this appendix, we show how our results extend under more general state and action spaces, provided that the \emph{first-order} approach applies. First, we introduce some technical assumptions we rely on for our results. Second, we set up the primal problem that verifies the \bce-consistency of \pair\ given \payoff\ and the corresponding dual. \autoref{lemma:dual-variables} delivers properties of the dual variables, which we use to extend the representation of the set \ppmpair\ to more general state and action spaces.  Finally, we show that \autoref{proposition:mcv} extends under the first-order approach.

\paragraph{Assumptions and notation} Assume \Actions\ and \Types\ are compact, Polish spaces, endowed with the Borel $\sigma$-algebra. In particular, we assume $\Types=[\underline{\type},\overline{\type}],\Actions=[\underline{\aaction},\overline{\aaction}]\subseteq\reals$. We endow product spaces with the product topology. It follows then that \Posteriors, $\Delta(\Actions)$, and $\Delta(\Actions\times\Types)$ are also compact, Polish spaces; endow them with the topology of weak convergence. Assume the \dm's utility function \payoff\ is such that the first-order approach applies. \cite{kolotilin2023persuasion} provide conditions on \payoff\ under which the first-order approach holds. To state them, we denote by $\diff(\aaction,\type)$ the partial derivative of \payoff\ with respect to \aaction. The three assumptions are as follows: (i) $\diff$ is three-times differentiable; (ii) \diff\ satisfies strict aggregate single-crossing; that is, $\int\diff(\aaction,\type)\belief(d\type)=0\Rightarrow\int\diff_1(\aaction,\type)\belief(d\type)<0$, where $\diff_1$ is the partial derivative of \diff\ with respect to \aaction; and (iii) $\min_{\type\in\Types}\diff(\underline{\aaction},\type)=\max_{\type\in\Types}\diff(\overline{\aaction},\type)=0$. These assumptions imply, among other things, that $\diff(\aaction,\cdot)$ is continuous; hence, \opta\ is compact as it is a closed subset of \Posteriors.

Under the assumptions above, we then have that \aaction\ is optimal at belief $\belief\in\Posteriors$ if and only if
\begin{align}\tag{FOC}\label{eq:foc}
    \int_{\Types}\diff(\aaction,\type)\belief(d\type)=0.
\end{align}
It follows that \opta\ is the set of all beliefs for which \ref{eq:foc} holds.

\paragraph{\bce-consistency} Let $\pair\in\Posteriors\times\Delta(\Actions)$. \pair\ is \bce-consistent given \payoff\ if and only if the value of the following program is $0$:
\begin{align}\label{eq:primal}\tag{P-C}
\sup_{\joint\in\Delta(\Actions\times\Types)}&0\\
&\text{s.t.} \left\{\begin{array}{ll}\int_{\Actions\times\tilde\Types}\mathrm{d}\joint(\aaction,\type)=\prior(\tilde\Types)&\text{for all measurable }\tilde\Types\subset\Types\\
\int_{\tilde\Actions\times\Types}\mathrm{d}\joint(\aaction,\type)=\marginal(\tilde\Actions)&\text{for all measurable }\tilde\Actions\subset\Actions\\
\int_{\tilde\Actions\times\Types}\diff(\aaction,\type)\mathrm{d}\joint(\aaction,\type)=0&\text{for all measurable }\tilde\Actions\subset\Actions\end{array}\right..\nonumber
\end{align}
The corresponding dual is as follows:\footnote{Because \prior\ and \marginal\ are probability distributions, we can state program \ref{eq:primal} as optimizing over the set of finitely additive signed measures of bounded variation on $\Actions\times\Types$, $ba(\Sigma_{\Actions\times\Types})$, without changing the value of the problem, where $\Sigma_{\Actions\times\Types}$ is the Borel $\sigma$-algebra on $\Actions\times\Types$. The corresponding dual is the set of measurable, bounded functions \cite[Theorem 14.4]{aliprantis2013infinite}. }
\begin{align}\label{eq:dual-c}\tag{D-C}
\inf_{\price\in B(\Types),\pricea\in B(\Actions),\mult\in B(\Actions)}&\int_{\Actions}\pricea(\aaction)\marginal(d\aaction)-\int_{\Types}\price(\type)\prior(d\type)\\\
\text{s.t. }&(\forall\aaction\in\Actions)(\forall\type\in\Types)\pricea(\aaction)\geq\price(\type)-\mult(\aaction)d(\aaction,\type),\nonumber
\end{align}
where $B(\Types)$ and $B(\Actions)$ denote the set of bounded measurable functions on \Types\ and \Actions, respectively. As in the main text, \pair\ is \bce-consistent if and only if the value of the dual is non-negative. We note that programs \ref{eq:primal} and \ref{eq:dual-c} are similar to the outcome-based primal and duals in \cite{kolotilin2023persuasion}, except ours have a marginal constraint on the set of actions. 

\autoref{lemma:dual-variables} shows that without loss of generality, we can take \price\ to be a continuous function on \Types, and can take \pricea\ to be an upper-semicontinuous function on \Actions, respectively. The continuity of \price\ follows from  \cite{kolotilin2018optimal} and \cite{kolotilin2023persuasion}. Instead, that \pricea\ is upper-semicontinuous requires proof and this result delivers \citet[Theorem 3]{strassen1965existence}.

\begin{lemma}\label{lemma:dual-variables}
It is without loss of generality to optimize over $\price\in C(\Types)$ and $\pricea\in USC(\Actions)$ in the dual program \ref{eq:dual-c}, where $USC(\Actions)$ is the set of upper-semicontinuous functions on \Actions.
\end{lemma}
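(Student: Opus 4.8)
The plan is to solve out the dual variables pointwise, obtain an explicit min--sup formula for $\pricea$ in terms of $\price$, and then read off the regularity of $\pricea$ from the joint continuity of the utility gradient $\diff$. First I would observe that the objective $\int_\Actions\pricea\,d\marginal-\int_\Types\price\,d\prior$ depends on $\pricea$ and $\price$ only through their integrals against the nonnegative measures $\marginal$ and $\prior$, so at an optimum each variable may be taken pointwise extremal subject to feasibility. The constraint $\pricea(\aaction)\ge\price(\type)-\mult(\aaction)\diff(\aaction,\type)$ couples the three variables at a given action only through the scalar $\mult(\aaction)\in\reals$; optimizing over $(\pricea,\mult)$ action-by-action for fixed $\price$ therefore yields the pointwise-smallest feasible
\[
\pricea(\aaction)=\inf_{\mult\in\reals}\ \sup_{\type\in\Types}\big[\price(\type)-\mult\,\diff(\aaction,\type)\big].
\]
I would record that the inner value $g(\aaction,\mult):=\sup_\type[\price(\type)-\mult\,\diff(\aaction,\type)]$ is convex and finite in $\mult$, with $g(\aaction,\mult)\to+\infty$ as $|\mult|\to\infty$ (because, for interior $\aaction$, $\diff(\aaction,\cdot)$ takes both signs by strict aggregate single-crossing and the boundary normalization $\min_\type\diff(\underline{\aaction},\cdot)=\max_\type\diff(\overline{\aaction},\cdot)=0$), so the infimum is attained at a finite $\mult^\star(\aaction)$; a measurable selection of minimizers then restores exact feasibility.

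For continuity of $\price$, I would invoke the arguments of \cite{kolotilin2018optimal} and \cite{kolotilin2023persuasion}: the multiplier on the prior constraint may be taken to be the value of the associated price/concave-envelope function, which is continuous on the compact interval $\Types$ because $\diff$ is continuous. This is the first claim, and I would cite rather than reprove it.

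The substantive step is upper-semicontinuity of $\pricea$. Taking $\price\in C(\Types)$, the integrand $(\aaction,\type)\mapsto\price(\type)-\mult\,\diff(\aaction,\type)$ is jointly continuous (using three-times differentiability of $\diff$) and $\Types$ is compact, so by the maximum theorem $g(\cdot,\mult)$ is continuous for each fixed $\mult$. Hence $\pricea=\inf_{\mult\in\reals}g(\cdot,\mult)$ is an infimum of continuous functions and therefore upper-semicontinuous; evaluating $g$ at a zero of $\diff(\aaction,\cdot)$ shows $\pricea(\aaction)\ge\min_\type\price(\type)$, so $\pricea$ is bounded and lies in $USC(\Actions)$. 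I would then verify that replacing any feasible triple by these canonical $(\price,\pricea,\mult)$ weakly lowers (never raises) the objective while preserving feasibility, so the infimum over the restricted class of continuous $\price$ and upper-semicontinuous $\pricea$ equals the infimum over the full class.

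The hard part will be the upper-semicontinuity argument together with its bookkeeping: justifying that optimizing $\mult$ action-by-action is without loss, that the resulting $\pricea$ is bounded and measurable (indeed upper-semicontinuous) so that it remains admissible, and that the maximum-theorem continuity of $g(\cdot,\mult)$ holds under only the stated regularity of $\diff$. By contrast, continuity of $\price$ is comparatively routine once the Kolotilin arguments are cited. Once $\pricea\in USC(\Actions)$ is established, it is precisely this property that lets the support-function representation of \ppmpair\ be obtained from \citet[Theorem 3]{strassen1965existence}, as claimed in \autoref{remark:foa}.
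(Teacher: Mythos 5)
Your proposal is correct and shares the paper's overall architecture\textemdash eliminate the dual variables pointwise, obtain $\pricea(\aaction)=\inf_{\mult}\sup_{\type}[\price(\type)-\mult\,\diff(\aaction,\type)]$, cite the Kolotilin arguments for continuity of $\price$, and then establish upper-semicontinuity of $\pricea$\textemdash but the key step is argued by a genuinely different route. You observe that $g(\cdot,\mult)$ is continuous for each fixed $\mult$ (Berge's theorem, using joint continuity of \diff\ and compactness of \Types) and conclude that $\pricea=\inf_{\mult}g(\cdot,\mult)$ is upper-semicontinuous as a pointwise infimum of continuous functions; this is more elementary than the paper's argument, which instead rewrites $\pricea(\aaction)=\max_{\belief\in\opta}\int_\Types\price(\type)\,\mathrm{d}\belief(\type)$ via Lagrangian duality over \Posteriors, proves that $\aaction\rightrightarrows\opta$ is a compact-valued upper-hemicontinuous correspondence (using \citealp{serfozo1982convergence}), and invokes \citet[Lemma 17.30]{aliprantis2013infinite}. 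What the paper's longer route buys is precisely the support-function identity $\pricea(\aaction)=\max_{\belief\in\opta}\int\price\,\mathrm{d}\belief$, which is what feeds \autoref{eq:strassen} and the Strassen-type representation in \autoref{remark:foa}; your argument delivers the lemma as stated but would need that identity supplied separately (it is a routine LP duality step, but it is not a corollary of your usc argument). Two small imprecisions, neither fatal: (i) your claim that the infimum over $\mult$ is attained at a finite $\mult^\star(\aaction)$ fails at the boundary actions, where $\diff(\underline{\aaction},\cdot)\geq0$ and $\diff(\overline{\aaction},\cdot)\leq0$, so $g(\aaction,\cdot)$ is not coercive in one direction; upper-semicontinuity does not require attainment, but ``restoring exact feasibility by a measurable selection of minimizers'' should be replaced by an $\varepsilon$-minimizer argument (the paper glosses over the same point); (ii) the lower bound $\pricea(\aaction)\geq\min_\type\price(\type)$ via a zero of $\diff(\aaction,\cdot)$ presupposes $\opta\neq\emptyset$, which is fine under the standing no-dominated-actions assumption but worth stating.
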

\begin{proof}[Proof of \autoref{lemma:dual-variables}]
Let $(\price,\pricea,\mult)$ be feasible for \ref{eq:dual-c} and note $(\tilde{\price},\pricea,\mult)$, where
\begin{align}\label{eq:basic-p-id-c}
\tilde{\price}(\type)=\inf_{\aaction\in\Actions}\left[\pricea(\aaction)+\mult(\aaction)\diff(\aaction,\type)\right],
\end{align}
is feasible for \ref{eq:dual-c} and has weakly lower value. Moreover, continuity of $\diff(\aaction,\cdot)$ and compactness of \Actions\ implies $\tilde{\price}(\type)$ is continuous.

Similarly, let $(\price,\pricea,\mult)$ be feasible for \ref{eq:dual-c}, and note the triple $(\price,\tilde{\pricea},\mult)$, where 
\begin{align}\label{eq:pricea-c}
\tilde{\pricea}(\aaction)=\inf_{\multb(\aaction)}\sup_{\type\in\Types}\left[\price(\type)+\multb(\aaction)\diff(\aaction,\type)\right],
\end{align}
is feasible for \ref{eq:dual-c} and has weakly lower value. We now argue $\tilde\pricea$ in \autoref{eq:pricea-c} is upper-semicontinuous.  Note we can rewrite the right-hand side of \autoref{eq:pricea-c} as follows:
\begin{align}
\inf_{\multb(\aaction)}\sup_{\type\in\Types}\left[\price(\type)+\multb(\aaction)\diff(\aaction,\type)\right]=\inf_{\multb(\aaction)}\sup_{\belief\in\Posteriors}\int_{\Types}\left[\price(\type)+\multb(\aaction)\diff(\aaction,\type)\right]\mathrm{d}\belief(\type)=\sup_{\belief\in\opta}\int_{\Types}\price(\type)\mathrm{d}\belief(\type).
\end{align}
Note first we can replace $\sup$ with $\max$ on the right-hand side of the above expression, as we are maximizing a continuous function over a compact set. We already argued that \opta\ is compact. Furthermore, the correspondence $\aaction\rightrightarrows\opta$ is upper-hemicontinuous: let $\aaction_n\rightarrow\aaction^*$ and $\belief_n\in\opt(\aaction_n)$ such that $\belief_n\rightarrow\belief^*$. Note continuity of $\diff(\aaction^*,\cdot)$ implies
\begin{align*}
\int_\Types\diff(\aaction^*,\type)\dx\belief_n(\type)\rightarrow\int_{\Types}\diff(\aaction^*,\type)\dx\belief^*(\type),
\end{align*}
which is finite. \citet[Theorem 3.3]{serfozo1982convergence} implies
\begin{align*}
0=\int_\Types\diff(\aaction_n,\type)\dx\belief_n(\type)\rightarrow\int_{\Types}\diff(\aaction^*,\type)\dx\belief^*(\type).
\end{align*}
Hence, $\belief^*\in\opt(\aaction^*)$. \citet[Lemma 17.30]{aliprantis2013infinite} implies $\tilde\pricea$ is upper-semicontinuous and hence measurable.
\end{proof}
A consequence of \autoref{lemma:dual-variables} is the following. Define
\begin{align}
\Prices_{\foc}=\left\{\price\in C(\Types):(\exists \pricea\in USC(\Actions))(\exists \mult\in B(\Actions))\price(\type)=\inf_{\aaction\in\Actions}\pricea(\aaction)+\mult(\aaction)\diff(\aaction,\type)\right\}.
\end{align}
Then, \pair\ is \bce-consistent if and only if for all $\price\in\Prices_{\foc}$,
\begin{align}\label{eq:strassen}
\int_\Actions \left[\max_{\belief\in\opta}\int_\Types\price(\type)\belief(d\type)\right]\dx\marginal(\aaction)\geq\int_\Types\price(\type)\prior(d\type).
\end{align}
\autoref{eq:strassen} provides the analogue of \autoref{eq:support} under the first-order approach. \autoref{eq:strassen} also provides the analogue of \citet[Theorem 3]{strassen1965existence}. Although the result in \cite{strassen1965existence} requires checking \autoref{eq:strassen} holds for \emph{all} continuous functions, verifying \autoref{eq:strassen} holds on a smaller set of functions\textemdash namely, those in $\Prices_{\foc}$\textemdash suffices. 

\paragraph{Monotone and concave decision problems} Suppose now that, for all $\aaction\in\Actions$, $\diff(\aaction,\cdot)$ is increasing. Similar to \autoref{proposition:mcv}, define $\Prices_{\foc}^\uparrow$ and $\Prices_{\foc}^\downarrow$ to be the subsets of $\Prices_{\foc}$ in which $\mult(\aaction)\geq0$ and $\mult(\aaction)\leq0$, respectively.

\begin{lemma}\label{lemma:mcv-c}
Suppose that, for all $\aaction\in\Actions$, $\diff(\aaction,\cdot)$ is increasing. Then, \pair\ is \bce-consistent given \payoff\ if and only if the value of the dual is non-negative for all continuous functions $\price\in\Prices_{\foc}^\uparrow\cup\Prices_{\foc}^\downarrow$.
\end{lemma}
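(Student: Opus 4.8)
The plan is to mirror the proof of \autoref{proposition:mcv}, replacing the finite sums by integrals and using \autoref{lemma:dual-variables} to keep the constructed dual variables inside the admissible (continuous / upper‑semicontinuous) classes. Necessity is immediate: since $\Prices_{\foc}^\uparrow\cup\Prices_{\foc}^\downarrow\subseteq\Prices_{\foc}$, if \pair\ is \bce-consistent then \eqref{eq:strassen} — equivalently, non-negativity of the value of \eqref{eq:dual-c} — holds a fortiori on the smaller set. All the content is in sufficiency, which I would prove by contradiction: suppose the value of \eqref{eq:dual-c} is non-negative on $\Prices_{\foc}^\uparrow\cup\Prices_{\foc}^\downarrow$ but the value of the full dual is strictly negative. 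Then there is a feasible triple $(\price,\pricea,\mult)$ with $\int_\Actions\pricea\,d\marginal-\int_\Types\price\,d\prior<0$, and by \autoref{lemma:dual-variables} together with \eqref{eq:basic-p-id-c} I may take $\price(\type)=\inf_{\aaction\in\Actions}[\pricea(\aaction)+\mult(\aaction)\diff(\aaction,\type)]$, so $\price\in\Prices_{\foc}$.

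Second, I would isolate the single-peaked structure. Split $\Actions$ by the sign of $\mult$; because $\diff(\aaction,\cdot)$ is increasing, the partial infimum over $\{\mult\ge0\}$ is an increasing function $h_+$ and the partial infimum over $\{\mult\le0\}$ is a decreasing function $h_-$, and $\price=\min\{h_+,h_-\}$. A minimum of an increasing and a decreasing function is single-peaked; if only one sign occurs, $\price$ is monotone and already lies in $\Prices_{\foc}^\uparrow$ or $\Prices_{\foc}^\downarrow$, yielding an immediate contradiction, so I may assume both signs occur. Let $\price^\star=\max_\type\price(\type)$ (attained, as $\Types$ is compact and $\price$ continuous) and $\type^\star$ the smallest maximizer. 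As in the finite proof, I can assume without loss that $\pricea(\aaction)=\price^\star$ wherever $\mult(\aaction)=0$, since feasibility forces $\pricea(\aaction)\ge\price^\star$ there and lowering it only decreases the objective.

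Third, I would build the upward/downward decomposition exactly as in \autoref{proposition:mcv}. Set $\mult^\uparrow=\max\{\mult,0\}$, $\pricea^\uparrow=\pricea$ on $\{\mult\ge0\}$ and $\pricea^\uparrow=\price^\star$ on $\{\mult<0\}$, and symmetrically $\mult^\downarrow=\min\{\mult,0\}$, $\pricea^\downarrow=\pricea$ on $\{\mult\le0\}$ and $\price^\star$ elsewhere. A short computation shows the induced $\price^\uparrow=\inf_\aaction[\pricea^\uparrow+\mult^\uparrow\diff]=\min\{h_+,\price^\star\}$ is exactly $\price$ truncated to $\price^\star$ to the right of $\type^\star$ (hence continuous and increasing), while $\price^\downarrow$ is $\price$ truncated to the left (continuous and decreasing); moreover $\price^\uparrow+\price^\downarrow=\price+\price^\star$ and $\pricea^\uparrow+\pricea^\downarrow=\pricea+\price^\star$ (the case $\mult(\aaction)=0$ uses the normalization from the previous step). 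Since \prior\ and \marginal\ are probability measures, the constant $\price^\star$ cancels in the objective, so that $J(\price^\uparrow)+J(\price^\downarrow)=\int_\Actions\pricea\,d\marginal-\int_\Types\price\,d\prior<0$, where $J(\cdot)$ denotes the restricted dual objective. But each of $J(\price^\uparrow),J(\price^\downarrow)$ dominates the associated support-function value $\Psi(\price^\uparrow),\Psi(\price^\downarrow)$ (the $\pricea$-minimized objective), which are non-negative by hypothesis; hence both summands are non-negative and cannot sum to a negative number — the desired contradiction, provided $\price^\uparrow\in\Prices_{\foc}^\uparrow$ and $\price^\downarrow\in\Prices_{\foc}^\downarrow$.

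The hard part will be precisely that last membership claim. The explicit $\pricea^\uparrow$ I write down is only bounded and measurable, so I must certify that $\price^\uparrow$ is continuous (which follows from $\price$ being continuous and peaking at $\price^\star$) and, more delicately, that it admits an upper-semicontinuous representative carrying a non-negative multiplier, so that it genuinely lies in $\Prices_{\foc}^\uparrow$. I would handle this by passing to the canonical minimal representative $\pricea^\uparrow(\aaction)=\sup_{\belief\in\opta}\int_\Types\price^\uparrow\,d\belief$ and reusing the upper-hemicontinuity of $\aaction\rightrightarrows\opta$ and compactness of $\Actions$ established in \autoref{lemma:dual-variables}, while checking that the sign constraint $\mult^\uparrow\ge0$ survives this regularization (and symmetrically for the downward piece). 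Confirming that $h_+$ and $h_-$ are well enough behaved — single-peakedness of their minimum and the truncation identities — when they are merely upper-semicontinuous rather than continuous is the one genuine departure from the finite argument, and is where I expect the technical effort to concentrate.
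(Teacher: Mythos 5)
Your proposal matches the paper's proof essentially step for step: the same upward/downward split $\mult^\uparrow=\max\{\mult,0\}$, $\mult^\downarrow=\min\{\mult,0\}$, the same truncation of the single-peaked $\price$ at its smallest maximizer $\candt$ with the identities $\priceup+\pricedown=\price+\price^\star$ and $\priceaup+\priceadown=\pricea+\price^\star$, and the same cancellation of the constant against the probability marginals. The "hard part" you flag is exactly where the paper's effort goes too — its Equation (C.10) verifies that the constructed $\priceaup$ coincides with $\max_{\belief\in\opta}\int\priceup\,\dx\belief$ (handling $\aaction\in\Actions_\downarrow$ via $\diff(\aaction,\overline{\type})\geq0$ and the constancy of $\priceup$ above $\candt$), which simultaneously certifies membership in $\Prices_{\foc}^\uparrow$ via the upper-hemicontinuity argument from \autoref{lemma:dual-variables}, precisely as you propose.
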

\begin{proof}
The construction is the same as in the proof of \autoref{proposition:mcv} in \autoref{appendix:main}, so we simply verify the objects in that proof are well defined under the assumptions in this section. 

Consider then a feasible triple $(\price,\pricea,\mult)$ such that \price\ satisfies \autoref{eq:basic-p-id-c}, but $\price\notin\Prices_{\foc}^\uparrow\cup\Prices_{\foc}^\downarrow$. Note \price\ is single-peaked and continuous, and \Types\ is compact, so \price\ attains a maximum $\price^\star$; let \candt\ denote the smallest such maximizer, which is well-defined by Berge's maximum theorem. 

By analogy to the proof of \autoref{proposition:mcv}, define $\priceup,\priceaup,\mult^\uparrow$ and $\pricedown,\priceadown,\mult^\downarrow$ as follows. First, $\mult^\uparrow=\max\{\mult,0\}$ and $\mult^\downarrow=\min\{\mult,0\}$, which are bounded and measurable because \mult\ is. Second, let
\begin{align*}
\priceup(\type)&=\price(\type)\mathbbm{1}[\type\leq\candt]+\price^\star\mathbbm{1}[\type>\candt],\\
\pricedown(\type)&=\price(\type)\mathbbm{1}[\type>\candt]+\price^\star\mathbbm{1}[\type\leq\candt],
\end{align*}
and note they are continuous because \price\ is and $\price(\candt)=\price^\star$. Finally, let
\begin{align*}
\priceaup(\aaction)&=\pricea(\aaction)\mathbbm{1}[\aaction\notin\Actions_\downarrow]+\price^\star\mathbbm{1}[\aaction\in\Actions_\downarrow]&&
\priceadown(\aaction)=\pricea(\aaction)\mathbbm{1}[\aaction\notin\Actions_\uparrow]+\price^\star\mathbbm{1}[\aaction\in\Actions_\uparrow],
\end{align*}
where $\Actions_\uparrow=\{\aaction\in\Actions:\mult(\aaction)>0\}$ and  $\Actions_\downarrow=\{\aaction\in\Actions:\mult(\aaction)<0\}$. Like in the proof of \autoref{proposition:mcv}, we have that $\priceup+\pricedown=\price+\price^\star$, $\priceaup+\priceadown=\pricea+\pricea^\star$. Thus, if we show 
\begin{align}\label{eq:show}
\priceaup(\aaction)=\max_{\belief\in\opta}\int\priceup(\type)\dx\belief(\type)\text{ and }\priceadown(\aaction)=\max_{\belief\in\opta}\int\pricedown(\type)\dx\belief(\type), 
\end{align}
we conclude that \pair\ is \bce-consistent given \payoff\ if and only if \autoref{eq:strassen} holds for $\price\in\Prices_{\foc}^\uparrow\cup\Prices_{\foc}^\downarrow$. We show \autoref{eq:show} holds for \priceaup; the proof for \priceadown\ is analogous. For $\aaction\in\Actions\setminus\Actions_\downarrow$, because $\priceup\geq\price$,
\[
\inf_{\multb(\aaction)}\max_{\type\in\Types}\priceup(\type)-\multb(\aaction)\diff(\aaction,\type)\geq\inf_{\multb(\aaction)}\max_{\type\in\Types}\price(\type)-\multb(\aaction)\diff(\aaction,\type)=\pricea(\aaction)=\priceaup(\aaction),
\]
but $\priceaup(\aaction)$ is attainable by $\multb(\aaction)=\multp(\aaction)$. Hence, $\inf_{\multb(\aaction)}\max_{\type\in\Types}\priceup(\type)-\multb(\aaction)\diff(\aaction,\type)=\priceaup(\aaction)$. 

Then, for $\aaction\in\Actions_\downarrow$, 
\begin{align*}
\price^*\geq\inf_{\multb(\aaction)}\max_{\type\in\Types}\priceup(\type)-\multb(\aaction)\diff(\aaction,\type)&=\inf_{\multb(\aaction)\geq0}\max_{\type\in\Types}\priceup(\type)-\multb(\aaction)\diff(\aaction,\type)=\inf_{\multb(\aaction)\geq0}\max_{\type\leq\type^*}\priceup(\type)-\multb(\aaction)\diff(\aaction,\type)\\
&=\inf_{\multb(\aaction)\geq0}\max_{\type\leq\type^*}\price(\type)-\multb(\aaction)\diff(\aaction,\type)\geq\max_{\type\leq\type^*}\price(\type)=\price^*.
\end{align*}
where the first inequality follows from $\price^*$ being attainable by $\multb(\aaction)=0$, the first equality follows from $\diff(\aaction,\overline{\type})=\max_{\type\in\Types}\diff(\aaction,\type)\geq\max_{\type\in\Types}\diff(\overline{\aaction},\type)=0$; thus, $\priceup(\overline{\type})-\multb(\aaction)\diff(\aaction,\overline{\type})\geq\price^*$ for any $\multb(\aaction)<0$ but $\price^\star$ can be attained at $\multb(\aaction)=0$. The second and third equalities are due to $\priceup$ being constant for $\type\geq\candt$ and $\priceup(\type)=\price(\type)$ for $\type\leq\candt$. Finally, the last inequality follows from $\inf_{\multb(\aaction)}\max_{\type\in\Types}\price(\type)-\multb(\aaction)\diff(\aaction,\type)$ being achieved at $\mult(\aaction)<0$ and  this infimum being less than $\price^\star$ implying that $\price(\candt)-\mult(\aaction)\diff(\aaction,\candt)\leq\price^\star$; thus, $\diff(\aaction,\candt)\leq0$. It follows that $\diff(\aaction,\type)\leq0$ for all $\type\leq\type^*$, which implies the last inequality above. In sum,
\[\priceaup(\aaction)=\inf_{\multb(\aaction)}\max_{\type\in\Types}\priceup(\type)-\multb(\aaction)\diff(\aaction,\type)=\max_{\belief\in\opta}\int_\Types \priceup(\type)\dx\belief(\type).\]
\end{proof}

Assuming that $\diff(\aaction,\type)$ has affine utility differences, \autoref{lemma:mcv-c} allows us to extend \autoref{proposition:aud} to the case in which \Actions\ and \Types\ are compact Polish spaces. Intuitively, note that when $\diff(\aaction,\type)=\slope(\aaction) \tilde\diff(\type)+\ctt(\aaction)$ for $\slope(\cdot)>0$ and $\tilde\diff(\cdot)$ increasing, the functions $p$ in $\Prices_{\foc}$ obtain as the minimum over functions that are linear in $\tilde{\diff}$, and hence are concave in $\tilde{\diff}$. This is analogous to the result that in the case of quadratic loss checking that for all concave functions their expected value is larger under \marginal\ than under \prior\ suffices to check that \pair\ is \bce-consistent. \autoref{lemma:mcv-c} refines this result by showing that checking \autoref{eq:strassen} holds for concave increasing and concave decreasing functions of $\tilde{\diff}(\cdot)$ is sufficient to check that \pair\ is \bce-consistent.

%\newpage

\section{\bce-consistency in games}\label{sec:games}
We illustrate here how to apply our results to study \bce-consistency in simple multi-agent settings. To do so, we extend our notation to games and then we introduce the definition of \bce-consistency in this setting.

\paragraph{Base game} The base game is the tuple $\base=\langle\Types,(\Actionsi,\payoffi)_{\playerindex\in\setplayers}\rangle$, defined as follows. Let \setplayers=$\{1,\dots,\nplayers\}$ denote the set of players. For each player \playerindex, let \Actionsi\ denote their finite set of actions. In a slight abuse of notation, let $\Actions=\times_{\playerindex\in\setplayers}\Actionsi$ denote the set of action profiles. Player \playeri's payoffs,  $\payoffi(\aaction,\type)$, depend on the action profile, $\aaction\in\Actions$, and the state of the world, $\type\in\Types$. 
 
\paragraph{Bayes correlated equilibrium} An outcome distribution $\joint\in\Delta(\Actions\times\Types)$ is a \emph{Bayes correlated equilibrium} of base game, $\base$, if for all agents $\playerindex\in\setplayers$, actions $\actioni,\actionbi\in\Actionsi$, the analogue of \autoref{eq:obedience} holds; namely,
\begin{align*}%\label{eq:obedience-g}\tag{O}
\sum_{(\actionmi,\type)\in\Actionsmi\times\Types}\joint(\actioni,\actionmi,\type)\left[\payoffi(\actioni,\actionmi,\type)-\payoffi(\actionbi,\actionmi,\type)\right]\geq0.
\end{align*}
Let \bcebase\ denote the set of Bayes correlated equilibria of base game \base. 
%and for all $\type\in\Types$
%\begin{align*}%\label{eq:martingale-g}%\tag{M$_\Types$}
%    \sum_{\aaction\in\Actions}\joint(\aaction,\type)=\prior(\type).
%\end{align*}
%Let \bceprior\ denote the set of Bayes correlated equilibria of the base game \base\ when the players have common prior \prior.

\paragraph{\bce-consistency in games} In multi-agent settings, the definition of \bce-consistency depends on the data the analyst has about play. We consider two kinds of information the analyst may have about the players' actions, which are equivalent in the single-agent setting. In the first case, the analyst is endowed with a distribution over action profiles, $\marginal\in\Delta(\Actions)$. In the second case, the analyst is endowed with a profile of action distributions, one for each player; that is, $\marginalp=(\marginalbase_{0,1},\dots,\marginalbase_{0,N})\in\times_{\playerindex\in\setplayers}\Delta(\Actionsi)$. 

\begin{definition}[\bce- and M-\bce-consistent marginals]\label{definition:bce-c-game} The pair $\pair$ is \emph{\bce-consistent} in base game \base\ if a Bayes correlated equilibrium  $\joint\in\bcebase$ exists such that $(\joint_\Types,\joint_\Actions)=\pair$. Similarly, the profile $(\prior,\marginalp)$ is \emph{M-\bce-consistent} in base game \base\ if a Bayes correlated equilibrium $\joint\in\bcebase$ exists such that for all players $\playerindex\in\setplayers$, $\joint_{\Actionsi}=\marginali$, and $\joint_\Types=\prior$.
\end{definition}

\autoref{sec:public} studies under what conditions a pair of marginal distributions $\pair$ can be rationalized by a \emph{public} information structure. \autoref{sec:ring} applies our results to characterize the set of M-\bce-consistent marginals in ring-network games. 

\subsection{Public \bce-consistency in private Bayesian persuasion}\label{sec:public}
In this section, we consider the \emph{private} Bayesian persuasion setting of \cite{arieli2019private} and \cite{arieli2021feasible}. We assume each player's utility function depends only on her own action and the state of the world. That is, for all players $\playerindex\in\{1,\dots,N\}$, all action profiles $(\actioni,\actionmi)\in\Actions$, and states of the world $\type\in\Types$,
\begin{align*}
\payoffi\left( \actioni,\actionmi, \type \right) =\tilde{\payoff}_\playerindex\left( \actioni, \type \right). 
\end{align*} 
The analyst, who knows the base game \base\ and the marginal distribution of play $\marginal\in\Delta(\Actions)$,  wants to ascertain whether the distribution of play \marginal\ can be rationalized by a \emph{public} information structure (i.e., the players publicly observe the realization of a common signal structure before play). 

As we show next, the results in Sections \ref{sec:main} and \ref{sec:utility} can be applied to address this question. In what follows, we rely on the following definition:

\begin{definition}[Public \bce-consistency]
The pair $\pair$ is \emph{public} \bce-consistent in game \base\ if a Bayes correlated equilibrium \joint\ exists with marginals \pair, whose information structure uses public signals alone.
\end{definition}

Consider now an auxiliary decision problem $\bar\decision=\langle\Types, \Actions,\bar u\rangle$, in which a \dm\ with payoff $\bar u(\aaction,\type)=\sum_{\playerindex=1}^\nplayers\tilde{\payoff}_\playerindex(\actioni,\type)$ chooses an action $\aaction\in\Actions=\times_{\playerindex\in\nplayers}\Actionsi$ under incomplete information about \type. \autoref{proposition:first-order} links public \bce-consistency in first-order Bayesian persuasion to \bce-consistency in the auxiliary decision problem $\bar\decision$.
\begin{proposition}\label{proposition:first-order}
The	pair $\pair$ is public \bce-consistent in game \base\ if and only if $\pair$ is \bce-consistent in decision problem $\bar\decision$.
\end{proposition}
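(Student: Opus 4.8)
The plan is to exploit the additive separability of the auxiliary payoff together with the fact that a public information structure is nothing more than a Bayes-plausible distribution over a \emph{common} posterior. Once these two observations are in place, the statement reduces to matching the single-agent belief-system characterization of \bce-consistency for $\bar\decision$ — Equations \eqref{eq:state-mg-b} and \eqref{eq:obedience-b} — with the requirement that at every commonly held posterior each player is simultaneously best-responding. Concretely, I would first record the separability lemma: since $\bar\payoff(\aaction,\type)=\sum_{\playerindex\in\setplayers}\payoffbi(\actioni,\type)$ and the action set is the product $\Actions=\times_{\playerindex\in\setplayers}\Actionsi$, for any belief $\belief\in\Posteriors$ the objective $\sum_{\type}\belief(\type)\bar\payoff(\aaction,\type)=\sum_{\playerindex\in\setplayers}\sum_{\type}\belief(\type)\payoffbi(\actioni,\type)$ is maximized over $\Actions$ if and only if each coordinate $\actioni$ maximizes its own summand over $\Actionsi$. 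Hence
\begin{align*}
\Delta^*_{\bar\payoff}(\aaction)=\bigcap_{\playerindex\in\setplayers}\Delta^*_{\payoffbi}(\actioni),
\end{align*}
so that ``$\belief$ makes the profile $\aaction$ optimal for the auxiliary \dm'' is \emph{exactly} ``$\belief$ makes $\actioni$ a best response for every player $\playerindex$.'' Note $\Delta^*_{\bar\payoff}(\aaction)$ is convex, being an intersection of convex sets.

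For the ``only if'' direction I would start from a public Bayes correlated equilibrium with marginals \pair. Because signals are public, all players share the realized posterior, and the induced distribution over these common posteriors is Bayes-plausible (it averages to \prior). Conditional on a realized posterior $\belief$, the obedience constraints say each player's played action $\actioni$ satisfies $\belief\in\Delta^*_{\payoffbi}(\actioni)$, i.e.\ $\belief\in\Delta^*_{\bar\payoff}(\aaction)$ by the separability identity. Averaging the realized posterior conditional on the recommended profile \aaction\ then yields a belief $\belief(\cdot|\aaction)$ that, by convexity of $\Delta^*_{\bar\payoff}(\aaction)$, still lies in $\Delta^*_{\bar\payoff}(\aaction)$; Bayes plausibility gives $\sum_{\aaction\in\Actions}\marginal(\aaction)\belief(\cdot|\aaction)=\prior$. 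This is precisely a belief system satisfying \eqref{eq:state-mg-b} and \eqref{eq:obedience-b} for $\bar\decision$, so \pair\ is \bce-consistent in $\bar\decision$.

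For the ``if'' direction I would run the construction in reverse. \bce-consistency in $\bar\decision$ supplies a belief system $(\belief(\cdot|\aaction))_{\aaction\in\Actions}$ with $\belief(\cdot|\aaction)\in\Delta^*_{\bar\payoff}(\aaction)$ and $\sum_{\aaction}\marginal(\aaction)\belief(\cdot|\aaction)=\prior$. I would then define a \emph{public} information structure whose signal realizations are indexed by \aaction, drawn with probability $\marginal(\aaction)$ and inducing common posterior $\belief(\cdot|\aaction)$; Bayes plausibility makes this a legitimate public signal (the splitting/Kamenica–Gentzkow construction already invoked in the paper). Upon the public signal \aaction, each player shares posterior $\belief(\cdot|\aaction)$, and since $\belief(\cdot|\aaction)\in\Delta^*_{\bar\payoff}(\aaction)=\cap_{\playerindex}\Delta^*_{\payoffbi}(\actioni)$, the recommendation $\actioni$ is obedient for every player. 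The resulting strategy profile is therefore a Bayes correlated equilibrium using public signals alone, with state marginal \prior\ and action-profile marginal \marginal, establishing public \bce-consistency.

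The conceptual content is entirely carried by the separability lemma plleaned against the public-signal/common-posterior dictionary; the main thing to be careful about is the treatment of ties, where several profiles are optimal at a single posterior. This is handled precisely by convexity of $\Delta^*_{\bar\payoff}(\aaction)$ (so conditional means stay optimal) and by the fact that only the \emph{existence} of some belief system is required, not a particular tie-breaking rule. I would also remark that this result is the public-signal analogue of \autoref{proposition:across}: the $N$ players here play exactly the role of the $N$ decision problems there, and the auxiliary problem $\bar\decision$ has the identical structure $\bar\payoff=\sum_{\playerindex}\payoffbi$ with product action set, so the argument is structurally the same, only reinterpreted through public information.
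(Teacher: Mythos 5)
Your proposal is correct and follows essentially the same route as the paper: the separability of $\bar\payoff$ yields $\bar\aaction^*(\belief)=\times_{\playerindex}\aaction^*_\playerindex(\belief)$, and the translation between public signals and a single-agent belief system for $\bar\decision$ does the rest. Your explicit convexity argument for averaging posteriors conditional on a recommended profile is just a more careful rendering of the paper's remark about duplicating signal realizations to handle ties.
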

Consequently, the results in Sections \ref{sec:main} and \ref{sec:utility} can be used to study public \bce-consistent marginals in private persuasion games. To see why \autoref{proposition:first-order} holds,  let $\bar\aaction^*(\belief)$ denote the set of actions the \dm\ with payoff $\bar\payoff$ finds optimal when their belief is \belief; similarly, let $\aaction_\playerindex^*(\belief)$ denote the set of actions player \playerindex\ finds optimal when their belief is \belief. 
 It is immediate that $\bar\aaction^*(\belief)=\times_{\playerindex\in\nplayers}\aaction_\playerindex^*(\belief)$. That is, the profile $\aaction=\left( \aaction_{1},\dots ,\aaction_\nplayers\right) \in \Actions$ is optimal for the agent with payoff $\bar\payoff$ if and only if for all players $\playerindex\in\setplayers$ action \actioni\ is optimal in the game \base. Moreover,  given a posterior belief $\belief$, any distribution of (optimal) action profiles  the agent with payoff $\bar\payoff$ can generate, can also be generated by the players using a public correlation device or by duplicating signal realizations, and vice versa.\footnote{For example, suppose that the signal realization $s$ induces the posterior belief $\belief$. Suppose also that under $\belief$, the agent with payoff $\bar\payoff$ selects the two optimal action profiles $\aaction,\aactionb\in \overline{\aaction}^*(\belief)$ with equal probability. The same distribution of actions can be generated by the players: Indeed, one can \textquotedblleft split\textquotedblright\ the signal $s$ into two new signals, $s^{\prime }$ and $s^{\prime \prime }$, such that both new signals induce the same posterior belief $\belief$, and each of them is sent with half the probability of the original signal $s$. If each agent acts according to her corresponding optimal action in the profiles $\aaction$ and $\aactionb$ whenever $s^{\prime }$ and $s^{\prime \prime }$ are realized, respectively, the distribution over action profiles will coincide with that of the agent with payoff $\bar\payoff$.} Clearly, the connection between rationalizable distributions over action profiles in game \base\ and decision problem $\bar\base$ holds because of the focus on public \bce.

\subsection{Ring-network games}\label{sec:ring}
We consider here ring-network games as in \cite{kneeland2015identifying}, extended to account for incomplete information. A ring-network game is a base game \base\  in which players' payoffs satisfy the following:
\begin{align}\label{eq:rn-game}\tag{RN-P}
    \payoff_1(\aaction,\type)&=\rnpayoff_1(\aaction_1,\type)\\
    (\forall \playerindex\geq2)\payoffi(\aaction,\type)&=\rnpayoff_\playerindex(\aaction_{\playerindex-1},\actioni).\nonumber
\end{align}
In words, player $1$ cares about their action and the state of the world, whereas for $\playerindex\geq2$ player \playerindex\ cares about their action and that of player $\playerindex-1$. Ring-network games are used in the experimental literature that measures players' higher-order beliefs to identify departures from Nash equilibrium. 

The analyst knows the ring-network base game and each player \playerindex's action distribution, $\marginalbase_{0,\playerindex}\in\Delta(\Actionsi)$. The analyst wants to ascertain whether $(\prior,\marginalp)$ is M-\bce-consistent. Relying on the ring-network structure, \autoref{proposition:ring-network} ties the set of  M-\bce-consistent marginals in \base\ to the set of \bce-consistent marginals in single-agent games:
\begin{proposition}[M-\bce-consistency in ring-network games]\label{proposition:ring-network}
    The profile $(\prior,\marginalp)$ is M-\bce-consistent for the ring-network game $\langle\Types,(\Actionsi,\rnpayoff_\playerindex)_{\playerindex=1}^\nplayers\rangle$ if and only if the following holds:
    \begin{enumerate}
        \item $(\prior,\marginalbase_{0,1})$ is \bce-consistent in base game $\langle\Types,(\Actions_1,\rnpayoff_1)\rangle$, 
        \item For all $\playerindex\geq2$, $(\marginalbase_{0,\playerindex-1},\marginalbase_{0,\playerindex})$ are \bce-consistent in base game $\langle\Actions_{\playerindex-1},(\Actionsi,\rnpayoff_\playerindex)\rangle$.
    \end{enumerate}
\end{proposition}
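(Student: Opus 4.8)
The plan is to exploit the chain structure of payoffs in \eqref{eq:rn-game}: player $1$'s obedience constraints involve \joint\ only through the pair $(\aaction_1,\type)$, and for $\playerindex\geq2$ player \playerindex's obedience constraints involve \joint\ only through the pair $(\aaction_{\playerindex-1},\actioni)$. Thus each player's constraints depend on \joint\ solely through one bivariate marginal, $\joint_{\Actions_1\times\Types}$ or $\joint_{\Actions_{\playerindex-1}\times\Actionsi}$. This immediately yields necessity. If $\joint\in\bcebase$ witnesses M-\bce-consistency of $(\prior,\marginalp)$, then $\joint_{\Actions_1\times\Types}$ has marginals $(\prior,\marginalbase_{0,1})$ and satisfies player $1$'s obedience, witnessing \bce-consistency of $(\prior,\marginalbase_{0,1})$ in $\langle\Types,(\Actions_1,\rnpayoff_1)\rangle$; and for each $\playerindex\geq2$, $\joint_{\Actions_{\playerindex-1}\times\Actionsi}$ has marginals $(\marginalbase_{0,\playerindex-1},\marginalbase_{0,\playerindex})$ and satisfies player \playerindex's obedience, witnessing \bce-consistency of $(\marginalbase_{0,\playerindex-1},\marginalbase_{0,\playerindex})$ in $\langle\Actions_{\playerindex-1},(\Actionsi,\rnpayoff_\playerindex)\rangle$.

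For sufficiency, the plan is to glue the witnessing distributions into a single Markov chain $\type\to\aaction_1\to\dots\to\aaction_\nplayers$. Let $\joint_1\in\Delta(\Actions_1\times\Types)$ witness condition (1), and for each $\playerindex\geq2$ let $\joint_\playerindex\in\Delta(\Actions_{\playerindex-1}\times\Actionsi)$ witness condition (2). Since $\joint_\playerindex$ has $\Actions_{\playerindex-1}$-marginal $\marginalbase_{0,\playerindex-1}$, define the kernel $\kernel_\playerindex(\actioni\mid\aaction_{\playerindex-1})=\joint_\playerindex(\aaction_{\playerindex-1},\actioni)/\marginalbase_{0,\playerindex-1}(\aaction_{\playerindex-1})$ whenever $\marginalbase_{0,\playerindex-1}(\aaction_{\playerindex-1})>0$ (and arbitrarily otherwise, as such $\aaction_{\playerindex-1}$ carry no mass), and set
\begin{align*}
\joint(\type,\aaction_1,\dots,\aaction_\nplayers)=\joint_1(\aaction_1,\type)\prod_{\playerindex=2}^\nplayers \kernel_\playerindex(\actioni\mid\aaction_{\playerindex-1}).
\end{align*}

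I would then verify, by a telescoping induction along the chain, that \joint\ reproduces every relevant bivariate marginal: $\joint_{\Actions_1\times\Types}=\joint_1$, and for each $\playerindex\geq2$ the $\Actions_{\playerindex-1}$-marginal of \joint\ equals $\marginalbase_{0,\playerindex-1}$, so that $\joint_{\Actions_{\playerindex-1}\times\Actionsi}=\marginalbase_{0,\playerindex-1}(\cdot)\kernel_\playerindex(\cdot\mid\cdot)=\joint_\playerindex$. In particular $\joint_\Types=\prior$ and $\joint_{\Actionsi}=\marginalbase_{0,\playerindex}$ for every \playerindex, so the marginal requirements of Definition~\ref{definition:bce-c-game} hold. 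Because each player's obedience depends only on the corresponding bivariate marginal, the constraints then transfer directly: player $1$'s hold since $\joint_{\Actions_1\times\Types}=\joint_1$ satisfies them, and for $\playerindex\geq2$ player \playerindex's hold since $\joint_{\Actions_{\playerindex-1}\times\Actionsi}=\joint_\playerindex$ satisfies them. Hence $\joint\in\bcebase$ with the required marginals.

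The conceptual crux — and the only place any care is needed — is the gluing step. What licenses the Markov-chain construction is that the overlapping marginals of consecutive witnesses agree: the $\Actions_{\playerindex-1}$-marginal of $\joint_\playerindex$ and of $\joint_{\playerindex-1}$ (for $\playerindex\geq3$), or of $\joint_1$ (for $\playerindex=2$), are both $\marginalbase_{0,\playerindex-1}$. This shared-marginal consistency along the ring is precisely what upgrades conditions (1)–(2) from necessary to sufficient. I expect the main (modest) obstacle to be the bookkeeping in the induction confirming that each bivariate marginal of the glued \joint\ equals its witness; conditioning on null actions is routine, as those actions receive zero probability under \joint\ and do not enter any obedience sum with positive weight.
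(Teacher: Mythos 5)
Your proof is correct and follows essentially the same route as the paper: necessity by projecting the witnessing BCE onto the bivariate marginals that each player's obedience constraints depend on, and sufficiency by gluing the single-agent witnesses into the Markov chain $\joint(\aaction,\type)=\joint_1(\aaction_1,\type)\prod_{\playerindex\geq2}\kernel_\playerindex(\actioni\mid\aaction_{\playerindex-1})$, which is exactly the paper's construction of $\hat{\joint}$. The telescoping verification that each bivariate marginal of the glued distribution equals its witness is the same computation the paper carries out when checking obedience.
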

Similar to \autoref{proposition:first-order}, \autoref{proposition:ring-network} exploits the structure of the ring-network game to reduce it to a series of single-agent problems in which, except for player $1$, the states are given by the actions of the preceding player and the prior distribution over this state space by the marginal over actions of the preceding player.  Indeed, for $\playerindex\geq2$, \bce-consistency of $(\marginalbase_{0,\playerindex-1},\marginalbase_{0,\playerindex})$ implies an information structure exists that rationalizes player $\playerindex$'s choices as the outcome of some information structure under ``prior'' $\marginalbase_{0,\playerindex-1}$, whereas \bce-consistency of $(\marginalbase_{0,\playerindex-2},\marginalbase_{0,\playerindex-1})$\footnote{With the understanding that $\marginalbase_{0,0}=\prior$.} guarantees the ``prior'' $\marginalbase_{0,\playerindex-1}$ is consistent with player $\playerindex-1$ observing the outcome of some information structure given their belief $\marginalbase_{0,\playerindex-2}$. 

\begin{proof}[Proof of \autoref{proposition:ring-network}]
In the ring-network base game, for a joint distribution $\joint\in\Delta(\Actions\times\Types)$, the obedience constraints can be written as follows:
\begin{align*}
    (\forall\aaction_1,\aactionb_1\in\Actions_1)\sum_{\type\in\Types}\joint_{\Types\times\Actions_1}(\aaction_1,\type)\left(\rnpayoff(\aaction_1,\type)-\rnpayoff(\aactionb_1,\type)\right)&\geq0\\
    (\forall \playerindex\in\{2,\dots,\nplayers\})(\forall\actioni,\actionbi\in\Actionsi)\sum_{\aaction_{\playerindex-1}\in\Actions_{\playerindex-1}}\joint_{\Actions_{\playerindex-1},\Actions_\playerindex}(\aaction,\type)\left(\rnpayoff(\aaction_{\playerindex-1},\actioni)-\rnpayoff(\aaction_{\playerindex-1},\actionbi)\right)&\geq0,
\end{align*}
where $\joint_{\Types\times\Actions_1}$ is the marginal of \joint\ over $\Types\times\Actions_1$ and similarly for $\playerindex\geq2$, $\joint_{\Actions_{\playerindex-1}\times\Actionsi}$ is the marginal of \joint\ over $\Actions_{\playerindex-1}\times\Actionsi$. Thus, it is immediate that the conditions in \autoref{proposition:ring-network} are necessary for $(\prior,\marginalp)$ to be M-\bce-consistent. 

For sufficiency, note \autoref{theorem:h-representation} implies that under the conditions of \autoref{proposition:ring-network}, $\left(\joint_{\Types\times\Actions_1},\dots,\joint_{\Actions_{\nplayers-1}\times\Actions_{\nplayers}}\right)$ exist, each of which satisfies the respective marginal conditions and obedience constraints. 

Given these distributions, define $\hat{\joint}\in\Delta(\Actions\times\Types)$ as follows: for each $(\aaction,\type)\in\Actions\times\Types$,
\begin{align}
    \hat{\joint}(\aaction,\type)=\joint_{\Actions_1\times\Types}(\aaction_1,\type)\joint_{\Actions_1\times\Actions_2}(\aaction_2|\aaction_1)\times\dots\joint_{\Actions_{N-1}\times\Actions_N}(\aaction_N|\aaction_{N-1}),
\end{align}
where, abusing notation, we let for $\playerindex\geq2$, $\joint_{\Actions_{\playerindex-1}\times\Actionsi}(\cdot|\aaction_{\playerindex-1})$ denote the distribution $\joint_{\Actions_{\playerindex-1}\times\Actionsi}$ conditional on $\aactionb_{\playerindex-1}=\aaction_{\playerindex-1}$.

Note $\hat{\joint}(\aaction,\type)$ satisfies the obedience constraints of player $1$ if and only if $\joint_{\Actions_1\times\Types}(\cdot)$ does. Indeed, for all $\aaction_1,\aactionb_1$, we have
\begin{align}    &\sum_{\aaction_{-1},\type}\hat{\joint}(\aaction_1,\aaction_{-1},\type)\left(\rnpayoff_1(\aaction_1,\type)-\rnpayoff_1(\aactionb_1,\type)\right)=\nonumber\\
    &\sum_{\type}\joint_{\Actions_1\times\Types}(\aaction_1,\type)\left(\rnpayoff_1(\aaction_1,\type)-\rnpayoff_1(\aactionb_1,\type)\right)\sum_{(\aaction_2,\dots,\aaction_\nplayers)}\prod_{\playerindex=2}^\nplayers\joint_{\Actions_{\playerindex-1}\times\Actionsi}(\actioni|\aaction_{\playerindex-1})=\nonumber\\
    &\sum_{\type}\joint_{\Actions_1\times\Types}(\aaction_1,\type)\left(\rnpayoff_1(\aaction_1,\type)-\rnpayoff_1(\aactionb_1,\type)\right).
\end{align}
Now consider player $\playerindex\geq 2$. For simplicity, fix $\playerindex=2$\textemdash the proof for the remaining players follows immediately. Then, let $\aaction_2,\aactionb_2\in\Actions_2$. We check that $\hat\joint$ satisfies the obedience constraint of player $2$ if and only if $\joint_{\Actions_1\times\Actions_2}$ does:
\begin{align*}
    &\sum_{\aaction_{-2},\type}\hat{\joint}(\aaction_2,\aaction_{-2},\type)\left(\rnpayoff_2(\aaction_1,\aaction_2)-\rnpayoff_2(\aaction_1,\aactionb_2)\right)=\\
    &\sum_{\aaction_1,\type}\joint_{\Actions_1\times\Types}(\aaction_1,\type)\joint_{\Actions_1\times\Actions_2}(\aaction_2|\aaction_1)\left(\rnpayoff_2(\aaction_1,\aaction_2)-\rnpayoff_2(\aaction_1,\aactionb_2)\right)\sum_{(\aaction_3,\dots,\aaction_\nplayers)}\prod_{\playerindex=3}^\nplayers\joint_{\Actions_{\playerindex-1}\times\Actionsi}(\actioni|\aaction_{\playerindex-1})=\\
&\sum_{\aaction_1\in\Actions_1}\left(\sum_{\type}\joint_{\Actions_1\times\Types}(\aaction_1,\type)\right)\joint_{\Actions_1\times\Actions_2}(\aaction_2|\aaction_1)\left(\rnpayoff_2(\aaction_1,\aaction_2)-\rnpayoff_2(\aaction_1,\aactionb_2)\right)=\\
    &\sum_{\aaction_1\in\Actions_1}\marginalbase_{01}(\aaction_1)\joint_{\Actions_1\times\Actions_2}(\aaction_2|\aaction_1)\left(\rnpayoff_2(\aaction_1,\aaction_2)-\rnpayoff_2(\aaction_1,\aactionb_2)\right)=\\
    &\sum_{\aaction_1\in\Actions_1}\joint_{\Actions_1\times\Actions_2}(\aaction_1,\aaction_2)\left(\rnpayoff_2(\aaction_1,\aaction_2)-\rnpayoff_2(\aaction_1,\aactionb_2)\right),
\end{align*}
    where the third equality follows from the assumption that $\joint_{\Actions_1\times\Types}$ satisfies the marginal constraints for player $1$. It follows that $\hat\joint$ is obedient for player $2$ if and only if $\joint_{\Actions_1\times\Actions_2}$ is.
\end{proof}

\end{document}